\documentclass[11pt,letterpaper]{article}
\usepackage[utf8]{inputenc}
\usepackage[T1]{fontenc}
\usepackage{geometry}
\usepackage{amsmath,amssymb,amsthm,amsfonts,mathtools,mathrsfs}
\usepackage{graphicx}
\usepackage{fullpage}
\usepackage{setspace}
\usepackage{verbatim}
\usepackage{indentfirst}
\usepackage{caption,subcaption}
\usepackage{threeparttable}
\usepackage{booktabs}
\usepackage{multirow}
\usepackage{longtable}
\usepackage[dvipsnames]{xcolor}
\usepackage[linesnumbered,ruled,vlined]{algorithm2e}
\usepackage{xspace,makecell,tikz}
\usepackage{siunitx}
\usepackage{float}
\usepackage[sort,numbers]{natbib}
\usepackage{hyperref}
\usepackage[capitalize]{cleveref}
\usepackage{aliascnt}
\usepackage{bbm}
\usepackage{enumerate}
\usepackage[shortlabels]{enumitem}
\usepackage{tocloft}
\usepackage{comment}
\usepackage[toc,page]{appendix}
\usepackage{subfiles}

\numberwithin{equation}{section}
\hypersetup{
linktocpage=true,
colorlinks=true,
linkcolor=red,
filecolor=magenta,
urlcolor=cyan,
citecolor=blue,
}

\usetikzlibrary{arrows.meta, positioning,calc, shapes.geometric}
\usetikzlibrary{decorations.pathreplacing}
\tikzset{thickline/.style={line width=2pt}}

\newcommand{\Z}{\mathbb{Z}}
\newcommand{\Tr}{\mathrm{Tr}}

\newcommand{\Id}{\mathrm{Id}}

\newcommand{\calC}{\mathcal{C}}
\newcommand{\calD}{\mathcal{D}}

\newcommand{\calL}{\mathcal{L}}

\newcommand{\poly}{\mathrm{poly}}

\newcommand{\dTV}{d_{\mathrm{TV}}}

\newtheorem*{theorem*}{Theorem}
\newtheorem{theorem}{Theorem}[section]
\newtheorem*{lemma*}{Lemma}
\newtheorem{lemma}[theorem]{Lemma}

\newtheorem{proposition}[theorem]{Proposition}
\newtheorem{corollary}[theorem]{Corollary}
\newtheorem{remark}[theorem]{Remark}

\newtheorem{fact}[theorem]{Fact}

\numberwithin{theorem}{subsection}

\theoremstyle{definition}
\newtheorem{definition}[theorem]{Definition}

\Crefname{theorem}{Theorem}{Theorems}
\Crefname{lemma}{Lemma}{Lemmas}
\Crefname{observation}{Observation}{Observations}
\Crefname{claim}{Claim}{Claims}
\Crefname{proposition}{Proposition}{Propositions}
\Crefname{corollary}{Corollary}{Corollaries}
\Crefname{remark}{Remark}{Remarks}
\Crefname{example}{Example}{Examples}
\Crefname{conjecture}{Conjecture}{Conjectures}
\Crefname{fact}{Fact}{Facts}
\Crefname{property}{Property}{Properties}
\Crefname{definition}{Definition}{Definitions}



\begin{document}

\title{Turnstile Streaming Algorithms Might (Still) as Well Be Linear Sketches, for Polynomial-Length Streams}
\author{Cheng Jiang\thanks{Department of Electrical Engineering and Computer Science, MIT. chengj36@mit.edu} \and Yinchen Liu\thanks{Institute for Interdisciplinary Information Sciences, Tsinghua University. liuyinch23@mails.tsinghua.edu.cn} \and Huacheng Yu\thanks{Department of Computer Science, Princeton University. yuhch123@gmail.com}}
\date{}
\maketitle
\thispagestyle{empty}
\setcounter{page}{0}

\begin{abstract}
A fundamental question in streaming complexity is whether every space-efficient turnstile algorithm is implicitly a linear sketch. 
The landmark work of Li, Nguyen, and Woodruff~\cite{li2014turnstile} established an equivalence between the two, but their reduction requires a stream length that is at least \emph{doubly exponential} in the dimension $n$. 
In the opposite direction, results by Kallaugher and Price~\cite{KP20} demonstrate a separation for streams of \emph{linear} length, showing that the equivalence does not hold in general. 
The most natural and practically relevant regime---\emph{polynomial}-length streams---has therefore remained open.

We show that polynomial-length turnstile algorithms admit linear-sketch simulations. 
More precisely, if a turnstile algorithm uses $S$ bits of space and succeeds on all streams of length $\poly(D, n)$, then on final vectors $x$ with $\|x\|_2 \le D$, its output can be recovered from $O(S)$ linear measurements of $x$, using $O(S \log S)$ bits overall. 
For smooth problems under appropriate input distributions, a mollified version of the reduction yields a bounded-entry sketch with $O(S / \log D)$ measurements and optimal $O(S)$ total space. Our results extend to strict turnstile streams and non-uniform \emph{Read-Once Branching Programs (ROBPs)}.

Our proof departs from prior transition-graph based machinery, relying instead on a Fourier-analytic framework and tools from additive combinatorics to extract discrete linear measurements. Our analysis shows that any $S$-bit algorithm can only be sensitive to a low-dimensional lattice of ``heavy'' Fourier frequencies, which we then use to construct the rows of the sketching matrix. Consequently, we obtain new lower bounds for polynomial-length streams via existing real sketching and communication lower bounds.

\end{abstract}
\pagestyle{empty}
\addtocontents{toc}{\protect\thispagestyle{empty}} % This kills the "1" on the first TOC page

\newpage
\pagestyle{empty} % Ensures any additional TOC pages are also empty

\tableofcontents
\addtocontents{toc}{\protect\thispagestyle{empty}} % This kills the "1" on the first TOC page

\newpage
\setcounter{page}{1} % Resets the counter so your first real page of text is page 1
\pagestyle{plain}

%!TEX root = main.tex
{
\numberwithin{theorem}{section}
\section{Introduction}
The turnstile streaming model is a standard framework for processing high-dimensional data that evolves over time. 
In this setting, we aim to track a vector $x \in \mathbb{Z}^n$, initialized to zero, through a stream of updates $(i, \Delta)$ for $\Delta \in \{-1, 1\}$. 
Each update modifies the $i$-th coordinate: $x_i \leftarrow x_i + \Delta$. 
Because $\Delta$ can be negative, this model allows for both the addition and deletion of data, making it significantly more flexible than the simpler ``insertion-only'' model.

The central measure of a streaming algorithm is space complexity: we seek to compute properties of $x$ (such as its norm or heavy hitters) using far less space than would be required to store the vector explicitly. 
The canonical approach to achieving such space efficiency is through linear sketching. 
A linear-sketching algorithm pre-selects a matrix $\mathcal{A} \in \mathbb{Z}^{m \times n}$ and maintains the summary $\mathcal{A}x$ as the vector evolves. 
This approach is algorithmically advantageous because it admits trivial additive updates: when an update $(i, \Delta)$ arrives, the algorithm simply updates the summary by adding $\Delta \cdot \mathcal{A}_i$, where $\mathcal{A}_i$ is the $i$-th column of the matrix.

The turnstile model has proved useful for a remarkably broad collection of sublinear algorithms. Canonical examples include frequency moments and $\ell_p$-norm estimation~\cite{AlonMS96,Indyk06,CormodeDIM03}, $L_0$-sampling and related support-sensitive primitives~\cite{FrahlingIS08}, heavy hitters and frequency estimation~\cite{CharikarCF04,CormodeM05}, and dynamic geometric problems such as clustering, diameter-type questions, and coresets~\cite{JohnsonL84,Indyk04,FrahlingS05}. Graph streams provide another major source of applications: by maintaining linear measurements of the edge-incidence vector, one can recover spanning forests, graph sketches, sparsifiers, and spectral sparsifiers in dynamic streams~\cite{AhnGM12,AhnGM12b,KapralovLMMS14}. The same area also studies approximate matching~\cite{assadi2016maximum}, triangle counting~\cite{JhaSP15,BulteauFKP16}, and many related estimation problems; see also McGregor's survey~\cite{McGregor14}.
Across many of the examples above, the best known turnstile algorithms are linear sketches, and in many cases the strongest lower bounds are also currently known only for linear sketches.

This ubiquity raises a basic structural question: is linearity inherent to space-efficient computation in the turnstile model? More concretely, if a problem admits a turnstile streaming algorithm using space $S$, must it also admit a linear sketch of comparable size?

The answer appeared to depend on the length of the stream. 
Building on Ganguly~\cite{ganguly2008lower}, Li, Nguyen, and Woodruff~\cite{li2014turnstile} and the follow-up by Ai, Hu, Li, and Woodruff~\cite{AHLW16} established a remarkable equivalence, showing that any $S$-space turnstile algorithm can be simulated by a linear sketch, provided the stream can be arbitrarily long (at least \emph{doubly exponential} in~$n$).

These arbitrarily-long-stream equivalences quickly became a standard reduction in lower-bound arguments: one first converts a streaming algorithm into an integer or modular sketch via~\cite{li2014turnstile,AHLW16}. Then the proof proceeds either by using communication complexity in the easier SMP model~\cite{Konrad15,assadi2016maximum, AssadiKL17,KallaugherKP18, NelsonY19}, by analyzing the resulting discrete sketch via rank/minrank arguments~\cite{KhannaPSW25}, or by importing lower bounds from real-valued sketches through the lifting line of work~\cite{PriceW11,LiW16,GangulyW18,NeedellSW22,SwartworthW23,gribelyuk2025lifting}.

Conversely, Kallaugher and Price~\cite{KP20} proved that for streams of \emph{linear} length, there is a separation: they demonstrated cases where non-linear algorithms are exponentially more space-efficient than any linear sketch. 
These results leave a massive, unexplained gap: What happens in the most common setting---the polynomial-length stream?
\subsection{Our results}
In this paper, we answer this question by providing a structural characterization of turnstile algorithms as linear sketches in the polynomial regime. Our approach bypasses the combinatorial transition-graph machinery in~\cite{li2014turnstile,AHLW16} altogether and instead develops a Fourier-analytic framework that works directly in the polynomial-length regime. 

\begin{theorem}[Informal exact transfer, see \cref{cor:yao-minmax-exact-transfer}] Fix $D\geq 1$ and $R \geq \poly(D,n)$. Suppose there is a turnstile streaming algorithm that computes $f(x)$ in any turnstile stream of length $\poly(R)$ using $S$ bits of space with high probability, then there is a linear sketching algorithm\footnote{As in~\cite{li2014turnstile,AHLW16}, the sketching matrix is not efficiently computable from the code of the streaming algorithm, the same is true for the theorem below.} that computes $f(x)$ for any $x$ with $\|x\|_2\leq D$ with high probability. Moreover, the linear sketching algorithm has sketching dimension $O(S)$ and uses $O(S\log S)$ bits of space. Furthermore, if $S = O(\log R)$, then the space complexity can be improved to $O(S)$ bits.
\end{theorem}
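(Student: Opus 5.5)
We plan to prove this through a Fourier-analytic view of the algorithm's state as a function of the vector $x$, and then use Yao's minimax principle to move from randomized to deterministic statements. By Yao, it suffices to fix a hard input distribution $\mu$ over vectors with $\|x\|_2\le D$ and a deterministic $S$-bit algorithm, i.e.\ a read-once branching program of width $2^S$. We then build a linear sketch that succeeds with high probability over $\mu$. Minimax in the other direction then gives a randomized sketch that is correct on every input.

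\textbf{Step 1: reduce to state distributions.} For each $x$, we study the distribution of the final memory state when the input is a random long stream that ends at $x$. Concretely, we use a random walk of length $\poly(R)$, prefixed by a fixed reset segment, and conditioned or shifted to end at $x$. Because the walk is long compared to $D$ and $n$, a local-limit argument shows that the state distribution $p_x$ is nearly translation-invariant in the following sense: appending a short burst of updates equal to $y$ moves the state law roughly as a function of $x+y$. By averaging over the random prefix, we obtain a function $F:\Z^n\to\Delta(\{0,1\}^S)$ such that the output on $x$ is a function of a sample from $F(x)$. The turnstile assumption, that the algorithm is correct on all streams of length $\poly(R)$, guarantees that this sample is correct.

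\textbf{Step 2: heavy Fourier frequencies.} For each state $s$, the indicator function $x\mapsto F(x)_s$ is bounded. Since the algorithm has only $2^S$ states and the walk mixes, we expect a spectral-gap or Parseval-type bound. This bound should show that the Fourier mass of $F$, viewed on the torus $\mathbb{T}^n$, concentrates near a set of heavy frequencies $\Theta$ of total size $2^{O(S)}$, and that the remaining low-weight frequencies contribute error at most $1/\poly(R)$ on the ball $\|x\|_2\le D$. The key idea here is that the long random walk damps every frequency except those where the transition operator of the branching program has an eigenvalue near the unit circle. There can be at most $2^S$ such eigen-directions.

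\textbf{Step 3: from heavy frequencies to a lattice.} This is the main obstacle. The set $\Theta$ is an approximately closed set of size $2^{O(S)}$ in $\mathbb{T}^n$: products of long walks preserve heaviness, so $\Theta$ has small doubling. We would apply a Freiman/Bogolyubov-type structure theorem for small-doubling sets in the torus. This places $\Theta$ inside a Bohr set or a generalized arithmetic progression of rank $O(S)$, generated by frequencies $\theta_1,\dots,\theta_m$ with $m=O(S)$. We then replace each $\theta_j$ by a rational approximation with denominators $\poly(S)$, or $\poly(R)$ in general. This approximation is valid on the ball $\|x\|_2\le D$, since there $\langle\theta_j,x\rangle$ needs to be known only to precision $1/\poly$. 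The integer vectors $a_j$ obtained this way become the rows of $\mathcal{A}$. Knowing $\mathcal{A}x$ modulo the denominators determines every $e^{2\pi i\langle\theta,x\rangle}$ for $\theta\in\Theta$, and therefore determines $F(x)$ up to small error. The hardest part of the argument is controlling the dimension loss in the Freiman step so that the rank is $O(S)$ rather than $O(S\log S)$ or worse. Our first attempt would be to use eigenvalue multiplicativity of the transition operator directly: the heavy characters form a subgroup-like object of the dual torus, and its rank is bounded by $\log_2$ of the width.

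\textbf{Step 4: space accounting.} Each measurement is stored modulo a denominator of size $\poly(S)$, which gives $O(S\log S)$ bits in total. When $S=O(\log R)$, the rank bound becomes $O(S/\log R)$ nontrivial rows at $O(\log R)$ bits each, or else the moduli can be chosen as constants, giving $O(S)$ bits. The sketch's decoder samples a state from the reconstructed $F(x)$ and runs the algorithm's output function. Combining this with Yao's principle gives the stated theorem.
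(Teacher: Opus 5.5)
There is a genuine gap, and it sits where your argument puts its weight: Steps 2 and 3.

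\textbf{The mechanism in Steps 2--3 fails for polynomial-length streams.} You rely on a long random walk damping every frequency except those at which the twisted transition operator of the automaton has an eigenvalue near the unit circle. This is essentially the transition-graph/mixing approach of \cite{li2014turnstile,AHLW16}, and it gives nothing for streams of length $\poly(R)$.
\begin{itemize}
\item With $2^S$ states, nothing forces the relevant eigenvalue gaps to be polynomially large. A ``combination-lock'' transient class, for example, escapes with probability about $(2n)^{-k}$, with $k$ up to the number of states. So a polynomial-length walk need not mix, and need not damp anything outside an uncontrolled set of frequencies. This is precisely the obstruction that forced doubly exponential length in earlier work.
\item For the read-once branching programs you reduce to in your first paragraph, there is not even a single transition operator.
\item Even granting spectral information, ``at most $2^S$ eigenvalues'' bounds the spectrum of $P_\theta$ for each fixed $\theta$. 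It does not bound the size or rank of the set of $\theta\in\mathbb T^n$ at which some eigenvalue has modulus close to $1$. That set is an approximate object in a continuum, and neither the Freiman/Bogolyubov step nor the ``rank $\le \log_2(\text{width})$'' claim is substantiated.
\item Step 1 has the same problem. The claim that appending $y$ moves the state law roughly as a function of $x+y$ is essentially the conclusion to be proved. A local limit theorem for the walk's endpoint says nothing about the state of a nonlinear automaton.
\end{itemize}

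\textbf{The missing idea is a way to turn $S$ bits into a rank bound with no mixing at all.} The paper does this as follows.
\begin{itemize}
\item The prefix consists of $M=R^2$ independent discrete-Gaussian blocks, and the paper conditions on the entire state sequence $\sigma$ at block boundaries. Because the algorithm is read-once, the conditioned blocks stay independent. Block $i$ becomes a $\beta_i$-dense piece of (a truncation of) $\gamma_R$ with $\prod_i\beta_i=\Pr[\sigma]$, and a good $\sigma$ with $\Pr[\sigma]\ge 2^{-(S+1)M}$ exists.
\item A Chang-type argument (the coarse Rudin inequality, fed by the Fourier decay of $\gamma_R$) shows that the large spectrum of a $2^{-O(S)}$-dense piece contains no $\kappa$-dissociated set of size more than $14S$, where $\kappa=5\sqrt S/R$.
\item Symmetrization and AM--GM transfer this bound to the convolution $\nu$ of the conditioned blocks.
\item A greedy construction with chains $a_j,qa_j,\dots,q^{r_j-1}a_j$, followed by recursive rounding, yields exact generators $t_j$. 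These satisfy $k_jt_j\in\langle t_1,\dots,t_{j-1}\rangle$ and $\prod_j k_j\le q^{O(S)}$, with $q=\Theta(1+S/\log R)$.
\item A line-by-line Fourier/Cauchy--Schwarz argument gives $\dTV(\nu,\tau_v\nu)\le R^{-1/50}$ for every $v$ with $\|v\|_2\le R^{1/3}$ annihilated by the $t_j$. The decoder is then defined fiberwise.
\end{itemize}

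\textbf{Your space accounting in Step 4 is also off.} Individual denominators $k_j$ can be as large as $\poly(R)$, which under your scheme would cost $O(S\log R)$ bits. What is bounded is their product: each large-denominator generator is paid for by a long chain inside the dissociated set. The $\log S$ loss comes from needing $q$ super-constant to control how rounding errors propagate. When $S=O(\log R)$ one gets $q=O(1)$, hence $O(S)$ bits. The $O(S/\log R)$-dimension bound you invoke belongs to the mollified route, which requires a smoothness hypothesis that this theorem does not have.
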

For readability we state the theorem here for exact computation of a function $f$ (the same for the theorem below); \cref{sec:randomized-transfer} gives the full formulations for approximation, promise, and relation problems. 

Notably, our result holds for strict turnstile streams as in~\cite{AHLW16}, and
even for \emph{Read-Once Branching Programs (ROBPs)}.
This implies that even if the streaming algorithm is allowed a non-uniform update rule--where the memory update logic changes at each step of the stream--the equivalence to linear sketching still holds.\footnote{ROBPs and streaming algorithms are equivalent when the stream length is less than $2^{O(S)}$, as one could also remember the current position in the stream, which the update algorithm can depend on.
But the equivalence does not hold for longer streams.}
Unlike~\cite{li2014turnstile,AHLW16}, our argument does not rely on a uniform transition graph or a mixing-time bound, which is why it also extends to non-uniform models such as ROBPs.

The above theorem incurs an $O(\log S)$ space overhead during the transformation. 
We demonstrate that for a broad class of ``smooth'' functions, we can eliminate this overhead to obtain optimal space bounds.

\begin{theorem}[Informal mollified transfer, see \cref{thm:randomized-mollified-transfer}]\label{thm:smooth-informal}
Let $\gamma_R$ be the $n$-dimensional discrete Gaussian distribution with radius $R\geq \poly(n)$, and let $\mathcal I$ be a distribution on $\mathbb Z^n$ supported in the Euclidean $\ell_2$-ball of radius $R^{5/4}$. Suppose $f: \mathbb{Z}^n \to \mathbb{R}$ satisfies a ``smoothness'' condition with respect to $\mathcal{I}$: $$\Pr_{Y\sim\mathcal I,\, Z\sim \gamma_R}\!\left[\left|f(Y+Z)-f(Y)\right|\leq \epsilon\right]\ge 1-\delta. $$
If there is a streaming algorithm computing $f$ in any turnstile stream of length $\poly(R)$ with error $\epsilon$ and success probability $1-\delta$, then there is a linear sketching algorithm for $f$ that succeeds with probability $1-O(\delta)$ on $\mathcal{I}$, with sketching dimension $O(S/\log R)$, entries bounded by $\poly(R)$, and total space $O(S)$ bits. 
\end{theorem}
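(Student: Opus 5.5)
Here is the plan. We simulate the streaming algorithm on the random input $x = Y+Z$, where $Z\sim\gamma_R$ is a discrete Gaussian noise vector that the sketch never needs to know exactly. Smoothness says $f(Y+Z)$ is within $\epsilon$ of $f(Y)$ with probability $1-\delta$. So it suffices to build a linear sketch from which the algorithm's output on a stream ending at $Y+Z$ can be sampled with the correct distribution, up to total variation $O(\delta)$.

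The construction goes through the Fourier framework of the exact transfer (\cref{cor:yao-minmax-exact-transfer}). Fix the algorithm's randomness by Yao's minimax over the product distribution $\mathcal I\otimes\gamma_R$. Feed a canonical stream: first a random walk realizing $Z$ (length $\poly(R)$, since $\|Z\|_\infty\lesssim R\sqrt{\log n}$), then the updates for $Y$ (length at most $\sqrt n R^{5/4}\le\poly(R)$). The final memory state $M$ is a function of the walk's trajectory. The key quantity is $\Pr[M=s\mid x]$ as a function of $x$. Convolution with $\gamma_R$ makes its Fourier transform over the torus $\mathbb{T}^n$ concentrated on frequencies $\theta$ with $\|\theta\|\lesssim 1/R$, and the $S$-bit bottleneck limits how many frequencies carry weight.

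\begin{enumerate}
\item Apply the heavy-frequency structure lemma, presumably proved for the exact transfer: the $2^S$ state-probability functions have, up to $L_1$ error $\delta 2^{-S}$, Fourier mass on a set contained in a lattice/Bohr set spanned by $O(S/\log R)$ integer-direction vectors $a_1,\dots,a_k$ with entries at most $\poly(R)$.
\item Here is where the gain over the exact transfer comes from, via a dimension count. Each genuinely independent heavy direction at scale $1/R$ in a ball of radius $R^{5/4}$ contributes about $\log R$ bits of distinguishable information about $M$. Hence $k\log R\lesssim S$, i.e.\ $k=O(S/\log R)$. I would make this precise by a volume/entropy argument: $I(M;Y+Z)\le S$, while each direction with Fourier weight above threshold forces $\Omega(\log R)$ mutual information, because the Gaussian smoothing at radius $R$ against a range of $R^{5/4}$ leaves about $R^{1/4}$ resolvable levels.
\item The sketch is $\mathcal A Y$, with rows $a_i$. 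The decoder samples $Z'\sim\gamma_R$ itself and approximately reconstructs $\Pr[M=s\mid x]$ from $\langle a_i, x\rangle$ modulo the appropriate periods. This is possible because that function is, up to small error, a function of $\mathcal Ax$ alone. The decoder then samples a state and outputs the algorithm's answer. The entries are $\poly(R)$ and $|\langle a_i,Y\rangle|\le\poly(R)$, so each measurement takes $O(\log R)$ bits, for $O(S)$ bits in total.
\item Correctness: the total variation between the simulated output and the true output on $Y+Z$ is $O(\delta)$. Combine this with smoothness and the algorithm's $1-\delta$ success probability, then undo Yao's averaging.
\end{enumerate}

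The main obstacle is step 2: getting $O(S/\log R)$ rather than $O(S)$ directions, each direction certifiably costing $\log R$ bits. This requires a quantitative additive-combinatorial statement, namely that a Bohr set of rank $k$ at the relevant scale has size about $R^{-\Theta(k)}$. It also requires handling non-independent, nearly parallel heavy frequencies, which I would attack by passing to a Freiman-type lattice basis of the heavy spectrum before counting.
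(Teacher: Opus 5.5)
\textbf{The gap is in the stream you feed the algorithm.} You deliver $Z$ as an explicit prefix walk and $Y$ as the deterministic final block $\mathsf{can}(Y)$. So the algorithm \emph{reads} the noise, and conditioning on its state also conditions $Z$. Write $s_0$ for the state after the walk, $h_{s_0}(z):=\Pr[Z=z,\ \text{state}=s_0]$, and $T_y$ for the transition map on $\mathsf{can}(y)$. Then $\Pr[M=s,\ Y+Z=x]=\sum_{s_0}(g_{s_0,s}*h_{s_0})(x)$ with $g_{s_0,s}(y)=\mathcal I(y)\mathbf 1[T_y(s_0)=s]$.

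This has no clean $\gamma_R$ factor, which causes three problems.
\begin{itemize}
\item $h_{s_0}$ is only a $2^{-O(S)}$-dense piece of $\gamma_R$, and its Fourier transform can equal $1$ far from the origin (a parity constraint, for example). So the claimed concentration at $\|\theta\|\lesssim 1/R$ is unfounded.
\item $g_{s_0,s}$ is an arbitrary nonlinear function of $y$: for fixed $s_0$ the final state is just some $S$-bit hash of $y$. Nothing forces $x\mapsto\Pr[M=s\mid x]$ to factor through a low-dimensional $\mathcal Ax$.
\item Two targets in the same fiber have different final blocks, so you cannot couple the algorithm's behaviour on them.
\end{itemize}

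A single dense piece also has no dichotomy between heavy and negligible frequencies, and this breaks your step~2. Restrict $\gamma_R$ to $\{z:\lfloor z_j/t\rfloor\ \text{even for all}\ j\le S\}$ with $t=R$. This gives a $2^{-O(S)}$-dense piece $h$ with $S$ independent frequencies of norm about $1/R$ at which $|\widehat h|$ is bounded below by a constant. Each such direction carries one bit of information. Yet shifting by $te_j$ (norm $\le R^{5/4}$) produces a disjointly supported measure, so each direction costs about $\log R$ bits of sketch to neutralize. With a constant threshold you therefore get $\Theta(S\log R)$ bits. With a threshold near $1$ you cannot discard the moderate coefficients.

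\textbf{What is missing is the paper's landing construction.}
\begin{itemize}
\item Run $R^2$ independent $\gamma_R$-blocks $X_i$ and condition on the \emph{entire} state sequence $\sigma$. The blocks stay independent, each is an $\alpha_i$-dense piece with $\prod_i\alpha_i\ge 2^{-(S+1)R^2}$, and their laws do not depend on $Y$.
\item End with a single block $a=Y+Z-\sum_iX_i$, in which $Z$ is never shown to the algorithm. Given $(\sigma,Y=y)$, the law of $a$ is the translate by $y$ of the reflection of $\mu_\sigma*\gamma_R$, and this $\gamma_R$ is \emph{unconditioned}.
\item Hence for $y_1-y_2$ in the kernel the two last-block laws are $R^{-1/50}$-close (\cref{cor:translation-invariance-smoothed-near-origin-general}). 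A common last block then yields a common output valid for both targets, and that is the fiberwise decoder.
\end{itemize}

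Each ingredient removes one of your problems.
\begin{itemize}
\item The $R^2$-fold product supplies the dichotomy. By symmetrization and AM--GM, $|\widehat{\mu_\sigma}(\zeta)|\ge e^{-R^2/K}$ places $\zeta$ in the large spectrum at threshold $1-4/K$, with $K=R^{9/5}$.
\item Smaller coefficients are absorbed by the line-by-line $\ell^2$-to-TV argument (\cref{prop:ball-reduction-tv-general}). This also replaces your ``$L_1$ Fourier mass'' step, which the paper never has.
\item The unconditioned $\gamma_R$ confines the surviving frequencies to $\|\zeta\|\le\kappa\approx R^{-19/20}$.
\item The $O(S/\log R)$ count comes from the small-ball estimate \cref{lem:projection-small-ball-general}. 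At this extreme threshold, each independent near-origin direction cuts the Gaussian mass by $1/(2B)$ with $B=R^{1/4}$, and this is weighed against density $2^{-O(S)}$.
\end{itemize}
A mutual-information count per direction cannot see this mechanism.
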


We call this a ``mollified'' transfer because it exploits additional conditions on $f$ and the input distribution $\mathcal{I}$, in contrast to the ``exact'' transfer where we do not rely on any additional property of $f$.

The mollified transfer theorem can be used in two steps. First, choose a target distribution $\mathcal I$ that already yields a lower bound for bounded-entry linear sketches---for instance, in $L_p$ approximation one takes $\mathcal I$ to be a (planted) discrete Gaussian. Second, verify that the problem is smooth on average under the perturbation $Y\mapsto Y+Z$ with $Z\sim\gamma_R$. The theorem then converts any hypothetical $S$-bit turnstile streaming algorithm into a bounded-entry integer sketch of dimension $O(S/\log R)$.
\subsection{New turnstile streaming lower bounds}

We further apply the main theorems to prove new turnstile streaming lower bounds. Conceptually, the new ingredient in all applications is \emph{only} the transfer from polynomial-length turnstile algorithms to structured linear sketches. The lower bounds themselves are then inherited by plugging the exact and mollified routes into existing lower-bound frameworks---namely real sketch lifting and SMP lower bounds, which we discuss in turn below.

\paragraph{Applications via lifting real sketches.}
A long-standing obstacle in transferring real-valued sketching hardness to streaming was identified by Li and Woodruff~\cite{LiW16}.  They observed that the two
complexity measures---real sketching \emph{dimension} and streaming \emph{space}---are genuinely incomparable.  In one direction, sketching dimension can be far smaller than streaming space: a single real-valued inner product
$\langle u, v\rangle$ with $u = (1, W+1, (W+1)^2, \ldots, (W+1)^{n-1})$
recovers an arbitrary vector $v \in [W]^n$ exactly, so sketching dimension~$1$ suffices to encode all information that would require $n\log W$ streaming bits. In the other direction, real-valued sketching lower bounds do not imply streaming lower bounds: those lower bounds (e.g., \cite{GangulyW18,LiW16,NeedellSW22,SwartworthW23,PriceW11}) are proved against continuous (Gaussian) hard distributions, and after discretizing to integers
``it is no longer clear if the lower bounds hold''~\cite{LiW16}.
Thus, despite the equivalence of~\cite{li2014turnstile,AHLW16}, which gives a reduction from streaming to discrete sketching of the same space complexity, the gap between real-valued sketching complexity and discrete turnstile streaming complexity remained open.

Our results, combined with the recent lifting framework of Gribelyuk, Lin,
Woodruff, Yu, and Zhou~\cite{gribelyuk2025lifting}, fully close this gap for
smooth approximation problems. Recall that the lifting framework of \cite{gribelyuk2025lifting} converts a real-valued sketching dimension lower bound into a discrete (integer) sketching dimension lower bound (with bounded entries) only losing a constant factor.

Our \cref{thm:smooth-informal} then fits in the framework by converting any $S$-bit turnstile streaming algorithm into an integer linear sketch of dimension $O(S/\log R)$, with entries bounded by $\poly(R)$.  Contrapositively, an entry-bounded integer linear sketch dimension lower bound of $\Omega(d)$
propagates to a turnstile streaming lower bound of $\Omega(d\log R)$ bits---this is tight (up to a constant factor) for polynomial-length streams, since any integer sketching algorithm of dimension $d$ with entries bounded by $\poly(R)$ itself uses $O(d\log R)$ bits of space.

Thus for smooth approximation problems we obtain an essentially lossless transfer from real sketching hardness to polynomial-length turnstile hardness:
{\small
\[
  \underbrace{\text{Real sketch: dim}\;\Omega(d)}_{\text{\cite{GangulyW18,LiW16,NeedellSW22,SwartworthW23,PriceW11}}}
  \;\xrightarrow{\text{\cite{gribelyuk2025lifting}}}\;
  \underbrace{\text{Integer sketch: dim}\;\Omega(d)}_{\text{constant factor loss}}
  \;\xrightarrow{\text{\cref{thm:smooth-informal}}}\;
  \underbrace{\text{Streaming: }\Omega(d\log R)\text{ bits}}_{\text{poly-length streams, tight}}
\]
}This closes the gap identified by Li and Woodruff~\cite{LiW16} for smooth approximation problems. We instantiate this transfer for $L_p$ estimation, operator and Ky Fan norms, eigenvalue approximation, PSD testing, and compressed sensing \cite{GangulyW18,LiW16,NeedellSW22,SwartworthW23,PriceW11,gribelyuk2025lifting}. As a representative example we state the $L_p$ results here; the full list of applications is given in \cref{thm:lifting-applications-full}. We remark that Item~(2) is new even for $\poly(n)$-length turnstile streams, while item~(1) achieves optimal joint dependence on both the stream length $W$ and the sub-constant error $\delta$, improving upon the index-based reduction of \cite{JayramW13} which precludes the $W$ dependence, and generalizes the exact space bounds of \cite{KNW10} which lack the $\log(1/\delta)$ dependence. In both cases, the novelty here is the transfer to polynomial-length unit update turnstile streams with the optimal $\log W$ dependence.
\setcounter{theorem}{2}
\begin{theorem}[see \cref{thm:lifting-applications-full} for a full list]
Any randomized turnstile streaming algorithm that outputs a $(1+\epsilon)$-approximation to the $L_p$ norm on unit update streams of length at most $W\geq \poly(n / \delta) $ with success probability at least $1-\delta$ 
must use the following amount of bits of space:
\begin{enumerate}
\item ($1\le p\le 2$): $\Omega\!\left(\min\{\epsilon^{-2}\log(1/\delta),n\}\cdot \log W\right)$.
\item ($p>2$, $\delta \geq 2^{-n^{\Omega_p(1)}}$):
$
\Omega \Bigl(\min\bigl\{n^{1-2/p}\epsilon^{-2/p}\log n\,\log^{2/p}(1/\delta)+n^{1-2/p}\epsilon^{-2}\log(1/\delta),\,n\bigr\}\cdot \log W\Bigr).
$ 
\end{enumerate}
\end{theorem}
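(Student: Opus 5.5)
The plan is to go through the chain of reductions displayed above, with the mollified transfer (\cref{thm:smooth-informal}) as the final arrow. We argue by contraposition. Suppose an $S$-bit turnstile algorithm $(1+\epsilon)$-approximates $\|x\|_p$ with probability $1-\delta$ on unit-update streams of length $W$. Set $R = W^{c}$ for a small constant $c$. Then streams of length $\poly(R)$ fit within length $W$, and $W\ge\poly(n/\delta)$ guarantees $R\ge \poly(n)$.

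For the target distribution $\mathcal I$ we take the known hard distributions for bounded-entry integer sketches. These come from lifting, via \cite{gribelyuk2025lifting}, the real-valued sketching lower bounds for $L_p$ estimation \cite{LiW16,GangulyW18,PriceW11}. For $1\le p\le 2$ this gives dimension $\Omega(\min\{\epsilon^{-2}\log(1/\delta),n\})$. For $p>2$ it gives the stated $n^{1-2/p}$-type bounds, whose hard instances are planted discrete Gaussians. We scale the instances so they are supported (or supported with probability $1-O(\delta)$, which we absorb into the failure probability) in the $\ell_2$-ball of radius $R^{5/4}$. We also choose their typical norm to be around $R^{9/8}$, i.e.\ polynomially larger than $R\sqrt n$, the typical size of a perturbation $Z\sim\gamma_R$.

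Next we verify the smoothness hypothesis of \cref{thm:smooth-informal} with accuracy $\epsilon/3$ and failure probability $O(\delta)$. The goal is to show $\bigl|\|Y+Z\|_p-\|Y\|_p\bigr|\le (\epsilon/3)\|Y\|_p$. One route is the triangle inequality, $\|Z\|_p\le n^{\max(0,1/p-1/2)}\|Z\|_2$, together with a Gaussian tail bound $\|Z\|_2\le R\sqrt{n+\log(1/\delta)}$ with probability $1-\delta$. Combined with the lower bound on $\|Y\|_p$, which holds with high probability under $\mathcal I$, and with $R$ a large polynomial in $n/\delta\epsilon$, the ratio is small. Since we only need $\epsilon\ge 1/\poly$, we may absorb the constant-factor change in $\epsilon$. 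For $(1+\epsilon)$-approximation the correct notion is relative error. We therefore apply the theorem to $f=\log\|x\|_p$, or equivalently use the relation-problem version in \cref{sec:randomized-transfer}, where additive error $\epsilon$ corresponds to multiplicative $1+\epsilon$.

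\cref{thm:smooth-informal} then produces an integer sketch with dimension $O(S/\log R)$ and entries $\poly(R)$ that succeeds with probability $1-O(\delta)$ on $\mathcal I$. It solves the planted distinguishing problem, since a $(1+\epsilon)$-approximation to the perturbed instance distinguishes the two cases. The lifted lower bound requires $S/\log R\ge \Omega(d)$, so $S=\Omega(d\log R)=\Omega(d\log W)$.

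The main obstacle is compatibility between the hard distributions and the perturbation. The perturbation $Z$ must not erase the planted signal: for $p>2$ the plant is a single large coordinate of size roughly $\epsilon^{1/p}\|Y\|_p$. The lifted lower bound must also tolerate the scaling to radius about $R^{9/8}$ while its entry bound remains $\poly(R)$. I would first check that the lifting of \cite{gribelyuk2025lifting} is stated for discrete Gaussians of arbitrary polynomial radius and entries $\poly(R)$. The condition $\delta\ge 2^{-n^{\Omega_p(1)}}$ should be exactly what is needed to keep the $\log(1/\delta)$ tail events of $Z$ below the planted gap.
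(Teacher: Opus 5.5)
Your plan is essentially the paper's proof of \cref{cor:turnstile-lp-small,cor:turnstile-lp-large}: take $R$ a small power of $W$, truncate the hard pairs of \cite{gribelyuk2025lifting} to diameter at most $R^{5/4}$, get average smoothness from $\|Z\|_p\ll\epsilon\|Y\|_p$ using a Gaussian scale $\Lambda\gg R/\epsilon$, and then chain \cref{thm:randomized-mollified-transfer}, the lifting step, and the real lower bounds of \cite{GangulyW18}. The lifting check you defer is where the paper does real work: it re-proves the lifting as \cref{thm:linear-sketches}, which needs $Q\ge(2n/\delta)^{30}$ and a Gaussian scale in the window $[nQ\log Q,\,Q^{6/5}]$, so it is compatible with the $R^{5/4}$ diameter only because the transfer's entry bound is $Q=R$ rather than an arbitrary $\poly(R)$ (your choice $\|Y\|_2\approx R^{9/8}$ does land in this window), and the hypothesis $\delta\ge 2^{-n^{\Omega_p(1)}}$ comes from correctness of the $p>2$ planted instance, which plants $t=\log_3(1/\sqrt\delta)$ coordinates rather than one, not from the tails of $Z$, which $R\ge\poly(n/\delta)$ already controls.
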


\paragraph{Applications via the SMP model.}
A second group of applications comes from communication complexity. In the public-coin simultaneous-message-passing (SMP) model \cite{BabaiK97,BabaiGKL03}, each player $i$ holds a vector $x^{(i)}\in\mathbb Z^n$ (equivalently, a local portion of the stream), the players share public randomness, each sends one message simultaneously to a referee, and the referee must compute a function $f$ of the sum $x=\sum_i x^{(i)}$. A linear sketching algorithm for $f$ naturally gives a protocol in this model: once a common sketch matrix $\mathcal{A}$ is fixed, player $i$ sends $\mathcal{A}x^{(i)}$, and the referee adds the messages $\mathcal{A}(\sum_i x^{(i)})$, then applies the decoding algorithm to recover $f(x)$. 
This additivity removes any dependence on the order of players or intermediate states, so lower bounds are often cleaner to prove in the SMP model than directly against arbitrary streaming algorithms.

Given the equivalence between turnstile streaming and linear sketching \cite{li2014turnstile,AHLW16}, any lower bound in the SMP model for computing $f$ immediately implies a lower bound for turnstile streaming algorithms that compute $f$. For a number of problems the strongest lower bounds currently go through exactly this linear-sketch/SMP route, including approximate maximum matching~\cite{Konrad15,assadi2016maximum}, estimating maximum matching size~\cite{AssadiKL17}, graph and hypergraph subgraph counting~\cite{KallaugherKP18}, and spanning forest computation~\cite{NelsonY19}. 

However, as mentioned above, the prior equivalences of~\cite{li2014turnstile,AHLW16} only hold for arbitrarily long streams (at least double exponential in $n$), and do not extend to non-uniform models such as ROBPs. Thus, it was not clear whether the SMP lower bounds for these problems would imply turnstile streaming lower bounds in the polynomial-length regime, or whether the non-uniformity of the streaming model would allow for more efficient algorithms than linear sketches. 

Our results show that these SMP lower bounds do indeed imply strong lower bounds for the more restricted but practical polynomial-length unit update strict turnstile streams, and even for non-uniform models such as ROBPs. We instantiate this transfer for $L_0$ estimation, dynamic graph streaming problems such as approximate maximum matching, matching-size estimation, and subgraph counting~\cite{assadi2016maximum,AssadiKL17,DuMWY19,KallaugherKP18}. The reduction is exactly the same in each case: use \cref{cor:yao-minmax-exact-transfer} to obtain a bounded-support public-coin linear sketch, and then invoke the corresponding public-coin simultaneous lower bound. As a representative example we highlight $L_0$ estimation and approximate maximum matching; the full list is given in \cref{thm:smp-applications-full}. We remark that the $L_0$ lower bound $\Omega(\epsilon^{-2}\log n\,(\log(1/\epsilon)+\log\log W))$ of Item~(1) is tight in the regime $W \geq 2^{\poly(1/\epsilon)}$: the best known upper bound of~\cite{KNW10} uses $O(\epsilon^{-2}\log n\,(\log(1/\epsilon)+\log\log W))$ bits, which reduces to $O(\epsilon^{-2}\log n\,\log\log W)$ when $\log(1/\epsilon) = O(\log\log W)$. Our framework shows that when $W \geq  (\epsilon^{-2} \log n)^{\Omega(\epsilon^{-2} \log n)}$, the lower bound improvement is tight, even for $\poly(W)$ length unit update turnstile streams.
\begin{theorem}[see \cref{thm:smp-applications-full} for a full list]
Any randomized turnstile streaming algorithm that solves the following problems on unit update streams of length at most $W\geq \poly(n) $ with success probability at least $ 2 / 3$ must use the stated amount of space.
\begin{enumerate}
\item ($L_0$ estimation) output a $(1+\epsilon)$-approximation to the $L_0$ norm of the final frequency vector must use
$
\Omega\!\left(\frac{\epsilon^{-2}\log n\,\log\log W}{\log(\epsilon^{-2}\log n\,\log\log W)}\right)
$
bits. Furthermore, if $W \geq  (\epsilon^{-2} \log n)^{\Omega(\epsilon^{-2} \log n)}$, this improves to $\Omega(\epsilon^{-2}\log n\,\log\log W)$ bits. 
\item (Approximate maximum matching) in strict turnstile streams whose final frequency vector encodes a multigraph on $n$ vertices, output an $n^{\epsilon}$-approximate maximum matching must use $\Omega(n^{2-3\epsilon-o(1)})$ bits.
\end{enumerate}
\end{theorem}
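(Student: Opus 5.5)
The proof combines two ingredients: the exact transfer (\cref{cor:yao-minmax-exact-transfer}) and a known public-coin simultaneous (SMP) lower bound. I treat each item of the theorem in turn.

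\textbf{Setup: from a streaming algorithm to a public-coin linear sketch.} Suppose an $S$-bit randomized turnstile algorithm succeeds with probability $2/3$ on unit-update streams of length at most $W$. Apply \cref{cor:yao-minmax-exact-transfer} with $R$ chosen so that $\poly(R)\le W$, hence $\log R=\Theta(\log W)$. For item (2), use the strict-turnstile/ROBP version of the corollary.

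This yields a public-coin linear sketch that is correct on every final vector $x$ with $\|x\|_2\le D$. Its dimension is $O(S)$, and it uses $O(S\log S)$ bits. When $S=O(\log R)$, the bit count drops to $O(S)$.

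The norm bound $D$ must cover the hard inputs. For $L_0$ the hard instances are $0/\pm1$-type vectors with bounded entries, so $D=\poly(n)$ suffices. For matching, multigraphs with $\poly(n)$ edge multiplicities likewise have $\|x\|_2\le\poly(n)$. In both cases the hypothesis $W\ge\poly(n)$ lets us pick $R\ge\poly(D,n)$.

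The sketch then gives an SMP protocol in the standard way. Each player sends $\mathcal{A}x^{(i)}$, encoded in $O(S\log S)$ bits. The referee sums the messages, and because the sums stay bounded within the sketch's support, the summed message is still correct.

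\textbf{Item (1), $L_0$ estimation.} Invoke the public-coin simultaneous lower bound for $L_0$ estimation with universe size $n$ and update magnitude $W$. This bound should be of order $\Omega(\epsilon^{-2}\log n\,\log\log W)$ total bits. It comes from the $\log\log W$ term of \cite{KNW10}-style hard instances: hidden multiplicities of the form $\pm 2^j$ with $j\le\log W$, combined with an $\epsilon^{-2}\log n$ augmented-indexing component.

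Since the protocol sends $O(S\log S)$ bits, set $T:=\epsilon^{-2}\log n\log\log W$. Then $S\log S=\Omega(T)$, which gives $S=\Omega(T/\log T)$.

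For the improvement: if $S=O(\log R)$, the space is $O(S)$ and we get $S=\Omega(T)$ directly. Otherwise $S=\Omega(\log R)=\Omega(\log W)$. When $W\ge (\epsilon^{-2}\log n)^{\Omega(\epsilon^{-2}\log n)}$, we have $\log W\gtrsim \epsilon^{-2}\log n\cdot\log(\epsilon^{-2}\log n)$, and this already dominates $T$ up to constants, since $\log\log W\lesssim \log(\epsilon^{-2}\log n)+\log\log(\epsilon^{-2}\log n)$ in the critical regime. If that comparison fails, the bound instead absorbs the $\log S$ factor: $\log S\le\log\log W+O(\log\log n)$ is dominated by $\log(\epsilon^{-2}\log n)$ under the hypothesis. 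This dichotomy, exact-space regime versus large-$S$ regime, is the delicate step. I would handle it by splitting on whether $S\le c\log R$.

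\textbf{Item (2), approximate maximum matching.} Use the public-coin SMP lower bound of \cite{assadi2016maximum}: $n^{2-3\epsilon-o(1)}$ bits for $n^\epsilon$-approximate matching in dynamic graph streams. This bound is stated for multigraph edge-vectors with bounded multiplicities, so it fits the strict-turnstile version of the transfer.

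The $\log S$ loss is at most $O(\log n)$, because $S\le n^2\log W$ and we may assume $W$ is polynomial without loss of generality. This loss is absorbed into the $n^{-o(1)}$ factor.

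\textbf{Main obstacle.} The hardest part is checking that the hard distributions of the cited SMP bounds lie inside the ball $\|x\|_2\le D$ with $D$ polynomial. It also requires checking that the public-coin sketch produced by the corollary has the bounded support the SMP argument needs, so that message lengths are counted correctly.
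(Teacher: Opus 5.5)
Your overall route is the paper's. You convert the algorithm via \cref{cor:yao-minmax-exact-transfer} into a public-coin sketch with at most $(2+S/\log R)^{O(S)}$ values, run it as a simultaneous protocol, and invoke an SMP bound. Item~(2) is fine: using \cite[Theorem~3.1]{assadi2016maximum} with strict-turnstile padding and absorbing $\log S=O(\log n)$ into $n^{-o(1)}$ is what the paper does.

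The genuine gap is the choice of the radius $D$ in item~(1). You claim the $L_0$ hard instances are $0/\pm1$-type vectors, so $D=\poly(n)$ suffices. But the $\log\log W$ factor comes precisely from final vectors with large entries. The bound the paper uses, \cite[Theorem~7.3]{DuMWY19}, is $\Omega(\epsilon^{-2}k\log n\,\log\log M)$ for $k$-player linear-sketch protocols under the promise that all final frequencies are at most $M$, and it requires $\epsilon\ge\max\{\sqrt{\log k/k},n^{-0.49}\}$. Its linear scaling in $k$ is what lets you compare it with the per-player message length. A linear sketch sees only the final vector, so a sketch certified only on $\|x\|_2\le\poly(n)$ lets you take only $M=\poly(n)$. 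That yields $\log\log n$, not $\log\log W$.

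Your later description of the hard instance (multiplicities $\pm2^j$ with $j\le\log W$) contradicts the $D=\poly(n)$ claim, and it is not available either. The transfer only certifies correctness on $B_D$ with $2D\le R^{1/3}$ and $R^4\le W$, so entries of size $W$ lie outside the sketch's domain. The fix is to take $D$ polynomially related to $W$; the paper uses $R:=W^{1/4}$ and $D:=R^{1/3}/2$. Then apply the bound with $M=D/\sqrt n$, so the hard inputs lie in $[0,D/\sqrt n]^n\subseteq B_D$. Finally use $\log\log(D/\sqrt n)=\Theta(\log\log W)$, which holds because $D\ge n^{20}$.

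For the improved bound, splitting on whether $S\le c\log R$ is the right move. It is equivalent to the paper's contradiction argument, since in that case the message length $O(S\log(2+S/\log R))$ is $O(S)$. Your treatment of the other case, however, is incomplete. The comparison via $\log\log W\lesssim\log(\epsilon^{-2}\log n)$ holds only near the threshold and fails for larger $W$. The clause $\log S\le\log\log W+O(\log\log n)$ is unjustified and should be dropped. What you need is $\log W/\log\log W=\Omega(\epsilon^{-2}\log n)$. This follows from $W\ge(\epsilon^{-2}\log n)^{C\epsilon^{-2}\log n}$ for every such $W$, because $L\mapsto L/\log L$ is increasing. Then $S>c\log R$ gives $S=\Omega(\log W)=\Omega(\epsilon^{-2}\log n\,\log\log W)$ directly.
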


\subsection{Related work}
\paragraph{Equivalences between streaming and linear sketching.}
Li, Nguyen, and Woodruff~\cite{li2014turnstile}, building on Ganguly~\cite{ganguly2008lower}, showed that any one-pass turnstile streaming algorithm using $S$ bits can be simulated by a modular linear sketch $Ax \bmod q$. Ai, Hu, Li, and Woodruff~\cite{AHLW16} extended this equivalence to strict turnstile streams and to multi-pass algorithms, and to bounded intermediate state streams.

However, all these equivalences require the algorithm to tolerate extremely long streams. The approach of~\cite{li2014turnstile,AHLW16} is fundamentally combinatorial: it models the streaming algorithm as a finite automaton, studies the transition graph and inserts long random walks to drive the automaton toward stationarity on strongly connected components. The resulting mixing requirement forces the stream length to be doubly exponential in~$S$ and $n$, far beyond any practical regime.
Furthermore, because the reduction operates on the transition graph, it
inherently requires a \emph{uniform} transition function (the same update rule
at every time step) and does not extend to non-uniform models such as
Read-Once Branching Programs (ROBPs).

A related issue is how well the reduction preserves the \emph{dimension} of the resulting sketch.
The reduction produces a sketch $A\cdot x\bmod q$ with $r$ rows, where each
modulus~$q_i$ may be as small as~$2$. The total bit
complexity~$\sum_i \log q_i$ is essentially~$S$, but the sketch dimension~$r$
can be as large as~$S$ itself. For \emph{approximation problems}---which cover
the vast majority of streaming applications (norm estimation, frequency
moments, etc.)---dimension lower bounds are the natural measure
of hardness: a rich line of work proves dimension lower bounds for real-valued
linear sketches~\cite{GangulyW18,LiW16,PriceW11,NeedellSW22, SwartworthW23}, and the recent lifting
framework of~\cite{gribelyuk2025lifting} converts these to discrete dimension
lower bounds. Since these lower bounds are phrased in terms of the number of
rows rather than total bits, a reduction that inflates dimension by up to a
factor of~$\log m$ (as can happen when all moduli are small) yields weaker
implications than one that preserves dimension.

Taken together, these features make the reductions of~\cite{li2014turnstile,AHLW16} especially useful as conceptual equivalence results and as tools for lower bounds on unrestricted streams, while still leaving open the structural picture in the realistic polynomial-length regime studied here. 

\paragraph{Separations for restricted streams.}
Kallaugher and Price~\cite{KP20} showed that the long-stream requirement above is not merely a proof artifact: for linear-length streams they obtained \emph{exponential separations} between turnstile algorithms and linear sketches, including problems with $O(\log n)$-space streaming algorithms but linear sketches requiring $\Omega(n^{1/3}/\log^{1/3} n)$ space. Their constructions rely on strong promises on every intermediate state, under two representative types of restrictions:
\begin{itemize}
\item \textbf{Binary/box-constrained streams.} If the intermediate states are
  promised to lie in $\{0,1\}^n$ (or more generally in
  $\{-(2M-1),\ldots,2M-1\}^n$ for a parameter~$M$), an efficient
  non-linear algorithm can exploit the fact that each update reveals
  information about the new coordinate value.
\item \textbf{Linear length streams.} For triangle counting on
  (bounded-degree) graphs, if the stream length is linear in the number of vertices, efficient non-linear algorithms exist by adapting the insertion-only sampling
  paradigm.
\end{itemize}
Crucially, these separations exploit structural promises on the \emph{intermediate states} of the stream that are absent in general turnstile problems. Our result shows that the
intermediate-state or linear length constraints they impose are \emph{necessary} for the separation: removing them restores the equivalence at polynomial stream lengths. Indeed, our proof requires only that for every reachable frequency vector~$v$, the streaming algorithm (possibly
randomized) is correct on \emph{all} streams of polynomial length that reach~$v$---a condition that holds automatically for standard turnstile algorithms without special intermediate-state promises.  

Kallaugher and Price~\cite{KP20} also gave an explicit deterministic reduction for \emph{total functions}, converting an $S$-space streaming algorithm into a linear sketch using $O(S \log n)$ space for streams of length $n + O(S)$. This argument does not extend to approximation, promise problems or binary relations, since it uses that two inputs reaching the same state must have the same unique correct answer.

\paragraph{XOR streams and Modular updates.} Another related line of work considers XOR streams or other modular
updates~\cite{KMSY18,hosseini2018optimality}. These models differ
substantially from the standard turnstile model over $\mathbb Z$: over
$\mathbb F_2$, insertions and deletions coincide, and more generally updates
are taken modulo $p$. Kannan, Mossel, Sanyal, and Yaroslavtsev~\cite{KMSY18}
initiated the study of this setting by relating $\mathbb F_2$-sketching to
one-way communication for XOR functions, and showed that streaming algorithms
under uniformly random XOR updates can be converted to $\mathbb F_2$-sketches
for the uniform distribution with only a small loss, already for streams of
length $\widetilde O(n)$. Hosseini, Lovett, and
Yaroslavtsev~\cite{hosseini2018optimality} extended this perspective to
modular updates and approximation, and in particular showed that for total
functions (or approximate functions) the streaming/sketching equivalence already holds under the much
milder assumption that the algorithm works on streams of length
$\widetilde O(n^2)$. These results are complementary to ours: they rely on the
special algebraic structure of XOR/modular updates, whereas we work in the
standard turnstile model over $\mathbb Z$.

\numberwithin{theorem}{subsection}
}
%!TEX root = main.tex

\section{Technical overview}
\label{sec:technical-overview}
The core of our result is a reduction showing that any space-efficient turnstile algorithm for polynomial-length streams can be transformed into an equivalent linear sketch.
To convey the intuition, let us focus on a decision problem $f: \mathbb{Z}^n \to \{0, 1\}$.
Given a turnstile streaming algorithm for $f$ using $S$ bits of space, we aim to ``extract'' a (random) sketching matrix $\mathcal{A}$ with $O(S)$ rows and bounded integer entries such that the sketch $\mathcal{A}x$ is sufficient to recover $f(x)$ on the target inputs of interest.
 
\paragraph{Noisy landing experiment.}
To expose the linear information retained by the streaming algorithm, we run it on a carefully chosen distribution of ``noisy'' streams.
Suppose the algorithm processes a sequence of $M = \text{poly}(n)$ independent increments $X_1, X_2, \dots, X_M$, where each $X_i$ is sampled from an $n$-dimensional discrete Gaussian distribution $\gamma_R$ with a large radius $R = \text{poly}(n)$.
After these $M$ updates, the current frequency vector is $X = \sum_{i=1}^M X_i$.
Finally, to test the algorithm on a target vector $y$, we append the increment $(y - X)$, effectively ``landing'' the stream at $y$.
The total length of this stream is still polynomial.
The intuition here is that repeated addition of large random updates should wash out any genuinely nonlinear dependence on the past, leaving behind only robust linear constraints.

\paragraph{Conditioning on a state sequence.}
If the algorithm uses only $S$ bits of space, then it compresses the information in $X_1,\dots,X_M$ into a short state sequence.
Let $\sigma_i \in \{0,1\}^S$ be the memory state after processing $X_i$, and fix a state sequence $\sigma=(\sigma_1,\dots,\sigma_M)$.
Conditioned on these states, each increment acquires a posterior law $\mu_i=(X_i\mid \sigma)$.
Because the algorithm is read-once, conditioning on the \emph{whole state sequence} still leaves the increments independent.
Hence the conditioned total vector $X$ has law $\nu=\mu_1*\mu_2*\cdots *\mu_M$.
Once $\sigma$ is fixed, the whole problem reduces to understanding this convolution measure $\nu$.

We will describe later how to build $\mathcal{A}$ from the posterior laws $\mu_i$.
Before that, it is useful to identify the property we need from $\mathcal{A}$.
Throughout the reduction, we only need the sketch to work for target vectors $y$ of polynomially bounded norm; in particular, we care about the regime $\|y\|_2\ll R^{0.1}$, where $R$ is the radius of each discrete Gaussian.
Thus, for any two such targets $y_1,y_2$ with the same sketch value $\mathcal{A}y_1=\mathcal{A}y_2$, we want the streaming algorithm to be essentially forced to give the same answer, and hence to conclude $f(y_1)=f(y_2)$.

\paragraph{Fiberwise decoding.} From the algorithm's perspective, after processing the random prefix and reaching the state sequence $\sigma$, the final update needed to land at $y_1$ is $y_1-X$, while for $y_2$ it is $y_2-X$.
Conditioned on $\sigma$, these two final updates are exactly the translates of $-\nu$ by $y_1$ and $y_2$.
Therefore, if $v:=y_1-y_2$ lies in the kernel of $\mathcal{A}$ and the distributions $\nu$ and $\tau_v\nu:=\nu((\cdot)-v)$ are close in total variation distance, then the algorithm can barely distinguish the two landing updates.
Consequently it must behave nearly identically on $y_1$ and $y_2$.\footnote{The extracted matrix $\mathcal{A}$ may depend on the chosen state sequence $\sigma$, so the resulting sketching algorithm is randomized. One may also imagine first applying Yao's minimax principle to fix a hard input distribution for linear sketches; then it would suffice to find a single good state sequence and extract a deterministic sketch from it.} Hence, the correctness of the turnstile streaming algorithm gives $f(y_1) = f(y_2)$.

Therefore, the central technical task is to construct a matrix $\mathcal{A}$ with $O(S)$ rows that satisfies this sufficiency condition: it must ``capture'' all directions $v$ where the distribution $\nu$ is sensitive to shifts, leaving only translation-invariant directions in its kernel.

\subsection{\texorpdfstring{Translation invariance of $\nu$}{Translation invariance of nu}}
\label{subsec:overview-translation}
To extract $\mathcal{A}$, we must understand the directions in which the aggregate distribution $\nu=\mu_1*\cdots *\mu_M$ is sensitive to shifts.
Recall that each posterior law $\mu_i=(X_i\mid \sigma_i)$ comes from conditioning a discrete Gaussian update $X_i$ on an $S$-bit memory state. This suggests that, conditioning on a typical state $\sigma_i$ that appears with probability $2^{-O(S)}$, $\mu_i$ should stay widely spread, with a pointwise upper bound of the form $\mu_i \le 2^{O(S)}\gamma_R$.

\paragraph{Warm-up: the completely diffuse case.}
As a first sanity check, suppose the algorithm behaves in a purely nonlinear fashion, so that each $\mu_i$ is just the Gaussian $\gamma_R$ restricted to a large, random support.
Then a high-dimensional Central Limit Theorem (CLT) suggests that, when $M$ is large enough, after convolving $M$ such independent pieces, the distribution $\nu$ should resemble a broad Gaussian of radius about $R\sqrt M$.
In that regime, the distribution is so ``blurred'' that $\nu$ and its translate $\tau_v\nu$ have total variation distance roughly $O(\|v\|_2/(R\sqrt M))=o(1)$ for every $v$ with $\|v\|_2\ll R^{0.1}$.

Such an algorithm is effectively ``blind'' to small shifts, meaning it can only compute trivial functions of the input when the frequency vector is polynomially smaller than the radius.
In this case, no sketch is required, and $\mathcal{A}$ can be empty.

\paragraph{Two model obstructions.}
In general, however, the convolution need not diffuse across all of $\mathbb Z^n$.
Two toy examples illustrate the genuine obstructions that prevent $\nu$ from becoming translation-invariant:
\begin{itemize}
	\item \emph{Geometric subspace obstruction.} Suppose every $\mu_i$ is the Gaussian $\gamma_R$ restricted to a specific high-dimensional subspace $W \subset \mathbb{R}^n$.
	Then $\nu$ is also supported on $W$.
	Any shift $v$ with a component orthogonal to $W$ moves the distribution off its support, so the TV distance becomes $1$.
	To handle this, the matrix $\mathcal{A}$ must discover the orthogonal complement $W^\perp$ and include a basis for it as rows. 
	This ensures that $\mathcal{A}v=0$ implies $v \in W$.
	\item \emph{Arithmetic lattice obstruction.} Suppose every $\mu_i$ is supported only on ``even'' points, i.e., on vectors satisfying $\sum_j x_j \equiv 0 \pmod 2$.
	Then $\nu$ remains trapped in the same lattice.
	Shifting by an ``odd'' vector produces disjoint support, again giving TV distance $1$.
	In this case, the sketch must record the relevant parity constraint, for instance by including the all-ones vector as a row,  ensuring that $\mathcal{A}v=0$ forces $v$ to be an even vector.
\end{itemize}

These examples suggest that $\mathcal{A}$ must capture the ``non-diffuse'' directions of the posterior distributions. We formalize this using Fourier analysis.

\subsection{\texorpdfstring{Fourier analysis of $\mu_i$}{Fourier analysis of mu_i}}
\label{subsec:overview-fourier}
We now explain why Fourier analysis is the right language for proving translation-invariance.

\paragraph{Warm-up: the pure Gaussian case.}
First consider the ideal situation in which the streaming algorithm reveals no information at all, so $\mu_i=\gamma_R$ for every $i\in[M]$. Then $\nu=\gamma_R*\cdots *\gamma_R$ is simply the convolution of $M$ independent discrete Gaussians. In this setting, we expect $\nu$ to be insensitive to the small shifts relevant to our reduction, so no nontrivial sketching matrix should be needed.

For a distribution $\mu$ on $\mathbb Z^n$, we define its Fourier transform $\widehat{\mu}: \mathbb{T}^n \to \mathbb{C}$ by
$$
\widehat{\mu}(\zeta)=\sum_{x\in\mathbb Z^n}\mu(x)e^{-2\pi i\langle \zeta,x\rangle},
\qquad \zeta\in\mathbb T^n,
$$
where $\mathbb T^n=[-1/2,1/2)^n$ is the $n$-dimensional torus. 

For the discrete Gaussian $\gamma_R$, the Fourier transform $\widehat{\gamma_R}(\zeta)$ is sharply concentrated near the origin:
$
\widehat{\gamma_R}(\zeta)\approx \exp(-\Theta(R^2\|\zeta\|_{\mathbb T^n}^2))
$ (\cref{lem:dg-fourier-decay-general}).
Since $\nu$ is the convolution of $M$ independent copies of $\gamma_R$, its Fourier transform is the pointwise product of the individual transforms:
$$\widehat{\nu}(\zeta) = \prod_{i=1}^M \widehat{\mu}_i(\zeta) = \left( \widehat{\gamma_R}(\zeta) \right)^M \approx \exp(-\Theta(M R^2 \|\zeta\|_{\mathbb{T}^n}^2)).$$
This $M$-fold product drastically sharpens the concentration. In particular, frequencies with
$
\|\zeta\|_{\mathbb T^n} \gg 1/(R\sqrt{M})
$
the value of $\widehat{\nu}(\zeta)$ becomes negligibly small.

Now consider the shifted distribution $\tau_v \nu$, where $\| v\|_2 \ll R^{0.1}$. 
Its Fourier transform is simply $\widehat{\nu}(\zeta) \cdot e^{-2\pi i \langle \zeta, v \rangle}$. 
We can bound the difference between the two distributions in the frequency domain:
$$\widehat{\nu - \tau_v \nu}(\zeta) = \widehat{\nu}(\zeta) \left( 1 - e^{-2\pi i \langle \zeta, v \rangle} \right).$$
There are two regimes. For ``large'' frequencies, namely $\|\zeta\|_{\mathbb T^n} \gg 1/(R\sqrt{M})$, the term $\widehat{\nu}(\zeta)$ is already tiny. For ``small'' frequencies, namely $\|\zeta\|_{\mathbb T^n} \lesssim 1/(R\sqrt{M})$, the phase factor $(1 - e^{-2\pi i \langle \zeta, v \rangle})$ is bounded by $O(|\langle \zeta, v \rangle|) \leq O(\|\zeta\|_{\mathbb T^n} \cdot \|v\|_2)$.
Substituting the bound on $\|\zeta\|_{\mathbb T^n}$, we obtain a pointwise bound of order $O(\|v\|_2/(R\sqrt{M}))$ in the small-frequency regime as well. By Parseval's identity, this gives a small $\ell_2$ distance between $\nu$ and $\tau_v \nu$.

\paragraph{From $\ell_2$ control to TV distance via line-based analysis.}
The remaining issue is that in high dimension one cannot directly convert this global $\ell_2$ bound into total variation ($\ell_1$ distance) without losing too much. To avoid this, we analyze $\nu$ one line at a time in the direction $v$.

Fix $x \in \mathbb{Z}^n$, and consider the one-dimensional restriction $\nu_{x,v}(t):=\nu(x+tv)$ for $t\in\mathbb Z$. Its Fourier transform $\widehat{\nu_{x,v}}$ is essentially a one-dimensional slice of $\widehat{\nu}$, obtained by integrating $\widehat{\nu}$ over an affine subspace orthogonal to $v$. Since $\widehat{\nu}$ is highly concentrated near the origin, the same is true for its $1$-dimensional marginal $\widehat{\nu_{x,v}}$. Applying the same Fourier argument in one dimension and then Parseval's identity shows that $\sum_t (\nu_{x,v}(t)-\nu_{x,v}(t+1))^2$ is small. Along a single line, Cauchy--Schwarz is now harmless, because we only need to control the effective one-dimensional support of $\nu_{x,v}$; this yields a bound on the $\ell_1$ distance $\sum_t |\nu_{x,v}(t)-\nu_{x,v}(t+1)|$. Finally, summing over all lines parallel to $v$ gives a small value of $d_{\mathrm{TV}}(\nu,\tau_v\nu)$.

The Gaussian warm-up shows that translation-invariance follows once the Fourier mass of $\nu$ is concentrated near the origin. Therefore, for general posterior laws $\mu_i$, the main task is to identify the frequencies $\zeta$ (away from the origin) with a large $\widehat{\nu}(\zeta)$. By incorporating these ``heavy'' frequencies as rows of our sketching matrix $\mathcal{A}$, we ensure that every shift $v$ in the kernel of $\mathcal{A}$ is nearly orthogonal to all heavy $\zeta$. Then, every one-dimensional slice $\widehat{\nu_{x,v}}$, which is an integration of $\widehat{\nu}$ over an affine subspace orthogonal to $v$, sees Fourier mass only near the origin (large offset implies non-orthogonality), and the same line-based argument as above again implies that $\nu$ is nearly invariant under translation by $v$.

Figure~\ref{fig:proof-pipeline} summarizes the four structural steps of the proof.
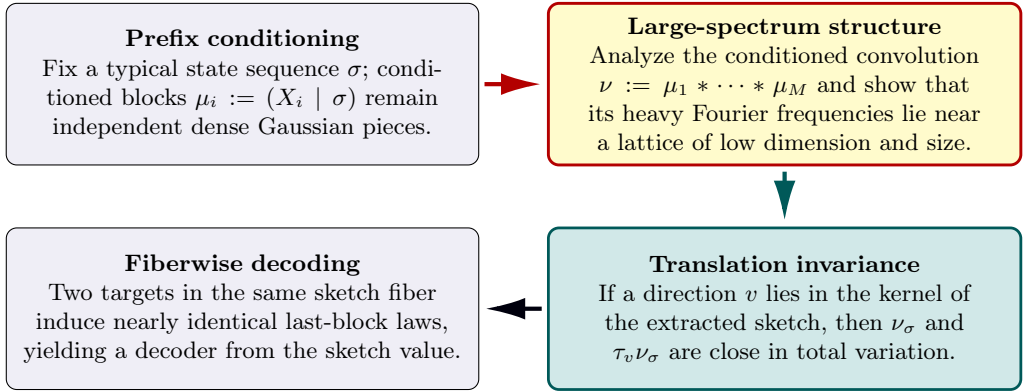
\begin{figure}[t]
\centering
\small
\begin{tikzpicture}[
font=\footnotesize,
box/.style={draw, rounded corners, align=center, fill=Blue!5, text width=6cm, inner sep=4pt, minimum height=2.15cm},
mainbox/.style={box, fill=yellow!25, draw=red!70!black, line width=1.1pt},
tvbox/.style={box, fill=teal!18, draw=teal!70!black, line width=1.05pt},
arr/.style={-{Latex[length=5mm,width=2.6mm]}, very thick, shorten >=2pt, shorten <=2pt},
mainarr/.style={-{Latex[length=5.4mm,width=2.9mm]}, ultra thick, draw=red!70!black, shorten >=2pt, shorten <=2pt},
tvarr/.style={-{Latex[length=5.4mm,width=2.9mm]}, ultra thick, draw=teal!70!black, shorten >=2pt, shorten <=2pt},
otherr/.style={-{Latex[length=5.4mm,width=2.9mm]}, ultra thick, draw=blue!5!black, shorten >=2pt, shorten <=2pt},
node distance=8mm and 9mm
]
\node[box] (cond) {\textbf{Prefix conditioning} \\Fix a typical state sequence $\sigma$; conditioned blocks $\mu_i := (X_i\mid\sigma)$ remain independent dense Gaussian pieces.};
\node[mainbox, right=of cond] (spect) {\textbf{Large-spectrum structure} \\Analyze the conditioned convolution $\nu:=\mu_1*\cdots*\mu_M$ and show that its heavy Fourier frequencies lie near a lattice of low dimension and size.};
\node[tvbox, below=of spect] (tv) {\textbf{Translation invariance} \\If a direction $v$ lies in the kernel of the extracted sketch, then $\nu_\sigma$ and $\tau_v\nu_\sigma$ are close in total variation.};
\node[box, left=of tv] (transfer) {\textbf{Fiberwise decoding} \\Two targets in the same sketch fiber induce nearly identical last-block laws, yielding a decoder from the sketch value.};

\draw[mainarr] (cond) -- (spect);
\draw[tvarr] (spect) -- (tv);
\draw[otherr] (tv) -- (transfer);
\end{tikzpicture}
\caption{The four structural steps of the proof. Within this overview, \cref{subsec:overview-translation} explains the obstructions to obtain translation-invariance, \cref{subsec:overview-fourier} develops the Fourier-analytic translation-invariance mechanism. The pipeline then splits into two routes: the \emph{exact route}, which organizes the entire large spectrum into a lattice structure and yields $O(S\log S)$ space (\cref{subsec:overview-heavy-zeta}), and the \emph{mollified route}, which exploits smoothness to restrict attention to near-origin frequencies and yields an $O(S)$-space sketch (\cref{subsec:overview-mollified-counting}).}
\label{fig:proof-pipeline}
\end{figure}

\subsection{\texorpdfstring{Identification of heavy $\zeta$: exact route}{Identification of heavy zeta: exact route}}
\label{subsec:overview-heavy-zeta}
To generalize from the Gaussian case, we first focus on finding the ``heavy'' frequencies of the individual posterior distributions $\mu_i$.
The key technical tool is the following lemma, which uses the fact that $\mu_i$ is ``widely spread'' (relative to a Gaussian) to bound the number of independent frequencies it can support.
\begin{lemma}[Informal, see \cref{cor:kappa-dissociated-upper-general}]\label{lem:dissociated-intro}
	Let $\mu \leq 2^{O(S)} \cdot \gamma_R$ pointwise. 
	Suppose $A = \{\zeta_1, \zeta_2, \ldots\} \subseteq \mathbb{T}^n$ is a set of frequencies such that $|\widehat{\mu}(\zeta_i)| \geq 1/2$ for all $i$. 
	If $A$ is $O(\sqrt{S}/R)$-dissociated, then $|A| \leq O(S)$.
\end{lemma}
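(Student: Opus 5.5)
We follow the Chang-lemma paradigm: compare $\mu$ with $\gamma_R$ via a change of measure and use a Riesz-product or sub-Gaussian moment argument on the dissociated frequencies. First, recall what $\eta$-dissociated should mean here. No nontrivial signed combination $\sum_i \epsilon_i \zeta_i$ with $\epsilon_i\in\{-1,0,1\}$ lies within torus distance $\eta$ of the origin, with $\eta = C\sqrt{S}/R$. Since $\widehat{\gamma_R}(\xi)\approx \exp(-\Theta(R^2\|\xi\|_{\mathbb T^n}^2))$ by \cref{lem:dg-fourier-decay-general}, every such nonzero combination $\xi$ satisfies $|\widehat{\gamma_R}(\xi)|\le e^{-cC^2 S}$. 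With $C$ large, this is much smaller than $2^{-O(S)}$ times any desired loss.

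\textbf{Main step (Riesz-product bound).} Write $\widehat{\mu}(\zeta_j)=|\widehat\mu(\zeta_j)|e^{-i\theta_j}$. For $k=|A|$, consider the test function
\[
P(x)=\prod_{j=1}^k\bigl(1+\cos(2\pi\langle\zeta_j,x\rangle+\theta_j)\bigr)\ge 0 .
\]
We bound $\sum_x\mu(x)P(x)$ from both sides.

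For the upper bound, the pointwise domination $\mu\le 2^{O(S)}\gamma_R$ together with $P\ge0$ gives $\sum_x\mu P\le 2^{O(S)}\sum_x\gamma_R P$. Expanding $P$ into characters $e^{2\pi i\langle\sum_j\epsilon_j\zeta_j,x\rangle}$, the empty product contributes $1$. Every other term of $\sum_x \gamma_R P$ involves a nonzero $\epsilon$-combination and so is bounded by $e^{-cC^2S}$. Summing over the $3^k$ terms gives $\sum_x\gamma_R P\le 1+3^k e^{-cC^2S}$.

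For the lower bound, the direct Riesz-product route does not single out the heavy frequencies, so we switch to a moment method. Let $g(x)=\sum_{j}\cos(2\pi\langle\zeta_j,x\rangle+\theta_j)$. Then $\mathbb E_\mu g=\sum_j|\widehat\mu(\zeta_j)|\ge k/2$. Under $\gamma_R$, dissociativity makes $g$ sub-Gaussian with variance proxy $O(k)$. To see this, expand $\mathbb E_{\gamma_R}e^{\lambda g}$ as the product $\prod_j \mathbb E\, e^{\lambda\cos(\cdot)}$, exactly as for independent phases. The error terms are characters at nonzero signed combinations and are damped by $e^{-cC^2S}$; here we take $\lambda$ bounded, so only combinations with coefficients in $\{-1,0,1\}$ up to bounded multiplicity appear, possibly after truncating the Taylor series at degree $O(S)$. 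This yields $\mathbb E_{\gamma_R}e^{\lambda g}\le e^{O(\lambda^2 k)}+\text{tiny}$. Changing measure via Jensen's inequality,
\[
\lambda\,\mathbb E_\mu g\le \log\mathbb E_\mu e^{\lambda g}\le O(S)+\log \mathbb E_{\gamma_R}e^{\lambda g}\le O(S)+O(\lambda^2k).
\]
Taking $\lambda$ a small constant gives $\lambda k/2\le O(S)+O(\lambda^2 k)$, hence $k=O(S)$.

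\textbf{Main obstacle.} The delicate point is that the error terms grow combinatorially, like $3^k$ or the degree-$O(S)$ moment count, while the damping is only $e^{-cC^2S}$. This is why we first reduce to $|A|\le O(S)$ by passing to a subset. Dissociation is inherited by subsets, so if $|A|>C'S$ we apply the argument to a subset of size exactly $C'S$ and derive a contradiction. For that subset, $3^{C'S}e^{-cC^2S}$ is negligible once $C\gg\sqrt{C'}$. A second technical check is the precise Fourier decay of the discrete Gaussian at torus distance $\eta$, which \cref{lem:dg-fourier-decay-general} supplies.
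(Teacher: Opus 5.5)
Your skeleton is the paper's: Jensen gives $\mathbb E_\mu e^{\lambda g}\ge e^{\lambda k/2}$, and pointwise domination moves this to $\gamma_R$ at a cost of $2^{O(S)}$. Everything then hinges on the bound $\mathbb E_{\gamma_R}e^{\lambda g}\le e^{O(\lambda^2k)}+\text{tiny}$, and that step does not go through as you justify it. Expanding each factor gives $e^{\lambda\cos\phi}=\sum_{m\in\mathbb Z}I_{|m|}(\lambda)e^{im\phi}$, where $I_m$ are the modified Bessel functions. This produces characters at $\sum_j m_j\zeta_j$ for \emph{all} integer vectors $(m_j)$, however small $\lambda$ is. Truncating the Taylor series of $e^{\lambda g}$ at degree $O(S)$ has the same defect, because $g^d$ contains repeated frequencies. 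Dissociation only controls $\{-1,0,1\}$-combinations, and a combination with some $|m_j|\ge 2$ can vanish exactly. For example, $\{\tfrac12 e_1\}$ is $\kappa$-dissociated for every $\kappa\le 1/2$, yet $\widehat{\gamma_R}(2\cdot\tfrac12 e_1)=1$. If you bound these undamped terms by the absolute values of their coefficients, their total is $e^{\lambda k}-(I_0(\lambda)+2I_1(\lambda))^k=e^{\lambda k}\bigl(1-e^{-\Theta(\lambda^2k)}\bigr)$. That is essentially the trivial bound $e^{\lambda g}\le e^{\lambda k}$, which cannot beat $e^{\lambda k/2}$, so no contradiction follows.

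The missing idea is to linearize before expanding. By convexity, $e^{\lambda t}\le\cosh\lambda+t\sinh\lambda$ for $t\in[-1,1]$, so pointwise
\[
e^{\lambda g(x)}\le\prod_{j=1}^k\bigl(\cosh\lambda+\sinh\lambda\,\cos\phi_j(x)\bigr)=\cosh(\lambda)^k\prod_{j=1}^k\bigl(1+\tanh\lambda\,\cos\phi_j(x)\bigr).
\]
This is a damped Riesz product, so your own Riesz-product computation now applies:
\begin{itemize}
\item only characters at $\{-1,0,1\}$-combinations occur;
\item the constant term is $\cosh(\lambda)^k\le e^{\lambda^2k/2}$;
\item the remaining coefficients have total mass $(\cosh\lambda+\sinh\lambda)^k=e^{\lambda k}$, and by \cref{lem:dg-fourier-decay-general} each is multiplied by at most $e^{-C^2S/5}$.
\end{itemize}
This yields $\mathbb E_{\gamma_R}e^{\lambda g}\le e^{\lambda^2k/2}+e^{\lambda k-C^2S/5}$, which is exactly the paper's coarse Rudin inequality (\cref{lem:coarse-rudin-general}). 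So the Riesz product is not a dead end to abandon; it is the tool that proves your sub-Gaussian claim.

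With it, your parameter choice also closes the argument: take $\lambda$ constant after passing to a subset of size $C'S$, with $C^2\gg\lambda C'$. The paper instead takes $\lambda=4S/m$ under the assumption $m\ge 14S$. That keeps the error mass at $e^{4S}$ against damping $e^{-5S}$, so it needs no subset step.

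One minor fix: with $\widehat\mu(\zeta_j)=|\widehat\mu(\zeta_j)|e^{-i\theta_j}$, the phase must enter as $\cos(2\pi\langle\zeta_j,x\rangle-\theta_j)$ for $\mathbb E_\mu g=\sum_j|\widehat\mu(\zeta_j)|$ to hold.
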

\noindent A set $A$ is $\delta$-dissociated if every nontrivial $\{-1,0,1\}$-linear combination of its elements has norm at least $\delta$. We now explain the geometric intuition behind this lemma. If $|\widehat{\mu}(\zeta)|$ is large, then $\mu$ is strongly aligned with the character $x\mapsto e^{-2\pi i\langle \zeta,x\rangle}$.
Geometrically, this means that $\mu$ concentrates on a union of slices in $\mathbb{Z}^n$ (hyperplanes orthogonal to $\zeta$).
If we have a dissociated set $A$ of such frequencies, these ``unions of slices'' behave as if they are approximately independent. 
Each additional frequency $\zeta_i \in A$ effectively ``cuts'' the available support of the distribution. 
Specifically, the mass of the intersection of $|A|$ such independent slices relative to the underlying Gaussian $\gamma_R$ scales as $2^{-\Omega(|A|)}$. 
Since our pointwise bound $\mu \leq 2^{O(S)} \cdot \gamma_R$ implies that $\mu$ cannot be concentrated on a set of measure smaller than $2^{-O(S)}$, it follows that the number of such independent directions $|A|$ must be at most $O(S)$. 

Despite its geometric intuition, the lemma is proved using tools from additive combinatorics, specifically an adaptation of Chang's large-spectrum argument \cite{Chang02Freiman} to our setting of distributions on $\mathbb{Z}^n$: Chang works primarily in finite Abelian groups, while here the fast decay of the discrete Gaussian $\gamma_R$ plays the role of controlling the Gaussian mass of intersections of the corresponding slices.

\paragraph{Greedy Algorithm for extracting $\mathcal{A}$.}
To see the basic idea, first consider the simplified case where all posterior laws are identical (to $\mu_1$). We then construct $\mathcal{A}$ iteratively:
\begin{itemize}
	\item Find a frequency $\zeta_i$ such that $|\widehat{\mu}_1(\zeta_i)| \geq 1/2$ and $\zeta_i$ is at least $O(\sqrt{S}/R)$-far from the $\{-1, 0 ,1\}$ span of the previously chosen $\{\zeta_1, \dots, \zeta_{i-1}\}$.
	\item Round the coordinates of $\zeta_i$ to the nearest multiple of $1/Q$ (where $Q = \text{poly}(n)$) to ensure the sketch has bounded integer entries.
	\item Add $Q\zeta_i$ as a row in $\mathcal{A}$.
\end{itemize}

The dissociation lemma guarantees that this process must terminate after $O(S)$ steps. Once it stops, every frequency $\zeta$ with large Fourier coefficient $|\widehat{\mu_1}(\zeta)|$ must lie very close to the integer span of the rows of $\mathcal{A}$. This is exactly the property we need. If $v$ lies in the kernel of $\mathcal{A}$ (i.e., $v$ is nearly orthogonal to the rows of $\mathcal{A}$), then all ``heavy'' frequencies of $\mu_1$ satisfy $\langle \zeta, v \rangle \approx 0$. Frequencies away from this span have smaller Fourier coefficients, so after taking the product
$
\widehat{\nu}(\zeta) = \prod_{i=1}^M \widehat{\mu}_i(\zeta),
$
their contribution is damped by the $M$-fold convolution, just as in the Gaussian warm-up. In other words, $\mathcal{A}$ records precisely the directions in which the posterior laws fail to diffuse.

\paragraph{General case via symmetrization.}
When the posterior laws $\mu_i$ are no longer identical, we do not have a single distribution whose large spectrum directly controls $\widehat{\nu}$. To recover such an object, we use the standard symmetrization trick from probability theory. For each $i$, let $\widetilde\mu_i(x):=\mu_i(-x)$ denote the reflected distribution, and consider the symmetrized average $\mu_{\mathrm{sym}}:=\frac{1}{M} \sum_{i=1}^M \mu_i * \widetilde\mu_i$. Then $\widehat{\mu_{\mathrm{sym}}}(\zeta)=\frac{1}{M}\sum_{i=1}^M \big|\widehat{\mu}_i(\zeta)\big|^2$. If a frequency $\zeta$ is still ``sharp'' for the full convolution $\nu$, meaning that $|\widehat{\nu}(\zeta)|=\prod_{i=1}^M |\widehat{\mu}_i(\zeta)|$ is large, then AM--GM implies that the symmetrized average $\widehat{\mu_{\mathrm{sym}}}(\zeta)$ must be close to $1$. Thus the heavy frequencies of $\nu$ are also visible in $\mu_{\mathrm{sym}}$. We can therefore apply the same structural result to $\mu_{\mathrm{sym}}$, extract $O(S)$ controlling directions, and conclude that any $v$ in the kernel of $\mathcal A$ is nearly orthogonal to every heavy frequency of $\nu$. 

\paragraph{Improving space bound.}
The iterative Fourier procedure identifies $O(S)$ heavy frequencies $\{\zeta_1,\dots,\zeta_\ell\}\subset \mathbb{T}^n$.
If we simply round each coordinate to precision $1/Q$ and store the sketch as an ordinary integer matrix $\mathcal A x\in \mathbb Z^\ell$, then each row costs $O(\log n)$ bits, leading to a total space of $O(S\log n)$.
This $O(\log n)$ overhead is often a byproduct of forcing a modular sketch (e.g., over $\mathbb{F}_2$) into an integer representation. 
To achieve tighter space, we must allow the sketch to capture linear constraints modulo various integers $q$.

Consider the case where $\mu_i$ is supported only on ``even'' points (where $\sum x_j \equiv 0 \pmod 2$). 
In this scenario, the Fourier argument will eventually find a frequency $\zeta_i$ that is very close to $(1/2, \dots, 1/2)$. 
Rather than rounding this $\zeta_i$ to a high-precision rational (which would require $O(\log n)$ bits to store), we should round it directly to the vector $(1/2, \dots, 1/2)$.\footnote{If we round it to a multiple of $1/Q$ in every coordinate, and place $Q\zeta_i$ as a row in $A$, the row has each coordinate very close to $Q/2$. When a (small) vector $x$ is sketched by this row, the information $\sum x_j\pmod 2$ can still be recovered. Hence, the distinction is only the space usage.}
By storing the value $\langle \zeta_i, x \rangle \pmod{\mathbb{Z}}$, we capture the ``sum mod 2'' as a single bit of information, naturally matching the algorithm's actual memory usage.

This example suggests that the correct object is not always the real span of the chosen frequencies, but rather the lattice that they generate modulo $\mathbb Z^n$. We address the discretization through four key points:
\begin{itemize}
	\item \textbf{Translation-Invariance in the Kernel:} If the sketch stores values $\langle \zeta_i, x \rangle \pmod{\mathbb{Z}}$, the ``kernel'' of $A$ consists of all vectors $v$ such that $\langle \zeta_i, v \rangle \in \mathbb{Z}$ for all $i$. 
	The proof of translation-invariance remains robust as long as any frequency $\zeta$ with a large Fourier coefficient is close to the lattice generated by the $\zeta_i$. 
	In this case, $v$ is nearly orthogonal, modulo integers, to the non-diffuse directions, and hence $\nu$ should be close to $\tau_v\nu$.

	\item \textbf{Greedy lattice construction:} when a heavy frequency $\zeta_i$ is far from combinations of previous frequencies, we greedily add the chain $\{\zeta_i, 3\zeta_i,\dots, 3^{r-1}\zeta_i\}$ as long as the union remains dissociated. When this process stops, every heavy frequency is already close to a signed combination of the chosen generators, and each chain also gives an approximate relation of the form $k_j \zeta_j\approx v_j$, where $v_j$ is a combination of previously chosen frequencies.

	\item \textbf{Systematic Rounding:} $\zeta_i$ are replaced recursively by exact generators $t_j$ so that $k_j t_j$ lies exactly in the lattice generated by the previous $t_i$. The sketch then records the residues $\langle t_i,x\rangle\pmod{\mathbb Z}$.

	\item \textbf{Bounding the Space:} The $i$-th row contributes $\log k_i$ bits to the total memory. Note that the dissociation \cref{lem:dissociated-intro} applies to the union of all chains, giving a bound of $O(S)$ on the total number of frequencies in all chains, approximately $\Theta(\sum_j \log k_j)$. This gives the desired $O(S)$ space bound.
\end{itemize}
While this lattice approach points toward an $O(S)$ space bound, technical challenges such as the propagation of rounding errors currently prevent a perfect match. 
Consequently, the current reduction only yields a space complexity of $O(S \log S)$, which can be further improved to a tight $O(S)$ if we allow the turnstile stream to be a single exponentially long in $S$.

\subsection{Identification of heavy $\zeta$: mollified route}
\label{subsec:overview-mollified-counting}

For smooth problems---those whose value barely changes when a small Gaussian perturbation is added---the transfer framework achieves tight bounds. The key idea is to modify the stream construction: instead of landing directly at the target~$y$, we append an extra independent Gaussian block $Z\sim\gamma_R$ and land at $y+Z$. Smoothness guarantees that the algorithm's output on~$y+Z$ is still a valid answer for~$y$, so this modification is harmless.

The payoff of this extra Gaussian is dramatic. On the Fourier side, the conditioned convolution measure $\nu=\mu_1*\cdots *\mu_M$ is now convolved with one more copy of $\gamma_R$, so its Fourier transform gains the extra factor
$$
\widehat{\gamma_R}(\zeta)\approx \exp(-\Theta(R^2\|\zeta\|_{\mathbb T^n}^2)).
$$
This damps all frequencies away from the origin by an exponential factor. Concretely, any frequency $\zeta$ with $\|\zeta\|_{\mathbb T^n}\gg 1/R$ has $|\widehat{\gamma_R}(\zeta)|$ exponentially small in $R^2\|\zeta\|^2$, so after multiplication it cannot remain ``heavy'' for the mollified convolution. Thus only the \emph{near-origin} portion of the large spectrum matters---we no longer need the full lattice arguments of the exact route (\cref{subsec:overview-heavy-zeta}), and can in fact extract a sketching matrix of dimension $O(S/\log R)$.

\paragraph{Volume/counting argument.}
To see why the near-origin large spectrum is low-dimensional, consider the constraint imposed by each heavy frequency. A frequency $\zeta_j$ being heavy (for $\mu$) means that for most $x\sim \mu$, $\langle \zeta_j, x\rangle \bmod 1 \approx \beta_j$ for a fixed $\beta_j$. In other words, the distribution $\mu$ is concentrated on periodic thin affine subspaces orthogonal to~$\zeta_j$. If there are $d$ independent heavy frequencies near the origin, the distribution is forced onto the intersection of $d$ such slabs.

The crucial point is that near-origin frequencies have very few periods within the Gaussian scale: a frequency $\zeta$ with $\|\zeta\|_{\mathbb T^n}\le \kappa \approx 1/R$ has period $\approx 1/\|\zeta\| \ge R$, so the Gaussian $\gamma_R$ sees only $O(1)$ periods of the linear form $\langle\zeta,\cdot\rangle$. Consequently, each independent near-origin direction constrains the Gaussian mass by a multiplicative factor of roughly $1/B$ (where $B = R^{\Theta(1)}$ is a parameter relevant to how heavy $\zeta_j$ are). With $d$ independent directions, the Gaussian mass of the intersection of the $d$ slabs drops to at most $B^{-d}$. The density bound from the $S$-bit streaming algorithm forces this mass to be at least $2^{-O(S)}$, giving $d = O(S/\log B)=O(S/\log R)$.

The formal argument (\cref{subsec:coarse-counting-points}) implements this via greedy selection: we add near-origin (rounded) heavy frequencies one by one, each time choosing one that is well-separated from the real span of previously selected ones. A similar small-ball probability estimate for discrete-Gaussian projections (\cref{lem:projection-small-ball-general}) controls the Gaussian mass of the resulting constrained set, and the density lower bound limits the number of independent frequencies to $O(S/\log R)$, confirming that the near-origin large spectrum lies near a real subspace of that dimension.

This is exactly the structure needed for the mollified translation-invariance argument: it gives an integer sketch with entries bounded by $R$ and dimension $O(S/\log R)$, for a total space of $O(S)$ bits---eliminating the $O(\log S)$ overhead of the exact route.

\subsection{Organization} Section~\ref{sec:preliminaries} defines the Fourier and Gaussian notations and the randomized streaming and sketching models used throughout, and presents a list of the recurring parameters. 
The technical core appears in Sections~\ref{sec:coarse-large-spectrum-general}--\ref{sec:randomized-transfer}.
The argument first branches in Section~\ref{sec:coarse-large-spectrum-general}, which develops separate structural inputs for the exact and mollified routes: in the exact route the heavy Fourier frequencies of the conditioned convolution are captured by a finite lattice-like set with bounded denominator product, while in the mollified route, we argue heavy frequencies near the origin concentrate near a low-dimensional subspace.
These inputs are treated in parallel in Section~\ref{sec:translation-invariance-general}, where both are converted into translation-invariance statements, and in Section~\ref{sec:randomized-transfer}, where this invariance is converted into the corresponding sketching theorems.
Arguments in Sections~\ref{sec:translation-invariance-general} and~\ref{sec:randomized-transfer} follow the same high-level template in both routes, but the parameter regimes and ingredients differ significantly. \cref{sec:application} then branches again: the exact route feeds the SMP-based applications, while the mollified route feeds the lifting-based applications.
\cref{fig:organization-routes} summarizes this route structure.

\begin{figure}[t]
\centering
\small
\begin{tikzpicture}[
x=1cm,
y=1cm,
font=\scriptsize,
box/.style={draw, rounded corners, align=center, fill=Blue!5, text width=3.5cm, inner sep=2pt, minimum height=1.55cm},
exactbox/.style={box, fill=yellow!25, draw=red!70!black, line width=1.05pt},
mollbox/.style={box, fill=teal!18, draw=teal!70!black, line width=1.05pt},
sharedbox/.style={box, fill=gray!10, draw=black!70, line width=0.95pt},
exactarr/.style={-{Latex[length=4mm,width=2.5mm]}, ultra thick, draw=red!70!black, shorten >=2pt, shorten <=2pt},
mollarr/.style={-{Latex[length=4mm,width=2.5mm]}, ultra thick, draw=teal!70!black, shorten >=2pt, shorten <=2pt},
sharedarr/.style={-{Latex[length=4mm,width=2.5mm]}, ultra thick, draw=black!75, shorten >=2pt, shorten <=2pt},
legendbox/.style={draw, rounded corners, minimum width=0.42cm, minimum height=0.28cm, inner sep=0pt},
legendexact/.style={legendbox, fill=yellow!25, draw=red!70!black, line width=0.95pt},
legendmoll/.style={legendbox, fill=teal!18, draw=teal!70!black, line width=0.95pt}
]
\node[exactbox] (exactin) at (-1,1.35) {\textbf{Exact ingredient}(\S\ref{subsec:coarse-phase-dissociation})\\Coarse large spectrum theorem\\and finite structured set.};
\node[mollbox] (mollin) at (-1,-1.35) {\textbf{Mollified ingredient}(\S\ref{subsec:coarse-counting-points})\\Near origin theorem\\and low-dimensional subspace.};

\node[sharedbox] (tv) at (3.5,0) {\textbf{Translation invariance}(\S\ref{sec:translation-invariance-general})\\Same TV mechanism\\in both routes.};
\node[sharedbox] (transfer) at (7.8,0) {\textbf{Transfer to sketches}(\S\ref{sec:randomized-transfer})\\Same fiberwise decoding\\in both routes.};

\node[exactbox] (exactapp) at (12.2,1.35) {\textbf{Exact-route applications}(\S\ref{subsec:applications-smp})\\SMP lower bounds.};
\node[mollbox] (mollapp) at (12.2,-1.35) {\textbf{Mollified-route applications}(\S\ref{subsec:applications-lifting-real-sketches})\\Real sketch lifting.};

\draw[exactarr] (exactin.east) -- (tv.west);
\draw[mollarr] (mollin.east) -- (tv.west);
\draw[sharedarr] (tv.east) -- (transfer.west);
\draw[exactarr] (transfer.east) -- (exactapp.west);
\draw[mollarr] (transfer.east) -- (mollapp.west);

\node[font=\scriptsize, anchor=center] at (5.6,-2.95) {%
\tikz[baseline=-0.6ex]\node[legendexact] {};~exact route\hspace{1.2em}%
\tikz[baseline=-0.6ex]\node[legendmoll] {};~mollified route\hspace{1.4em}%
\tikz[baseline=-0.6ex]\draw[sharedarr] (0,0) -- (0.75,0);~shared step};
\end{tikzpicture}
\caption{A route-oriented map of Sections~\ref{sec:coarse-large-spectrum-general}--\ref{sec:application}. The exact route uses the convolution-level structural consequence \cref{cor:coarse-dissociated-convolution-general}, which is converted by \cref{cor:translation-invariance-from-convolution-spectrum-general} into translation invariance and then by \cref{cor:yao-minmax-exact-transfer} into a randomized sketch. The mollified route uses the analogous convolution-level consequence \cref{cor:large-spectrum-smallnorm-convolution-general}, then \cref{cor:translation-invariance-smoothed-near-origin-general}, then \cref{thm:randomized-mollified-transfer}. Between these route-specific endpoints, Sections~\ref{sec:translation-invariance-general} and~\ref{sec:randomized-transfer} follow the same high-level template, while Section~\ref{sec:application} splits into SMP-based exact-route applications (\cref{subsec:applications-smp}) and lifting-based mollified-route applications (\cref{subsec:applications-lifting-real-sketches}).}
\label{fig:organization-routes}
\end{figure}

%!TEX root = main.tex

\section{Preliminaries and notations}
\label{sec:preliminaries}

Unless explicitly stated otherwise, all Euclidean norms, Euclidean balls, and Euclidean diameters in this paper are taken with respect to the $\ell_2$ norm. For $D\ge 0$, we write
$$
B_D:=\{x\in\mathbb R^n:\|x\|_2\le D\}.
$$
When we want lattice points or nonnegative lattice points in this ball, we write $B_D\cap\mathbb Z^n$ or $B_D\cap\mathbb Z_{\ge 0}^n$. Unless a base is displayed as a subscript or the symbol $\ln$ is used, $\log$ denotes the base-two logarithm; $\ln$ denotes the natural logarithm.

\subsection{Randomized streaming algorithms and linear sketches}
\label{subsec:randomized-conventions-prelim}

In this paper, all randomized objects are public-coin. The random seed is sampled independently of the input and is \emph{not} charged toward the space bound of a streaming algorithm or the size of a sketch. Equivalently, a randomized streaming algorithm is a distribution over deterministic streaming algorithms with the same state-space bound, and a randomized linear sketch is a distribution over deterministic linear sketches with the same representation bound. This convention \emph{differs} from the standard one in the literature \cite{li2014turnstile,AHLW16}, where the random seed counts toward the space usage. For our purposes, however, the public-coin viewpoint is more natural: lower-bound arguments typically fix a hard input distribution and then analyze deterministic algorithms against that distribution.

\begin{definition}[Turnstile streams, unit updates, and strictness]
A \emph{unit update turnstile stream} on $\Z^n$ is a finite sequence of updates
\[
(i_1,\Delta_1),\ldots,(i_T,\Delta_T),
\qquad
i_t\in[n],\ \Delta_t\in\{+1,-1\},
\]
followed by a final query. Starting from the zero vector, the frequency vector after the first $t$ updates is
\[
x^{(t)}:=\sum_{s=1}^t \Delta_s e_{i_s}\in\Z^n.
\]
The final frequency vector is $x^{(T)}$. The stream is called \emph{strict} if every prefix vector $x^{(t)}$ lies in $\Z_{\ge 0}^n$.

A \emph{distribution of streams} is a probability distribution on unit update turnstile streams. Its induced distribution on $\Z^n$ is the distribution of the final frequency vector at the time of the query.
\end{definition}

\begin{definition}[Randomized streaming algorithms]
A \emph{randomized turnstile streaming algorithm using $S$ bits of space} is a public-coin family $\{A_\rho\}_\rho$ of deterministic streaming algorithms such that, for every seed $\rho$, the algorithm $A_\rho$ has at most $2^S$ memory states. Its success probability is computed over the public random seed $\rho$ in the worst case of any input stream. Throughout this section, the seed $\rho$ is free and is not counted in the $S$-bit bound.
\end{definition}

\begin{definition}[Deterministic and randomized linear sketches]
A \emph{deterministic linear sketch} on $\Z^n$ is a group homomorphism
$
\mathcal A:\Z^n\to \Gamma
$
into an Abelian group $\Gamma$, together with a decoder $g$ defined on $\mathcal A(\Z^n)$.

A \emph{randomized linear sketch} is a public-coin family of deterministic linear sketches and decoders. When we say that a randomized linear sketch has dimension at most $m$ and image size at most $2^S$, we mean that every deterministic sketch in its support has dimension at most $m$ and image size at most $2^S$. 
\end{definition}
Note that we allow the image group $\Gamma$ to be infinite, such as $\mathbb{Z}^k$. In this case, we may further require the entries of the sketching matrix to be bounded by another parameter, usually denoted $Q$.

We will also use the following general notion of a streaming problem, which includes promise problems and metric approximation problems as special cases.

\begin{definition}[Streaming problems]
A turnstile streaming problem on $\Z^n$ is specified by a set $\mathcal O$ of valid outputs and a requirement relation
$
\mathcal R\subseteq \Z^n\times \mathcal O,
$
where $(x,o)\in \mathcal R$ means that $o$ is an acceptable output on final frequency vector $x$. A randomized streaming algorithm solves this problem with success probability at least $1-\delta$ if, on any input stream, with probability at least $1-\delta$ over the public randomness, its output $o$ satisfies $(x,o)\in\mathcal R$, where $x$ is the final frequency vector. 
\end{definition}

\begin{definition}[Metric approximation problems]
A metric approximation problem on $\Z^n$ is the special case of a streaming problem obtained from a metric space $(\mathcal M,d)$, a target map $f:\Z^n\to \mathcal M$, an admissible output set $\mathcal O\subseteq \mathcal M$, and an accuracy parameter $\epsilon\ge 0$: the valid outputs on input $x$ are those $z\in\mathcal O$ satisfying
$
d(z,f(x))\le \epsilon.
$
Equivalently, an algorithm is an $\epsilon$-approximation algorithm for $(f,\mathcal O)$ if on every input $x$ it outputs some $z\in\mathcal O$ with $d(z,f(x))\le \epsilon$.
\end{definition}

\begin{definition}[Promise problems]
A binary promise problem on $\Z^n$ is specified by a map
$
f:\Z^n\to\{0,1,*\}.
$
On input $x$, the valid outputs are exactly $\{f(x)\}$ when $f(x)\in\{0,1\}$, and both outputs $0,1$ are valid when $f(x)=*$. Equivalently, an algorithm solves the promise problem with success probability at least $1-\delta$ if, with probability at least $1-\delta$, it outputs $f(x)$ whenever $f(x)\in\{0,1\}$; on inputs with $f(x)=*$ there is no correctness requirement. 
\end{definition}

In statements about promise problems, if $b\in\{0,1\}$ and $f(x)\in\{0,1,*\}$, we use the shorthand $b=f(x)$ to mean that either $f(x)=*$ or $b=f(x)\in\{0,1\}$.

\subsection{Fourier transform on the integer lattice}

We use integer Fourier transform throughout. The dual group of $\mathbb Z^n$ is the torus $\mathbb T^n\cong (\mathbb R/\mathbb Z)^n$. For any absolutely summable function $\nu\in \ell^1(\mathbb Z^n)$, we define its Fourier transform $\widehat{\nu}:\mathbb T^n\to \mathbb C$ by
\[
 e(t):=e^{2\pi i t}, \qquad \widehat{\nu}(\zeta):=\sum_{x\in \mathbb Z^n}\nu(x)e(-\langle \zeta,x\rangle), \qquad \zeta\in \mathbb T^n.
\]
For $u\in \mathbb R/\mathbb Z$ we denote by $\|u\|_{\mathbb R/\mathbb Z}$ its distance to the nearest integer, and for $\zeta\in \mathbb T^n$ we write
\[
\|\zeta\|_{\mathbb T^n}:=\operatorname{dist}(\zeta,\mathbb Z^n).
\]
We write $d\zeta$ for the normalized Haar probability measure on $\mathbb T^n$. More generally, for a countable set $X$ and $1\le p\le \infty$, we write $\ell^p(X)$ for the usual complex-valued sequence space ($p$-th power summable functions) on $X$, and for a measure space $(\Omega,\mu)$ we write $L^p(\Omega,\mu)$ for the usual $L^p$ space ($p$-th power integrable functions), abbreviating $L^p(\Omega)$ when the underlying measure is clear. In particular, $L^2(\mathbb T^n)$ means $L^2(\mathbb T^n,d\zeta)$. All probability measures on $\mathbb Z^n$ automatically belong to $\ell^1(\mathbb Z^n)\cap \ell^2(\mathbb Z^n)$.

\begin{fact}[Parseval's identity]
\label{fact:parseval-prelim}
For $f,g\in \ell^1(\mathbb Z^n)\cap \ell^2(\mathbb Z^n)$,
\[
\sum_{x\in \mathbb Z^n} f(x)\overline{g(x)}
=
\int_{\mathbb T^n} \widehat f(\zeta)\overline{\widehat g(\zeta)}\,d\zeta, \quad 
\sum_{x\in \mathbb Z^n} |f(x)|^2 = \int_{\mathbb T^n}|\widehat f(\zeta)|^2\,d\zeta.
\]
\end{fact}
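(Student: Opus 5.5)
The statement is Parseval's identity on $\mathbb Z^n$. The plan is to prove the polarized (inner-product) form first and then read off the norm form by taking $g=f$.

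The key tool is the orthonormality of characters. For $x,y\in\mathbb Z^n$,
\[
\int_{\mathbb T^n} e(-\langle \zeta,x\rangle)\,\overline{e(-\langle\zeta,y\rangle)}\,d\zeta=\int_{\mathbb T^n} e(\langle \zeta,y-x\rangle)\,d\zeta=\mathbbm 1[x=y].
\]
This holds because $d\zeta$ is the product of normalized Lebesgue measures on $[-1/2,1/2)$. The integral therefore factors coordinatewise into one-dimensional integrals $\int_{-1/2}^{1/2} e(tk)\,dt$, each equal to $\mathbbm 1[k=0]$ for $k\in\mathbb Z$.

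Next, expand the right-hand side:
\[
\int_{\mathbb T^n}\widehat f(\zeta)\overline{\widehat g(\zeta)}\,d\zeta
=\int_{\mathbb T^n}\sum_{x,y} f(x)\overline{g(y)}\,e(-\langle\zeta,x\rangle)\overline{e(-\langle\zeta,y\rangle)}\,d\zeta .
\]
Since $f,g\in\ell^1(\mathbb Z^n)$, the double series is dominated by $\sum_{x,y}|f(x)||g(y)|=\|f\|_1\|g\|_1<\infty$, uniformly in $\zeta$. Fubini--Tonelli (or dominated convergence) then lets us exchange the sum and the integral. Orthonormality collapses the double sum to the diagonal, giving $\sum_x f(x)\overline{g(x)}$. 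This sum converges absolutely because $f,g\in\ell^2$, and in fact already because $f\in\ell^1$ and $g$ is bounded.

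Setting $g=f$ gives the second identity. The only point needing care is justifying the interchange of summation and integration. The $\ell^1$ hypothesis is exactly what makes this immediate, so there is no real obstacle. Extending to general $\ell^2$ functions would need a density argument, but the statement does not require it.
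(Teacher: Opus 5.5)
Your proof is correct: orthonormality of the characters $e(\langle \zeta, x\rangle)$ on $\mathbb T^n$, together with Fubini--Tonelli justified by $\sum_{x,y}|f(x)||g(y)|=\|f\|_1\|g\|_1<\infty$ on the probability space $(\mathbb T^n,d\zeta)$, gives the polarized identity, and setting $g=f$ gives the norm form. The paper gives no proof of this and simply records it as a standard fact, so your argument is the standard proof it relies on. As you note, the $\ell^2$ hypothesis is redundant, since $\ell^1(\mathbb Z^n)\subseteq\ell^2(\mathbb Z^n)$.
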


\subsection{Lattices, continuous Gaussians, and discrete Gaussians}

We use the standard lattice-Gaussian notation from~\cite{mr2004worstcase,peikert2010introduction}. Following the organizational style common in that literature, we first record the unnormalized continuous Gaussian weights, then the corresponding continuous Gaussian distributions, and finally the normalized discrete Gaussian laws on lattices. We keep the $\pi$-normalization in the exponent because it is the one that interacts cleanly with Poisson summation and the smoothing-parameter estimates used later.

\begin{definition}[Lattice]
A lattice is a discrete additive subgroup of $\mathbb{R}^n$. Every lattice $\mathcal L$ can be generated by a basis matrix $B\in\mathbb{R}^{n\times k}$ with linearly independent columns for $n \geq k \geq 1$, in the sense that
$
\mathcal L=\{Bx: x\in\mathbb{Z}^k\}.
$
The lattice is called full-rank if $k=n$. The determinant of $\mathcal L$ is defined as $\det(\mathcal L):=|\det(B)|$, and it is independent of the choice of basis matrix.
\end{definition}

\begin{definition}[Continuous Gaussian functions]
Let $m\ge 1$, let $c\in \mathbb R^m$, and let $\boldsymbol{\Sigma}\in \mathbb R^{m\times m}$ be symmetric positive definite. We define the ellipsoidal Gaussian function
\[
\rho_{\boldsymbol{\Sigma},c}(x):=\exp\!\left(-\pi (x-c)^\top \boldsymbol{\Sigma}^{-1}(x-c)\right),
\qquad x\in \mathbb R^m.
\]
In the spherical case $\boldsymbol{\Sigma}=s^2\Id_m$ with $s>0$, we also write
\[
\rho_{s,c}(x):=\rho_{s^2\Id_m,c}(x)=\exp\!\left(-\pi \|x-c\|_2^2/s^2\right).
\]
When $c=0$, we abbreviate $\rho_{\boldsymbol{\Sigma}}:=\rho_{\boldsymbol{\Sigma},0}$ and $\rho_s:=\rho_{s,0}$. For any measurable or discrete set $A\subseteq \mathbb R^m$, we write $\rho_{\boldsymbol{\Sigma},c}(A)$ for the corresponding integral or sum. In particular,
\[
\int_{\mathbb R^m}\rho_{\boldsymbol{\Sigma},c}(x)\,dx=(\det \boldsymbol{\Sigma})^{1/2},
\qquad
\int_{\mathbb R^m}\rho_{s,c}(x)\,dx=s^m.
\]
\end{definition}

\begin{definition}[Continuous Gaussian distributions]
Let $c\in \mathbb R^m$ and let $\boldsymbol{\Sigma}\in \mathbb R^{m\times m}$ be symmetric positive definite. We write $\mathcal{N}(c,\boldsymbol{\Sigma})$ for the continuous Gaussian distribution on $\mathbb R^m$ with mean $c$ and covariance matrix $\boldsymbol{\Sigma}$, namely the probability measure with density
\[
x\mapsto \frac{1}{(2\pi)^{m/2}\det(\boldsymbol{\Sigma})^{1/2}}\exp\!\left(-\frac12 (x-c)^\top \boldsymbol{\Sigma}^{-1}(x-c)\right)
\]
with respect to Lebesgue measure on $\mathbb R^m$.
This is compatible with the Gaussian-function notation above: the density of $\mathcal N(c,\boldsymbol{\Sigma})$ is a normalized multiple of $\rho_{2\pi\boldsymbol{\Sigma},c}$.
\end{definition}

\begin{definition}[Discrete Gaussian distributions]
\label{def:discrete-gaussian-notation-general}
Let $\mathcal L\subseteq \mathbb R^m$ be a full-rank lattice, let $c\in \mathbb R^m$, and let $\boldsymbol{\Sigma}\in \mathbb R^{m\times m}$ be symmetric positive definite. The discrete Gaussian on the coset $\mathcal L+c$ with shape matrix $\boldsymbol{\Sigma}$ is the probability mass function
\[
\calD_{\mathcal L+c,\boldsymbol{\Sigma}}(x):=\frac{\rho_{\boldsymbol{\Sigma},c}(x)}{\rho_{\boldsymbol{\Sigma},c}(\mathcal L+c)},
\qquad x\in \mathcal L+c.
\]
Here
\[
\rho_{\boldsymbol{\Sigma},c}(\mathcal L+c):=\sum_{y\in \mathcal L+c}\rho_{\boldsymbol{\Sigma},c}(y).
\]
When $c=0$, we abbreviate $\calD_{\mathcal L,\boldsymbol{\Sigma}}:=\calD_{\mathcal L+0,\boldsymbol{\Sigma}}$. In the spherical case $\boldsymbol{\Sigma}=s^2\Id_m$, we also write $\calD_{\mathcal L+c,s}:=\calD_{\mathcal L+c,s^2\Id_m}$ and $\calD_{\mathcal L,s}:=\calD_{\mathcal L,s^2\Id_m}$. For the integer lattice we write $\calD(0,\boldsymbol{\Sigma}):=\calD_{\mathbb Z^n,\boldsymbol{\Sigma}}$.

For a scalar radius $R>0$, we write
\[
\gamma_R:=\calD_{\mathbb Z^n,R^2\Id_n}=\calD_{\mathbb Z^n,R}=\calD(0,R^2\Id_n),
\]
so $\gamma_R$ is exactly the centered isotropic special case. Equivalently,
\[
\gamma_R(x):=\frac{e^{-\pi \|x\|_2^2/R^2}}{\rho_R(\mathbb Z^n)},
\qquad
\rho_R(\mathbb Z^n):=\sum_{y\in \mathbb Z^n}e^{-\pi \|y\|_2^2/R^2}.
\]
\end{definition}

Later, Poisson summation and smoothing estimates will let us control normalization constants such as $\rho_{\boldsymbol{\Sigma},c}(\mathcal L+c)$ in the parameter regimes relevant to this paper.

We will also need the following notion of a dense piece of a probability measure, which is a convenient way to capture the idea of restricting a reference measure to a set of mass about $\alpha$ and then renormalizing.

\begin{definition}[Dense piece]
\label{def:dense-piece}
Let $X$ be a countable set, let $\alpha\in(0,1]$, and let $\mu,\nu$ be probability measures on $X$. We say that $\mu$ is an \emph{$\alpha$-dense piece} of $\nu$ if
\[
\mu(x)\le \alpha^{-1}\nu(x)
\qquad \text{for every }x\in X.
\]
Equivalently, every nonnegative function $F:X\to[0,\infty)$ satisfies
$
\mathbb E_{\mu}[F]\le \alpha^{-1}\mathbb E_{\nu}[F].
$
\end{definition}

A useful picture is that one obtains an $\alpha$-dense piece by restricting the reference measure to a set of mass about $\alpha$ and then renormalizing. In the main body we will mostly apply this notion when the reference measure is the discrete Gaussian $\gamma_R$.

\subsection{Global parameters}
We will also reuse a small collection of global parameters across several sections. Since these letters play different but related roles in the structural and application arguments, the guide in Table~\ref{tab:global-parameters} is meant to save the reader a few page-flips. Whenever one of these letters is reused in a more local role, we restate its meaning at the point of use.

\begin{table}[htbp]
\centering
\small
\renewcommand{\arraystretch}{1.08}
\begin{tabular}{p{0.16\linewidth}p{0.74\linewidth}}
\toprule
Symbol & Typical role in the paper \\
\midrule
$R$ & Discrete-Gaussian radius and the ambient smoothing scale. \\
$D$ & Bounded Euclidean radius/diameter of the target distribution $\mathcal{I}$ in the exact and mollified transfer statements. \\
$W$ & External stream-length or input-magnitude parameter in the application section. \\
$M$ & Number of random prefix blocks in the noisy prefix; usually $M=R^2$. \\
$Q$ & Denominator / entry bound used when rounding torus frequencies into bounded integer sketches. \\
$S$ & $S$ denotes the streaming space bound in bits; in \cref{sec:coarse-large-spectrum-general,sec:translation-invariance-general}, $S=\log_2(2/\alpha)$ is the base-two logarithmic density parameter of a dense piece of the discrete Gaussian. \\
$K,\Lambda_K$ & $K$ is the large-spectrum threshold parameter, and $\Lambda_K(\mu)=\{\zeta\in\mathbb T^n:\ |\widehat\mu(\zeta)|\ge 1-1/K\}$ is the corresponding large spectrum. \\
$\sigma$ & Intermediate state sequence after the random prefix blocks. \\
$\alpha,\beta$ & Density parameters for dense Gaussian pieces and conditioned prefix laws. \\
$\kappa,\rho,\Delta$ & Geometric scales in the large-spectrum and translation-invariance arguments: dissociation threshold, near-origin approximation radius, and final spectral-localization error. \\
\bottomrule
\end{tabular}
\caption{Global parameters.}
\label{tab:global-parameters}
\end{table}

\section{Large spectrum structure of dense Gaussian pieces}
\label{sec:coarse-large-spectrum-general}

We use the Fourier conventions and discrete-Gaussian notation introduced in the preliminaries \cref{sec:preliminaries}. In particular, $\gamma_R$ denotes the centered isotropic discrete Gaussian on $\mathbb Z^n$, and an $\alpha$-dense piece means pointwise domination by $\alpha^{-1}$ times the relevant reference measure. In addition, in this section and the next one \cref{sec:translation-invariance-general}, we will use $S$ to denote the base-two logarithmic density $\log(2/\alpha)$, because with the streaming applications in mind, conditioning on a typical intermediate state of an $S$-bit algorithm will yield a piece of density about $2^{-S}$, so $\alpha \approx 2^{-S}$ and $S \approx \log(2/\alpha)$.

\paragraph{Roadmap.}
The goal of this section is to show that the large spectrum from \cref{def:large-spectrum-general} of a dense piece of a discrete Gaussian is close to a lattice generated by a small number of frequencies. In \cref{subsec:coarse-fourier-gaussian} we collect the Gaussian and Fourier facts that we use throughout the section, including the decay of the Fourier transform of the discrete Gaussian. In \cref{subsec:coarse-dissociated-rudin} we introduce $\kappa$-dissociated sets from \cref{def:kappa-dissociated-general}, prove a coarse form of Rudin's inequality, and use it to bound the size of a dissociated subset of the large spectrum. In \cref{subsec:coarse-phase-dissociation} we prove phase lemmas and then run a greedy construction that builds a structured set whose points approximate every frequency in the large spectrum up to error about $1/R$. In \cref{subsec:coarse-counting-points} we use a counting argument to improve this approximation near the origin to about $1/R^c$ for some absolute constant $c>1$. Finally, in \cref{subsec:coarse-different-pieces} we extend the argument to convolutions of possibly different dense Gaussian pieces by passing to an average of squared Fourier coefficients.

\subsection{Fourier decay and Gaussian sum estimates}
\label{subsec:coarse-fourier-gaussian}

This subsection collects the additional Fourier and Gaussian tools used in the rest of this section. The main estimate for the next subsection is the Fourier decay bound for the discrete Gaussian in \cref{lem:dg-fourier-decay-general}. We will also use Poisson summation from \cref{lem:poisson-gaussian-general} and its corollaries later in the counting argument.

We start by recording the additional lattice-analytic input that controls the continuous/discrete Gaussian approximation rigorously: the smoothing parameter bound from~\cite{mr2004worstcase}, and the Gaussian instance of Poisson summation.

\begin{definition}[Successive Minima]
	For a lattice $\mathcal L\subseteq \mathbb{R}^n$ and $1\le i\le n$, the $i$-th successive minimum $\lambda_i(\mathcal L)$ is defined as the smallest $r>0$ such that $\mathcal L$ contains $i$ linearly independent vectors in the ball of radius $r$ centered at the origin.
\end{definition}

\begin{definition}[Lattice dual and smoothing parameter]
	For any lattice $\mathcal L$, let $\mathcal L^*$ denote the dual lattice of $\mathcal L$, defined as $\mathcal L^* = \{x \in \mathbb{R}^n : \langle \mathcal L, x\rangle \subseteq \mathbb{Z}\}$. Moreover, for any lattice $\mathcal L$ and $\varepsilon > 0$, the smoothing parameter $\eta_\varepsilon(\mathcal L)$ is the smallest real $s > 0$ such that $\rho_{1/s}(\mathcal L^* \setminus \{0\}) \leq \varepsilon$.
\end{definition}
If $\mathcal L$ is a full-rank lattice with $n\times n$ basis matrix $B$, then $\mathcal L^*$ is also full-rank with basis matrix $B^{-\top}$, and $\det(\mathcal L^*)=\det(\mathcal L)^{-1}$. The dual lattice of $\mathbb{Z}^n$ is $\mathbb{Z}^n$ itself.
\begin{fact}[Smoothing Parameter Bound {\cite[Lemma~3.1]{mr2004worstcase}}]
\label{lem:smoothing-parameter-bound-general}
For every $n$-dimensional lattice $\mathcal L$ and every $\varepsilon>0$,
\[
\eta_\varepsilon(\mathcal L)
\le
\sqrt{\frac{\ln\big(2n(1+1/\varepsilon)\big)}{\pi}}\,\lambda_n(\mathcal L).
\]
\end{fact}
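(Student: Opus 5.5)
This is the Micciancio--Regev bound, and I would reprove it by a hyperplane-slicing argument on the dual lattice. Write $\lambda:=\lambda_n(\mathcal L)$ and put
\[
s:=\sqrt{\ln\big(2n(1+1/\varepsilon)\big)/\pi}\,\lambda,
\qquad
t:=s/\lambda,
\qquad
\text{so that } e^{-\pi t^2}=\frac{1}{2n(1+1/\varepsilon)}.
\]
By the definition of $\eta_\varepsilon$, it suffices to show $\rho_{1/s}(\mathcal L^*\setminus\{0\})\le\varepsilon$.

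\textbf{Step 1 (covering $\mathcal L^*\setminus\{0\}$).} Fix linearly independent $v_1,\dots,v_n\in\mathcal L$ with $\|v_i\|_2\le\lambda$. Every nonzero $x\in\mathcal L^*$ satisfies $\langle x,v_i\rangle\ne 0$ for some $i$, and each $\langle x,v_i\rangle$ is an integer. Set $H^i_j:=\{x\in\mathcal L^*:\langle x,v_i\rangle=j\}$. Then
\[
\mathcal L^*\setminus\{0\}\subseteq\bigcup_{i=1}^n\bigcup_{j\ne 0}H^i_j .
\]
Hence the goal reduces to showing, for each fixed $i$,
\[
\rho_{1/s}\Big(\bigcup_{j\ne0}H^i_j\Big)\le\frac{\varepsilon}{n}.
\]

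\textbf{Step 2 (comparing slices).} Fix $i$ and let $u:=v_i/\|v_i\|$. Each slice $H^i_j$ lies in the hyperplane $\{\langle x,u\rangle=j/\|v_i\|\}$. The Gaussian factors across this orthogonal decomposition, giving
\[
\rho_{1/s}(H^i_j)=e^{-\pi s^2 j^2/\|v_i\|^2}\,\rho_{1/s}(P H^i_j),
\]
where $P$ is the orthogonal projection onto $u^\perp$. The set $PH^i_j$ is a coset of the lattice $PH^i_0=H^i_0$. Poisson summation (the Fourier transform of a Gaussian is a nonnegative Gaussian) gives the standard fact $\rho(\Lambda+c)\le\rho(\Lambda)$ for every lattice $\Lambda$ and shift $c$. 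Therefore $\rho_{1/s}(H^i_j)\le e^{-\pi t^2 j^2}\rho_{1/s}(H^i_0)$, using $\|v_i\|\le\lambda$.

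\textbf{Step 3 (normalizing).} Summing Step 2 over $j\ne0$ gives
\[
\rho_{1/s}\Big(\bigcup_{j\ne0}H^i_j\Big)\le 2\sum_{j\ge1}e^{-\pi t^2 j^2}\,\rho_{1/s}(H^i_0)\le\frac{2e^{-\pi t^2}}{1-e^{-\pi t^2}}\,\rho_{1/s}(\mathcal L^*).
\]
Here lies the main obstacle: the right side is relative to the full mass $\rho_{1/s}(\mathcal L^*)=1+\rho_{1/s}(\mathcal L^*\setminus\{0\})$, not absolute, so I need a self-consistent inequality. Write $\theta:=2e^{-\pi t^2}/(1-e^{-\pi t^2})$ and $X:=\rho_{1/s}(\mathcal L^*\setminus\{0\})$. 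Summing the last display over the $n$ directions gives $X\le n\theta(1+X)$, hence
\[
X\le\frac{n\theta}{1-n\theta}.
\]

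\textbf{Step 4 (plugging in).} Put $a:=e^{-\pi t^2}=1/(2n(1+1/\varepsilon))$. Then $n\theta=2na/(1-a)$ and, since $2na=\varepsilon/(1+\varepsilon)$,
\[
\frac{n\theta}{1-n\theta}=\frac{2na}{1-a-2na}=\frac{\varepsilon/(1+\varepsilon)}{1/(1+\varepsilon)-a}=\frac{\varepsilon}{1-a(1+\varepsilon)}.
\]
So this literal version only yields $X\le\varepsilon/(1-a(1+\varepsilon))$, slightly above $\varepsilon$. The slack must be recovered by tightening the geometric-series estimate in Step 3. I would first try bounding the sum over $j\ne0$ by $2a/(1-a^3)$, which uses $j^2\ge 4j-3$ for $j\ge1$. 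Failing that, I would use the partition into first-violated-index sets $S_i:=\{x:\langle x,v_k\rangle=0\ \forall k<i,\ \langle x,v_i\rangle\ne0\}$. With that partition, each step compares against the sublattice $\mathcal L^*\cap\bigcap_{k<i}v_k^\perp$, whose mass is bounded by that of the previous stage. This gives a product bound $1+X\le(1+\theta)^n$, which is exactly how \cite{mr2004worstcase} obtains the constant $2n(1+1/\varepsilon)$.
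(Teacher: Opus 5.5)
Your Steps~1--2 are correct, but as your own Step~4 shows, the proposal as written does not prove the stated bound. Your diagnosis of where the slack is lost is also wrong. The loss is not in the geometric series; it is in Step~3, where you replace $\rho_{1/s}(H^i_0)$ by the full mass $\rho_{1/s}(\mathcal L^*)$. For this reason your first fix cannot succeed. The $j=\pm1$ layers alone contribute $2a$, so any admissible $\theta$ has $n\theta>2na=\varepsilon/(1+\varepsilon)$, and then $n\theta/(1-n\theta)>\varepsilon$ however sharply you bound the tail. Separately, $j^2\ge 4j-3$ is false at $j=2$; the inequality behind $\sum_{j\ge1}a^{j^2}\le a/(1-a^3)$ is $j^2\ge 3j-2$.

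The repair uses nothing beyond what you already have. Put $A_i:=\rho_{1/s}\big(\bigcup_{j\ne0}H^i_j\big)$ and $B_i:=\rho_{1/s}(H^i_0)$. Step~2 gives $A_i\le\theta B_i$ with $\theta=2a/(1-a)$, and $A_i+B_i=\rho_{1/s}(\mathcal L^*)$. Hence
\[
A_i\le\frac{\theta}{1+\theta}\,\rho_{1/s}(\mathcal L^*)=\frac{2a}{1+a}\,\rho_{1/s}(\mathcal L^*)\le 2a\,\rho_{1/s}(\mathcal L^*).
\]
Your union bound over $i$ then gives $X\le 2na(1+X)=\frac{\varepsilon}{1+\varepsilon}(1+X)$, i.e.\ $X\le\varepsilon$. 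This is the standard Micciancio--Regev argument; the paper itself only cites \cite{mr2004worstcase} and gives no proof. The constant $2n(1+1/\varepsilon)$ is tailored to exactly this union-bound route, since it makes $2na=\varepsilon/(1+\varepsilon)$.

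Your second fallback also works, but only after checking the final inequality, which you did not do. Let $L_i:=\mathcal L^*\cap\bigcap_{k\le i}v_k^\perp$. Each nonempty layer $\{x\in L_{i-1}:\langle x,v_i\rangle=j\}$ is a coset of $L_i$, so Step~2 gives $\rho_{1/s}(L_{i-1})\le(1+\theta)\rho_{1/s}(L_i)$, and hence $1+X\le\big(\frac{1+a}{1-a}\big)^n$. To see that this is at most $1+\varepsilon$, set $y:=2na=\varepsilon/(1+\varepsilon)$ and use $a\le y$:
\[
n\ln\frac{1+a}{1-a}=\sum_{k\ge0}\frac{2n\,a^{2k+1}}{2k+1}=\sum_{k\ge0}\frac{y\,a^{2k}}{2k+1}\le\sum_{k\ge0}\frac{y^{2k+1}}{2k+1}\le-\ln(1-y)=\ln(1+\varepsilon).
\]
So that route closes too, but it takes more work than the one-line fix above.
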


When $\rho_s$ is viewed as a function on $\mathbb R^n$, its real Fourier transform is $\rho_{1/s}$. The well-known Poisson summation transforms evaluations of $\rho_s$ on a lattice $\mathcal L$ into evaluations of $\rho_{1/s}$ on the dual lattice $\mathcal L^*$. 
\begin{fact}[Poisson summation, Gaussian case]
\label{lem:poisson-gaussian-general}
Let $\mathcal L\subset\mathbb R^n$ be a full-rank lattice and a function $f \in L^1(\mathbb{R}^n)$. Then
$
f(\mathcal L)
= \det(\mathcal L^*) \hat{f}(\mathcal L^*),
$ where $\hat{f}$ denotes the Fourier transform of $f$. 

In particular, for any $s>0$, $
\rho_s(\mathcal L)
= s^n\det(\mathcal L^*) \rho_{1/s}(\mathcal L^*)
$. Equivalently, for a lattice $\mathcal L$ generated by a full-rank $n\times n$ basis matrix $B$, then we have
\[
\sum_{x\in\mathbb Z^n} e^{-\pi x^\top Mx}
= \det(M)^{-1 / 2}\left(1+\sum_{y\in\mathbb Z^n\setminus\{0\}} e^{-\pi y^\top M^{-1}y}\right), \text{ where } M = B^\top B.
\]
\end{fact}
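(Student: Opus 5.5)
The plan is to derive both displayed identities from the general Poisson summation formula $f(\mathcal L)=\det(\mathcal L^*)\hat f(\mathcal L^*)$, which we take as the standard input. We apply it to the Gaussian $f=\rho_s$ and then rewrite the result in the basis form.

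\textbf{Step 1: the Fourier transform of $\rho_s$.} With the convention $\hat f(\xi)=\int f(x)e^{-2\pi i\langle x,\xi\rangle}\,dx$, we first show $\widehat{\rho_1}=\rho_1$. The integral factorizes over coordinates, so this reduces to the one-dimensional identity $\int e^{-\pi t^2}e^{-2\pi i t\xi}\,dt=e^{-\pi\xi^2}$. This identity follows by completing the square and shifting the contour, or by observing that both sides satisfy the ODE $g'=-2\pi\xi g$ and agree at $\xi=0$.

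The change of variables $x\mapsto x/s$ then gives $\widehat{\rho_s}(\xi)=s^n\rho_1(s\xi)=s^n\rho_{1/s}(\xi)$. Substituting into general Poisson summation yields
\[
\rho_s(\mathcal L)=s^n\det(\mathcal L^*)\,\rho_{1/s}(\mathcal L^*),
\]
which is the first displayed identity.

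\textbf{Step 2: the basis form.} Take $s=1$ and $\mathcal L=B\mathbb Z^n$. For $x\in\mathbb Z^n$ we have $\|Bx\|^2=x^\top Mx$ with $M=B^\top B$, so the left side is $\sum_{x\in\mathbb Z^n}e^{-\pi x^\top Mx}$.

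The dual lattice is $\mathcal L^*=B^{-\top}\mathbb Z^n$. For $y\in\mathbb Z^n$,
\[
\|B^{-\top}y\|^2=y^\top B^{-1}B^{-\top}y=y^\top M^{-1}y.
\]
Also $\det(\mathcal L^*)=|\det B|^{-1}=\det(M)^{-1/2}$. Separating the $y=0$ term, which contributes $1$, gives the stated formula.

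\textbf{Step 3: justifying general Poisson summation.} This is the only genuinely analytic step. Define the periodization $F(x)=\sum_{v\in\mathcal L}f(x+v)$. Its Fourier coefficients on the torus $\mathbb R^n/\mathcal L$ are $\det(\mathcal L)^{-1}\hat f(w)$ for $w\in\mathcal L^*$, so formally $F(0)=\det(\mathcal L^*)\sum_{w\in\mathcal L^*}\hat f(w)$.

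For general $f\in L^1$ this pointwise evaluation needs extra hypotheses; summability of both $f$ and $\hat f$ over shifted lattices suffices. I would state the result under these hypotheses, or simply cite it as standard. For the Gaussian, everything is Schwartz class, so $F$ is smooth, its Fourier series converges absolutely, and evaluating at $0$ is legitimate.
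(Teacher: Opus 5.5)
Your proposal is correct. The paper gives no proof of this fact and simply invokes the standard Poisson summation formula; your derivation (the self-duality $\widehat{\rho_1}=\rho_1$, the scaling $\widehat{\rho_s}=s^n\rho_{1/s}$, and the basis computation $\mathcal L^*=B^{-\top}\mathbb Z^n$, $\|B^{-\top}y\|_2^2=y^\top M^{-1}y$, $\det(\mathcal L^*)=\det(M)^{-1/2}$) is exactly the standard argument behind it.

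You are also right that the general identity cannot hold as literally stated for arbitrary $f\in L^1(\mathbb R^n)$: point values on $\mathcal L$ are undefined, and $\hat f$ need not be summable on $\mathcal L^*$. Every use in the paper is for Gaussians or Gaussians times a character, which are Schwartz functions, so your restricted version in Step 3 suffices.
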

As a direct corollary of \cref{lem:poisson-gaussian-general}, we obtain:
\begin{corollary}
\label{cor:psf-approx-smoothing-general}
Let $M\in\mathbb R^{n\times n}$ be symmetric positive definite and let $\varepsilon\in(0,1)$. If
$
\lambda_{\min}(M^{-1})\ge \eta_\varepsilon(\mathbb Z^n)^2,
$
then
\[
\sum_{x\in\mathbb Z^n} e^{-\pi x^\top Mx}
\in \det(M)^{-1/2} \cdot [1-\varepsilon,1+\varepsilon].
\]
\end{corollary}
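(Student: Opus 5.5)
We derive the statement by applying Poisson summation (\cref{lem:poisson-gaussian-general}) to the lattice $\mathcal L=B\mathbb Z^n$, where $B$ is any square root of $M$ (e.g.\ the symmetric one $B=M^{1/2}$, so $M=B^\top B$), and then controlling the dual-lattice tail with the smoothing parameter.

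First, since $x^\top Mx=\|Bx\|_2^2$, we have $\sum_{x\in\mathbb Z^n}e^{-\pi x^\top Mx}=\rho_1(\mathcal L)$. Poisson summation with $s=1$ gives $\rho_1(\mathcal L)=\det(\mathcal L^*)\rho_1(\mathcal L^*)$, where $\mathcal L^*=B^{-\top}\mathbb Z^n$ and $\det(\mathcal L^*)=|\det B|^{-1}=\det(M)^{-1/2}$. In the displayed form this reads
\[
\sum_{x\in\mathbb Z^n}e^{-\pi x^\top Mx}=\det(M)^{-1/2}\Bigl(1+\sum_{y\in\mathbb Z^n\setminus\{0\}}e^{-\pi y^\top M^{-1}y}\Bigr).
\]
So it suffices to show that the tail $T:=\sum_{y\neq 0}e^{-\pi y^\top M^{-1}y}$ lies in $[0,\varepsilon]$. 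The lower bound is immediate, since every term is positive, and it already gives the factor $\ge 1\ge 1-\varepsilon$.

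For the upper bound, set $s:=\eta_\varepsilon(\mathbb Z^n)$. By hypothesis $M^{-1}\succeq \lambda_{\min}(M^{-1})\Id\succeq s^2\Id$, so termwise $y^\top M^{-1}y\ge s^2\|y\|_2^2$. Hence
\[
T\le\sum_{y\in\mathbb Z^n\setminus\{0\}}e^{-\pi s^2\|y\|_2^2}=\rho_{1/s}(\mathbb Z^n\setminus\{0\})=\rho_{1/s}\bigl((\mathbb Z^n)^*\setminus\{0\}\bigr)\le\varepsilon.
\]
The last inequality is exactly the definition of $\eta_\varepsilon(\mathbb Z^n)$ together with $(\mathbb Z^n)^*=\mathbb Z^n$; it holds at $s$ itself and not only above it, because $\rho_{1/s}$ is continuous and decreasing in $s$. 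This yields the stated interval.

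There is no real obstacle here. The only points needing care are bookkeeping: checking that $\rho_{1/s}(y)=e^{-\pi s^2\|y\|^2}$ under the paper's $\pi$-normalized convention, and that the smoothing parameter is applied to $\mathbb Z^n$ rather than to $\mathcal L$. Applying it to $\mathbb Z^n$ is what lets the Loewner comparison $M^{-1}\succeq s^2\Id$ replace any analysis of the geometry of $\mathcal L^*$.
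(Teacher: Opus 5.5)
Your proposal is correct and follows the same route as the paper: apply the Gaussian Poisson summation identity to reduce to bounding the nonzero dual tail, use $y^\top M^{-1}y\ge \lambda_{\min}(M^{-1})\|y\|_2^2$, and conclude from the definition of $\eta_\varepsilon(\mathbb Z^n)$ together with $(\mathbb Z^n)^*=\mathbb Z^n$. Your remarks on positivity of the tail (for the lower bound) and on the infimum in the smoothing-parameter definition being attained add care that the paper leaves implicit.
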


\begin{proof}
Apply \cref{lem:poisson-gaussian-general} with $\mathcal L=\mathbb Z^n$ and use $y^\top M^{-1}y\ge \lambda_{\min}(M^{-1})\|y\|_2^2$,
\[
\sum_{y\in\mathbb Z^n\setminus\{0\}} e^{-\pi y^\top M^{-1}y}
\le
\sum_{y\in\mathbb Z^n\setminus\{0\}} e^{-\pi \lambda_{\min}(M^{-1})\|y\|_2^2}
= \rho_{1/\sqrt{\lambda_{\min}(M^{-1})}}(\mathbb Z^n\setminus\{0\})
\le \varepsilon
\]
by the definition of $\eta_\varepsilon(\mathbb Z^n)$.
\end{proof}

In \cite{mr2004worstcase}, it is also proven (using Poisson summation) that the Gaussian measure $\rho_{s,c}(\mathcal L)$ is maximized when the center $c\in\mathcal L$. 
\begin{fact}[{\cite[Lemma 2.9]{mr2004worstcase}}]\label{fact:gaussian-sum-maximizer-general}
For any lattice $\mathcal L$, $s > 0$ and vector $c$, $\rho_{s,c}(\mathcal L) \le \rho_s(\mathcal L)$.
\end{fact}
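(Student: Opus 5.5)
The statement to prove is \cref{fact:gaussian-sum-maximizer-general}: for any lattice $\mathcal L$, any $s>0$ and any shift $c$, we have $\rho_{s,c}(\mathcal L)\le\rho_s(\mathcal L)$. The plan is to apply Poisson summation (\cref{lem:poisson-gaussian-general}) to the shifted Gaussian and then use that the Fourier transform of a Gaussian is positive.

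First reduce to the full-rank case. If $\mathcal L$ has rank $k<n$, write $c=c_\parallel+c_\perp$, where $c_\parallel$ lies in $\operatorname{span}(\mathcal L)$ and $c_\perp$ is orthogonal to it. For every $x\in\mathcal L$ we have $\|x-c\|_2^2=\|x-c_\parallel\|_2^2+\|c_\perp\|_2^2$. Hence
\[
\rho_{s,c}(\mathcal L)=e^{-\pi\|c_\perp\|_2^2/s^2}\,\rho_{s,c_\parallel}(\mathcal L)\le \rho_{s,c_\parallel}(\mathcal L).
\]
Identifying $\operatorname{span}(\mathcal L)$ with $\mathbb R^k$ isometrically, it suffices to treat a full-rank lattice.

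For the full-rank case, set $f(x)=\rho_{s,c}(x)=\rho_s(x-c)$. This function is in $L^1(\mathbb R^n)$, and its real Fourier transform is the translate rule applied to $\widehat{\rho_s}=s^n\rho_{1/s}$:
\[
\widehat f(y)=s^n\,e^{-2\pi i\langle c,y\rangle}\rho_{1/s}(y).
\]
Poisson summation then gives
\[
\rho_{s,c}(\mathcal L)=s^n\det(\mathcal L^*)\sum_{y\in\mathcal L^*}e^{-2\pi i\langle c,y\rangle}\rho_{1/s}(y).
\]
The left-hand side is real and positive. Every weight $\rho_{1/s}(y)$ on the right is positive and the phases have modulus one, so
\[
\rho_{s,c}(\mathcal L)\le s^n\det(\mathcal L^*)\sum_{y\in\mathcal L^*}\rho_{1/s}(y)=s^n\det(\mathcal L^*)\,\rho_{1/s}(\mathcal L^*)=\rho_s(\mathcal L),
\]
where the last equality is the $c=0$ instance of \cref{lem:poisson-gaussian-general}.

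The only point needing care is that Poisson summation is valid pointwise and with absolute convergence here. Both sides are absolutely convergent Gaussian sums, and the Gaussian is a Schwartz function, so the standard Poisson formula applies without issue. The remaining details are routine.
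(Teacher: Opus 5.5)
Your proof is correct and is essentially the same argument the paper relies on: the paper does not prove this fact itself but cites Micciancio--Regev and notes that it follows from Poisson summation, which is exactly your route. Poisson summation expresses $\rho_{s,c}(\mathcal L)$ as a sum of positive weights $\rho_{1/s}(y)$ over $y\in\mathcal L^*$ times unit-modulus phases, and the triangle inequality then bounds it by $\rho_s(\mathcal L)$. Your preliminary reduction to the full-rank case, splitting $c$ into its components parallel and orthogonal to $\operatorname{span}(\mathcal L)$, is correct and a welcome addition, since the paper's Poisson summation fact (\cref{lem:poisson-gaussian-general}) is stated only for full-rank lattices.
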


We will also need the following crude upper bound on the Gaussian sum, which is a direct consequence of the above maximizer fact and the smoothing parameter bound.
\begin{fact}[Multidimensional discrete Gaussian upper bound]
\label{fact:multidim-gaussian-upper-general}
Let $n\ge 1$, let $M\in\mathbb R^{n\times n}$ be symmetric positive definite, and let $c\in\mathbb R^n$. Then
\[
\sum_{z\in\mathbb Z^n} e^{-\pi (z-c)^\top M(z-c)}
\le
\left(1+\lambda_{\min}(M)^{-1/2}\right)^n.
\]
\end{fact}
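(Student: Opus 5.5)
The plan is to reduce to the centered case via \cref{fact:gaussian-sum-maximizer-general}, and then bound the centered sum by a product of one-dimensional sums after a change of lattice. Concretely, write $M = B^\top B$ with $B = M^{1/2}$, so that
\[
\sum_{z\in\mathbb Z^n} e^{-\pi (z-c)^\top M(z-c)} = \rho_{1,Bc}(B\mathbb Z^n),
\]
a Gaussian sum of width $1$ over the lattice $\mathcal L = B\mathbb Z^n$ centered at $Bc$. \Cref{fact:gaussian-sum-maximizer-general} gives $\rho_{1,Bc}(\mathcal L)\le \rho_1(\mathcal L)$. Hence it suffices to prove
\[
\sum_{z\in\mathbb Z^n} e^{-\pi z^\top M z} \le \left(1+\lambda_{\min}(M)^{-1/2}\right)^n.
\]

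For the centered sum, use $z^\top M z \ge \lambda_{\min}(M)\|z\|_2^2$, which gives
\[
\sum_{z} e^{-\pi z^\top M z} \le \sum_{z\in\mathbb Z^n} e^{-\pi\lambda\|z\|_2^2} = \Bigl(\sum_{k\in\mathbb Z} e^{-\pi\lambda k^2}\Bigr)^n,
\]
where $\lambda=\lambda_{\min}(M)$. The one-dimensional sum is bounded by comparison with an integral. Since $e^{-\pi\lambda t^2}$ is decreasing on $t\ge0$, we have $\sum_{k\ge1} e^{-\pi\lambda k^2}\le \int_0^\infty e^{-\pi\lambda t^2}\,dt = \tfrac12\lambda^{-1/2}$. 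Doing the same for $k\le -1$ and adding the $k=0$ term gives
\[
\sum_{k\in\mathbb Z} e^{-\pi\lambda k^2} \le 1+\lambda^{-1/2}.
\]
Raising this to the $n$-th power yields the claim.

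The only point needing care is that the maximizer fact is applied to the lattice $B\mathbb Z^n$ rather than $\mathbb Z^n$. It holds for any lattice, so this is harmless. Alternatively, one could avoid it by bounding the shifted sum directly. The shifted quadratic form is not diagonal, but one can still apply $\lambda_{\min}$ to get the isotropic bound $\sum_z e^{-\pi\lambda\|z-c\|^2}$. That sum factorizes over coordinates, and each one-dimensional shifted sum $\sum_k e^{-\pi\lambda(k-c_i)^2}\le 1+\lambda^{-1/2}$, by bounding the largest term by $1$ and comparing the terms on each side of it to integrals. This direct route removes any dependence on the maximizer fact, and I would present it as the main argument if the lattice version seems opaque.
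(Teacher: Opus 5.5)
Your proof is correct and is essentially the paper's own argument: you reduce to the centered sum via the Gaussian-sum maximizer fact (applied to the lattice $M^{1/2}\mathbb Z^n$), then use $z^\top Mz\ge\lambda_{\min}(M)\|z\|_2^2$, factor over coordinates, and compare the one-dimensional tail with $\int_0^\infty e^{-\pi\lambda t^2}\,dt$. Your alternative direct route is also valid: bound each shifted one-dimensional sum by its largest term plus the full Gaussian integral, which removes the need for the maximizer fact.
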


\begin{proof}
By \cref{fact:gaussian-sum-maximizer-general}, the left-hand side is at most
\[
\sum_{z\in\mathbb Z^n} e^{-\pi z^\top Mz} \le
\sum_{z\in\mathbb Z^n} e^{-\pi \lambda_{\min}(M)\|z\|_2^2} =
\left(\sum_{k\in\mathbb Z} e^{-\pi \lambda_{\min}(M)k^2}\right)^n.
\]
Then, the result follows from the one-dimensional bound
\[
\sum_{k\in\mathbb Z} e^{-\pi \lambda_{\min}(M)k^2}
=
1+2\sum_{k\ge 1} e^{-\pi \lambda_{\min}(M)k^2}
\le
1+2\int_0^\infty e^{-\pi \lambda_{\min}(M)t^2}\,dt
=
1+\lambda_{\min}(M)^{-1/2}.
\]
\end{proof}

We will also use a crucial Fourier decay property of the discrete Gaussian for the integer lattice $\mathbb{Z}^n$, namely that the Fourier transform of $\gamma_R$ decays like a Gaussian in the frequency domain. The result is folklore, and for completeness we record a proof in \cref{app:dg-fourier-decay-general}.
\begin{lemma}\label{lem:dg-fourier-decay-general}
For all $R\ge 2$ and $\zeta\in \mathbb T^n$, one has
\begin{equation*}
|\widehat{\gamma_R}(\zeta)|\le \exp\left(-R^2\|\zeta\|_{\mathbb T^n}^2 / 5\right).
\end{equation*}
\end{lemma}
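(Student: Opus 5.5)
The plan is to use Poisson summation (\cref{lem:poisson-gaussian-general}) to rewrite $\widehat{\gamma_R}(\zeta)$ as a normalized sum of continuous Gaussians on the dual side. Write $\zeta$ through a representative in $[-1/2,1/2)^n$, so that $\|\zeta\|_{\mathbb T^n}=\|\zeta\|_2$. The function $x\mapsto \rho_R(x)e(-\langle\zeta,x\rangle)$ has real Fourier transform $R^n\rho_{1/R}(\cdot+\zeta)$. Applying Poisson summation on $\mathbb Z^n$, whose dual is $\mathbb Z^n$, gives
\[
\widehat{\gamma_R}(\zeta)=\frac{\sum_{y\in\mathbb Z^n}\rho_{1/R}(y+\zeta)}{\sum_{y\in\mathbb Z^n}\rho_{1/R}(y)} .
\]
Both sums are products over coordinates, so the ratio factorizes as $\prod_{j=1}^n g(\zeta_j)$ with
\[
g(t)=\frac{\sum_{k\in\mathbb Z}e^{-\pi R^2(k+t)^2}}{\sum_{k\in\mathbb Z}e^{-\pi R^2k^2}}\in(0,1].
\]
Positivity is clear, and $g(t)\le 1$ follows from \cref{fact:gaussian-sum-maximizer-general}. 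Since $\|\zeta\|_2^2=\sum_j\zeta_j^2$, it then suffices to prove the one-dimensional bound $g(t)\le e^{-R^2t^2/5}$ for $|t|\le 1/2$ and $R\ge 2$.

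For the one-dimensional bound, I would bound the numerator by pairing the terms $k$ and $-1-k$. For $0\le t\le 1/2$ the nearest lattice points to $-t$ are $0$ and $-1$, at distances $t$ and $1-t\ge t$. This gives
\[
\text{numerator}\le e^{-\pi R^2t^2}+e^{-\pi R^2(1-t)^2}+2\sum_{k\ge 1}e^{-\pi R^2k^2},
\]
and the denominator is at least $1$. When $R\ge 2$, the tail $2\sum_{k\ge1}e^{-\pi R^2k^2}\le 3e^{-4\pi}$ is tiny. I would then split into two cases.

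\textbf{Case $R^2t^2\le c_0$.} Here the target $e^{-R^2t^2/5}\ge 1-R^2t^2/5$ is close to $1$, so a crude bound is not enough. I would use that $g$ is even, smooth, and maximized at $0$, and treat it via the theta-function product, or simply compare derivatives. On the dual side, $g(t)=\sum_m e^{-\pi m^2/R^2}\cos(2\pi m t)\big/\sum_m e^{-\pi m^2/R^2}$. This is an average of $\cos(2\pi m t)$ against the discrete Gaussian of width $R$. The inequality $\cos x\le 1-x^2/2+x^4/24$, together with the second moment of this average being about $R^2/(2\pi)$, gives $1-g(t)\gtrsim \pi R^2t^2$ when $R|t|$ is small. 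Using $\|\cdot\|$ with $|t|\le1/2$, the fourth-moment error is controlled for $R|t|\le c_0$. This comfortably exceeds $R^2t^2/5$.

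\textbf{Case $R^2t^2\ge c_0$.} Here the leading term $e^{-\pi R^2t^2}$ beats $e^{-R^2t^2/5}$ by a constant factor $e^{-(\pi-1/5)c_0}$. The remaining terms are exponentially small in $R^2\ge 4$. The subtlety is that when $t$ is near $1/2$, the term $e^{-\pi R^2(1-t)^2}$ is comparable to the leading one, so the numerator is at most $2e^{-\pi R^2t^2}+3e^{-4\pi}$. This remains below $e^{-R^2t^2/5}$ because $R^2t^2\le R^2/4$ and the $e^{-4\pi}$ tail is dominated by $e^{-R^2/20}$ only while $R^2$ is moderate. That is where I would need care.

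The main obstacle is this last region: large $R$ with $t$ near $1/2$, where the additive tail $3e^{-4\pi}$ is not small relative to $e^{-R^2t^2/5}$. To fix it, I would not use the crude tail bound. I would instead bound every term relative to the leading one: for each $k$, $(k+t)^2\ge t^2+(|k+t|-t)^2$ when $|k+t|\ge t$. This gives
\[
\text{numerator}\le e^{-\pi R^2t^2}\sum_{k}e^{-\pi R^2 d_k^2}\le e^{-\pi R^2t^2}\,(2+3e^{-4\pi}),
\]
a purely multiplicative constant. Then $(2.01)e^{-\pi R^2t^2}\le e^{-R^2t^2/5}$ as soon as $R^2t^2\ge \ln(2.01)/(\pi-1/5)\approx 0.24$. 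Taking $c_0=0.24$ and checking that the small-$t$ Taylor case covers $R^2t^2\le 0.24$ finishes the proof.
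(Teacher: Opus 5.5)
Your proof is correct. It has the same skeleton as the paper's: reduce to one dimension, handle small $s:=R^2t^2$ through the cosine-sum form of $\widehat{\gamma_R}$, and handle large $s$ through the Poisson-dual theta sum. The estimates inside both regimes differ, however.

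For large $s$, the paper bounds the dual numerator additively by $e^{-\pi s}+3e^{-\pi R^2/4}$. It then checks $e^{-\pi s}+3e^{-\pi}\le e^{-s/5}$ by a monotonicity argument on $1/16\le s\le 1$, and for $s>1$ uses $s\le R^2/4$ to absorb the tail into $4e^{-\pi s}$. Your inequality $(k+t)^2\ge t^2+(|k+t|-t)^2$ instead gives the uniform multiplicative bound $g(t)\le(2+3e^{-4\pi})e^{-\pi s}$ in one step. This cleanly removes the $t\approx 1/2$ difficulty you spotted.

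The price is that your small-$s$ case must reach $s\le 0.24$, whereas the paper's stops at $s\le 1/16$. The paper can afford a lossy estimate there. It uses $\sin^2(\pi xz)\ge 8.8x^2z^2$ on $|x|\le R/\sqrt2$ together with crude bounds on $\rho_R(\mathbb Z)$ and the truncated second moment, which yields about $0.203\,s$ against the required $s/5$. Estimates that loose would not survive your fourth-order bound
$1-g\ge 2\pi^2t^2\,\mathbb E m^2-\tfrac{2\pi^4}{3}t^4\,\mathbb E m^4$ at $s=0.24$. You need $\mathbb E m^2$ and $\mathbb E m^4$ within a modest constant factor of their true values $R^2/(2\pi)$ and $3R^4/(4\pi^2)$, uniformly down to $R=2$.

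Poisson summation gives exactly this: both identities hold up to errors of order $R^{O(1)}e^{-\pi R^2}$. After that, $1-g\ge \pi s-\tfrac{\pi^2}{2}s^2$ up to negligible corrections, which comfortably exceeds $s/5$ on $s\le 0.24$. So make that step explicit rather than saying the second moment is ``about'' $R^2/(2\pi)$.
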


\subsection{Dissociated sets and a coarse Rudin inequality}
\label{subsec:coarse-dissociated-rudin}

This subsection proves a coarse version of Rudin's inequality in the discrete setting; see \cref{lem:coarse-rudin-general}. Its main consequence for us is \cref{cor:kappa-dissociated-upper-general}, which shows that the large Fourier spectrum cannot contain a large $\kappa$-dissociated set. This is the key input for the greedy construction in the next subsection.

\begin{definition}[Large Fourier Spectrum]\label{def:large-spectrum-general}
    For $K\ge 2$ we define the large spectrum
\[
\Lambda_K(\mu):=\big\{\zeta\in \mathbb T^n: |\widehat{\mu}(\zeta)|\ge 1-1/K\big\}.
\]
\end{definition}

The following fact captures a useful geometric interpretation of the large spectrum: if $\zeta$ is in the large spectrum, then there is a phase $\theta$ such that $\langle \zeta,x\rangle$ is close to $\theta$ for most $x$ in the support of $\mu$.
\begin{fact}
\label{fact:large-fourier-phase-general}
Let $\mu$ be a probability measure on $\mathbb Z^n$, let $K\ge 2$. Then $\zeta\in \Lambda_K(\mu)$ if and only if there exists $\theta=\theta(\zeta)\in \mathbb R/\mathbb Z$ such that
\[
\mathbb E_{x\sim \mu}\cos\big(2\pi(\langle \zeta,x\rangle-\theta)\big)\ge 1-1/K.
\]
Furthermore, in this case one also has
\[
\mathbb E_{x\sim \mu}\Big\|\langle \zeta,x\rangle-\theta\Big\|_{\mathbb R/\mathbb Z}^2\le \frac{1}{8K}.
\]
\end{fact}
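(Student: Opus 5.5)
The statement to prove is \cref{fact:large-fourier-phase-general}. The plan is to write $\widehat{\mu}(\zeta)$ in polar form and then use an elementary inequality comparing $1-\cos$ with the squared circle distance.

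For the equivalence, write $\widehat{\mu}(\zeta)=|\widehat{\mu}(\zeta)|\,e(-\theta_0)$ for some $\theta_0\in\mathbb R/\mathbb Z$. For any $\theta$ we have
\[
\mathbb E_{x\sim\mu}\cos\big(2\pi(\langle\zeta,x\rangle-\theta)\big)=\mathrm{Re}\,\big(e(-\theta)\overline{\widehat{\mu}(\zeta)}\big)\cdot(\text{sign conventions}).
\]
More concretely, $\mathbb E_{x\sim\mu}\,e(\langle\zeta,x\rangle-\theta)=\overline{\widehat{\mu}(\zeta)}\,e(-\theta)$, and the cosine expectation is the real part of this. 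Its real part is at most $|\widehat{\mu}(\zeta)|$, with equality when $\theta=\theta_0$.

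Hence $\max_\theta \mathbb E\cos(\cdots)=|\widehat{\mu}(\zeta)|$, and the maximum is attained. It follows that such a $\theta$ exists if and only if $|\widehat{\mu}(\zeta)|\ge 1-1/K$, which is exactly the condition $\zeta\in\Lambda_K(\mu)$.

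For the second claim, use the pointwise inequality $1-\cos(2\pi u)\ge 8\|u\|_{\mathbb R/\mathbb Z}^2$ for $u\in\mathbb R/\mathbb Z$. To prove it, reduce to $u\in[0,1/2]$ and write $1-\cos(2\pi u)=2\sin^2(\pi u)$. By concavity of $\sin$ on $[0,\pi/2]$ (Jordan's inequality), $\sin(\pi u)\ge 2u$, so $2\sin^2(\pi u)\ge 8u^2$.

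Applying this with $u=\langle\zeta,x\rangle-\theta$ and taking expectations gives
\[
8\,\mathbb E\|\langle\zeta,x\rangle-\theta\|^2\le 1-\mathbb E\cos(\cdots)\le 1/K,
\]
which is the stated bound. I see no real obstacle here; the only care needed is keeping the sign convention in $e(-\langle\zeta,x\rangle)$ consistent, which merely replaces $\theta_0$ by $-\theta_0$.
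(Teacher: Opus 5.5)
Your proof is correct and follows essentially the same route as the paper. You use the polar decomposition of $\widehat{\mu}(\zeta)$ to get $\max_\theta \mathbb E_{x\sim\mu}\cos\big(2\pi(\langle\zeta,x\rangle-\theta)\big)=|\widehat{\mu}(\zeta)|$ (the paper treats the two directions separately), then the pointwise bound $1-\cos(2\pi u)\ge 8\|u\|_{\mathbb R/\mathbb Z}^2$, which you additionally justify via Jordan's inequality; your choice $\theta=\theta_0$ with $\widehat{\mu}(\zeta)=|\widehat{\mu}(\zeta)|e(-\theta_0)$ also handles the conjugation correctly, since $\mathbb E_{x\sim\mu}e(\langle\zeta,x\rangle-\theta)=\overline{\widehat{\mu}(\zeta)}\,e(-\theta)$.
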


\begin{proof}
Choose $\theta$ so that $e(-\theta)\widehat{\mu}(\zeta)=|\widehat{\mu}(\zeta)|$. Then
\[
\mathbb E_{x\sim \mu} e\big(\langle \zeta,x\rangle-\theta\big)=|\widehat{\mu}(\zeta)|\ge 1-1/K.
\]
Taking real parts gives the first direction. Conversely, if the first estimate holds for some $\theta$, then
\[
1-\frac{1}{K}
\le \mathbb E_{x\sim \mu}\cos\big(2\pi(\langle \zeta,x\rangle-\theta)\big)
\le \Big|\mathbb E_{x\sim \mu} e\big(\langle \zeta,x\rangle-\theta\big)\Big|
= |\widehat{\mu}(\zeta)|,
\]
so $\zeta\in \Lambda_K(\mu)$.
For the second estimate, we use the elementary inequality
\[
1-\cos(2\pi u)\ge 8\|u\|_{\mathbb R/\mathbb Z}^2
\qquad (u\in \mathbb R/\mathbb Z).
\]
\end{proof}

\begin{definition}[$\kappa$-dissociated sets]\label{def:kappa-dissociated-general}
Let $\kappa>0$. A finite set $S=\{\xi_1,\dots,\xi_m\}\subseteq \mathbb T^n$ is called \emph{$\kappa$-dissociated} if for every nonzero vector $\varepsilon=(\varepsilon_1,\dots,\varepsilon_m)\in\{-1,0,1\}^m$ one has
\[
\|\varepsilon_1\xi_1+\cdots+\varepsilon_m\xi_m\|_{\mathbb T^n}\ge \kappa.
\]
\end{definition}

The next lemma is the main analytic tool for dissociated sets. It is a coarse version of Rudin's inequality adapted to our setting. See also the classical exact dissociated version in~\cite[Lemma~4.33]{tao2006additive}. Once the formulation of dissociated sets is in place, the proof uses Hoeffding's lemma, followed by a straightforward expansion of the product and a crude bound for the error term. For completeness, we defer the details to \cref{app:coarse-rudin-general}.

\begin{lemma}[Coarse Rudin inequality]
\label{lem:coarse-rudin-general}
Let $\nu$ be a probability measure on $\mathbb Z^n$, let $\kappa>0$, and assume that
\begin{equation*}
|\widehat{\nu}(\zeta)|\le \eta
\qquad \text{whenever } \|\zeta\|_{\mathbb T^n}\ge \kappa
\end{equation*}
for some $\eta\in[0,1]$. Let $T \subseteq \mathbb T^n$ be finite and $\kappa$-dissociated, and let $c:T\to\mathbb C$. Define
\[
F(x):=\Re\sum_{\xi\in T} c(\xi)e(\langle \xi,x\rangle),
\qquad x\in \mathbb Z^n.
\]
Then for every $\sigma\ge 0$,
\begin{equation*}
\mathbb E_{x\sim\nu} e^{\sigma F(x)}
\le \exp\!\left(\frac{\sigma^2}{2}\sum_{\xi\in T}|c(\xi)|^2\right)
+ \eta\exp\!\left(\sigma\sum_{\xi\in T}|c(\xi)|\right).
\end{equation*}
\end{lemma}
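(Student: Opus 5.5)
We prove the coarse Rudin inequality in the classical way: bound the exponential by a product of one-dimensional factors, expand that product into a trigonometric polynomial, and then control its $\nu$-expectation on the Fourier side. Write $c(\xi)=|c(\xi)|e(\phi_\xi)$, so that $F(x)=\sum_{\xi\in T}|c(\xi)|\cos\big(2\pi(\langle\xi,x\rangle+\phi_\xi)\big)$. For each term, apply Hoeffding's lemma in the convex-combination form: for $u\in[-1,1]$ and $a\ge 0$,
\[
e^{au}\le \cosh(a)+u\sinh(a).
\]
Taking $a=\sigma|c(\xi)|$ and $u=\cos(\cdot)$ gives the pointwise bound
\[
e^{\sigma F(x)}\le \prod_{\xi\in T}\Big(\cosh(\sigma|c(\xi)|)+\sinh(\sigma|c(\xi)|)\cos\big(2\pi(\langle\xi,x\rangle+\phi_\xi)\big)\Big).
\]

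Next, write each cosine as $\tfrac12\big(e(\langle\xi,x\rangle+\phi_\xi)+e(-\langle\xi,x\rangle-\phi_\xi)\big)$ and expand the product. The result is a sum over sign patterns $\varepsilon\in\{-1,0,1\}^T$ of terms of the form
\[
\Big(\prod_{\varepsilon_\xi=0}\cosh(\sigma|c(\xi)|)\Big)\Big(\prod_{\varepsilon_\xi\neq0}\tfrac12\sinh(\sigma|c(\xi)|)\Big)\, e\Big(\sum_\xi\varepsilon_\xi\phi_\xi\Big)\, e\big(\langle \zeta_\varepsilon,x\rangle\big),
\qquad \zeta_\varepsilon:=\sum_\xi\varepsilon_\xi\xi .
\]
Taking the expectation over $x\sim\nu$ turns $e(\langle\zeta_\varepsilon,x\rangle)$ into $\widehat\nu(-\zeta_\varepsilon)$.

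The two groups of terms are then bounded separately.
\begin{itemize}
\item The term $\varepsilon=0$ contributes $\prod_\xi\cosh(\sigma|c(\xi)|)\le \exp\big(\tfrac{\sigma^2}{2}\sum_\xi|c(\xi)|^2\big)$, using $\cosh t\le e^{t^2/2}$.
\item For every nonzero $\varepsilon$, $\kappa$-dissociation gives $\|\zeta_\varepsilon\|_{\mathbb T^n}\ge\kappa$, and hence $|\widehat\nu(-\zeta_\varepsilon)|\le\eta$.
\end{itemize}
Bounding each error term in absolute value, the total error is at most $\eta$ times the sum of the absolute coefficients over all nonzero $\varepsilon$. That sum of absolute coefficients is at most
\[
\prod_\xi\big(\cosh(\sigma|c(\xi)|)+\sinh(\sigma|c(\xi)|)\big)=\exp\Big(\sigma\sum_\xi|c(\xi)|\Big),
\]
which gives exactly the stated error term.

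The only delicate point is that the phases $\phi_\xi$ must be absorbed before expanding, so that the Fourier evaluation happens exactly at the signed sums $\zeta_\varepsilon$. The decomposition above handles this, and every other step is routine bookkeeping. The crude absolute-value bound on the error is affordable precisely because the statement only asks for $\eta\exp(\sigma\sum|c|)$.
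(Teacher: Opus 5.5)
Your proposal is correct and follows the paper's proof essentially step for step: the convexity bound $e^{au}\le\cosh a+u\sinh a$, expansion of the product into characters at the signed sums $\sum_\xi\varepsilon_\xi\xi$, the bound $\prod_\xi\cosh(\sigma|c(\xi)|)\le \exp\bigl(\frac{\sigma^2}{2}\sum_\xi|c(\xi)|^2\bigr)$ for the constant term, and $\eta$ times the total coefficient mass $\prod_\xi(\cosh+\sinh)=e^{\sigma\sum_\xi|c(\xi)|}$ for all remaining terms. Both arguments also rely, without saying so, on each factor satisfying $\cosh a+u\sinh a\ge e^{-a}>0$; this positivity is what justifies multiplying the pointwise inequalities.
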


As a direct consequence, the large spectrum cannot contain a large dissociated set.

\begin{corollary}[$\kappa$-dissociated size bound]
\label{cor:kappa-dissociated-upper-general}
Let $R\ge 2$, $\alpha\in(0,1]$, and $\mu$ be an $\alpha$-dense piece of $\gamma_R$.
Set
$
S:=\log(2/\alpha),
\kappa:=\frac{5\sqrt{S}}{R}.
$
If $A \subseteq \Lambda_{1/2}(\mu)$ is $\kappa$-dissociated, then
$
|A|\le 14S.
$
\end{corollary}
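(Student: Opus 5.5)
The plan is a Chang-type exponential moment argument. I compare the exponential moment of a trigonometric polynomial built from $A$ under $\mu$ with its moment under $\gamma_R$. \cref{lem:coarse-rudin-general} controls the latter, and the density hypothesis $\mu\le\alpha^{-1}\gamma_R$ transfers that control to $\mu$.

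I read $A\subseteq\Lambda_{1/2}(\mu)$ as $|\widehat\mu(\xi)|\ge 1/2$ for all $\xi\in A$, as in the informal \cref{lem:dissociated-intro}. This reading is an assumption on my part: \cref{def:large-spectrum-general} takes $K\ge 2$, and literally $\Lambda_{1/2}$ would be all of $\mathbb T^n$.

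Suppose $|A|>14S$. Since subsets of a $\kappa$-dissociated set are $\kappa$-dissociated, I pass to a subset $T\subseteq A$ with $m:=|T|=\lfloor 14S\rfloor+1$. For each $\xi\in T$, I choose $\theta_\xi$ as in the proof of \cref{fact:large-fourier-phase-general}, so that $\mathbb E_{x\sim\mu}e(\langle\xi,x\rangle-\theta_\xi)=|\widehat\mu(\xi)|$. Then I set $c(\xi)=e(-\theta_\xi)$, which has $|c(\xi)|=1$, and
\[
F(x)=\Re\sum_{\xi\in T}c(\xi)e(\langle\xi,x\rangle).
\]

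For the lower bound, $\mathbb E_\mu F=\sum_{\xi\in T}|\widehat\mu(\xi)|\ge m/2$, so Jensen gives $\mathbb E_\mu e^{\sigma F}\ge e^{\sigma m/2}$.

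For the upper bound, since $e^{\sigma F}\ge0$, the dense-piece property gives $\mathbb E_\mu e^{\sigma F}\le\alpha^{-1}\mathbb E_{\gamma_R}e^{\sigma F}$, and $\alpha^{-1}=2^{S-1}$. I then apply \cref{lem:coarse-rudin-general} with $\nu=\gamma_R$. By \cref{lem:dg-fourier-decay-general}, every $\zeta$ with $\|\zeta\|_{\mathbb T^n}\ge\kappa=5\sqrt S/R$ satisfies $|\widehat{\gamma_R}(\zeta)|\le e^{-R^2\kappa^2/5}=e^{-5S}$, so I may take $\eta=e^{-5S}$. This yields
\[
e^{\sigma m/2}\le 2^{S-1}\Big(e^{\sigma^2 m/2}+e^{-5S}e^{\sigma m}\Big).
\]

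The only real work is choosing $\sigma$ so that this fails. The point is that the $\eta$-term grows like $e^{\sigma m}$, faster than the left side, so $\sigma$ must be a constant small enough that $e^{-5S}$ beats it at $m\approx14S$. I take $\sigma=1/2$. The left side is $e^{m/4}$, and the right side is $2^{S-1}(e^{m/8}+e^{-5S+m/2})$. Dividing by $e^{m/4}$, the ratio is at most $2^{S-1}(e^{-m/8}+e^{-5S+m/4})$.

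To bound this ratio, use $14S<m\le14S+1$:
\begin{itemize}
\item The first term is at most $2^{S-1}e^{-1.75S}<1/2$.
\item The second term is at most $2^{S-1}e^{-1.5S+1/4}$.
\item Since $S=\log(2/\alpha)\ge1$, we have $(\ln 2-1.5)S\le-0.8$, so the second term is at most $\tfrac12e^{-0.55}<1/2$.
\end{itemize}
Hence the ratio is below $1$, a contradiction, so $|A|\le14S$. The main check is the bookkeeping around $m=\lfloor14S\rfloor+1$ and the use of $S\ge1$, both done above.
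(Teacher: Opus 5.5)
Your proof is correct and is essentially the paper's argument. Both use the same $F$ with coefficients $e(-\theta_\xi)$, Jensen under $\mu$, the transfer via $\mu\le\alpha^{-1}\gamma_R$, and the coarse Rudin inequality with $\eta=e^{-5S}$ from the Gaussian Fourier decay. The only difference is the choice of exponent: the paper handles every $m\ge 14S$ directly by taking $\lambda=4S/m$, so that $\lambda m=4S$ stays fixed, whereas you first pass to a subset of size $\lfloor 14S\rfloor+1$ and use the constant $\sigma=1/2$. Your reading of $\Lambda_{1/2}(\mu)$ as $\{|\widehat\mu|\ge 1/2\}$ is exactly how the paper uses it.
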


\begin{proof}
Write $m:=|A|$. For each $\xi\in A$, choose $\theta_\xi$ from \cref{fact:large-fourier-phase-general}, and define
\[
F(x):=\sum_{\xi\in A}\cos\big(2\pi(\langle\xi,x\rangle-\theta_\xi)\big).
\]
By \cref{fact:large-fourier-phase-general} and linearity of expectation,
$
\mathbb E_{\mu} F\ge |A|\cdot \left(1-\frac12\right)=\frac{|A|}{2}.
$
Hence, for any fixed $\lambda\ge 0$, Jensen's inequality gives
\[
\mathbb E_{\mu}e^{\lambda F}\ge \exp\Big(\lambda\,\mathbb E_{\mu}F\Big)\ge e^{\lambda |A|/2}.
\]
On the other hand, $\mu\le \alpha^{-1}\gamma_R$ pointwise, so
$
\mathbb E_{\mu}e^{\lambda F}\le \alpha^{-1}\,\mathbb E_{\gamma_R}e^{\lambda F}.
$
Apply \cref{lem:dg-fourier-decay-general,lem:coarse-rudin-general} with $\nu=\gamma_R$, $T=A$, and coefficients $c(\xi):=e(-\theta_\xi)$:
\[
\mathbb E_{\gamma_R}e^{\lambda F}
\le \exp\!\left(\frac{\lambda^2 m}{2}\right)+\exp\left(-R^2\kappa^2 / 5\right)\exp(\lambda m)
= e^{\lambda^2 m/2}+e^{\lambda m-5S}.
\]
Suppose for contradiction that $m\ge 14S$. Then
\[
e^{\lambda m/2}
\le \mathbb E_{\mu}e^{\lambda F}
\le \alpha^{-1}\mathbb E_{\gamma_R}e^{\lambda F}
\le e^{(S-1)\ln 2+\lambda^2 m/2}+e^{(S-1)\ln 2+\lambda m-5S}.
\]
Choose
$
\lambda:=\frac{4S}{m}.
$
Then
\[
e^{2S}
\le e^{(S-1)\ln 2+8S^2/m}+e^{(S-1)\ln 2-S}
\le e^{(\ln 2+4/7)S}+e^{(\ln 2-1)S}.
\]
Since $S=\log(2/\alpha)\ge 1$, the right-hand side is strictly smaller than $e^{2S}$, a contradiction.
\end{proof}

\subsection{Coarse large spectrum structure theorem}
\label{subsec:coarse-phase-dissociation}

% We now record the phase lemmas that connect large Fourier coefficients to approximate linear relations.

We begin with a basic fact about the large Fourier spectrum: if we add a few such frequency vectors, then the sum still has a large Fourier coefficient, although the bound becomes weaker. The fact indicates that the spectrum has some additive structure, see \cite[Lemma 4.37]{tao2006additive} for an analog in finite Abelian groups.

\begin{fact}
\label{fact:signed-sum-phase-general}
Let $\mu$ be a probability measure on $\mathbb Z^n$. Let $\xi_1,\dots,\xi_d\in \mathbb T^n$ and $\theta_1,\dots,\theta_d\in \mathbb R/\mathbb Z$. Assume that there exist $\kappa_1,\dots,\kappa_d\in[0,1]$ such that
\[
\mathbb E_{x\sim\mu}\cos\big(2\pi(\langle \xi_j,x\rangle-\theta_j)\big)\ge 1-\kappa_j
\qquad (1\le j\le d).
\]
Then for every $\varepsilon_1,\dots,\varepsilon_d\in\{-1,0,1\}$, writing
$
T:=\{j\in[d]:\varepsilon_j\neq 0\},
$ with cardinality $m:=|T|$,
and
$
\xi:=\sum_{j=1}^d \varepsilon_j\xi_j,
\theta:=\sum_{j=1}^d \varepsilon_j\theta_j,
$
one has
\[
\mathbb E_{x\sim\mu}\cos\big(2\pi(\langle \xi,x\rangle-\theta)\big)
\ge 1-m\sum_{j\in T}\kappa_j.
\]
\end{fact}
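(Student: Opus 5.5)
The plan is to reduce everything to a pointwise inequality and then take expectations. Set $u_j:=\langle \xi_j,x\rangle-\theta_j$ for $j\in T$, and write $\langle \xi,x\rangle-\theta=\sum_{j\in T}\varepsilon_j u_j$ modulo $1$. Since the cosine is $1$-periodic, the lifts are irrelevant. The key pointwise bound we aim for is
\[
1-\cos\Big(2\pi\sum_{j\in T}\varepsilon_j u_j\Big)\le m\sum_{j\in T}\big(1-\cos(2\pi u_j)\big),
\]
valid for all real $u_j$ and all signs $\varepsilon_j\in\{-1,1\}$. Taking $\mathbb E_{x\sim\mu}$ and using the hypotheses $\mathbb E\,(1-\cos(2\pi u_j))\le\kappa_j$ then gives $\mathbb E\cos(2\pi(\langle\xi,x\rangle-\theta))\ge 1-m\sum_{j\in T}\kappa_j$, which is the claim.

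To prove the pointwise bound, rewrite $1-\cos(2\pi v)=2\sin^2(\pi v)=\tfrac12|1-e(v)|^2$. This shows that $d(v):=|1-e(v)|=\sqrt{2(1-\cos 2\pi v)}$ is the chordal distance between $e(v)$ and $1$ on the unit circle. Because $|e(a)|=1$, multiplication by $e(a)$ is an isometry, so
\[
|1-e(a+b)|\le|1-e(a)|+|e(a)-e(a+b)|=|1-e(a)|+|1-e(b)|.
\]
This is subadditivity, and also $d(-v)=d(v)$. Iterating gives $d\big(\sum_{j\in T}\varepsilon_ju_j\big)\le\sum_{j\in T}d(u_j)$. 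Squaring and applying Cauchy--Schwarz yields $d(\cdot)^2\le m\sum_{j\in T}d(u_j)^2$. Dividing by $2$ recovers exactly the inequality displayed above.

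The only point requiring care is the choice of the right quantity: $1-\cos$ itself is not subadditive, but its square root is. The factor $m$ is exactly the Cauchy--Schwarz loss from passing back to squares. The degenerate case $T=\emptyset$ gives $\xi=0$ and $\theta=0$, so both sides equal $1$ and the statement holds trivially.
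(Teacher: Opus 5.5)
Your proof is correct and follows the paper's argument. Both reduce to the pointwise bound $1-\cos(2\pi\sum_r v_r)\le m\sum_r\bigl(1-\cos(2\pi v_r)\bigr)$ using $1-\cos(2\pi t)=2\sin^2(\pi t)$, subadditivity, and Cauchy--Schwarz, and then take expectations. The only cosmetic difference is how subadditivity is justified: you phrase it as the triangle inequality for the chordal distance $|1-e(v)|=2|\sin(\pi v)|$ via the isometry of multiplication by $e(a)$, whereas the paper simply asserts $|\sin(\sum_r A_r)|\le\sum_r|\sin A_r|$.
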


\begin{proof}
If $T=\varnothing$, there is nothing to prove. For $j\in T$ define
\[
u_j(x):=\varepsilon_j(\langle \xi_j,x\rangle-\theta_j)\in \mathbb R/\mathbb Z.
\]
Then $\sum_{j\in T}u_j(x)=\langle \xi,x\rangle-\theta$ in $\mathbb R/\mathbb Z$, and since cosine is even,
\[
1-\cos\big(2\pi u_j(x)\big)
=1-\cos\big(2\pi(\langle \xi_j,x\rangle-\theta_j)\big).
\]
Also, for any $v_1,\dots,v_m\in \mathbb R/\mathbb Z$,
\[
1-\cos\Big(2\pi\sum_{r=1}^m v_r\Big)
\le m\sum_{r=1}^m \bigl(1-\cos(2\pi v_r)\bigr),
\]
since $1-\cos(2\pi t)=2\sin^2(\pi t)$ and
$|\sin(\sum_r A_r)|\le \sum_r |\sin A_r|\le \sqrt m\, (\sum_r \sin^2 A_r)^{1/2}$.
Applying this pointwise to $(u_j(x))_{j\in T}$ gives
\[
1-\cos\big(2\pi(\langle \xi,x\rangle-\theta)\big)
\le m\sum_{j\in T}\Big(1-\cos\big(2\pi(\langle \xi_j,x\rangle-\theta_j)\big)\Big).
\]
Taking expectations and using the hypothesis yields the claim.
\end{proof}

The next theorem is the coarse large spectrum structure theorem used only in the exact route. It packages the large spectrum into boundedly many generators with controlled denominator product. The exact relations \cref{eq:coarse-exact-identity} make these generators suitable for the later sketch, while the approximate spanning statement \cref{eq:coarse-span-general} says that every large-spectrum frequency is close to one of their bounded-denominator combinations. As usual, $Q,K,R$ are polynomially related parameters, and the auxiliary base $q$ trades off denominator growth against the final approximation error.

\paragraph{Overview of the greedy construction.}
Fix an integer $q\in[3,K]$ and first work with \emph{unrounded} frequencies. Starting from the empty set, we repeatedly add a chain (until $q^{r_j} \leq \sqrt{K}$)
\[
\{a_j,qa_j,\dots,q^{r_j-1}a_j\}
\]
as long as the union remains $\kappa$-dissociated, where $\kappa:=5\sqrt S/R$. When a chain can no longer be extended, we obtain an approximate relation $k_j a_j\approx v_j$.

Because the multiples $q^i a_j$ stay in a reasonably large Fourier spectrum by \cref{fact:signed-sum-phase-general}, \cref{cor:kappa-dissociated-upper-general} bounds the total size of all chains. Hence the process stops after $O(S)$ steps, and every large-spectrum frequency is close to a suitable linear combination of the chosen generators.

We then round recursively and replace the $a_j$ by exact generators $t_j$, which define the linear sketch. In the applications we take $q$ super-constant because rounding errors propagate through later relations: a stage-$j$ error satisfies $q^{r_j}\|a_j-t_j\|_{\mathbb T^n}\lesssim \text{previous error}$, and can later be amplified by coefficients of size about $q^{r_j-1}$. A larger $q$ keeps the loss per stage about $1/S$, so the final approximation error stays near $1/R$.

\begin{theorem}[Coarse Large Spectrum Structure Theorem]
\label{prop:coarse-dissociated-general}
Let $R\ge 2$, $\alpha\in(0,1]$, let $\mu$ be an $\alpha$-dense piece of $\gamma_R$, and write
$
S:=\log(2/\alpha).
$
Assume that
$
3\le q\le K,
Q \ge RK \sqrt{n}.
$
Set
$
\lambda_q:=\frac{q-1}{q-2}.
$
Then there exist $T=\{t_1,\dots,t_m\}\subseteq \mathbb T^n$ and integers $1\leq k_1,\dots,k_m \leq Q$ such that
$
m\le 14S,
$
and
\[
\prod_{j=1}^m k_j \le q^{56S(1+\log_K Q)}
\]
and for every $j\in[m]$ there exist integers $0\le c_i<k_i$ for $i<j$ such that
\begin{equation}
\label{eq:coarse-exact-identity}
k_j t_j \equiv \sum_{i=1}^{j-1} c_i t_i \pmod{\mathbb{Z}^n},
\end{equation}
and every $t\in\Lambda_K(\mu)$ admits integers $0\le c_i<k_i$ with
\begin{equation}
\label{eq:coarse-span-general}
\left\|t-\sum_{i=1}^m c_i t_i\right\|_{\mathbb T^n}\le \frac{5\lambda_q^{14S}\sqrt{S}}{R}.
\end{equation}
\end{theorem}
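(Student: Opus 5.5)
The plan has two stages. First a greedy chain construction on unrounded frequencies, controlled by \cref{cor:kappa-dissociated-upper-general}. Then a recursive rounding step that produces the exact generators $t_j$ satisfying \cref{eq:coarse-exact-identity}.

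\textbf{Stage 1: greedy chains.} Set $\kappa:=5\sqrt S/R$ and maintain a $\kappa$-dissociated set $A\subseteq\mathbb T^n$, a union of chains $\{a_j,qa_j,\dots,q^{r_j-1}a_j\}$. At step $j$, pick $a_j\in\Lambda_K(\mu)$ that is not within $\kappa$ of any $\{-1,0,1\}$-combination of $A$; if none exists, stop. Then extend the chain by $q^ia_j$ while $q^i\le\sqrt K$ and $A$ stays $\kappa$-dissociated.

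Every element $q^ia_j$ is a sum of at most $\sqrt K$ copies of $a_j$. By \cref{fact:signed-sum-phase-general}, applied with $m\le\sqrt K$ terms each having deficit $1/K$, it has cosine deficit at most $(\sqrt K)^2/K=1$. This is too weak, so I would refine: take $q^{r}\le K^{1/4}$, or apply \cref{fact:signed-sum-phase-general} through the $\ell^2$ phase bound of \cref{fact:large-fourier-phase-general}, so that the deficit is at most $1/2$. Then all of $A$ lies in $\Lambda_{1/2}(\mu)$, and \cref{cor:kappa-dissociated-upper-general} gives $|A|=\sum_j r_j\le14S$. Hence $m\le 14S$.

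When chain $j$ stops, say at $q^{r_j}a_j$, dissociation fails. So some combination $\pm q^{r_j}a_j+\sum\varepsilon\cdot(\text{earlier elements})$ has torus norm less than $\kappa$. Expanding the chain elements of $a_j$ in base $q$ and collecting coefficients, this yields an approximate relation
\[
k_ja_j\approx\sum_{i<j}c_i'a_i\quad\text{within }\kappa,
\]
with $k_j\le 2q^{r_j}$, where the coefficients of $a_i$ are reduced modulo $k_i$ using earlier relations. If instead the chain stops because $q^{r_j}>\sqrt K$, I take $k_j$ to be of size about $Q$ with a trivial relation; this is the source of the $\log_KQ$ factor. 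With $\sum r_j\le14S$ and at most $14S$ chains, each contributing at most a factor $q^{O(1+\log_KQ)}$ beyond $q^{r_j}$, we get $\prod_jk_j\le q^{56S(1+\log_KQ)}$.

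At termination, every $t\in\Lambda_K(\mu)$ is within $\kappa$ of a signed combination of chain elements. Rewriting that combination in mixed radix with the relations gives $\sum_i c_ia_i$ with $0\le c_i<k_i$.

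\textbf{Stage 2: rounding and error propagation.} Define $t_1$ as $a_1$ rounded to a point with $k_1t_1\in\mathbb Z^n$. Inductively, choose $t_j$ to be a $k_j$-th root of $\sum_{i<j}c_it_i$ nearest to $a_j$; then \cref{eq:coarse-exact-identity} holds exactly. The error satisfies
\[
\|a_j-t_j\|\le \frac{1}{k_j}\Bigl(\kappa+\sum_i c_i\|a_i-t_i\|\Bigr).
\]
Because the $c_i$ are bounded in terms of $q$-adic digits, the accumulated error in $\sum c_it_i$ grows by a factor of roughly $1+\frac{1}{q-2}$ per stage (a geometric sum $\sum_i q^{-i}$). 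Over at most $14S$ stages this produces the factor $\lambda_q^{14S}$ in \cref{eq:coarse-span-general}. The condition $Q\ge RK\sqrt n$ keeps the precision of the $k_j$-th-root choices below $\kappa/K$.

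\textbf{Main obstacle.} I expect the error-propagation bookkeeping in Stage 2 to be hardest: showing that the digits $c_i$ times the stage-$i$ errors sum geometrically with ratio $1/(q-1)$ rather than blowing up. A secondary difficulty is keeping every chain element inside $\Lambda_{1/2}(\mu)$ in Stage 1. For the bookkeeping, I would first try an induction hypothesis of the form "the error of any admissible combination of the first $j$ generators is at most $\kappa\lambda_q^{j}$".
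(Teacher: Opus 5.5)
Your overall architecture is the paper's: greedy chains controlled by \cref{cor:kappa-dissociated-upper-general}, a relation $k_ja_j\approx v_j$ when a chain is blocked, $k_j=Q$ for full-length chains, and recursive $k_j$-th roots. But Stage~2, as you set it up, cannot produce the factor $\lambda_q^{14S}$.

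You reduce coefficients modulo $k_i$ \emph{before} rounding, using relations that hold only up to $\kappa$, so after the reduction the relation is no longer within $\kappa$. You then compare $\sum_i c_ia_i$ with $\sum_i c_it_i$ for admissible $0\le c_i<k_i$. That comparison is worthless:
\begin{itemize}
\item For a full-length chain, $t_j\in Q^{-1}\mathbb Z^n$ may be off by $\sqrt n/(2Q)$ while $c_j$ ranges up to $Q-1$. The discrepancy is then of order $\sqrt n$.
\item For a blocked chain, $(k_i-1)\|a_i-t_i\|_{\mathbb T^n}$ can be as large as the entire previous error. With $G_j:=\sum_{i\le j}(k_i-1)\|a_i-t_i\|$, your recursion $\|a_j-t_j\|\le k_j^{-1}(\kappa+\sum_i c_i\|a_i-t_i\|)$ only yields $G_j\le 2G_{j-1}+\kappa$, i.e.\ $2^{\Theta(S)}\kappa$.
\end{itemize}
So your proposed induction hypothesis (every admissible combination has error at most $\kappa\lambda_q^j$) cannot hold in general. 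The mixed-radix digits $c_i$ are not small relative to $k_i$. What is small is the coefficient $d$ of $a_j$ in a \emph{signed} $\{-1,0,1\}$-combination of its chain: $|d|\le (q^{r_j}-1)/(q-1)$.

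The missing idea is to round first and reduce only in the exact world.
\begin{itemize}
\item Keep $v_j$ as an unreduced signed combination of earlier chain elements. Let $E_j$ bound the discrepancy between any signed combination of $A_j$ and the same signed combination of the rounded chains $B_j=\bigcup_{i\le j}\{t_i,\dots,q^{r_i-1}t_i\}$, and set $F_j:=\kappa+E_j$.
\item Replace $v_j$ by its rounded counterpart $v_j^\sharp$, so that $\|k_ja_j-v_j^\sharp\|_{\mathbb T^n}\le F_{j-1}$.
\item Only now use the \emph{exact} identities for $t_1,\dots,t_{j-1}$ to rewrite $v_j^\sharp$ as $\sum_{i<j}c_it_i$ modulo $\mathbb Z^n$. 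This step is free. Then take the $k_j$-th root with $\|t_j-a_j\|\le F_{j-1}/k_j$.
\item The step $F_j\le\lambda_qF_{j-1}$ rests on the lower bound $k_j\ge((q-2)q^{r_j}+1)/(q-1)$, which you never record (you only note $k_j\le 2q^{r_j}$). It gives $|d|/k_j\le 1/(q-2)$. In the full-length case, $Q\ge RK\sqrt n$ gives $|d|\sqrt n/(2Q)\le\kappa/(q-2)$.
\item At the end, a frequency in $\Lambda_K(\mu)$ is within $\kappa$ of a signed combination of $A$, hence within $F_m$ of the same combination of $B$. Its exact reduction to admissible form costs nothing.
\end{itemize}

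Two smaller corrections.
\begin{itemize}
\item The chain should stop one step earlier: elements $q^\ell a_j$ for $\ell\le r_*-1$, with $r_*=\max\{1,\lfloor\frac12\log_qK\rfloor\}$. The blocking multiple $q^{r_j}a_j$ is never adjoined and need not lie in the large spectrum. Then \cref{fact:signed-sum-phase-general} gives deficit $q^{2\ell}/K\le 1/2$ without retreating to $K^{1/4}$. That retreat would roughly double the coefficient of $\log_KQ$ and overshoot the stated exponent.
\item The product bound is not a per-chain overhead of $q^{O(1+\log_KQ)}$; a full chain has $k_j/q^{r_j}\approx Q/\sqrt K$. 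It comes from amortization. Full chains have length $r_*\ge\frac14\log_qK$, so there are at most $4|A|/\log_qK$ of them, contributing $Q^{4|A|/\log_qK}=q^{4|A|\log_KQ}$. Blocked chains contribute $q^{\sum(r_j+1)}\le q^{2|A|}$.
\end{itemize}
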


\begin{proof}
Set
\[
\kappa:=\frac{5\sqrt S}{R},
\qquad
r_*:=\max\!\left\{1,\,\left\lfloor \tfrac12\log_q K\right\rfloor\right\}.
\]
\textbf{Construction of unrounded chains}. At stage $j$, write
\[
A_{j-1}:=\bigcup_{i<j}\{a_i,qa_i,\dots,q^{r_i-1}a_i\}.
\]
We stop once every $t\in\Lambda_K(\mu)$ lies within $\kappa$ of a signed combination of elements of $A_{j-1}$. Otherwise choose one such frequency and call it $a_j$; then $A_{j-1}\cup\{a_j\}$ is $\kappa$-dissociated. Let $r_j\in\{1,\dots,r_*\}$ be maximal such that
$
A_{j-1}\cup\{a_j,qa_j,\dots,q^{r_j-1}a_j\}
$
is $\kappa$-dissociated.

\paragraph{Case 1.}
If $r_j<r_*$, then adjoining $q^{r_j}a_j$ destroys $\kappa$-dissociation, so there exist signs $\varepsilon_0,\dots,\varepsilon_{r_j-1}\in\{-1,0,1\}$ and a signed combination $v_j$ of elements of $A_{j-1}$ such that
\[
\left\|q^{r_j}a_j-\sum_{\ell=0}^{r_j-1}\varepsilon_\ell q^\ell a_j-v_j\right\|_{\mathbb T^n}\le \kappa.
\]
Set
$
k_j:=q^{r_j}-\sum_{\ell=0}^{r_j-1}\varepsilon_\ell q^\ell.
$
Then
\[
\frac{(q-2)q^{r_j}+1}{q-1}
\le k_j
\le \frac{q^{r_j+1}-1}{q-1}
< q^{r_j+1}
\le q^{r_*}
\le K^{1/2}
\le Q.
\]

\paragraph{Case 2.}
If $r_j=r_*$, set $k_j:=Q$. In either case adjoin $\{a_j,qa_j,\dots,q^{r_j-1}a_j\}$ and continue.

Let
\[
A:=\bigcup_{j=1}^m\{a_j,qa_j,\dots,q^{r_j-1}a_j\}.
\]
By construction, $A$ is $\kappa$-dissociated, and the stopping rule says that every $t\in\Lambda_K(\mu)$ lies within $\kappa$ of a signed combination of elements of $A$. Also, if $\xi=q^\ell a_j\in A$, then \cref{fact:large-fourier-phase-general,fact:signed-sum-phase-general} give
\[
|\widehat\mu(\xi)|\ge 1-\frac{q^{2\ell}}{K}\ge \frac12,
\]
so $A\subseteq \Lambda_{1/2}(\mu)$. Hence \cref{cor:kappa-dissociated-upper-general} yields $|A|\le 14S$.
Since each chain contributes at least one element to $A$, it follows that
$
m\le |A|\le 14S.
$

\paragraph{Bound of the product of the $k_j$.} Let $m_1,m_2$ be the numbers of Case~1 and Case~2 steps. As above,
\[
\prod_{\text{Case }1}k_j\le q^{\sum_{\text{Case }1}(r_j+1)}\le q^{|A|+m_1}\le q^{2|A|},
\qquad
m_2\le \frac{|A|}{r_*}\le \frac{4|A|}{\log_q K}.
\]
Therefore
\[
\prod_{j=1}^m k_j
\le q^{2|A|}Q^{\frac{4|A|}{\log_q K}}
= q^{2|A|+4|A|\frac{\log_q Q}{\log_q K}}
\le q^{56S(1+\log_K Q)}.
\]

\paragraph{Bound of Rounding Error.} We now round the frequencies recursively. Write
\[
B_j:=\bigcup_{i\le j}\{t_i,qt_i,\dots,q^{r_i-1}t_i\},
\qquad
B_0:=\varnothing,
\]
and let $E_j\ge 0$ be such that every signed combination of elements of $A_j$ differs from the corresponding signed combination of $B_j$ by at most $E_j$, where
$
A_j:=\bigcup_{i\le j}\{a_i,qa_i,\dots,q^{r_i-1}a_i\}.
$
Set $E_0:=0$ and $F_j:=\kappa+E_j$.

Now suppose $t_1,\dots,t_{j-1}$ have been chosen.

\paragraph{Case 1.}
If $r_j<r_*$, let $v_j^{\sharp}$ be obtained from $v_j$ by replacing each earlier $q^\ell a_i$ with $q^\ell t_i$. Then
\[
\|v_j-v_j^{\sharp}\|_{\mathbb T^n}\le E_{j-1},
\qquad
\|k_j a_j-v_j^{\sharp}\|_{\mathbb T^n}\le E_{j-1}+\kappa=F_{j-1}.
\]
Using the exact identities already constructed, $v_j^{\sharp}$ is congruent modulo $\mathbb Z^n$ to some $\sum_{i<j} c_i t_i$ with $0\le c_i<k_i$; choose $t_j$ so that
\[
k_j t_j\equiv \sum_{i<j} c_i t_i \pmod{\mathbb Z^n},
\qquad
\|t_j-a_j\|_{\mathbb T^n}\le \frac{F_{j-1}}{k_j}.
\]

\paragraph{Case 2.}
If $r_j=r_*$, choose $t_j\in Q^{-1}\mathbb Z^n/\mathbb Z^n$ with
\[
\|t_j-a_j\|_{\mathbb T^n}\le \frac{\sqrt n}{2Q},
\qquad
k_j t_j\equiv 0\pmod{\mathbb Z^n}.
\]

In both cases, the exact identity \cref{eq:coarse-exact-identity} holds by construction. We now bound the error parameters $E_j$ and $F_j$. 

Now take any signed combination $u+d a_j$ of the new block $A_j$, where $u$ only uses $A_{j-1}$ and $|d|\le (q^{r_j}-1)/(q-1)$. The corresponding rounded combination is $u^{\sharp}+d t_j$, with $\|u-u^{\sharp}\|_{\mathbb T^n}\le E_{j-1}$. If $r_j<r_*$, then
\[
|d|\,\|a_j-t_j\|_{\mathbb T^n}
\le \frac{q^{r_j}-1}{q-1}\cdot \frac{F_{j-1}}{k_j}
\le \frac{F_{j-1}}{q-2}.
\]
If $r_j=r_*$, then $q^{r_j}\le K$ and $Q\ge RK\sqrt n$, so
\[
|d|\,\|a_j-t_j\|_{\mathbb T^n}
\le \frac{q^{r_j}-1}{q-1}\cdot \frac{\sqrt n}{2Q}
\le \frac{K\sqrt n}{2(q-1)Q}
\le \frac{1}{2R(q-1)}
\le \frac{\kappa}{q-2}
\le \frac{F_{j-1}}{q-2}.
\]
Hence
\[
E_j\le E_{j-1}+\frac{F_{j-1}}{q-2},
\qquad
F_j\le \left(1+\frac{1}{q-2}\right)F_{j-1}
=\lambda_q F_{j-1}.
\]
Since $F_0=\kappa$ and $m\le |A|\le 14S$, we get
\[
F_m\le \kappa\lambda_q^{14S}.
\]
Finally, every $t\in\Lambda_K(\mu)$ is within $\kappa$ of some signed combination of $A$, hence within $F_m$ of the corresponding signed combination of $B$. Reducing that signed combination modulo the exact identities gives an admissible combination $\sum_{i=1}^m c_i t_i$ with $0\le c_i<k_i$, so
\[
\left\|t-\sum_{i=1}^m c_i t_i\right\|_{\mathbb T^n}
\le F_m
\le \kappa\lambda_q^{14S}
= \frac{5\lambda_q^{14S}\sqrt S}{R}.
\]
This proves \cref{eq:coarse-span-general}.
\end{proof}

\subsection{Near-origin large spectrum structure theorem}
\label{subsec:coarse-counting-points}

In this subsection we prove the near-origin large-spectrum theorem used only in the mollified route. Because the extra Gaussian smoothing localizes the relevant spectrum near the origin, we can aim for the sharper dimension bound $O(S/\log R)$.

The key input is the following small-ball estimate for discrete-Gaussian projections. The factor $\left(\frac{5u}{\rho R}\right)^\ell$ is the main term: it shows that each independent linear constraint lowers the mass by about $\frac{u}{\rho R}$, so imposing $\ell$ such constraints gives decay of this form. The exponential term is a penalty for shifting the center to $b$; it ensures extra decay when $b$ is far from the natural scale of the projection. The proof is based on the exponential moment method: we bound the small-ball probability by a carefully chosen Gaussian moment.

\begin{lemma}
\label{lem:projection-small-ball-general}
Let $R\ge 2$, let $Y\sim\gamma_R$ on $\mathbb Z^n$, and let $A\in\mathbb R^{\ell\times n}$ with rows $a_1,\dots,a_\ell$. Assume there exists $\kappa_0, \rho >0$ such that
\[
\|a_i\|_2\le \kappa_0,
\qquad
\operatorname{dist}\big(a_i,\operatorname{span}_{\mathbb R}\{a_1,\dots,a_{i-1}\}\big)\ge \rho
\qquad (1\le i\le \ell).
\]
Fix $u>0$ and assume that
\[
\frac{1}{R^2}+\frac{\ell^2 \kappa_0^2 }{2\pi u^2}\le \frac{\pi}{\ln(8n)}.
\]
Then, for any $b\in\mathbb R^\ell$,
\begin{equation*}
\mathbb P_{Y\sim \gamma_R }\big(\|AY-b\|_2\le u\big)
\le 2 \left(\frac{5u}{\rho R}\right)^\ell
\exp\!\left(-\frac{\ell}{2u^2}b^\top  \left(I_\ell+\frac{\ell R^2}{2\pi u^2}AA^\top \right)^{-1} b\right).
\end{equation*}
\end{lemma}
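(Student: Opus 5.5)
We bound the small-ball probability by an exponential (Gaussian) moment. Set $t:=\ell/(2u^2)$. Since $\mathbf 1[\|AY-b\|_2\le u]\le e^{\,t(u^2-\|AY-b\|_2^2)}$, we get
\[
\mathbb P\big(\|AY-b\|_2\le u\big)\le e^{\ell/2}\,\mathbb E_{Y\sim\gamma_R}\, e^{-t\|AY-b\|_2^2}.
\]
The right-hand side is a ratio of Gaussian lattice sums. Expanding $\|AY-b\|^2$ and combining with the weight $e^{-\pi\|Y\|^2/R^2}$, the numerator is $\sum_{Y\in\mathbb Z^n} e^{-\pi (Y-c)^\top M(Y-c)}\cdot e^{-\pi\,\Phi(b)}$, where
\[
M:=R^{-2}I_n+\tfrac{t}{\pi}A^\top A
\]
and $\Phi(b)$ is the Schur-complement quadratic form obtained by completing the square. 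By Woodbury, this form equals $\tfrac{t}{\pi}\, b^\top\big(I_\ell+\tfrac{tR^2}{\pi}AA^\top\big)^{-1}b$. With $t=\ell/(2u^2)$ we have $tR^2/\pi=\ell R^2/(2\pi u^2)$, which is exactly the matrix in the statement, and $\pi\Phi(b)$ is exactly the displayed exponent.

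Next we control the two lattice sums with \cref{cor:psf-approx-smoothing-general}. For the denominator, $\rho_R(\mathbb Z^n)\ge (1-\varepsilon)R^n$. For the numerator, \cref{fact:gaussian-sum-maximizer-general} first removes the center $c$, and then \cref{cor:psf-approx-smoothing-general} gives at most $(1+\varepsilon)\det(M)^{-1/2}$. This second application needs $\lambda_{\min}(M^{-1})\ge\eta_\varepsilon(\mathbb Z^n)^2$, i.e. $\lambda_{\max}(M)\le\eta_\varepsilon^{-2}$. Since $\lambda_{\max}(A^\top A)\le\|A\|_F^2\le\ell\kappa_0^2\le\ell^2\kappa_0^2$, we get $\lambda_{\max}(M)\le R^{-2}+\ell^2\kappa_0^2/(2\pi u^2)$. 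Using \cref{lem:smoothing-parameter-bound-general} with $\lambda_n(\mathbb Z^n)=1$ and $\varepsilon$ a small constant (e.g. $1/3$, so $\ln(2n(1+1/\varepsilon))=\ln(8n)$), the hypothesis $\le\pi/\ln(8n)$ is exactly this requirement. The same bound covers the denominator. The two normalizations contribute the constant factor $(1+\varepsilon)/(1-\varepsilon)\le2$.

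It remains to bound
\[
\frac{\det(M)^{-1/2}}{R^n}=\det\!\Big(I_n+\tfrac{tR^2}{\pi}A^\top A\Big)^{-1/2}=\det\!\Big(I_\ell+\tfrac{\ell R^2}{2\pi u^2}AA^\top\Big)^{-1/2}.
\]
Dropping the identity, this is at most $\big(\tfrac{2\pi u^2}{\ell R^2}\big)^{\ell/2}\det(AA^\top)^{-1/2}$. The Gram determinant satisfies $\det(AA^\top)=\prod_i \operatorname{dist}(a_i,\operatorname{span}\{a_1,\dots,a_{i-1}\})^2\ge\rho^{2\ell}$. Combining, the probability is at most
\[
2\,e^{\ell/2}\Big(\tfrac{2\pi}{\ell}\Big)^{\ell/2}\Big(\tfrac{u}{\rho R}\Big)^{\ell}\, e^{-\pi\Phi(b)}.
\]
Since $\sqrt{2\pi e/\ell}\le\sqrt{2\pi e}<5$ for $\ell\ge1$, this gives the claimed factor $\big(\tfrac{5u}{\rho R}\big)^{\ell}$.

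The main obstacle is the bookkeeping in the completing-the-square and Woodbury step. The exponent must come out with precisely the stated constant and matrix; the $\pi$-normalization makes sign and factor errors easy. We must also check that the smoothing condition is applied to $M$ (not $M^{-1}$), after shifting the center away via \cref{fact:gaussian-sum-maximizer-general}. The determinant and Gram bounds are routine.
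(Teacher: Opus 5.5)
Your proposal is correct and follows essentially the same route as the paper: the exponential-moment/Markov bound with $t=\ell/(2u^2)$ (the paper's $s$), completing the square plus the Woodbury identity to get the exponent, removing the center via \cref{fact:gaussian-sum-maximizer-general}, applying \cref{cor:psf-approx-smoothing-general} with $\varepsilon=1/3$ to numerator and denominator to get the factor $2$, then Sylvester's identity and the Gram--Schmidt bound $\det(AA^\top)\ge\rho^{2\ell}$. One small slip: the extra inequality $\ell\kappa_0^2\le\ell^2\kappa_0^2$ should be dropped, since the bound actually needed is $\tfrac{t}{\pi}\lambda_{\max}(A^\top A)\le\tfrac{\ell}{2\pi u^2}\cdot\ell\kappa_0^2$. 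Your displayed bound on $\lambda_{\max}(M)$ already uses this correct form, so the argument is unaffected.
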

\begin{proof}
Set $s:=\ell/(2u^2)$, $M_0:=\frac{1}{R^2}I_n$, and $M_s:=\frac{1}{R^2}I_n+\frac{s}{\pi}A^\top A$.
Clearly, we have
$$\mathbb{E}_{Y\sim \gamma_R }\big[\exp(-s \|AY-b\|_2^2)\big]
= \frac{1}{\rho_R(\mathbb{Z}^n)} \sum_{x \in \mathbb{Z}^n}
\exp\left(-\pi x^\top  M_s x + 2s\langle b,Ax\rangle - s\|b\|_2^2\right).
$$
Write $h:=\frac{s}{\pi}A^\top b$ and $x_0:=M_s^{-1}h$.
Completing the square gives
\[
-\pi x^\top M_sx+2s\langle b,Ax\rangle-s\|b\|_2^2
=-\pi (x-x_0)^\top M_s(x-x_0)-s\|b\|_2^2+\pi h^\top M_s^{-1}h.
\]
By the Woodbury matrix identity $(I + UV)^{-1} = I - U(I + VU)^{-1}V$ for $U = A$ and $V = \frac{sR^2}{\pi} A^\top $,
\[
I_\ell-\frac{s}{\pi}AM_s^{-1}A^\top 
=\left(I_\ell+\frac{sR^2}{\pi}AA^\top \right)^{-1}=H,
\]
the completion-of-squares term is exactly
\[
s\|b\|_2^2-\pi h^\top M_s^{-1}h = s\,b^\top H b.
\] 
Hence
\[
\mathbb E\,e^{-s\|AY-b\|_2^2}
\le \frac{\sum_{x\in\mathbb Z^n}e^{-\pi (x-x_0)^\top M_s(x-x_0)}}{\sum_{x\in\mathbb Z^n}e^{-\pi x^\top M_0x}} \cdot e^{-s\,b^\top H b}.
\]
For fixed $M_s\succ0$, the shifted Gaussian sum is maximized when the center is a lattice point by \cref{fact:gaussian-sum-maximizer-general}. Equivalently,
\[
\sum_{x\in\mathbb Z^n}e^{-\pi (x-x_0)^\top M_s(x-x_0)} = \rho_{1, M_s^{ 1/2} x_0}(M_s^{ 1/2}\cdot \mathbb Z^n) \leq  \rho_{1}(M_s^{ 1/2}\cdot \mathbb Z^n)
= \sum_{x\in\mathbb Z^n}e^{-\pi x^\top M_sx}.
\]
Therefore,
\[
\mathbb E\,e^{-s\|AY-b\|_2^2}
\le \frac{\sum_{x\in\mathbb Z^n}e^{-\pi x^\top M_sx}}{\sum_{x\in\mathbb Z^n}e^{-\pi x^\top M_0x}}\cdot e^{-s\,b^\top H b}.
\]
By \cref{lem:smoothing-parameter-bound-general},
$
\eta_{1/3}(\mathbb Z^n)^{-2}\ge \frac{\pi}{\ln(8n)}$.
Also, $\lambda_{\max}(A^\top A)\le \Tr(A^\top A) = \sum_{i=1}^\ell \|a_i\|_2^2 \le \ell \kappa_0^2$, so by the assumed parameter regime,
\[
\lambda_{\max}(M_0)\leq \lambda_{\max}(M_s) \leq \frac{1}{R^2} + \frac{s\ell \kappa_0^2}{\pi} = \frac{1}{R^2}+\frac{\ell^2 \kappa_0^2}{2\pi u^2} \leq  \frac{\pi}{\ln(8n)} \le \eta_{1/3}(\mathbb Z^n)^{-2}.
\]
We now apply \cref{cor:psf-approx-smoothing-general} with $\varepsilon=1/3$ to both numerator and denominator, and
\[
\mathbb E\,e^{-s\|AY-b\|_2^2}
\le 2\left(\frac{\det M_s}{\det M_0}\right)^{-1/2}
=2\det\!\left(I_n+\frac{sR^2}{\pi}A^\top A\right)^{-1/2}.
\]
By Sylvester's identity,
\[
\det\!\left(I_n+\frac{sR^2}{\pi}A^\top A\right)
=\det\!\left(I_\ell+\frac{sR^2}{\pi}AA^\top \right).
\]
Let $G:=AA^\top $. We consider the Gram--Schmidt process applied to the rows of $A$. The $i$-th step produces a vector $a_i'$ that is the projection of $a_i$ onto the orthogonal complement of $\operatorname{span}_{\mathbb R}\{a_1,\dots,a_{i-1}\}$. Let $\tilde{A}$ be the matrix whose rows are $a_1',\dots,a_\ell'$. By the nature of the Gram--Schmidt process, there exists an $\ell\times \ell$ lower-triangular matrix $T$ with $1$'s on the diagonal such that $A=T\tilde{A}$. Therefore, we can factor the Gram matrix as
\[
G=(T\tilde{A})(T\tilde{A})^\top =T(\tilde{A}\tilde{A}^\top )T^\top .
\]
Since $T$ is lower-triangular with $1$'s on its diagonal, $\det(T)=\det(T^\top )=1$. Furthermore, since the rows of $\tilde{A}$ are mutually orthogonal, $\tilde{A}\tilde{A}^\top $ is a diagonal matrix with diagonal entries $\|a_i'\|_2^2$. By our assumption that $\operatorname{dist}\big(a_i,\operatorname{span}_{\mathbb R}\{a_1,\dots,a_{i-1}\}\big)\ge \rho$, we have $\|a_i'\|_2\ge \rho$ for all $1\le i\le \ell$. Thus
\[
\det G = \det(\tilde{A}\tilde{A}^\top ) = \prod_{i=1}^\ell \|a_i'\|_2^2\ge \rho^{2\ell}.
\]
Hence
\[
\det\!\left(I_\ell+\frac{sR^2}{\pi}G\right)
\ge \left(\frac{sR^2}{\pi}\right)^\ell\det G
\ge \left(\frac{sR^2\rho^2}{\pi}\right)^\ell.
\]
Therefore
\[
\mathbb E\,e^{-s\|AY-b\|_2^2}
\le 2\left(\frac{\pi}{sR^2\rho^2}\right)^{\ell/2}.
\]

Hence the same argument as above gives
\[
\mathbb E\,e^{-s\|AY-b\|_2^2}
\le 2\left(\frac{\pi}{sR^2\rho^2}\right)^{\ell/2}e^{-s b^\top H b}.
\]
Now apply Markov's inequality:
\[
\mathbb P(\|AY-b\|_2^2\le u^2)
=\mathbb P\big(e^{-s\|AY-b\|_2^2}\ge e^{-su^2}\big)
\le e^{su^2}\,\mathbb E e^{-s\|AY-b\|_2^2}.
\]
Since $s=\ell/(2u^2)$,
\[
\mathbb P(\|AY-b\|_2\le u)
\le 2e^{\ell/2}\left(\frac{2\pi u^2}{\ell R^2\rho^2}\right)^{\ell/2} e^{-s\,b^\top H b}\leq 2 \left(\frac{5u}{\rho R}\right)^\ell
\exp\!\left(-\frac{\ell}{2u^2}b^\top  \left(I_\ell+\frac{\ell R^2}{2\pi u^2}AA^\top \right)^{-1} b\right).
\]
\end{proof}

With the bound above, we see that the large spectrum cannot contain many nearly independent vectors near the origin. This lets us run a greedy selection: we choose vectors near the origin one by one until every vector in the large Fourier spectrum is close to the span of the chosen set. The linear sketch requires bounded denominators, but we can argue that directly rounding an orthonormal basis suffices. 

In the formulation below, the parameter $\kappa$ controls how close to the origin we look at, and the parameter $\rho$ controls how close to independent the selected vectors are. The bound on $Q$ (which should be polynomially related to $R$ and $K$) ensures that we can round the selected vectors to $Q^{-1}\mathbb Z^n$ without incurring too much error. The additional parameter $B$ gives a tradeoff between the dimension bound and the distance bound: typically we take $B$ to be a small power of $R$, so that the dimension is $O(S/\log R)$ and the inverse distance is still $R^{1+\Omega (1)}$ if $\kappa = R^{-1 + \Theta(1)}$.
\begin{theorem}[Near-origin Large Spectrum Structure Theorem]
\label{thm:large-spectrum-smallnorm-structure-general}
Let $R\ge 2$, $K\ge 2$, let $\alpha\in(0,1]$, let $\kappa>0$, let $B\ge 1$, and let $\mu$ be an $\alpha$-dense piece of $\gamma_R$ on $\mathbb Z^n$. Set
\[
S:=\log(2/\alpha),
\qquad
\rho:=\frac{100B S^{3/2}\kappa}{\sqrt K}.
\]
Assume moreover that
\[
\frac{3\sqrt S}{R}\le \kappa\le \sqrt{\frac{2\pi}{K\ln(8n)}},
\qquad
Q\ge \frac{2\sqrt{2nS}\,\kappa}{\rho},
\]
Then there exist frequencies
$
\eta_1,\dots,\eta_\ell\in Q^{-1}\mathbb Z^n\cap[-1/2,1/2)^n,
$ with $
\ell\le \frac{2S}{\log_2(2B)},
$
such that every $\zeta\in \Lambda_{K,\le \kappa}(\mu):=\{\zeta\in\Lambda_K(\mu):\|\zeta\|_{\mathbb T^n}\le \kappa\}$
obeys
\[
\operatorname{dist}_{\mathbb T^n}\big(\zeta,\operatorname{span}_{\mathbb R}\{\eta_1,\dots,\eta_\ell\}\big)
\le 2\rho.
\]
Here $\operatorname{span}_{\mathbb R}\{\eta_1,\dots,\eta_\ell\}$ is viewed as a subtorus of $\mathbb T^n$.
\end{theorem}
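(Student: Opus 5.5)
The plan is a greedy selection of nearly independent near-origin large-spectrum frequencies, a bound on how many can be selected, and then a rounding step. Throughout, each $\zeta\in\Lambda_{K,\le\kappa}(\mu)$ is identified with its real representative in $[-1/2,1/2)^n$ of Euclidean norm $\|\zeta\|_2=\|\zeta\|_{\mathbb T^n}\le\kappa$. Distances in $\mathbb T^n$ are at most the corresponding Euclidean distances of these representatives, so it suffices to work in $\mathbb R^n$.

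\textbf{Greedy selection.} Choose $\zeta_1,\zeta_2,\dots\in\Lambda_{K,\le\kappa}(\mu)$ one at a time. At each step pick any element whose distance from $\operatorname{span}_{\mathbb R}\{\zeta_1,\dots,\zeta_{i-1}\}$ is at least $\rho$, and stop when no such element exists. When the process stops, every $\zeta\in\Lambda_{K,\le\kappa}(\mu)$ lies within $\rho$ of $V:=\operatorname{span}_{\mathbb R}\{\zeta_1,\dots,\zeta_\ell\}$. The selected rows meet the hypotheses of \cref{lem:projection-small-ball-general} with $\kappa_0=\kappa$ and the given $\rho$.

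\textbf{Dimension bound (main step).} Suppose the greedy process produces $\ell_0:=\lfloor 2S/\log_2(2B)\rfloor+1$ vectors; restricting to these, we aim for a contradiction. Let $A$ be the matrix with rows $\zeta_1,\dots,\zeta_{\ell_0}$.
\begin{enumerate}
\item By \cref{fact:large-fourier-phase-general}, each $\zeta_i$ has a phase $\theta_i$ with $\mathbb E_\mu\|\langle\zeta_i,x\rangle-\theta_i\|_{\mathbb R/\mathbb Z}^2\le 1/(8K)$.
\item By Markov's inequality, with $\mu$-probability at least $1/2$ we have $\sum_i\|\langle\zeta_i,x\rangle-\theta_i\|^2\le \ell_0/(4K)$. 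On this event there is some $z\in\mathbb Z^{\ell_0}$ with $\|Ax-(\theta+z)\|_2\le u$, where $u\asymp\sqrt{\ell_0/K}$; the constant may be enlarged if needed.
\item Since $\mu\le\alpha^{-1}\gamma_R$ and $\alpha/2=2^{-S}$, this event has $\gamma_R$-mass at least $2^{-S}$.
\item Taking a union bound over $z$ and applying \cref{lem:projection-small-ball-general} with $b=\theta+z$ gives
\[
2^{-S}\le 2\Bigl(\frac{5u}{\rho R}\Bigr)^{\ell_0}\sum_{z\in\mathbb Z^{\ell_0}}\exp\!\bigl(-\tfrac{\ell_0}{2u^2}(\theta+z)^\top H(\theta+z)\bigr).
\]
\item The lattice sum is bounded by \cref{fact:multidim-gaussian-upper-general}. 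The smallest eigenvalue of $H$ is at least of order $\bigl(1+\ell_0R^2\kappa^2\ell_0/u^2\bigr)^{-1}$, because $\lambda_{\max}(AA^\top)\le\ell_0\kappa^2$. Hence the sum is at most $(1+O(\kappa R\sqrt{\ell_0}\,\ell_0^{0}/\sqrt{\ell_0}\cdot\sqrt{\ell_0}/u\cdots))^{\ell_0}$, that is, $(O(\kappa R\sqrt{\ell_0}/\sqrt{\ell_0 u^2/\ell_0}))^{\ell_0}$. Using $\kappa R\ge 3\sqrt S$, the additive $1$ is absorbed.
\item Each factor in the product then has size about $u\cdot\kappa R\sqrt{\ell_0}/(u\,\rho R)=\kappa\sqrt{\ell_0}/\rho$, up to constants. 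Since $\ell_0=O(S)$ and $\rho=100BS^{3/2}\kappa/\sqrt K$, the $S^{3/2}$ and the constant $100$ are meant to push each factor below $1/(2B)$. This yields $2^{-S}\le 2(2B)^{-\ell_0}$, which contradicts the choice of $\ell_0$.
\end{enumerate}
This step is the main obstacle. The difficulty is to choose $u$ so that the side condition $1/R^2+\ell_0^2\kappa^2/(2\pi u^2)\le \pi/\ln(8n)$ holds, using $\kappa^2\le 2\pi/(K\ln 8n)$; this appears to force $u$ of order $\ell_0/\sqrt K$ rather than $\sqrt{\ell_0/K}$. At the same time the per-direction factor must stay at most $1/(2B)$, and the extra $\sqrt S$ slack built into $\rho$ is what should pay for this. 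I would first try $u=\ell_0/\sqrt K$; the Markov step still holds for this larger $u$.

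\textbf{Rounding.} Take an orthonormal basis $e_1,\dots,e_\ell$ of $V$ and let $\eta_i\in Q^{-1}\mathbb Z^n$ be the coordinatewise rounding of $\kappa e_i$, so that $\|\eta_i-\kappa e_i\|_2\le\sqrt n/(2Q)$. Given $\zeta\in\Lambda_{K,\le\kappa}(\mu)$, its orthogonal projection onto $V$ is $p=\sum_i c_i\kappa e_i$ with $\sum_i c_i^2\le 1$, since $\|p\|_2\le\|\zeta\|_2\le\kappa$. Then
\[
\Bigl\|p-\sum_i c_i\eta_i\Bigr\|_2\le\sqrt\ell\cdot\frac{\sqrt n}{2Q}\le\rho,
\]
using $\ell\le 2S$ and $Q\ge 2\sqrt{2nS}\,\kappa/\rho$ (this is where $\kappa\le 1/2$, or a mild rescaling, is used). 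Combined with $\|\zeta-p\|_2\le\rho$, this gives distance at most $2\rho$ from $\operatorname{span}_{\mathbb R}\{\eta_i\}$. Finally, $\|\eta_i\|_\infty\le\kappa+1/Q<1/2$, so the $\eta_i$ lie in $[-1/2,1/2)^n$.
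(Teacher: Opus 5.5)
Your architecture matches the paper's: greedy selection at separation $\rho$, Markov plus the density bound, \cref{lem:projection-small-ball-general} summed over $z$, \cref{fact:multidim-gaussian-upper-general}, then rounding a basis of the span. You also correctly saw that $u\asymp\sqrt{\ell_0/K}$ violates the side condition. But the main step, as written, does not close.

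The exponent in step 5 is $-\frac{\ell_0}{2u^2}b^\top Hb=-\pi\,b^\top Mb$ with $M:=\frac{\ell_0}{2\pi u^2}H$. So \cref{fact:multidim-gaussian-upper-general} must be applied to $M$, not to $H$. Since $\lambda_{\max}(AA^\top)\le\ell_0\kappa^2$, you get $\lambda_{\min}(M)\ge(\frac{2\pi u^2}{\ell_0}+\ell_0R^2\kappa^2)^{-1}$. Hence the lattice sum is at most $(1+\sqrt{2\pi u^2/\ell_0}+\sqrt{\ell_0}\,R\kappa)^{\ell_0}$, essentially independent of $u$.

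Your bound $(O(\kappa R\sqrt{\ell_0}/u))^{\ell_0}$ wastes exactly a factor $u^{-\ell_0}$, and this is fatal. Your per-direction factor $\kappa\sqrt{\ell_0}/\rho$ is at least $\kappa/\rho\ge 1$ as soon as the greedy selects even $\zeta_1$, since that needs $\rho\le\|\zeta_1\|_2\le\kappa$. So no contradiction with $(2B)^{-\ell_0}$ can follow, whatever constant multiplies $\rho$.

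With the correct sum, the factor is $\frac{5u}{\rho R}(3+\sqrt{\ell_0}\,R\kappa)\approx 5u\sqrt{\ell_0}\,\kappa/\rho$. The surviving $u\propto K^{-1/2}$ is exactly what cancels the $\sqrt K$ in the denominator of $\rho$; this is why $\rho$ has the form $BS^{3/2}\kappa/\sqrt K$. Your choice $u=\ell_0/\sqrt K$ then works:
\begin{itemize}
\item the side condition reads $\frac1{R^2}+\frac{K\kappa^2}{2\pi}\le\frac{1+\pi/9}{\ln(8n)}<\frac{\pi}{\ln(8n)}$;
\item Markov fails with probability at most $1/(8\ell_0)$;
\item $\rho\le\kappa$ forces $K\ge 10^4S^3$, so $2\pi u^2/\ell_0\le 1$;
\item with $\ell_0\le 3S$ and $R\kappa\ge 3\sqrt S$, you get $\frac{5u}{\rho R}(3+\sqrt{\ell_0}\,R\kappa)\le\frac{0.15}{B}+\frac{5\cdot 3^{3/2}}{100B}<\frac{1}{2B}$.
\end{itemize}
The paper takes $u=\sqrt{4\ell_0S/K}$, which is of the same order and is verified the same way.

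The rounding step also fails as written, because the factor $\kappa$ is in the wrong place. Rounding the short vectors $\kappa e_i$ leaves coefficients with only $\sum_i c_i^2\le 1$. So the error is $\sqrt{\ell}\,\sqrt n/(2Q)$, with no factor of $\kappa$. Under the hypothesis $Q\ge 2\sqrt{2nS}\,\kappa/\rho$, this is bounded only by $\rho/(4\kappa)$, which is at most $\rho$ only when $\kappa\ge 1/4$.

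Your parenthetical about $\kappa\le 1/2$ points the wrong way. Whenever the greedy selects anything, $K\ge 10^4S^3$ together with $\kappa\le\sqrt{2\pi/(K\ln(8n))}$ gives $\kappa<0.02$. Rounding a vector of length $\kappa$ to mesh $1/Q$ tilts its direction by about $\sqrt n/(Q\kappa)$, and only the span matters.

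The fix is to round the unit vectors, as the paper does, or a fixed multiple such as $e_i/4$. The latter also keeps your observation that the rounded vectors lie in $[-1/2,1/2)^n$. Writing $p=\sum_i 4a_i(e_i/4)$ with $\|a\|_2\le\kappa$ gives error at most $4\sqrt{\ell}\,\kappa\sqrt n/(2Q)\le\rho$, hence total distance at most $2\rho$.
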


\begin{proof}
Set $\ell_0:=\left\lfloor \frac{2S}{\log_2(2B)}\right\rfloor+1$ and $u:=\sqrt{\frac{4\ell_0S}{K}}$.
For each $\zeta\in\Lambda_{K,\le\kappa}(\mu)$, let $a(\zeta)\in[-1/2,1/2)^n$ be its representative, so
\[
\|a(\zeta)\|_2=\|\zeta\|_{\mathbb T^n}\le \kappa.
\]
Since $B\ge 1$, one has $\log_2(2B)\ge 1$, hence
\[
\ell_0\le \frac{2S}{\log_2(2B)}\le 2S+1\le 3S,
\]
because $S\ge 1$.
If $2\rho\ge \kappa$, then the conclusion is immediate by taking $\ell=0$, since in that case
\[
\operatorname{dist}_{\mathbb T^n}(\zeta,\{0\})=\|\zeta\|_{\mathbb T^n}\le \kappa\le 2\rho
\]
for every $\zeta\in\Lambda_{K,\le\kappa}(\mu)$. Thus we may assume from now on that $2\rho<\kappa$.
Greedily select $a(\zeta_1),a(\zeta_2),\dots$ with increment at least $\rho$ from the previous span.
Suppose for contradiction that the process produces at least $\ell_0$ vectors.
Consider the first $\ell_0$ selected frequencies $\zeta_1,\dots,\zeta_{\ell_0}$ and choose phases $\beta_i$ from
\cref{fact:large-fourier-phase-general}, so
\[
\mathbb E_{x\sim\mu}\Big\|\langle \zeta_i,x\rangle-\beta_i\Big\|_{\mathbb R/\mathbb Z}^2\le \frac{1}{8K}
\qquad (1\le i\le \ell_0).
\]
Summing and applying Markov,
\[
E:=\left\{x\in\mathbb Z^n:\sum_{i=1}^{\ell_0}\Big\|\langle \zeta_i,x\rangle-\beta_i\Big\|_{\mathbb R/\mathbb Z}^2\le \frac{4\ell_0S}{K}\right\}
\]
has $\mu(E)\ge 1/2$ (since $S\ge 1$).
Because $\mu\le \alpha^{-1}\gamma_R$ pointwise,
\[
\gamma_R(E)\ge \frac\alpha2.
\]
Let $A_0\in\mathbb R^{\ell_0\times n}$ be the matrix whose $i$-th row is $a(\zeta_i)$.
By greedy selection, each row has distance at least $\rho$ from the span of previous rows.
From $2\rho<\kappa$ and $\rho=100B S^{3/2}\kappa/\sqrt K$ we get
\[
\sqrt K>200B S^{3/2},
\qquad\text{hence}\qquad
K>40000B^2S^3.
\]
In particular, since $B\ge 1$ and $S\ge 1$,
\[
u^2=\frac{4\ell_0S}{K}\le \frac{12S^2}{K}<\frac14,
\]
so $u\le 1/2$.
Since $u\le 1/2$, torus balls of radius $u$ around different lattice translates are disjoint, and
\[
\gamma_R(E)=\sum_{z\in\mathbb Z^{\ell_0}}\mathbb P_{Y\sim\gamma_R}\big(\|A_0Y-(\beta+z)\|_2\le u\big),
\]
where $\beta=(\beta_1,\dots,\beta_{\ell_0})$.
Since $\|a(\zeta_i)\|_2\le \kappa$ for every $i$, and
\[
\frac{1}{R^2}+\frac{\ell_0^2\kappa^2}{2\pi u^2}
= \frac{1}{R^2}+\frac{\ell_0K\kappa^2}{8\pi S}
\le \frac{\kappa^2}{9}+\frac{3K\kappa^2}{8\pi}
\le \frac{K\kappa^2}{2}
\le \frac{\pi}{\ln(8n)},
\]
the hypotheses of \cref{lem:projection-small-ball-general} are satisfied. Hence each summand is at most
\[
2\left(\frac{5u}{\rho R}\right)^{\ell_0}
\exp\!\left(-\frac{\ell_0}{2u^2}(\beta+z)^\top H(\beta+z)\right),
\]
with
\[
H=\left(I_{\ell_0}+\frac{\ell_0 R^2}{2\pi u^2}A_0A_0^\top \right)^{-1}.
\]
Summing in $z$ and using \cref{fact:multidim-gaussian-upper-general} with
\[
\lambda_{\min}\!\left(\frac{\ell_0}{2\pi u^2}H\right)
\ge \left(\frac{2\pi u^2}{\ell_0}+\ell_0 R^2\kappa^2\right)^{-1}
\]
(because $\|a(\zeta_i)\|_2\le \kappa$), we get
\[
\gamma_R(E)
\le
2\left(\frac{5u}{\rho R}\right)^{\ell_0}
\left(1+\sqrt{\frac{2\pi u^2}{\ell_0}+\ell_0 R^2\kappa^2}\right)^{\ell_0}.
\]
Since $u\le 1/2$ and $\ell_0\le 3S$,
\[
1+\sqrt{\frac{2\pi u^2}{\ell_0}+\ell_0 R^2\kappa^2}
\le 1+\sqrt2+\sqrt{\ell_0}\,R\kappa
\le 3+\sqrt{3S}\,R\kappa.
\]
Therefore
\[
\gamma_R(E)
\le 2\left(\frac{5u}{\rho R}\big(3+\sqrt{3S}\,R\kappa\big)\right)^{\ell_0}.
\]
Using $u=\sqrt{4\ell_0S/K}$, $\ell_0\le 3S$, and
\[
\kappa\geq\frac{3\sqrt S}{R},
\qquad
\rho=\frac{100B S^{3/2}\kappa}{\sqrt K},
\]
we get
\[
\frac{5u}{\rho R}\big(3+\sqrt{3S}\,R\kappa\big)
\le \frac{30\sqrt3\,S}{\rho R\sqrt K}+\frac{30S^{3/2}\kappa}{\rho\sqrt K}
\le \frac{30\sqrt3}{300BS}+\frac{30}{100B}
<\frac{1}{2B}.
\]
Hence
\[
\gamma_R(E)\le 2\left(\frac{1}{2B}\right)^{\ell_0},
\]
which implies
\[
\left(2B\right)^{\ell_0}\le \frac{4}{\alpha},
\qquad\text{hence}
\qquad
\ell_0\le \frac{\log_2(4/\alpha)}{\log_2(2B)}.
\]
This contradicts $2S = \log_2(4/\alpha^2)\geq \log_2(4/\alpha)$. Hence the greedy process stops with
$
\ell\le \frac{2S}{\log_2(2B)}.
$

Let $u_1,\dots,u_\ell$ be an orthonormal basis of the selected span, and let
$\eta_i\in Q^{-1}\mathbb Z^n\cap[-1/2,1/2)^n$ be representatives of nearest-grid points to $u_i$.
For any $\zeta\in\Lambda_{K,\le\kappa}(\mu)$, by maximality there exists $w$ in that span with
\[
\|a(\zeta)-w\|_2\le \rho,
\qquad
\|w\|_2\le \kappa+\rho\le \tfrac32\kappa.
\]
Write $w=\sum_{i=1}^\ell c_i u_i$ and $\widetilde w:=\sum_{i=1}^\ell c_i\eta_i$.
Then
\[
\|w-\widetilde w\|_2
\le \frac{\sqrt n}{2Q}\sum_{i=1}^\ell |c_i|
\le \frac{\sqrt{n\ell}}{2Q}\,\|w\|_2
\le \frac{\sqrt{2nS}}{2Q}\cdot \frac{3\kappa}{2}.
\]
Using
$
Q\ge \frac{2\sqrt{2nS}\,\kappa}{\rho},
$
we obtain
$
\|w-\widetilde w\|_2 \le \frac{3\rho}{4} \le \rho.
$, and 
\[
\operatorname{dist}_{\mathbb T^n}\big(\zeta,\operatorname{span}_{\mathbb R}\{\eta_1,\dots,\eta_\ell\}\big)
\le \|a(\zeta)-\widetilde w\|_2
\le \|a(\zeta)-w\|_2+\|w-\widetilde w\|_2
\le 2\rho,
\]
which proves the claim.
\end{proof}
\subsection{Convolutions of different dense pieces}
\label{subsec:coarse-different-pieces}

In this subsection, we extend the previous results from repeated copies of one piece to convolutions of possibly different pieces. The main point is that for a convolution of $M$ pieces, the natural notion of a ``large'' Fourier coefficient becomes multiplicative: instead of a threshold of the form $1-\frac{1}{K}$ for a single measure, we work with coefficients of size about $\exp(-M/K)$. 

In the proof, we first use a standard symmetrization trick: by averaging the self-convolutions $\mu_i*\widetilde\mu_i$ and applying AM--GM, we convert the multiplicative condition $|\widehat\nu(\zeta)|\ge e^{-M/K}$ into the large-spectrum framework of the previous subsections. This allows us to bring to bear the structural results already proved there.

To make this reduction effective, we must also identify a convenient reference measure for the symmetrized distributions. Since each $\mu_i$ is a dense piece of $\gamma_R$, the natural ambient object is
$
\Gamma_R:=\gamma_R*\gamma_R.
$
The next fact shows that, under a mild logarithmic lower bound on $R$, this convolution is still dominated by the larger-radius discrete Gaussian $\gamma_{\sqrt2 R}$ up to an absolute constant factor. This is a direct consequence of \cref{cor:psf-approx-smoothing-general}, and we defer the routine proof to \cref{app:gamma-convolution-vs-gamma-sqrt2R-general}.

\begin{fact}
\label{fact:gamma-convolution-vs-gamma-sqrt2R-general}
Let $R\ge \sqrt{\frac{2}{\pi}\ln(8n)}$ and write
$
\Gamma_R:=\gamma_R*\gamma_R.
$
Then $\Gamma_R$ is a $(1/4)$-dense piece of $\gamma_{\sqrt2 R}$; equivalently, for any $x\in \mathbb{Z}^n$,
$
\Gamma_R(x)\le 4\gamma_{\sqrt2 R}(x).
$
\end{fact}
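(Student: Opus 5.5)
The plan is to compute $\Gamma_R(x)$ explicitly as a Gaussian sum, and then use \cref{cor:psf-approx-smoothing-general} to compare the three normalizing constants involved. By definition,
\[
\Gamma_R(x)=\sum_{y\in\mathbb Z^n}\gamma_R(y)\gamma_R(x-y)=\frac{1}{\rho_R(\mathbb Z^n)^2}\sum_{y\in\mathbb Z^n}e^{-\pi(\|y\|_2^2+\|x-y\|_2^2)/R^2}.
\]
Completing the square gives
\[
\|y\|_2^2+\|x-y\|_2^2=2\|y-x/2\|_2^2+\|x\|_2^2/2,
\]
so
\[
\Gamma_R(x)=\frac{e^{-\pi\|x\|_2^2/(2R^2)}}{\rho_R(\mathbb Z^n)^2}\sum_{y}e^{-\pi\|y-x/2\|_2^2/(R^2/2)}.
\]
The exponential prefactor is exactly $\rho_{\sqrt2 R}(x)$. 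By \cref{fact:gaussian-sum-maximizer-general}, the shifted sum is at most the centered sum $\rho_{R/\sqrt2}(\mathbb Z^n)$. Therefore
\[
\Gamma_R(x)\le \gamma_{\sqrt2R}(x)\cdot\frac{\rho_{\sqrt2R}(\mathbb Z^n)\,\rho_{R/\sqrt2}(\mathbb Z^n)}{\rho_R(\mathbb Z^n)^2},
\]
and it remains to bound this ratio of Gaussian sums by $4$.

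For that bound I would apply \cref{cor:psf-approx-smoothing-general} with $M=s^{-2}I_n$ for each $s\in\{R/\sqrt2,\,R,\,\sqrt2R\}$. This gives $\rho_s(\mathbb Z^n)\in s^n[1-\varepsilon,1+\varepsilon]$, provided $s^2\ge\eta_\varepsilon(\mathbb Z^n)^2$. \cref{lem:smoothing-parameter-bound-general} with $\lambda_n(\mathbb Z^n)=1$ bounds $\eta_\varepsilon(\mathbb Z^n)^2$ by $\ln(2n(1+1/\varepsilon))/\pi$. The binding case is the smallest scale $s=R/\sqrt2$, where we need $R^2/2\ge \ln(2n(1+1/\varepsilon))/\pi$. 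With $\varepsilon=1/3$ this reads $R^2\ge \frac2\pi\ln(8n)$, which is exactly the hypothesis of the fact.

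The powers of $s$ cancel: $(\sqrt2R)^n(R/\sqrt2)^n=R^{2n}$. So the ratio is at most
\[
\frac{(1+\varepsilon)^2}{(1-\varepsilon)^2}=\frac{(4/3)^2}{(2/3)^2}=4,
\]
which yields $\Gamma_R\le4\gamma_{\sqrt2R}$ pointwise, i.e.\ $\Gamma_R$ is a $(1/4)$-dense piece of $\gamma_{\sqrt2 R}$.

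The only delicate step is this parameter check. The hypothesis must be applied at the scale $R/\sqrt2$ rather than $R$, and that is what forces $\varepsilon=1/3$ and produces the constant $4$. Everything else is routine.
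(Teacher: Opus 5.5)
Your proposal is correct and matches the paper's proof essentially step for step. The paper also completes the square, bounds the shifted sum by $\rho_{R/\sqrt2}(\mathbb Z^n)$ via the maximizer fact, and applies the Poisson-summation corollary with $\varepsilon=1/3$ at the scales $R/\sqrt2$, $R$, $\sqrt2R$, obtaining the ratio bound $(4/3)^2/(2/3)^2=4$. As you observed, the binding condition $\eta_{1/3}(\mathbb Z^n)^2\le \ln(8n)/\pi\le R^2/2$ at scale $R/\sqrt2$ is exactly the stated hypothesis on $R$.
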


\begin{corollary}
\label{cor:coarse-dissociated-convolution-general}
Let $M\ge 1$, $R\ge 2$, $\alpha\in(0,1]$ and
$\alpha_1,\dots,\alpha_M\in(0,1]$ satisfy
$
\prod_{i=1}^M \alpha_i\ge \alpha^M,
$
and let $\mu_1,\dots,\mu_M$ be probability measures on $\mathbb Z^n$ such that each $\mu_i$ is an
$\alpha_i$-dense piece of $\gamma_R$. Define
$
\nu:=\mu_1*\cdots *\mu_M$ and $S:=\log(2/\alpha).
$
Assume that
$
3\le q\le K/4,
Q\ge  R K \sqrt{n}.
$
Set
$
\lambda_q:=\frac{q-1}{q-2}.
$
Then there exist $T=\{t_1,\dots,t_m\}\subseteq \mathbb T^n$ and integers $k_1,\dots,k_m\ge 1$ with $m\leq 14 S$ such that
$
\prod_{j=1}^m k_j
\le q^{224S(1+\log_{K/4} Q)}
$
and for every $j\in[m]$ there exist integers $0\le c_i<k_i$ for $i<j$ such that
\begin{equation*}
k_j t_j\equiv \sum_{i=1}^{j-1} c_i t_i \pmod{\mathbb Z^n},
\end{equation*}
and every $\zeta\in\mathbb T^n$ satisfying $|\widehat\nu(\zeta)|\ge e^{-M/K}$ admits integers $0\le c_i<k_i$ with
\begin{equation*}
\left\|\zeta-\sum_{i=1}^m c_i t_i\right\|_{\mathbb T^n}
\le \frac{5\lambda_q^{56S}\sqrt{2S}}{R}.
\end{equation*}
\end{corollary}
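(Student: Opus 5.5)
The plan is to rerun the greedy chain construction from the proof of \cref{prop:coarse-dissociated-general} on the symmetrized measure
\[
\mu_{\mathrm{sym}}:=\frac1M\sum_{i=1}^M \mu_i*\widetilde\mu_i .
\]
The one change is the step that caps the size of a dissociated set. There I would \emph{not} invoke \cref{cor:kappa-dissociated-upper-general} for $\mu_{\mathrm{sym}}$, because an average of pieces of density $\alpha_i^2$ can have very poor density. Instead I would run a tensorized Rudin argument on the product measure $\mu_1\otimes\cdots\otimes\mu_M$. That argument recovers the constant $14S$ with $\kappa=5\sqrt S/R$.

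\textbf{Step 1 (reduction to a large spectrum).} The measure $\mu_{\mathrm{sym}}$ is symmetric, so $\widehat{\mu_{\mathrm{sym}}}(\zeta)=\frac1M\sum_i|\widehat\mu_i(\zeta)|^2\ge 0$. Suppose $|\widehat\nu(\zeta)|=\prod_i|\widehat\mu_i(\zeta)|\ge e^{-M/K}$. Then AM--GM gives
\[
\widehat{\mu_{\mathrm{sym}}}(\zeta)\ge e^{-2/K}\ge 1-\frac{2}{K}\ge 1-\frac{4}{K},
\]
so $\zeta\in\Lambda_{K/4}(\mu_{\mathrm{sym}})$. I then run the greedy construction verbatim with $K$ replaced by $K/4$ and $\kappa:=5\sqrt S/R$. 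The hypotheses $3\le q\le K/4$ and $Q\ge RK\sqrt n\ge R(K/4)\sqrt n$ are what that proof needs.

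Each chain element $\xi=q^\ell a_j$ satisfies $q^{2\ell}\le K/4$. Apply \cref{fact:signed-sum-phase-general} to the probability measure $\mu_{\mathrm{sym}}$ with phase $0$, writing $q^\ell a_j$ as a sum of $q^\ell$ copies of $a_j$; the fact is stated for signed sums with distinct terms, but its proof works for repeated terms. This gives $\widehat{\mu_{\mathrm{sym}}}(\xi)\ge 1-q^{2\ell}\cdot 4/K\ge 1/2$. So the chain union $A$ is a $\kappa$-dissociated subset of $\Lambda_{1/2}(\mu_{\mathrm{sym}})$.

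\textbf{Step 2 (the dissociation cap, the main step).} I claim $|A|\le 14S$. Write $m:=|A|$ and suppose for contradiction $m\ge 14S$. For each $i$ and each $\xi\in A$, pick $\theta_{i,\xi}$ with $e(-\theta_{i,\xi})\widehat\mu_i(\xi)=|\widehat\mu_i(\xi)|$, and set
\[
F_i(x):=\sum_{\xi\in A}\cos\bigl(2\pi(\langle\xi,x\rangle-\theta_{i,\xi})\bigr).
\]
Then $\mathbb E_{\mu_i}F_i=\sum_\xi|\widehat\mu_i(\xi)|\ge\sum_\xi|\widehat\mu_i(\xi)|^2$. Summing over $i$ gives $\sum_i\mathbb E_{\mu_i}F_i\ge M\sum_\xi\widehat{\mu_{\mathrm{sym}}}(\xi)\ge Mm/2$.

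Now let $G(x_1,\dots,x_M):=\sum_iF_i(x_i)$. Jensen's inequality under the product measure $\bigotimes_i\mu_i$ gives $\mathbb E e^{\lambda G}\ge e^{\lambda Mm/2}$. The product measure is dominated by $\bigl(\prod_i\alpha_i\bigr)^{-1}\gamma_R^{\otimes M}\le\alpha^{-M}\gamma_R^{\otimes M}$. Under $\gamma_R^{\otimes M}$ the expectation factorizes, so
\[
e^{\lambda Mm/2}\le \alpha^{-M}\prod_{i=1}^M\mathbb E_{\gamma_R}e^{\lambda F_i}
\le \alpha^{-M}\Bigl(e^{\lambda^2m/2}+e^{\lambda m-5S}\Bigr)^M,
\]
where each factor is bounded by \cref{lem:coarse-rudin-general} together with \cref{lem:dg-fourier-decay-general}, exactly as in \cref{cor:kappa-dissociated-upper-general}. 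Taking $M$-th roots gives precisely the inequality of that corollary. With $\lambda=4S/m$ we reach the same contradiction, so $m\le|A|\le 14S$.

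\textbf{Step 3 (bookkeeping).} The rest of the proof of \cref{prop:coarse-dissociated-general} goes through unchanged with $K/4$ in place of $K$.

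\begin{itemize}
\item The denominator product obeys $\prod_j k_j\le q^{56S(1+\log_{K/4}Q)}\le q^{224S(1+\log_{K/4}Q)}$.
\item The exact identities $k_jt_j\equiv\sum_{i<j}c_it_i \pmod{\mathbb Z^n}$ hold by construction.
\item The rounding recursion gives approximation error at most $\kappa\lambda_q^{14S}=5\lambda_q^{14S}\sqrt S/R$, which is below the stated bound $5\lambda_q^{56S}\sqrt{2S}/R$.
\end{itemize}

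In the Case~2 rounding estimate we need $Q\ge R(K/4)\sqrt n$, which holds. \Cref{fact:gamma-convolution-vs-gamma-sqrt2R-general} is not needed on this route.

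The only delicate point is Step~2. We must check that the phases in \cref{lem:coarse-rudin-general} may depend on $i$; the lemma allows arbitrary coefficients $c(\xi)=e(-\theta_{i,\xi})$, so it may. We must also check that the tensorization is legitimate, meaning the product density bound is used only through $\prod_i\alpha_i\ge\alpha^M$.
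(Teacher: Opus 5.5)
Your proof is correct, but it takes a different route from the paper. The paper uses \cref{prop:coarse-dissociated-general} as a black box. It first discards the indices with $\alpha_i<\alpha^2$: Markov on $\ln(1/\alpha_i)$ leaves $M'\ge M/2$ of them. The symmetrized average over the surviving indices is then an $\alpha^4$-dense piece of $\gamma_R*\gamma_R$, hence by \cref{fact:gamma-convolution-vs-gamma-sqrt2R-general} an $(\alpha^4/4)$-dense piece of $\gamma_{\sqrt2R}$. The paper then applies the theorem with radius $\sqrt2R$, threshold $K/4$ and density parameter $S'=4S-1$. That is the source of the degraded constants $224S$, $\lambda_q^{56S}$ and $\sqrt{2S}$ in the statement.

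You instead keep all $M$ indices and use $\mu_{\mathrm{sym}}$ only to locate the large spectrum. The single place where density enters, the dissociation cap, you replace by a tensorized Rudin argument on $\mu_1\otimes\cdots\otimes\mu_M$ against $\gamma_R^{\otimes M}$. Two facts make $\prod_i\alpha_i\ge\alpha^M$ usable without pruning: $\mathbb E_{\mu_i}F_i=\sum_\xi|\widehat\mu_i(\xi)|\ge\sum_\xi|\widehat\mu_i(\xi)|^2$, and the expectation factorizes under the product Gaussian. Taking the $M$-th root then returns the single-piece inequality verbatim.

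The paper's route buys modularity: the same symmetrize-then-apply reduction is reused unchanged for \cref{cor:large-spectrum-smallnorm-convolution-general}, where density enters through a small-ball estimate rather than Rudin. Your route buys sharper output and fewer hypotheses:
\begin{itemize}
\item you get $m\le 14S$, $\prod_j k_j\le q^{56S(1+\log_{K/4}Q)}$ and error $5\lambda_q^{14S}\sqrt S/R$;
\item you need neither the pruning step nor \cref{fact:gamma-convolution-vs-gamma-sqrt2R-general}, whose hypothesis $R\ge\sqrt{(2/\pi)\ln(8n)}$ is not among the corollary's assumptions;
\item the paper's black-box application as written only gives $m\le 14S'=56S-14$, so your argument is the one that actually delivers the stated $m\le 14S$.
\end{itemize}

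One small slip in Step 1. The bound $q^{2\ell}\le K/4$ alone only gives $1-4q^{2\ell}/K\ge 0$. What you need is $q^{2\ell}\le K/8$. This does hold: either $\ell=0$ (and $K\ge 4q\ge 12$), or $q^{2\ell}\le q^{2r_*-2}\le K/(4q^2)$ because $q^{r_*}\le\sqrt{K/4}$.
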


\begin{proof}
Let $G:=\{i\in[M]:\alpha_i\ge \alpha^2\}$, $M':=|G|$, and $\nu_G:=\mathop{*}_{i\in G}\mu_i$. From
\[
\frac{1}{M}\sum_{i=1}^M \ln\frac{1}{\alpha_i}\le \ln\frac{1}{\alpha}
\]
and Markov's inequality, at most $M/2$ indices satisfy $\alpha_i<\alpha^2$, so $M'\ge M/2$. In particular, each $\mu_i$ with $i\in G$ is an $\alpha^2$-dense piece of $\gamma_R$, and since every Fourier factor has modulus at most $1$,
\[
|\widehat\nu(\zeta)|=\prod_{i=1}^M|\widehat\mu_i(\zeta)|\le \prod_{i\in G}|\widehat\mu_i(\zeta)|=|\widehat\nu_G(\zeta)|.
\]
Now set
\[
K_G:=K\frac{M'}{M},\qquad \mu_{\mathrm{sym}}:=\frac{1}{M'}\sum_{i\in G}\mu_i*\widetilde\mu_i,\qquad \widetilde\mu_i(x):=\mu_i(-x).
\]
If $|\widehat\nu(\zeta)|\ge e^{-M/K}$, then $|\widehat\nu_G(\zeta)|\ge e^{-M'/K_G}\ge (1-1/K_G)^{M'}$, and therefore by AM--GM,
\[
\widehat{\mu_{\mathrm{sym}}}(\zeta)=\frac{1}{M'}\sum_{i\in G}|\widehat\mu_i(\zeta)|^2
\ge \left(\prod_{i\in G}|\widehat\mu_i(\zeta)|^2\right)^{1/M'}
=|\widehat\nu_G(\zeta)|^{2/M'}
\ge (1-1/K_G)^2
\ge 1-\frac{2}{K_G}\ge 1-\frac{4}{K}.
\]
Thus $\zeta\in\Lambda_{K/4}(\mu_{\mathrm{sym}})$. Moreover, for every $i\in G$, both $\mu_i$ and $\widetilde\mu_i$ are $\alpha^2$-dense pieces of $\gamma_R$, so $\mu_i*\widetilde\mu_i$ is an $\alpha^4$-dense piece of $\Gamma_R:=\gamma_R*\gamma_R$; averaging preserves this domination, hence $\mu_{\mathrm{sym}}$ is an $\alpha^4$-dense piece of $\Gamma_R$. By \cref{fact:gamma-convolution-vs-gamma-sqrt2R-general}, $\Gamma_R$ is a $(1/4)$-dense piece of $\gamma_{\sqrt2 R}$, so $\mu_{\mathrm{sym}}$ is an $(\alpha^4/4)$-dense piece of $\gamma_{\sqrt2 R}$.

Set
\[
S':=\log(2/(\alpha^4/4))=4S-1.
\]
We may apply \cref{prop:coarse-dissociated-general} with ambient parameter $\sqrt2 R$, density parameter $\alpha^4/4$, threshold $K/4$, chain base $q$, and rounding parameter $Q$. The denominator product bound becomes
\[
\prod_{j=1}^m k_j \le q^{56S'(1+\log_{K/4} Q)}\le q^{224S(1+\log_{K/4} Q)},
\]
while the span bound becomes
\[
\frac{5\lambda_q^{14S'}\sqrt{S'}}{\sqrt2 R}
\le \frac{5\lambda_q^{56S}\sqrt{2S}}{R}.
\]
These are exactly the claims.
\end{proof}

\begin{corollary}
\label{cor:large-spectrum-smallnorm-convolution-general}
Let $M\ge 1$, let $K\ge 8$, let $R\ge 2$, let $\alpha\in(0,1]$, let $\kappa>0$, let $B\ge 1$, let
$\alpha_1,\dots,\alpha_M\in(0,1]$ satisfy
$
\prod_{i=1}^M \alpha_i\ge \alpha^M,
$
and let $\mu_1,\dots,\mu_M$ be probability measures on $\mathbb Z^n$ such that each $\mu_i$ is an
$\alpha_i$-dense piece of $\gamma_R$. Define
$
\nu:=\mu_1*\cdots *\mu_M.
$
Set
\[
S:=\log(2/\alpha),
\qquad
\rho:=\frac{1600B S^{3/2}\kappa}{\sqrt K},
\]
and assume that
\[
\frac{3\sqrt{2S}}{R}<\kappa\le \sqrt{\frac{8\pi}{K\ln(8n)}},
\qquad
Q\ge \frac{\sqrt{nK}}{50BS}.
\]
Then there exist frequencies
\[
\eta_1,\dots,\eta_\ell\in Q^{-1}\mathbb Z^n\cap[-1/2,1/2)^n,
\qquad
\ell\le \frac{4S}{\log_2(2B)},
\]
such that every $\zeta\in\mathbb T^n$ satisfying
\[
|\widehat\nu(\zeta)|\ge \exp(-M/K)
\qquad\text{and}\qquad
\|\zeta\|_{\mathbb T^n}\le \kappa
\]
obeys
\[
\operatorname{dist}_{\mathbb T^n}\big(\zeta,\operatorname{span}_{\mathbb R}\{\eta_1,\dots,\eta_\ell\}\big)
\le 2\rho,
\]
where $\operatorname{span}_{\mathbb R}\{\eta_1,\dots,\eta_\ell\}$ is viewed as a subtorus of $\mathbb T^n$.
\end{corollary}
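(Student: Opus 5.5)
The plan is to mirror the proof of \cref{cor:coarse-dissociated-convolution-general} and then invoke \cref{thm:large-spectrum-smallnorm-structure-general} in place of the coarse theorem.

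\textbf{Symmetrization.} Let $G:=\{i:\alpha_i\ge\alpha^2\}$. Since $\frac1M\sum_i\ln(1/\alpha_i)\le\ln(1/\alpha)$, Markov's inequality gives $M':=|G|\ge M/2$. Set $K_G:=KM'/M\ge K/2$ and $\mu_{\mathrm{sym}}:=\frac{1}{M'}\sum_{i\in G}\mu_i*\widetilde\mu_i$. If $|\widehat\nu(\zeta)|\ge e^{-M/K}$, then AM--GM gives, exactly as before,
\[
\widehat{\mu_{\mathrm{sym}}}(\zeta)\ge(1-1/K_G)^2\ge 1-4/K,
\]
so $\zeta\in\Lambda_{K/4}(\mu_{\mathrm{sym}})$. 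Moreover $\mu_{\mathrm{sym}}$ is an $\alpha^4$-dense piece of $\Gamma_R$. By \cref{fact:gamma-convolution-vs-gamma-sqrt2R-general} it is therefore an $(\alpha^4/4)$-dense piece of $\gamma_{\sqrt2R}$. That fact needs $R\ge\sqrt{(2/\pi)\ln(8n)}$, which should follow from the hypotheses: $\kappa>3\sqrt{2S}/R$ and $\kappa\le\sqrt{8\pi/(K\ln 8n)}$ with $K\ge8$ give $R\gtrsim\sqrt{\ln 8n}$; I would check the constants here. The condition $\|\zeta\|_{\mathbb T^n}\le\kappa$ is untouched, so every relevant $\zeta$ lies in $\Lambda_{K/4,\le\kappa}(\mu_{\mathrm{sym}})$.

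\textbf{Parameter translation.} Apply \cref{thm:large-spectrum-smallnorm-structure-general} with radius $R':=\sqrt2R$, density $\alpha':=\alpha^4/4$ (so $S'=4S-1\le 4S$), threshold $K':=K/4\ge2$, and the same $\kappa$, $B$, $Q$. The hypotheses to check are the following.
\begin{itemize}
\item $3\sqrt{S'}/R'\le 3\sqrt{4S}/(\sqrt2R)=3\sqrt{2S}/R<\kappa$.
\item $\kappa\le\sqrt{2\pi/(K'\ln 8n)}=\sqrt{8\pi/(K\ln 8n)}$, which is exactly the assumed upper bound.
\item The theorem's $\rho'=100BS'^{3/2}\kappa/\sqrt{K/4}\le 200B\cdot 8S^{3/2}\kappa/\sqrt K=\rho$, where $\rho=1600BS^{3/2}\kappa/\sqrt K$ is the parameter in the statement.
\item Since the theorem's requirement on $Q$ is $Q\ge 2\sqrt{2nS'}\kappa/\rho'$ and $\kappa/\rho'=\sqrt{K'}/(100BS'^{3/2})$, it becomes $Q\ge 2\sqrt{2n}\sqrt{K}/(2\cdot100BS')$. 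Using $S'\ge S$ (valid since $S\ge1$ gives $4S-1\ge S$), this is at most $\sqrt{2nK}/(100BS)\le\sqrt{nK}/(50BS)$, which the assumption covers.
\end{itemize}
The theorem then yields $\ell\le 2S'/\log_2(2B)\le 8S/\log_2(2B)$. The stated bound is $4S/\log_2(2B)$, so this constant does not match yet.

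\textbf{Main obstacle: the dimension constant.} I would first tighten the counting at the end of the proof of \cref{thm:large-spectrum-smallnorm-structure-general}. There the contradiction only needed $\ell_0>\log_2(4/\alpha')/\log_2(2B)$, and $\log_2(4/\alpha')=\log_2(16/\alpha^4)=4\log_2(2/\alpha)=4S$. So rerunning the greedy argument with $\ell_0:=\lfloor 4S/\log_2(2B)\rfloor+1$ directly on $\mu_{\mathrm{sym}}$ gives $\ell\le 4S/\log_2(2B)$. The remaining inequalities used in that proof ($\ell_0\le 3S'$, $u\le1/2$, the smoothing-regime condition, and the bound $<1/(2B)$) hold with the same slack, because $\ell_0$ only decreased relative to the generic choice. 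Finally $2\rho'\le2\rho$ gives the distance conclusion, and the rounded $\eta_i$ come from the same orthonormal-basis rounding step.
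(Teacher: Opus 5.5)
Your proposal is correct and follows the paper's proof. Both symmetrize over $G=\{i:\alpha_i\ge\alpha^2\}$, treat $\mu_{\mathrm{sym}}$ as an $(\alpha^4/4)$-dense piece of $\gamma_{\sqrt2R}$, and apply the near-origin theorem with threshold $K/4$ and the same $\kappa,B,Q$, checking the same hypotheses you list.

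Your extra step on the dimension is genuinely needed. The paper's proof simply asserts the conclusion, but a black-box application only yields $\ell\le 2(4S-1)/\log_2(2B)$. Your fix is right: that theorem's contradiction only requires $\ell_0>\log_2(4/\alpha')/\log_2(2B)=4S/\log_2(2B)$, and its other estimates only use $\ell_0\le 3(4S-1)$, which your smaller $\ell_0$ still satisfies. The condition $R\ge\sqrt{(2/\pi)\ln(8n)}$ that you flagged also holds: the two bounds on $\kappa$, together with $S\ge1$ and $K\ge8$, give $R>3\sqrt{2\ln(8n)/\pi}$.
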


\begin{proof}
As in \cref{cor:coarse-dissociated-convolution-general}, let
\[
G:=\{i\in[M]:\alpha_i\ge \alpha^2\},
\qquad M':=|G|\ge M/2,
\qquad \nu_G:=\mathop{*}_{i\in G}\mu_i,
\]
and set
\[
K_G:=K\frac{M'}{M}\geq K/2,
\qquad \mu_{\mathrm{sym}}:=\frac{1}{M'}\sum_{i\in G}\mu_i*\widetilde\mu_i,
\qquad \widetilde\mu_i(x):=\mu_i(-x).
\]
Then $|\widehat\nu|\le |\widehat\nu_G|$, so if $|\widehat\nu(\zeta)|\ge e^{-M/K}$, then
$|\widehat\nu_G(\zeta)|\ge e^{-M'/K_G}\ge (1-1/K_G)^{M'}$. Hence by AM--GM,
\[
\widehat{\mu_{\mathrm{sym}}}(\zeta)
=\frac{1}{M'}\sum_{i\in G}|\widehat\mu_i(\zeta)|^2
\ge |\widehat\nu_G(\zeta)|^{2/M'}
\ge (1-1/K_G)^2
\ge 1-\frac{2}{K_G}
\ge 1-\frac{4}{K},
\]
so $\zeta\in\Lambda_{K/4}(\mu_{\mathrm{sym}})$. Moreover, exactly as before (using \cref{fact:gamma-convolution-vs-gamma-sqrt2R-general}), $\mu_{\mathrm{sym}}$ is an
$(\alpha^4/4)$-dense piece of $\gamma_{\sqrt2 R}$. We may therefore apply
\cref{thm:large-spectrum-smallnorm-structure-general} with ambient parameter $\sqrt2 R$,
density parameter $\alpha^4/4$, threshold $K/4$, and the same $\kappa,B,Q$. Writing
$S_*:=\log(2/(\alpha^4/4))=4S-1$, the hypotheses are immediate:
\[
\frac{3\sqrt{S_*}}{\sqrt2 R}\le \frac{3\sqrt{2S}}{R}<\kappa,
\qquad
\kappa\le \sqrt{\frac{8\pi}{K\ln(8n)}} = \sqrt{\frac{2\pi}{(K/4)\ln(8n)}},
\]
and
\[
Q\ge \frac{\sqrt{nK}}{50BS}
\ge \frac{\sqrt{2n(K/4)}}{50B S_*}
=\frac{2\sqrt{2nS_*}\,\kappa}{100B S_*^{3/2}\kappa/\sqrt{K/4}}.
\]
The resulting radius is
\[
\frac{100B S_*^{3/2}\kappa}{\sqrt{K/4}}
\le \frac{100B(4S)^{3/2}\kappa}{\sqrt{K/4}}
=\frac{1600B S^{3/2}\kappa}{\sqrt K}
=\rho,
\]
which is exactly the claim.
\end{proof}

\paragraph{Takeaway.} The takeaway of this section is that the heavy spectrum of the conditioned convolution is controlled by a small structured object. In the exact route this object is a finite lattice-like set of frequencies with bounded denominator product, while in the mollified route we show that, at least near the origin, the structure becomes a low-dimensional near-origin subspace. These are exactly the two structural inputs carried into Section~\ref{sec:translation-invariance-general}.

\section{Translation invariance for exact and mollified routes}
\label{sec:translation-invariance-general}

In the next stage of the argument, we turn the structural information on the large spectrum into translation-invariance statements for the exact and mollified routes in a unified way. The point is that if an integer vector $v\in\mathbb Z^n$ annihilates the structured directions supplied by either the coarse convolution theorem or the smoothed near-origin theorem below, then the Fourier multiplier $1-e(\langle \zeta,v\rangle)$ is small on the large-spectrum region. This suggests studying the discrete derivative of the convolution distribution along the arithmetic progressions parallel to $v$.

Let $\nu$ be a probability measure on $\mathbb Z^n$, and for $v\in\mathbb Z^n$ write
\[
\tau_v\nu(x):=\nu(x-v).
\]
Thus
\[
\dTV(\nu,\tau_v\nu)
:=\frac12\sum_{x\in\mathbb Z^n}|\nu(x)-\nu(x-v)|.
\]

The main purpose of this section is to upper bound this quantity using only the condition that $v$ is annihilated by the linear sketch defined in the last section.
\subsection{Fourier analysis along lines}
To analyze this quantity, it is natural to decompose $\mathbb Z^n$ into lines parallel to $v$ and work one line at a time. Specifically, let $v\in\mathbb Z^n\setminus\{0\}$, and $X\subseteq \mathbb Z^n$ be a set of representatives for the quotient $\mathbb Z^n/\mathbb Zv$. For each $x\in X$, define the one-dimensional restriction and its Fourier transform by
\[
\nu_{x,v}(\ell):=\nu(x+\ell v), \quad \text{and} \quad \widehat{\nu_{x,v}}(t):=\sum_{\ell\in\mathbb Z}\nu_{x,v}(\ell)e(-\ell t)
\qquad (\ell\in\mathbb Z,\ t\in \mathbb R/\mathbb Z).
\]
Then
\[
\dTV(\nu,\tau_v\nu)
=\frac12\sum_{x\in X}\sum_{\ell\in\mathbb Z}|\nu_{x,v}(\ell)-\nu_{x,v}(\ell-1)|.
\]
For every $x\in X$, define
\[
\mathcal E_{x,v}(\nu):=\sum_{\ell\in\mathbb Z}|\nu_{x,v}(\ell)-\nu_{x,v}(\ell-1)|^2.
\]
Applying Parseval's identity \cref{fact:parseval-prelim} on $\mathbb Z$ to the discrete derivative $\nu_{x,v}*(\mathbf{1}_{\{1\}}-\mathbf{1}_{\{0\}})$, and using that the Fourier transform turns convolution into pointwise multiplication, we obtain

\begin{equation}
	\label{eq:line-decomposition-translation-general}
	\mathcal E_{x,v}(\nu)=\int_{\mathbb R/\mathbb Z}|1-e(-t)|^2|\widehat{\nu_{x,v}}(t)|^2\,dt.
\end{equation}
For what follows, it is enough to understand the Fourier transform of the measure restricted to a single line. Intuitively, this transform should be obtained by averaging the ambient Fourier transform over the hyperplane orthogonal to that line direction. The next lemma makes this precise.

\begin{lemma}[Formula for $\widehat{\nu_{x,v}}(t)$]
\label{lem:line-fourier-formula-general}
Let $v\in\mathbb Z^n\setminus\{0\}$, choose $\zeta_t\in \mathbb T^n$ with $\langle \zeta_t,v\rangle=t$, and write $W_v:=\{\omega\in \mathbb T^n:\langle \omega,v\rangle=0\}$. Then for every $x\in X$ and $t\in \mathbb R/\mathbb Z$,
\[
\widehat{\nu_{x,v}}(t)=e(\langle \zeta_t,x\rangle)\int_{W_v}\widehat\nu(\omega+\zeta_t)e(\langle \omega,x\rangle)\,d\omega.
\]
Hence, with $p_{x,v}:=\sum_{\ell\in\mathbb Z}\nu_{x,v}(\ell)=\widehat{\nu_{x,v}}(0)$, one has $|\widehat{\nu_{x,v}}(t)|\le p_{x,v}$ and
\[
\sum_{x\in X}|\widehat{\nu_{x,v}}(t)|^2
=\int_{W_v}|\widehat\nu(\omega+\zeta_t)|^2\,d\omega.
\]
\end{lemma}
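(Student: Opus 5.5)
The plan is to prove the formula by Fourier inversion on $\mathbb Z^n$, and then obtain the two consequences from the formula together with Parseval on the compact group $W_v$.

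\textbf{Step 1: the formula.} By Fourier inversion, $\nu(y)=\int_{\mathbb T^n}\widehat\nu(\zeta)e(\langle\zeta,y\rangle)\,d\zeta$. The key structural fact is that the map $\zeta\mapsto\langle\zeta,v\rangle\in\mathbb R/\mathbb Z$ is a surjective continuous homomorphism $\mathbb T^n\to\mathbb T$ with kernel $W_v$. Its fibers are the cosets $W_v+\zeta_t$. I will therefore disintegrate Haar measure as
\[
\int_{\mathbb T^n}F\,d\zeta=\int_{\mathbb R/\mathbb Z}\int_{W_v}F(\omega+\zeta_t)\,d\omega\,dt,
\]
with $d\omega$ the Haar probability on $W_v$, which may be disconnected. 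This holds because pushforward of Haar measure under a surjective homomorphism is Haar measure. I also take $t\mapsto\zeta_t$ to be a measurable section.

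Applying this at $y=x+\ell v$ gives
\[
\nu_{x,v}(\ell)=\int_{\mathbb R/\mathbb Z}e(\ell t)\,G_x(t)\,dt,
\qquad
G_x(t):=e(\langle\zeta_t,x\rangle)\int_{W_v}\widehat\nu(\omega+\zeta_t)e(\langle\omega,x\rangle)\,d\omega.
\]
I need $G_x$ to be well defined, i.e.\ independent of the choice of $\zeta_t$. Replacing $\zeta_t$ by $\zeta_t+\omega_0$ with $\omega_0\in W_v$ and translating $\omega$ in the inner integral leaves $G_x(t)$ unchanged, so this holds. Now $G_x\in L^\infty\subseteq L^2(\mathbb T)$ and its Fourier coefficients are $\nu_{x,v}(\ell)$. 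Since $\nu_{x,v}\in\ell^1$, uniqueness of Fourier coefficients gives $G_x=\widehat{\nu_{x,v}}$ almost everywhere. Both sides are continuous in $t$ (for the right side, via the well-definedness), so they agree everywhere.

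\textbf{Step 2: the pointwise bound.} The inequality $|\widehat{\nu_{x,v}}(t)|\le\sum_\ell\nu_{x,v}(\ell)=p_{x,v}$ is immediate from the triangle inequality, since $\nu\ge0$.

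\textbf{Step 3: the sum identity.} Fix $t$ and set $h(\omega):=\widehat\nu(\omega+\zeta_t)$ on $W_v$. The Pontryagin dual of $W_v$ is $\mathbb Z^n/\mathbb Zv$: characters $\omega\mapsto e(\langle\omega,y\rangle)$ are trivial on $W_v$ exactly when $y\in\mathbb Zv$. This duality needs $\mathbb Zv$ to be the full annihilator of $W_v$ in $\mathbb Z^n$, which I check as follows.
\begin{itemize}
\item The annihilator is the saturation of $\mathbb Zv$.
\item If $v=gv'$ with $v'$ primitive, then $W_v$ consists of $g$ components and its annihilator is $\mathbb Zv$ itself, not $\mathbb Zv'$.
\end{itemize}
So $X$ indexes the dual group, and the quantities $c_x:=\int_{W_v}h(\omega)e(\langle\omega,x\rangle)\,d\omega$ are the Fourier coefficients of $h$, up to conjugation conventions. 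Parseval on $W_v$ then gives $\sum_{x\in X}|c_x|^2=\int_{W_v}|h|^2\,d\omega$. Since the phase $e(\langle\zeta_t,x\rangle)$ has modulus one, the identity follows.

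\textbf{Main obstacle.} The main care point is this identification of $\widehat{W_v}$ with $\mathbb Z^n/\mathbb Zv$ when $v$ is not primitive. A secondary point is justifying the measure disintegration. If the duality argument gets delicate, I will fall back on a direct computation. By Parseval on $\mathbb T^n$, $\int_{W_v}|h|^2$ equals the sum over cosets $x+\mathbb Zv$ of $|\sum_\ell\nu(x+\ell v)e(-\ell t)|^2$, which I can compute by expanding $|h|^2$ and integrating the characters over $W_v$.
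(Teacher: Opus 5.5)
Your proof is correct in substance, but it runs dual to the paper's. The paper works on the discrete side. It sets $F_t(x):=e(-\langle\zeta_t,x\rangle)\widehat{\nu_{x,v}}(t)=\sum_{\ell}\nu(x+\ell v)\,e(-\langle\zeta_t,x+\ell v\rangle)$ and notes this is a function on $\mathbb Z^n/\mathbb Zv$, whose dual group is the annihilator $(\mathbb Zv)^\perp=W_v$ essentially by definition. It then computes in one line that the Fourier transform of $F_t$ there is $\omega\mapsto\widehat\nu(\omega+\zeta_t)$. Fourier inversion on $\mathbb Z^n/\mathbb Zv$ gives the formula, and Plancherel for the same pair gives the $L^2$ identity. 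This needs no disintegration, no measurable section, and no a.e.-to-everywhere upgrade.

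Your route goes through inversion on $\mathbb Z^n$, Weil's quotient formula for $\mathbb T^n\to\mathbb T$, uniqueness of Fourier coefficients, continuity, and then Parseval on the compact group $W_v$. It makes the ``average of $\widehat\nu$ over a coset of the hyperplane'' picture explicit. The cost is more bookkeeping, and it forces you to describe the dual of the \emph{subgroup} $W_v$ rather than of a quotient.

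That identification is exactly where your write-up slips: your two bullets contradict each other, and the first one is false. The annihilator of $W_v$ in $\mathbb Z^n$ is $\mathbb Zv$, not its saturation $\mathbb Z^n\cap\mathbb Qv=\mathbb Zv'$. For example, with $n=1$ and $v=2$ one has $W_v=\{0,1/2\}$, whose annihilator is $2\mathbb Z$, while the saturation is $\mathbb Z$. This matters: if the annihilator were $\mathbb Zv'$, each character of $W_v$ would occur $g$ times as $x$ ranges over $X$, and your Parseval sum would be off by a factor of $g$.

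The correct statement follows from $(\mathbb Zv)^{\perp\perp}=\mathbb Zv$, or directly. Suppose $\langle\omega,y\rangle\in\mathbb Z$ for all $\omega\in W_v$. Testing on the images of real vectors $w\perp v$ (a connected family through $0$) forces $y\in\mathbb Rv$. Testing on $\omega=v/\|v\|_2^2\in W_v$ then forces $y\in\mathbb Zv$.

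You also tacitly use that every character of $W_v$ extends to a character of $\mathbb T^n$; without this, Parseval degrades to Bessel's inequality. Both facts together are the standard isomorphism $\widehat{W_v}\cong\mathbb Z^n/W_v^\perp$. With the first bullet deleted and this isomorphism cited, your proof is complete.
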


\begin{proof}
Set $F_t(x):=e(-\langle \zeta_t,x\rangle)\widehat{\nu_{x,v}}(t)$. As before, $F_t$ is well-defined on $\mathbb Z^n/\mathbb Zv$, and its Fourier transform on that quotient is
\[
\widehat F_t(\omega)=\widehat\nu(\omega+\zeta_t)
\qquad (\omega\in W_v).
\]
Fourier inversion on $\mathbb Z^n/\mathbb Zv$ gives
\[
F_t(x)=\int_{W_v}\widehat\nu(\omega+\zeta_t)e(\langle \omega,x\rangle)\,d\omega,
\]
which is exactly the stated formula for $\widehat{\nu_{x,v}}(t)$. Also, since $\nu_{x,v}(\ell)\ge 0$, we have
\[
|\widehat{\nu_{x,v}}(t)|\le \sum_{\ell\in\mathbb Z}\nu_{x,v}(\ell)=p_{x,v}.
\]
The $L_2$ identity follows from Parseval on the same quotient.
\end{proof}

\begin{remark}
	As the notation suggests, $p_{x,v}$ is a probability mass. More concretely, $p_{x,v}$ is the probability that a sample drawn from $\nu$ falls into the equivalence class of $x$ modulo $\mathbb Zv$.
\end{remark}
It is convenient to denote the total line energy by
\[
\mathcal E_v(\nu):=\sum_{x\in X}\mathcal E_{x,v}(\nu)=\sum_{y\in\mathbb Z^n}|\nu(y)-\nu(y-v)|^2.
\]

We now bound the total line energy in one direction. The idea is to split the Fourier integral into the part near the large spectrum and the remaining tail. The tail is controlled by the parameter $\eta$: away from the large spectrum, the Fourier coefficients are uniformly small, and in our later applications $\eta$ will be exponentially small because the relevant Fourier transform comes from a high convolution power. This makes the tail term negligible. The parameter $\Delta$ measures how close the large spectrum is to a structured set $W$. In the exact sketching application below, $W$ will be a finite exact subgroup of $\mathbb T^n$, while in the smoothed near-origin application it will be a real subtorus. If $\Delta$ is polynomially small, then for frequencies near the large spectrum the factor $|1-e(-t)|^2$ is also small, which gives good control of the main part of the integral.

\begin{proposition}
\label{prop:spectral-energy-orthogonal-general}
Let $\nu$ be a probability measure on $\mathbb Z^n$, let $\eta\in(0,1]$, and let $W\subseteq \mathbb T^n$ be a subset such that 
$$
\operatorname{dist}_{\mathbb T^n}(\zeta,W)\le \Delta \text{ whenever } |\widehat\nu(\zeta)|\ge \eta. 
$$ 
Let $v\in\mathbb Z^n$ satisfy $\langle v,\xi\rangle = 0$ for every $\xi\in W$, and assume that $\|v\|_2 \le 1/(2\Delta)$. Write $p_{x,v}:=\sum_{\ell\in\mathbb Z}\nu_{x,v}(\ell)$ and
\[
\beta_{x,v}:=\int_{\|t\|_{\mathbb R/\mathbb Z}>\|v\|_2\Delta}|1-e(-t)|^2|\widehat{\nu_{x,v}}(t)|^2\,dt.
\]
Then
\[
\mathcal E_{x,v}(\nu)\le \frac{8\pi^2}{3}\|v\|_2^3\Delta^3p_{x,v}^2+\beta_{x,v}.
\]
Moreover,
\[
\sum_{x\in X}\beta_{x,v}\le 4\eta^2.
\]
\end{proposition}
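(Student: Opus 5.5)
We start from the line-energy identity \cref{eq:line-decomposition-translation-general} and split the integral over $t\in\mathbb R/\mathbb Z$ at the threshold $\|t\|_{\mathbb R/\mathbb Z}=\|v\|_2\Delta$. The outer region $\|t\|>\|v\|_2\Delta$ is by definition $\beta_{x,v}$. On the inner region we use the trivial pointwise bound $|\widehat{\nu_{x,v}}(t)|\le p_{x,v}$ from \cref{lem:line-fourier-formula-general}, together with $|1-e(-t)|^2=4\sin^2(\pi t)\le 4\pi^2\|t\|^2$. Integrating $4\pi^2 t^2$ over $|t|\le \|v\|_2\Delta$ gives
\[
2\cdot\frac{4\pi^2(\|v\|_2\Delta)^3}{3}=\frac{8\pi^2}{3}\|v\|_2^3\Delta^3,
\]
which is exactly the first term. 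This proves the per-line inequality. The condition $\|v\|_2\Delta\le 1/2$ ensures the inner interval sits inside the torus.

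For the sum of the $\beta_{x,v}$, we interchange the sum and the integral and apply the $L^2$ identity of \cref{lem:line-fourier-formula-general}:
\[
\sum_{x}\beta_{x,v}=\int_{\|t\|>\|v\|_2\Delta}|1-e(-t)|^2\int_{W_v}|\widehat\nu(\omega+\zeta_t)|^2\,d\omega\,dt.
\]
The key claim is that for every such $t$ and every $\omega\in W_v$, the frequency $\zeta:=\omega+\zeta_t$ satisfies $|\widehat\nu(\zeta)|<\eta$. We argue by contrapositive. Suppose $|\widehat\nu(\zeta)|\ge\eta$. Then there is $\xi\in W$ with $\|\zeta-\xi\|_{\mathbb T^n}\le\Delta$. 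Since $\langle v,\xi\rangle=0$ in $\mathbb R/\mathbb Z$ (here we read the hypothesis modulo $\mathbb Z$, which is the meaning for $\xi$ on the torus) and $\langle v,\omega\rangle=0$, we get
\[
t=\langle \zeta,v\rangle=\langle \zeta-\xi,v\rangle \pmod 1.
\]
Taking a representative of $\zeta-\xi$ of Euclidean norm at most $\Delta$ and applying Cauchy--Schwarz gives $\|t\|_{\mathbb R/\mathbb Z}\le\|v\|_2\Delta$, a contradiction. So on the outer region the integrand satisfies $|\widehat\nu|^2<\eta^2$, and $|1-e(-t)|^2\le 4$. Since $W_v$ carries its probability Haar measure and $dt$ has total mass $1$, the double integral is at most $4\eta^2$.

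The main obstacle is making the measure-theoretic bookkeeping on $W_v$ consistent. The slice $\{\omega+\zeta_t:\omega\in W_v\}$ has to be parametrized correctly so that the Haar normalizations in \cref{lem:line-fourier-formula-general} match, and $W_v$ may be disconnected when $v$ is not primitive. We will rely on the lemma as stated, taking $d\omega$ to be normalized Haar measure on the closed subgroup $W_v$. Then the bound $\int_{W_v}|\widehat\nu|^2\,d\omega\le\eta^2$ holds regardless of the choice of $\zeta_t$, since the pointwise bound applies to every point of the coset.
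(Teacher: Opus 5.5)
Your proposal is correct and follows the paper's proof essentially line by line. You split the line energy at $\|t\|_{\mathbb R/\mathbb Z}=\|v\|_2\Delta$ and bound the inner part using $|\widehat{\nu_{x,v}}|\le p_{x,v}$ and $|1-e(-t)|\le 2\pi\|t\|_{\mathbb R/\mathbb Z}$. You then control the tail through the $L^2$ identity of \cref{lem:line-fourier-formula-general}, using that $|\widehat\nu(\omega+\zeta_t)|<\eta$ on the outer region.

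The only difference is cosmetic. You prove that last point by contrapositive, picking some $\xi\in W$ within distance $\Delta$. The paper instead bounds $\|t\|_{\mathbb R/\mathbb Z}\le\|v\|_2\|\omega+\zeta_t-\xi\|_{\mathbb T^n}$ for every $\xi\in W$ and takes the infimum. This avoids assuming that $\operatorname{dist}_{\mathbb T^n}(\zeta,W)$ is attained. If it is not attained, your argument still works: take $\xi$ within $\Delta+\varepsilon'$ and let $\varepsilon'\to 0$.
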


\begin{proof}

\textbf{Extracting main terms.}
Set $u:=\|v\|_2\Delta$. By assumption $u\le 1/2$. By \cref{eq:line-decomposition-translation-general},
\[
\mathcal E_{x,v}(\nu)
=\int_{\mathbb R/\mathbb Z}|1-e(-t)|^2|\widehat{\nu_{x,v}}(t)|^2\,dt.
\]
Splitting at $u$, we get
\[
\mathcal E_{x,v}(\nu)
=\int_{\|t\|_{\mathbb R/\mathbb Z}\le u}|1-e(-t)|^2|\widehat{\nu_{x,v}}(t)|^2\,dt+\beta_{x,v},
\]
By \cref{lem:line-fourier-formula-general}, $|1-e(-t)|\le 2\pi\|t\|_{\mathbb R/\mathbb Z}$ and $u\le 1/2$,
\[
\int_{\|t\|_{\mathbb R/\mathbb Z}\le u}|1-e(-t)|^2|\widehat{\nu_{x,v}}(t)|^2\,dt
\le p_{x,v}^2\int_{\|t\|_{\mathbb R/\mathbb Z}\le u}|1-e(-t)|^2\,dt \le 4\pi^2\int_{\|t\|_{\mathbb R/\mathbb Z}\le u}\|t\|_{\mathbb R/\mathbb Z}^2\,dt=\frac{8\pi^2}{3}u^3.
\]
This proves
\[
\mathcal E_{x,v}(\nu)\le \frac{8\pi^2}{3}u^3p_{x,v}^2+\beta_{x,v},
\]
\textbf{Bounding the tail}.
Let
$
W_v:=\{\omega\in \mathbb T^n:\langle \omega,v\rangle=0\},
$
and for each $t\in \mathbb R/\mathbb Z$ choose $\zeta_t\in \mathbb T^n$ with $\langle \zeta_t,v\rangle=t$. Then
\[
\sum_{x\in X}\beta_{x,v}
=\int_{\|t\|_{\mathbb R/\mathbb Z}>u}|1-e(-t)|^2\sum_{x\in X}|\widehat{\nu_{x,v}}(t)|^2\,dt.
\]
By \cref{lem:line-fourier-formula-general}, this equals
\[
\sum_{x\in X}\beta_{x,v}
=\int_{\|t\|_{\mathbb R/\mathbb Z}>u}|1-e(-t)|^2\int_{W_v}|\widehat\nu(\omega+\zeta_t)|^2\,d\omega\,dt.
\]
Now let $\|t\|_{\mathbb R/\mathbb Z}>u$, and let $\omega\in W_v$. For every $\xi\in W$, since $\langle v,\xi\rangle\in\mathbb Z$ and $\langle \omega,v\rangle=0$, we have
\[
\|t\|_{\mathbb R/\mathbb Z}
=\|\langle \omega+\zeta_t,v\rangle\|_{\mathbb R/\mathbb Z}
=\|\langle \omega+\zeta_t-\xi,v\rangle\|_{\mathbb R/\mathbb Z}
\le \|v\|_2\,\|\omega+\zeta_t-\xi\|_{\mathbb T^n}.
\]
Since this holds for every $\xi\in W$, it follows that
\[
\operatorname{dist}_{\mathbb T^n}(\omega+\zeta_t,W)
\ge \frac{\|t\|_{\mathbb R/\mathbb Z}}{\|v\|_2}
>\Delta.
\]
Therefore $|\widehat\nu(\omega+\zeta_t)|<\eta$, and hence
\[
\int_{W_v}|\widehat\nu(\omega+\zeta_t)|^2\,d\omega\le \eta^2.
\]
It follows that
\[
\sum_{x\in X}\beta_{x,v}
\le 4\eta^2.
\]
\end{proof}
\subsection{From line energy to total variation}

We now apply \cref{prop:spectral-energy-orthogonal-general} to a single line in order to bound the total variation distance. Recall the main parameters: $v$ is the translation vector, $\eta$ is the tail bound for the Fourier coefficients, and $\Delta$ is the approximation error coming from the linear sketch. Since total variation is an $L_1$ quantity, we use Cauchy--Schwarz to pass to an $L_2$ estimate. To make this step effective, we apply it only on a finite interval along the line, and then show that the contribution from the complement is negligible. We use $H$ to denote half of the length of this interval.

We now state the precise estimate in the next proposition. The first two terms bound the contribution from points inside the ball of radius $H$, while the third term is the contribution from outside this ball, which we will handle later. Among the inside terms, the first controls the part coming from frequencies in the large Fourier spectrum, and the second controls the remaining frequencies.
\begin{proposition}[Reduction to a centered Euclidean ball]
\label{prop:ball-reduction-tv-general}
Let $\Delta, \eta, v\in\mathbb Z^n\setminus\{0\} $ be as in \cref{prop:spectral-energy-orthogonal-general}. Let $c\in\mathbb R^n$, and let $H\ge \|v\|_2$. Then
\[
\dTV(\nu,\tau_v\nu)
\le \pi\sqrt{2H}\,\|v\|_2\Delta^{3/2}
+(4H)^{(n+1)/2}\eta
+\sum_{\|y-c\|_2>H-\|v\|_2}\nu(y).
\]
\end{proposition}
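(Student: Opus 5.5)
The plan is to decompose $\dTV(\nu,\tau_v\nu)=\tfrac12\sum_y|\nu(y)-\nu(y-v)|$ according to whether $y$ lies in the ball $B:=\{y:\|y-c\|_2\le H\}$ or not. For $y\notin B$, both $y$ and $y-v$ satisfy $\|y-c\|_2>H-\|v\|_2$ (the latter since $\|y-v-c\|\ge \|y-c\|-\|v\|$). Hence $\tfrac12\sum_{y\notin B}(\nu(y)+\nu(y-v))$ is at most $\sum_{\|y-c\|_2>H-\|v\|_2}\nu(y)$. This gives the third term. The main work is the inside sum $\tfrac12\sum_{y\in B}|\nu(y)-\nu(y-v)|$, which I would organize line by line using the representatives $X$ of $\mathbb Z^n/\mathbb Zv$.

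For a fixed line $x+\mathbb Zv$, the points with $x+\ell v\in B$ form an interval of integers $\ell$ of length at most $2H/\|v\|_2+1\le 3H/\|v\|_2$, using $H\ge\|v\|_2$. Let $N_x$ denote the number of such points. Cauchy--Schwarz on that interval gives
\[
\sum_{\ell:\,x+\ell v\in B}|\nu_{x,v}(\ell)-\nu_{x,v}(\ell-1)|\le \sqrt{N_x}\,\mathcal E^{B}_{x,v},
\]
where $\mathcal E^{B}_{x,v}$ is the square root of the restricted energy, which is at most $\mathcal E_{x,v}(\nu)^{1/2}$. Now insert \cref{prop:spectral-energy-orthogonal-general}: $\mathcal E_{x,v}\le \tfrac{8\pi^2}{3}\|v\|_2^3\Delta^3p_{x,v}^2+\beta_{x,v}$. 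Using $\sqrt{a+b}\le\sqrt a+\sqrt b$, the main part contributes
\[
\sqrt{3H/\|v\|_2}\cdot\sqrt{8\pi^2/3}\,\|v\|_2^{3/2}\Delta^{3/2}p_{x,v}.
\]
Summing over $x$ with $\sum_x p_{x,v}=1$ and the factor $\tfrac12$ yields $\tfrac12\sqrt{8}\pi\sqrt H\|v\|_2\Delta^{3/2}=\pi\sqrt{2H}\|v\|_2\Delta^{3/2}$, which is exactly the first term.

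The $\beta$ part needs care, because summing $\sqrt{N_x\beta_{x,v}}$ over all lines could diverge. Only lines meeting $B$ contribute, so I would apply Cauchy--Schwarz across lines:
\[
\sum_{x:\,N_x>0}\sqrt{N_x\beta_{x,v}}\le \Bigl(\sum_x N_x\Bigr)^{1/2}\Bigl(\sum_x\beta_{x,v}\Bigr)^{1/2}.
\]
Here $\sum_x N_x=|B\cap\mathbb Z^n|$, and $\sum_x\beta_{x,v}\le4\eta^2$. It then remains to bound the number of lattice points in a Euclidean ball of radius $H\ge1$ by roughly $(4H)^{n+1}/4$, for instance via the crude cube count $(2H+1)^n\le(3H)^n$, which is at most $(4H)^{n+1}$ since $H\ge\|v\|_2\ge1$. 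With the factor $\tfrac12\cdot 2\eta$, this gives at most $(4H)^{(n+1)/2}\eta$.

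The main obstacle I expect is this second term: the trade-off between the per-line Cauchy--Schwarz and the global one, together with making sure the lattice-point count fits the stated exponent $(n+1)/2$. The extra factor of $H$ gives room for the sloppy constants.
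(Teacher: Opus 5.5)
Your proposal is correct and follows the paper's proof essentially step for step. Both arguments use the same inside/outside split of the ball, the same per-line Cauchy--Schwarz with at most $3H/\|v\|_2$ points per line feeding into \cref{prop:spectral-energy-orthogonal-general}, and a second Cauchy--Schwarz across lines for the $\beta_{x,v}$ terms. The one minor difference is in counting for that second Cauchy--Schwarz. You bound $\sum_x N_x$ directly by the number of lattice points in the ball. The paper instead picks distance-minimizing coset representatives $X_{v,c}$, counts the lines meeting the ball by $(4H)^n$, and multiplies by a per-line bound. Your version makes that special choice of representatives unnecessary.
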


\begin{proof}
Set
\[
X_{v,c}:=\left\{x\in\mathbb Z^n:-\frac{\|v\|_2^2}{2}<\langle x-c,v\rangle\le \frac{\|v\|_2^2}{2}\right\},
\qquad
A_H:=\{x\in X_{v,c}:\|x-c\|_2\le H\},
\]
and for $x\in A_H$ define
\[
J_x(H):=\{\ell\in\mathbb Z:\|x+\ell v-c\|_2\le H\}.
\]
For any $y\in\mathbb Z^n$, there is a unique integer $k$ such that
\[
-\frac{\|v\|_2^2}{2}<\langle y-kv-c,v\rangle\le \frac{\|v\|_2^2}{2},
\]
so $X_{v,c}$ is a complete set of representatives for $\mathbb Z^n/\mathbb Zv$. If $x\in X_{v,c}$ and $m\in\mathbb Z$, then
\[
\|x+mv-c\|_2^2=\|x-c\|_2^2+2m\langle x-c,v\rangle+m^2\|v\|_2^2\ge \|x-c\|_2^2,
\]
since $|\langle x-c,v\rangle|\le \|v\|_2^2/2$. Thus each $x\in X_{v,c}$ minimizes the distance to $c$ in its coset. Also,
\[
A_H\subseteq \mathbb Z^n\cap (c+[-H,H]^n),
\]
so $|A_H|\le (2H+2)^n\le (4H)^n$. Finally, if $\ell_1,\ell_2\in J_x(H)$, then
\[
|\ell_1-\ell_2|\,\|v\|_2=\|(\ell_1-\ell_2)v\|_2
\le \|x+\ell_1v-c\|_2+\|x+\ell_2v-c\|_2
\le 2H,
\]
and therefore
\[
|J_x(H)|\le 2H/\|v\|_2+1\le 3H/\|v\|_2,
\]
where we used $H\ge \|v\|_2$.

Write
\[
\dTV(\nu,\tau_v\nu)
=\frac12\sum_{\|y-c\|_2\le H}|\nu(y)-\nu(y-v)|+\frac12\sum_{\|y-c\|_2>H}|\nu(y)-\nu(y-v)|.
\]
If $\|y-c\|_2\le H$ and $y=x+\ell v$ with $x\in X_{v,c}$, then minimality of $x$ gives $x\in A_H$, and automatically $\ell\in J_x(H)$. Hence, by Cauchy--Schwarz,
\[
\sum_{\|y-c\|_2\le H}|\nu(y)-\nu(y-v)|
=\sum_{x\in A_H}\sum_{\ell\in J_x(H)}|\nu_{x,v}(\ell)-\nu_{x,v}(\ell-1)|
\le \sum_{x\in A_H}|J_x(H)|^{1/2}\mathcal E_{x,v}(\nu)^{1/2}.
\]
For the outer part,
\[
\frac12\sum_{\|y-c\|_2>H}|\nu(y)-\nu(y-v)|
\le \frac12\sum_{\|y-c\|_2>H}\nu(y)+\frac12\sum_{\|y-c\|_2>H}\nu(y-v).
\]
The first term is at most $\sum_{\|y-c\|_2>H-\|v\|_2}\nu(y)$. For the second, after the change of variables $z=y-v$,
\[
\sum_{\|y-c\|_2>H}\nu(y-v)=\sum_{\|z+v-c\|_2>H}\nu(z)
\le \sum_{\|z-c\|_2>H-\|v\|_2}\nu(z).
\]
By \cref{prop:spectral-energy-orthogonal-general} and $|J_x(H)|\le 3H/\|v\|_2$, the main line-energy contribution is
\begin{align*}
\frac12\sum_{x\in A_H}|J_x(H)|^{1/2}
\left(\frac{8\pi^2}{3}\|v\|_2^3\Delta^3p_{x,v}^2\right)^{1/2}
&\le
\frac12\sqrt{\frac{3H}{\|v\|_2}}\sqrt{\frac{8\pi^2}{3}}\,
\|v\|_2^{3/2}\Delta^{3/2}\sum_{x\in A_H}p_{x,v} \\
&\le \pi\sqrt{2H}\,\|v\|_2\Delta^{3/2},
\end{align*}
where $\sum_{x\in A_H}p_{x,v}\le 1$. Therefore
\[
\frac12\sum_{x\in A_H}|J_x(H)|^{1/2}\mathcal E_{x,v}(\nu)^{1/2}
\le \pi\sqrt{2H}\,\|v\|_2\Delta^{3/2}
+\frac12\sum_{x\in A_H}|J_x(H)|^{1/2}\beta_{x,v}^{1/2}.
\]
Since $v\in\mathbb Z^n\setminus\{0\}$, the same geometric bound also gives
$|J_x(H)|\le 3H\le 4H$. By Cauchy--Schwarz and \cref{prop:spectral-energy-orthogonal-general},
\begin{align*}
	\frac12\sum_{x\in A_H}|J_x(H)|^{1/2}\beta_{x,v}^{1/2}
& \le \frac12\Big(\sum_{x\in A_H}|J_x(H)|\Big)^{1/2}\Big(\sum_{x\in A_H}\beta_{x,v}\Big)^{1/2} \\
& \le \frac12\Big((4H)\cdot (4H)^n\Big)^{1/2}(4\eta^2)^{1/2}
\le (4H)^{(n+1)/2}\eta.
\end{align*}

\end{proof}

The remaining ingredient is a tail bound for the convolution itself. Under only a geometric-mean lower bound on the density parameters, we cannot truncate every summand at the same radius. Instead, we truncate the $i$-th summand at a radius depending on $\alpha_i$, and then use the geometric-mean hypothesis to control the average size of these radii. 

We will use the following standard form of McDiarmid's inequality to bound the concentration of a sum of independent random variables.
 See, for example, \cite[Theorem 6.2.2]{vershynin2018hdp} for a proof.
\begin{fact}[McDiarmid's inequality]
\label{fact:mcdiarmid-general}
Let $Z_1,\dots,Z_M$ be independent random variables taking values in measurable spaces $\Omega_1,\dots,\Omega_M$, and let
\[
F:\Omega_1\times\cdots\times\Omega_M\to \mathbb R
\]
be measurable. Assume that there are numbers $c_1,\dots,c_M\ge 0$ such that for every $i\in[M]$ and every two points differing only in the $i$-th coordinate, the value of $F$ changes by at most $c_i$. Then for every $t\ge 0$,
\[
\mathbb P\big(F(Z_1,\dots,Z_M)-\mathbb EF(Z_1,\dots,Z_M)\ge t\big)
\le \exp\!\left(-\frac{2t^2}{\sum_{i=1}^M c_i^2}\right).
\]
\end{fact}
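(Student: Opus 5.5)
We would prove this by the standard martingale method: write $F-\mathbb EF$ as a sum of martingale differences with bounded conditional ranges, bound the exponential moment of each increment with Hoeffding's lemma, and finish with Markov's inequality. Throughout, write $Z=(Z_1,\dots,Z_M)$ and $\mathcal F_i:=\sigma(Z_1,\dots,Z_i)$, with $\mathcal F_0$ trivial. We may assume $\sum_i c_i^2>0$; otherwise $F$ is almost surely constant and the bound is trivial. We also assume $F(Z)$ is integrable; this holds, for instance, when $F$ is bounded, which covers every use we make of it.

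\textbf{Step 1: the Doob martingale.} Set $Y_i:=\mathbb E[F(Z)\mid\mathcal F_i]$, so that $Y_0=\mathbb EF$ and $Y_M=F(Z)$. Let $D_i:=Y_i-Y_{i-1}$, so $F-\mathbb EF=\sum_{i=1}^M D_i$. By independence of the $Z_j$ (Fubini), we have $Y_i=g_i(Z_1,\dots,Z_i)$ with
\[
g_i(z_1,\dots,z_i):=\mathbb E\,F(z_1,\dots,z_i,Z_{i+1},\dots,Z_M).
\]
In particular, $Y_{i-1}=\mathbb E_{Z_i}\,g_i(Z_1,\dots,Z_{i-1},Z_i)$.

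\textbf{Step 2: bounded conditional range.} Fix $z_{<i}$ and define
\[
L_i:=\inf_{z}g_i(z_{<i},z)-g_{i-1}(z_{<i}),\qquad U_i:=\sup_{z}g_i(z_{<i},z)-g_{i-1}(z_{<i}).
\]
Conditionally on $\mathcal F_{i-1}$, the increment $D_i$ lies in $[L_i,U_i]$. Moreover, for any $z,z'$, the bounded-differences hypothesis applied pointwise inside the expectation gives
\[
g_i(z_{<i},z)-g_i(z_{<i},z')=\mathbb E\big[F(z_{<i},z,Z_{>i})-F(z_{<i},z',Z_{>i})\big]\le c_i .
\]
Hence $U_i-L_i\le c_i$. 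This is the step we regard as the crux: we need an interval of length $c_i$ rather than $2c_i$, and this is exactly what produces the constant $2$ in the exponent. It relies on independence, which lets us integrate out $Z_{>i}$ with the same law regardless of the value of $Z_i$.

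\textbf{Step 3: Hoeffding's lemma and iteration.} Since $\mathbb E[D_i\mid\mathcal F_{i-1}]=0$ and $D_i$ lies in an $\mathcal F_{i-1}$-measurable interval of length at most $c_i$, Hoeffding's lemma gives, for every $\lambda\ge0$,
\[
\mathbb E\big[e^{\lambda D_i}\mid\mathcal F_{i-1}\big]\le e^{\lambda^2c_i^2/8}.
\]
Peeling off one increment at a time via the tower property then yields
\[
\mathbb E\,e^{\lambda(F-\mathbb EF)}\le \exp\Big(\tfrac{\lambda^2}{8}\sum_i c_i^2\Big).
\]

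\textbf{Step 4: Chernoff bound.} By Markov's inequality,
\[
\mathbb P(F-\mathbb EF\ge t)\le \exp\Big(-\lambda t+\tfrac{\lambda^2}{8}\sum_i c_i^2\Big).
\]
Choosing $\lambda:=4t/\sum_i c_i^2$ gives the bound $\exp\big(-2t^2/\sum_i c_i^2\big)$, as claimed.
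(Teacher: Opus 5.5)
Your argument is correct: it is the standard Doob-martingale proof (conditional ranges of length at most $c_i$, Hoeffding's lemma, then a Chernoff bound with $\lambda=4t/\sum_i c_i^2$), and the paper gives no proof of its own but only cites a textbook for this classical result. One small simplification is that your integrability assumption is automatic: changing one coordinate at a time shows $\sup F-\inf F\le\sum_i c_i$, so $F$ is bounded.
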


Applying this inequality gives the desired tail bound. The proof is mostly a matter of choosing the right random variables and then carrying out a routine calculation.
\begin{proposition}
\label{prop:convolution-tail-dense-gaussian-general}
Let $M\ge 1$, $R\ge 1$, $\alpha\in(0,1]$, let $\alpha_1,\dots,\alpha_M\in(0,1]$ satisfy
$
\prod_{i=1}^M \alpha_i\ge \alpha^M,
$
and let $\mu_1,\dots,\mu_M$ be probability measures on $\mathbb Z^n$ such that each $\mu_i$ is an $\alpha_i$-dense piece of $\gamma_R$. Write
$
\nu:=\mu_1*\cdots *\mu_M.
$
Set
$
S:=\log(2/\alpha).
$
Then for every $L\ge 1$, there exists a center $c\in\mathbb R^n$, such that
\[
\nu\big(\{y\in\mathbb Z^n:\|y-c\|_2>2R\sqrt M\,L\sqrt{n+L^2+S}\}\big)
\le 2M\exp\!\left(-\frac{L^2}{2}\right).
\]
\end{proposition}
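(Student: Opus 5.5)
Let $Y_i\sim\mu_i$ be independent, so that $\nu$ is the law of $Y=\sum_i Y_i$. I will take the center to be $c:=\sum_i \mathbb E[Y_i\mathbf 1_{\|Y_i\|_2\le r_i}]$, where $r_i$ is a per-summand truncation radius. The tail event is then split into two parts: some summand exceeds its truncation radius, or the truncated sum deviates from its mean. The radii are chosen as
\[
r_i:=R\sqrt{n+L^2+\log(1/\alpha_i)}\cdot C
\]
for a suitable absolute constant $C$.

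The first ingredient is a Gaussian tail bound for $\gamma_R$. Poisson summation (\cref{cor:psf-approx-smoothing-general}), or a direct moment-generating-function computation, gives $\gamma_R(\|y\|_2>r)\le e^{-\pi r^2/(2R^2)}\cdot 2^{O(n)}$ for $r\gtrsim R\sqrt n$. Since $\mu_i\le\alpha_i^{-1}\gamma_R$, this yields $\mu_i(\|y\|_2>r_i)\le \alpha_i^{-1}e^{-\Omega(r_i^2/R^2)+O(n)}\le e^{-L^2/2}$ once $C$ is large enough. A union bound over $i$ then shows that all $Y_i$ lie within their radii except with probability $Me^{-L^2/2}$.

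On the complementary event, I apply McDiarmid (\cref{fact:mcdiarmid-general}) to $F:=\|\sum_i(\bar Y_i-\mathbb E\bar Y_i)\|_2$, where $\bar Y_i:=Y_i\mathbf 1_{\|Y_i\|\le r_i}$. This $F$ has bounded differences $c_i=2r_i$. Its mean satisfies $\mathbb E F\le(\sum_i\mathbb E\|\bar Y_i\|^2)^{1/2}\le(\sum_i r_i^2)^{1/2}$, using independence and the orthogonality of centered sums. McDiarmid gives $F\le \mathbb E F+t$ except with probability $e^{-2t^2/\sum 4r_i^2}$, and choosing $t=L(\sum r_i^2)^{1/2}$ makes this at most $e^{-L^2/2}$. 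So with probability at least $1-(M+1)e^{-L^2/2}\ge 1-2Me^{-L^2/2}$ we get $\|Y-c\|_2\le 2L(\sum_i r_i^2)^{1/2}$. Note that when all $Y_i$ are within their radii, $Y-c=\sum_i(\bar Y_i-\mathbb E\bar Y_i)$ exactly.

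It remains to control $\sum_i r_i^2=C^2R^2\bigl(M(n+L^2)+\sum_i\ln(1/\alpha_i)\bigr)$. This is where the geometric-mean hypothesis enters: $\sum_i\ln(1/\alpha_i)\le M\ln(1/\alpha)\le MS$. Hence $(\sum r_i^2)^{1/2}\le CR\sqrt M\sqrt{n+L^2+S}$, which gives the claimed radius up to the constant.

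The main obstacle is getting the stated constant $2$, rather than some $O(1)$, in the radius. This requires a sharp tail bound for $\gamma_R$ whose exponent absorbs the $2^{O(n)}$ and $\alpha_i^{-1}$ factors with constant close to $1$. I would try the tail bound $\Pr_{\gamma_R}(\|y\|>r)\le 2e^{-\pi r^2/(2R^2)+n\ln 2}$ obtained from $\mathbb E e^{\pi\|y\|^2/(2R^2)}\le 2\cdot 2^{n/2}$ (again via \cref{cor:psf-approx-smoothing-general}). I would also redistribute constants between the McDiarmid deviation term and $\mathbb E F$, using $L\ge1$ so that $\mathbb E F$ is dominated by $L(\sum r_i^2)^{1/2}$. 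If that still does not fit, I would accept a slightly weaker constant in the radius.
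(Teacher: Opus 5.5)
Your proposal is correct and is essentially the paper's proof: per-summand truncation radii depending on $\alpha_i$, the center $c=\sum_i\mathbb E\bar Y_i$, a union bound over the truncation events, McDiarmid plus Jensen for the truncated sum, and the geometric-mean hypothesis to control $\sum_i r_i^2$. The constant $2$ you worried about is not actually an obstacle. Take $C=1$ and use the tail bound $\gamma_R(\{\|y\|_2>r\})\le(\sqrt2)^n e^{-\pi r^2/(2R^2)}$. The paper derives this from the exact identity $\rho_s(\mathbb Z^n)=s^n\rho_{1/s}(\mathbb Z^n)$ together with monotonicity of $\rho_t(\mathbb Z^n)$ in $t$, which holds for every $R$. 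By contrast, \cref{cor:psf-approx-smoothing-general} would need $R\gtrsim\sqrt{\ln n}$, which goes beyond the stated $R\ge1$. With this bound, each truncation fails with probability at most $e^{-\pi L^2/2}$, and $\mathbb EF+t\le(1+L)(\sum_i r_i^2)^{1/2}\le 2L\,R\sqrt M\sqrt{n+L^2+S}$ gives exactly the stated radius.
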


\begin{proof}
Let $X_1,\dots,X_M$ be independent random vectors with laws $\mu_1,\dots,\mu_M$. Fix $L\ge 1$, and define
\[
T_i:=\sqrt{L^2+\frac{2}{\pi}\ln\frac{(\sqrt2)^n}{\alpha_i}}, \quad
T:=\sqrt{L^2+\frac{2}{\pi}\ln\frac{(\sqrt2)^n}{\alpha}},\quad
\Omega_i:=\{\|X_i\|_2\le RT_i\},
\]
and $
X_i^{\sharp}:=X_i\mathbf 1_{\Omega_i}$, $
c:=\sum_{i=1}^M \mathbb E X_i^{\sharp}$.
For every $u\ge 0$ one has
\[
\mu_i\big(\{x\in\mathbb Z^n:\|x\|_2>u\}\big)
\le \alpha_i^{-1}(\sqrt2)^n\exp\!\left(-\frac{\pi u^2}{2R^2}\right).
\]
Indeed, this follows from $\mu_i$ being an $\alpha_i$-dense piece of $\gamma_R$ and the tail bound for $\gamma_R$:
\begin{equation}\label{eq:gamma-tail-bound-general}
	\gamma_R\big(\{x\in\mathbb Z^n:\|x\|_2>u\}\big)
\le e^{-\pi u^2/(2R^2)}\frac{\rho_{\sqrt2 R}(\mathbb Z^n)}{\rho_R(\mathbb Z^n)}
\le (\sqrt2)^n e^{-\pi u^2/(2R^2)},
\end{equation}
where the second inequality follows from Poisson summation. Indeed,
\cref{lem:poisson-gaussian-general} gives
$
\rho_s(\mathbb Z^n)=s^n\rho_{1/s}(\mathbb Z^n),
$
for any $s > 0$. Thus the normalization ratio can be evaluated on the dual, small-radius side:
\[
\frac{\rho_{\sqrt2 R}(\mathbb Z^n)}{\rho_R(\mathbb Z^n)}
=
\frac{(\sqrt2 R)^n\rho_{1/(\sqrt2 R)}(\mathbb Z^n)}
{R^n\rho_{1/R}(\mathbb Z^n)}
=(\sqrt2)^n
\frac{\rho_{1/(\sqrt2 R)}(\mathbb Z^n)}{\rho_{1/R}(\mathbb Z^n)}.
\]
The remaining ratio is at most one, because $1/(\sqrt2 R)<1/R$ and
$\rho_s(\mathbb Z^n)$ is increasing in $s$ term by term. Consequently
\[
\frac{\rho_{\sqrt2 R}(\mathbb Z^n)}{\rho_R(\mathbb Z^n)}
\le (\sqrt2)^n.
\]
Taking
$u:=Rt$ gives
\[
\mu_i\big(\{x\in\mathbb Z^n:\|x\|_2>Rt\}\big)
\le \alpha_i^{-1}(\sqrt2)^n\exp\!\left(-\frac{\pi t^2}{2}\right).
\]
In particular,
\[
\mathbb P(\Omega_i^c)=\mu_i\big(\{x\in\mathbb Z^n:\|x\|_2>RT_i\}\big)
\le \alpha_i^{-1}(\sqrt2)^n\exp\!\left(-\frac{\pi T_i^2}{2}\right)
= \exp\!\left(-\frac{\pi L^2}{2}\right).
\]
Hence
\begin{equation}
\label{eq:centered-tail-bad-cut-general}
\mathbb P\Big(\bigcup_{i=1}^M \Omega_i^c\Big)
\le M\exp\!\left(-\frac{\pi L^2}{2}\right).
\end{equation}
Now define
\[
F(x_1,\dots,x_M):=\left\|\sum_{i=1}^M x_i-c\right\|_2.
\]
If one changes a single coordinate $x_i$ to $x_i'$, with $\|x_i\|_2,\|x_i'\|_2\le RT_i$, then
\[
|F(x_1,\dots,x_i,\dots,x_M)-F(x_1,\dots,x_i',\dots,x_M)|
\le \|x_i-x_i'\|_2\le 2RT_i.
\]
By the geometric-mean hypothesis,
\[
\frac{1}{M}\sum_{i=1}^M T_i^2
 =L^2+\frac{2n}{\pi}\ln\sqrt2+\frac{2}{\pi M}\sum_{i=1}^M \ln\frac{1}{\alpha_i}
\le L^2+\frac{2}{\pi}\ln\frac{(\sqrt2)^n}{\alpha}=T^2.
\]
Applying \cref{fact:mcdiarmid-general} gives
\[
\mathbb P\Big(F(X_1^{\sharp},\dots,X_M^{\sharp})-\mathbb E F(X_1^{\sharp},\dots,X_M^{\sharp})\ge t\Big)
\le \exp\!\left(-\frac{t^2}{2MR^2T^2}\right).
\]
By Jensen's inequality,
\[
\mathbb E F(X_1^{\sharp},\dots,X_M^{\sharp})
\le \Big(\mathbb E\big\|\sum_{i=1}^M (X_i^{\sharp}-\mathbb E X_i^{\sharp})\big\|_2^2\Big)^{1/2}
\le \Big(\sum_{i=1}^M \mathbb E\|X_i^{\sharp}-\mathbb E X_i^{\sharp}\|_2^2\Big)^{1/2}
\le RT\sqrt M,
\]
because
\[
\sum_{i=1}^M \mathbb E\|X_i^{\sharp}-\mathbb E X_i^{\sharp}\|_2^2
\le \sum_{i=1}^M \mathbb E\|X_i^{\sharp}\|_2^2
\le R^2\sum_{i=1}^M T_i^2
\le MR^2T^2.
\]
Choosing
$t:=RT\sqrt{2Ms}$, we obtain
\begin{equation}
\label{eq:centered-tail-truncated-sum-general}
\mathbb P\Big(\Big\|\sum_{i=1}^M X_i^{\sharp}-c\Big\|_2>RT\sqrt M\,(1+\sqrt{2s})\Big)
\le e^{-s}.
\end{equation}
On the event $\bigcap_{i=1}^M \Omega_i$, one has $X_i^{\sharp}=X_i$ for every $i$, and hence
\[
\left\|\sum_{i=1}^M X_i-c\right\|_2
= \left\|\sum_{i=1}^M X_i^{\sharp}-c\right\|_2.
\]
Combining this with the previous two displayed estimates and taking $s=L^2/2$, we obtain
\[
\nu\Big(\Big\{y\in\mathbb Z^n:\|y-c\|_2>RT\sqrt M\,(1+L)\Big\}\Big)
\le \exp\!\left(-\frac{L^2}{2}\right)+M\exp\!\left(-\frac{\pi L^2}{2}\right).
\]
Also,
\[
T^2
=L^2+\frac{2n}{\pi}\ln\sqrt2+\frac{2}{\pi}\ln\frac{1}{\alpha}
\le L^2+n+S,
\]
where we used $\frac{2}{\pi}\ln\sqrt2<1$ and
$\frac{2}{\pi}\ln(1/\alpha)\le S$.
Therefore
\[
RT\sqrt M\,(1+L)
\le R\sqrt M\,(1+L)\sqrt{n+L^2+S}.
\]
Since $L\ge 1$, we also have $1+L\le 2L$, and hence
\[
R\sqrt M\,(1+L)\sqrt{n+L^2+S}
\le 2R\sqrt M\,L\sqrt{n+L^2+S}.
\]
Hence
\[
\nu\big(\{y\in\mathbb Z^n:\|y-c\|_2>2R\sqrt M\,L\sqrt{n+L^2+S}\}\big)
\le \exp\!\left(-\frac{L^2}{2}\right)+M\exp\!\left(-\frac{\pi L^2}{2}\right)
\le 2M\exp\!\left(-\frac{L^2}{2}\right).
\]
\end{proof}

\subsection{Linear sketches from low total variation}

In this subsection, we combine the line-by-line Fourier estimate from the previous subsection with the structural descriptions from Section~\ref{sec:coarse-large-spectrum-general} to obtain two route-specific translation-invariance corollaries. In both cases the conclusion has the same form: if an integer vector $v$ annihilates the relevant structured frequencies, then translating by $v$ changes the conditioned convolution law by at most $R^{-1/50}$ in total variation.

The difference between the two routes lies only in the structured object that must be annihilated. In the exact route, this object is the finite lattice-like frequency set coming from the coarse large-spectrum theorem. In the mollified route, after one additional Gaussian convolution, the remaining large spectrum is confined near the origin and is controlled by a low-dimensional bounded-denominator subspace. These are the two total-variation statements that will later feed into the fiberwise decoding argument of Section~\ref{sec:randomized-transfer}.

We begin with the exact-route corollary.
\begin{corollary}
\label{cor:translation-invariance-from-convolution-spectrum-general}
There exists an absolute constant $R_0$ such that the following holds for every integer $R\ge R_0$. Set
$
M:=R^2.
$
Assume
$
R\ge n^{50}$ and $
S:=\log(2/\alpha)\le n\log R.
$
Let $\alpha_1,\dots,\alpha_M\in(0,1]$ satisfy $\prod_{i=1}^M \alpha_i\ge \alpha^M$, and let $\mu_1,\dots,\mu_M$ be probability measures on $\mathbb Z^n$ such that each $\mu_i$ is an $\alpha_i$-dense piece of $\gamma_R$. Write
$
\nu:=\mu_1*\cdots *\mu_M.
$
Then there exist frequencies $t_1,\dots,t_m\in\mathbb T^n$ and integers $k_1,\dots,k_m\ge 1$ with $m\leq 14S $ such that
$
\prod_{j=1}^m k_j
\le \left(1000\left(1+\frac{S}{\log R}\right)\right)^{672S},
$
for every $j\in[m]$ there exist integers $0\le c_i<k_i$ for $i<j$ such that
\[
k_j t_j\equiv \sum_{i=1}^{j-1} c_i t_i \pmod{\mathbb Z^n},
\]
and every $v\in\mathbb Z^n\setminus\{0\}$ satisfying
$
\langle v,t_j\rangle\in \mathbb Z$
for every $j\in[m]$ and
$
\|v\|_2\le R^{1/3}
$
obeys
\[
\dTV(\nu,\tau_v\nu)\le R^{-1/50}.
\]
\end{corollary}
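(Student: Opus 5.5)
We plan to combine \cref{cor:coarse-dissociated-convolution-general}, \cref{prop:ball-reduction-tv-general} and \cref{prop:convolution-tail-dense-gaussian-general}, with parameters chosen as powers of $R$. First fix $K:=R^{c_1}$ for a small absolute constant $c_1$ (say $c_1=1/10$), $Q:=RK\sqrt n\le R^{2}$, and choose the chain base $q$ so that the propagated rounding factor $\lambda_q^{56S}=(1+1/(q-2))^{56S}$ is $O(1)$. We take $q:=\lceil 1000(1+S/\log R)\rceil$, or $q\approx 56S$ in the relevant regime. The hypothesis $S\le n\log R$ and $R\ge n^{50}$ guarantee $q\le K/4$ (possibly after adjusting constants in the choice of $c_1$). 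Then \cref{cor:coarse-dissociated-convolution-general} supplies $t_1,\dots,t_m$, $m\le 14S$, satisfying the exact identities. Since $\log_{K/4}Q=O(1)$, the product bound reads $q^{224S(1+O(1))}\le q^{672S}$, which matches the claim. It also gives the spectral localization: every $\zeta$ with $|\widehat\nu(\zeta)|\ge \eta:=e^{-M/K}$ lies within $\Delta:=5e\sqrt{2S}/R\le R^{-1}\cdot O(\sqrt{n\log R})$ of the finite group $W$ generated by the $t_i$ modulo $\mathbb Z^n$.

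Next, take $v$ with $\langle v,t_j\rangle\in\mathbb Z$ for all $j$. Then $\langle v,\xi\rangle\in\mathbb Z$ for every $\xi\in W$. This is exactly the condition used in the proof of \cref{prop:spectral-energy-orthogonal-general}: there only integrality mod $1$ was used, so we read ``$=0$'' there as equality in $\mathbb R/\mathbb Z$. Also $\|v\|_2\le R^{1/3}\le 1/(2\Delta)$ because $\Delta\le R^{-1+o(1)}$. Apply \cref{prop:ball-reduction-tv-general} with the center $c$ from \cref{prop:convolution-tail-dense-gaussian-general}, with $L:=R^{c_2}$ for a tiny $c_2$ (say $L=\sqrt{\log R}\cdot 10$ suffices as well), and with $H:=2R\sqrt M L\sqrt{n+L^2+S}+R^{1/3}$, so $H\le R^{2+o(1)}$ using $n,S\le R^{o(1)}$-ish bounds ($S\le n\log R\le R^{1/50}\log R$).

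The three terms are then estimated as follows. The main term $\pi\sqrt{2H}\|v\|_2\Delta^{3/2}$ is at most $R^{1+o(1)}\cdot R^{1/3}\cdot R^{-3/2+o(1)}=R^{-1/6+o(1)}$. For the tail, $(4H)^{(n+1)/2}\eta\le R^{O(n)}\exp(-R^{2-c_1})$, which is negligible since $n\le R^{1/50}$. The outer mass is at most $2M e^{-L^2/2}=2R^2e^{-L^2/2}$, which is tiny once $L^2\ge 10\log R$. Summing, we get $\dTV(\nu,\tau_v\nu)\le R^{-1/50}$ for $R\ge R_0$. Taking $L=\Theta(\sqrt{\log R})$ keeps $H=R^{2+o(1)}$, which is what the main term needs.

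The main obstacle is the parameter bookkeeping around $q$. We need $\lambda_q^{56S}=O(1)$, which forces $q\gtrsim S$, and simultaneously $q\le K/4$ and a product bound of the stated form $(1000(1+S/\log R))^{672S}$. If $q\sim S$ is too large relative to $R^{c_1}$, the plan is to use $S\le n\log R\le R^{1/50}\log R$ and pick $c_1$ between $1/50$ and the slack left by $\Delta^{3/2}\sqrt H\|v\|_2$. If the stated bound genuinely needs $q\approx 1000(1+S/\log R)$ rather than $\approx S$, we would instead accept $\lambda_q^{56S}\le e^{56S/(q-2)}\le e^{O(\log R)}=R^{O(1)/1000}$ as a small power of $R$ absorbed into $\Delta$. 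This loses only an $R^{\epsilon}$ factor, still within the $R^{-1/6}$ versus $R^{-1/50}$ margin.
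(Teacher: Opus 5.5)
Your route is the paper's: \cref{cor:coarse-dissociated-convolution-general}, then \cref{prop:ball-reduction-tv-general} with the center and tail from \cref{prop:convolution-tail-dense-gaussian-general}. However, the choice $K=R^{c_1}$ with $c_1<1$ breaks the denominator-product bound the statement asks for. \Cref{cor:coarse-dissociated-convolution-general} requires $Q\ge RK\sqrt n$, so $\log_{K/4}Q\ge \ln(RK\sqrt n)/\ln(K/4)>1+1/c_1$. For $c_1=1/10$ this exceeds $11$, and what you actually get is $\prod_j k_j\le q^{224S(1+\log_{K/4}Q)}$ with exponent above $2688S$, not $q^{672S}$. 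Reaching $\log_{K/4}Q\le 2$ forces $(K/4)^2\ge RK\sqrt n$, i.e.\ $K\ge 16R\sqrt n$, so no tuning of $c_1$ in the small range you propose can rescue it. Nothing in the argument wants $K$ small. The paper takes $K=R^{3/2}$ and $Q=R^{11/4}$, which gives $\log_{K/4}Q\le 2$ and $q\le K/4$. The tail factor $\eta=e^{-M/K}=e^{-\sqrt R}$ still swamps $(8H)^{(n+1)/2}\le e^{O(R^{1/50}\log R)}$.

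The second problem is the accounting for $q$ and the main term. The option $q\approx 56S$ is incompatible with the stated base $1000(1+S/\log R)$ once $S\gg\log R$, so you are forced onto your fallback. Take $q=\lceil 1000S/\log R\rceil+2$ as the paper does; your $\lceil 1000(1+S/\log R)\rceil$ can overshoot the base by one. Then $\lambda_q^{56S}\le e^{56S/(q-2)}\le R^{56/(1000\ln 2)}\le R^{1/12}$, so $\Delta\le 5\sqrt{2S}\,R^{-11/12}$, not $O(\sqrt{S})/R$. Also, $n$ and $S$ are not $R^{o(1)}$; they can be as large as $R^{1/50+o(1)}$. So $H\le R^{2+1/100+o(1)}$ rather than $R^{2+o(1)}$, and $S^{3/4}\le R^{3/200+o(1)}$. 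The main term is then $\pi\sqrt{2H}\,\|v\|_2\Delta^{3/2}\le R^{1+1/200}\cdot R^{1/3}\cdot R^{3/200}\cdot R^{-11/8}\cdot R^{o(1)}=R^{-13/600+o(1)}$. This does beat $R^{-1/50}=R^{-12/600}$, so the fallback goes through, but only by a factor $R^{-1/600}$, not with an ``$R^{-1/6}$ versus $R^{-1/50}$'' margin. This is why $L$ must stay polylogarithmic, as in your $L=\Theta(\sqrt{\log R})$ or the paper's $L=\ln(2nM)$: any fixed power of $R$ in $L$ beyond a minuscule one would eat the whole margin through $\sqrt H$.
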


\begin{proof}
Set $K:=R^{3/2}$ and $Q:=R^{11/4}$, and define
\[
H:=2R\sqrt M\,\ln(2nM)\sqrt{n+\ln^2(2nM)+S}.
\]
Since $R\ge n^{50}$ and $S\le n\log R$, we have $n,S\le R^{1/50+o(1)}$ and $\ln(2nM)=R^{o(1)}$; as $M=R^2$, this yields
$
H\le R^{2+1/100+o(1)}.
$
In particular, for all large $R$,
$
Q\ge RK\sqrt{n}, 
\log_{K/4}Q\le 2.
$
Set
\[
q:=\left\lceil \frac{1000S}{\log R}\right\rceil+2,
\qquad
\lambda_q:=\frac{q-1}{q-2}.
\]
Since $S\le n\log R$ and $R\ge n^{50}$, we have $S/\log R\le n\le R^{1/50}$, and therefore for all large $R$,
\[
q\le \frac{1000S}{\log R}+3\le 1000+\frac{1000S}{\log R}\le 2000R^{1/50}\le K/4.
\]
Apply \cref{cor:coarse-dissociated-convolution-general} with these choices of $K$, $Q$, and $q$ to obtain frequencies
$
t_1,\dots,t_m\in\mathbb T^n
$ and integers $k_1,\dots,k_m\ge 1$ with $m\leq 14 S$ such that
$
\prod_{j=1}^m k_j\le q^{224S(1+\log_{K/4}Q)}\le q^{672S}
\le \left(1000\left(1+\frac{S}{\log R}\right)\right)^{672S}.
$
For every $j\in[m]$ there exist integers $0\le c_i<k_i$ for $i<j$ such that
\[
k_j t_j\equiv \sum_{i=1}^{j-1} c_i t_i \pmod{\mathbb Z^n},
\]
and every $\zeta\in\mathbb T^n$ with $|\widehat\nu(\zeta)|\ge e^{-M/K}=e^{-\sqrt{R}}$ lies within
$
\Delta:=\frac{5\lambda_q^{56S}\sqrt{2S}}{R}
$
of a combination $\sum_{i=1}^m c_i t_i$ with $0\le c_i<k_i$.
Since
\[
\lambda_q^{56S}
=\left(1+\frac{1}{q-2}\right)^{56S}
\le \exp\!\left(\frac{56\log R}{1000}\right)
\le \exp\!\left(\frac{56\ln R}{1000\ln 2}\right)
\le R^{1/12},
\]
we also have
\[
\Delta\le \frac{5\sqrt{2S}}{R^{11/12}}.
\]
Also, \cref{prop:convolution-tail-dense-gaussian-general} yields a center $c\in\mathbb R^n$ such that
\[
\nu\big(\{y\in\mathbb Z^n:\|y-c\|_2> H \}\big)
\le 2M\exp\!\left(-\frac{\ln^2(2nM)}{2}\right).
\]
Let
\[
W:=\left\{\sum_{i=1}^m c_i t_i:\ 0\le c_i<k_i\right\}\subseteq \mathbb T^n,
\qquad
\eta:=e^{-M/K}.
\]
By the construction above, every $\zeta\in\mathbb T^n$ with $|\widehat\nu(\zeta)|\ge \eta$ satisfies
\[
\operatorname{dist}_{\mathbb T^n}(\zeta,W)\le \Delta.
\]
Moreover, since $\langle v,t_j\rangle\in \mathbb Z$ for every $j$, every $\xi\in W$ satisfies $\langle v,\xi\rangle=0$ in $\mathbb R/\mathbb Z$.
Also $R^{1/3}\Delta=o(1)$ because $S\le n\log R$ and $R\ge n^{50}$, so for all large $R$,
$
\|v\|_2\le R^{1/3}\le \frac{1}{2\Delta}.
$
Moreover $2H\ge \|v\|_2$ for all large $R$.
Thus \cref{prop:ball-reduction-tv-general} applies with radius $2H$ and gives
\[
\dTV(\nu,\tau_v\nu)
\le \pi\sqrt{4H}\,R^{1/3}\Delta^{3/2}
+(8H)^{(n+1)/2}e^{-M/K}
+\nu\big(\{y\in\mathbb Z^n:\|y-c\|_2>H\}\big).
\]
Now each term on the right is $o(1)$: the tail term because $M=R^2$ and $\ln(2nM)=2\ln R+O(\ln n)$, the middle term because
\[
(8H)^{(n+1)/2}e^{-M/K}\le (8H)^{(n+1)/2}e^{-R^{1/2}}=o(1),
\]
and the first term because $S\le n\log R$, $n\le R^{1/50}$, and $H\le R^{2+1/100+o(1)}$, so
\[
\pi\sqrt{4H}\,R^{1/3}\Delta^{3/2}
\ll R^{1+1/200+o(1)}\cdot R^{1/3}\cdot \frac{(n\log R)^{3/4}}{R^{11/8}}
\le R^{-13/600+o(1)}.
\]
Hence the right-hand side is $R^{-13/600+o(1)}$, and after increasing $R_0$ if necessary we obtain
\[
\dTV(\nu,\tau_v\nu)\le R^{-1/50}.
\]
\end{proof}

We next record the mollified-route analog. After one more convolution with $\gamma_R$, the extra smoothing suppresses large Fourier mass away from the origin, so it suffices to annihilate a low-dimensional bounded-denominator subspace near the origin; this again forces the translated law to remain $R^{-1/50}$-close in total variation.
\begin{corollary}
\label{cor:translation-invariance-smoothed-near-origin-general}
There exists an absolute constant $R_0$ such that the following holds for every integer $R\ge R_0$. Set
$
M:=R^2$ and 
$Q:=R $. Assume
$
R \geq n^{20},
S:=\log(2/\alpha)\le n\log R.
$
Let $\alpha_1,\dots,\alpha_M\in(0,1]$ satisfy $\prod_{i=1}^M \alpha_i\ge \alpha^M$. Let $\mu_1,\dots,\mu_M$ be probability measures on $\mathbb Z^n$ such that each $\mu_i$ is an $\alpha_i$-dense piece of $\gamma_R$, and write
$
\nu:=\mu_1*\cdots *\mu_M,
\qquad
\nu_0:=\nu*\gamma_R.
$
Then there exist frequencies
\[
\zeta_1,\dots,\zeta_\ell\in Q^{-1}\mathbb Z^n\cap[-1/2,1/2)^n,
\qquad
\ell\le \frac{16S}{\log R},
\]
such that every $v\in\mathbb Z^n\setminus\{0\}$ satisfying
\[
\langle v,\zeta_j\rangle\in \mathbb Z
\qquad (1\le j\le \ell),
\qquad
\|v\|_2\le R^{5/4}
\]
obeys
\[
\dTV(\nu_0,\tau_v\nu_0)\le R^{-1/50}.
\]
\end{corollary}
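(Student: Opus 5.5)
We follow the template of \cref{cor:translation-invariance-from-convolution-spectrum-general}, but replace the coarse structure theorem with the near-origin one, \cref{cor:large-spectrum-smallnorm-convolution-general}. The extra factor $\widehat{\gamma_R}$ confines the large spectrum of $\nu_0$ near the origin. Since $\widehat{\nu_0}=\widehat\nu\cdot\widehat{\gamma_R}$ and $|\widehat{\gamma_R}(\zeta)|\le\exp(-R^2\|\zeta\|^2_{\mathbb T^n}/5)$ by \cref{lem:dg-fourier-decay-general}, set $\eta:=e^{-M/K}$ with $K:=R^{3/2}$, so that $\eta=e^{-\sqrt R}$. Take $\kappa:=R^{-3/4}$ (a placeholder to be tuned). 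Any $\zeta$ with $|\widehat{\nu_0}(\zeta)|\ge\eta$ then satisfies two things: $|\widehat\nu(\zeta)|\ge\eta$, and $R^2\|\zeta\|^2/5\le\sqrt R$. The second gives $\|\zeta\|_{\mathbb T^n}\le\sqrt5 R^{-3/4}\le\kappa'$ for a slightly enlarged $\kappa'$.

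We then apply \cref{cor:large-spectrum-smallnorm-convolution-general} with $K=R^{3/2}$, $\kappa'\approx 3R^{-3/4}$, and $B:=R^{1/4}$ (or any small power). The dimension bound becomes $\ell\le 4S/\log_2(2B)\le 16S/\log R$, as required. The hypotheses to verify are as follows. First, $3\sqrt{2S}/R<\kappa'$, which holds since $S\le n\log R\le R^{1/20}\log R$. Second, $\kappa'\le\sqrt{8\pi/(K\ln 8n)}\approx R^{-3/4}/\sqrt{\ln n}$; this is tight, so we will pick $K$ slightly smaller, e.g.\ $K=R^{3/2}/\log^2 R$, and keep $\eta=e^{-M/K}$ tiny. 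Third, $Q=R\ge\sqrt{nK}/(50BS)$, which holds since $\sqrt{nK}\le R^{3/4+1/40}$. The output radius is
\[
\Delta:=2\rho=3200BS^{3/2}\kappa'/\sqrt K\approx R^{1/4}\,S^{3/2}\,R^{-3/4}R^{-3/4}=R^{-5/4}S^{3/2}.
\]
The subspace $W:=\operatorname{span}_{\mathbb R}\{\zeta_j\}$ does not fit the hypothesis of \cref{prop:spectral-energy-orthogonal-general} as stated, because the proposition needs $\langle v,\xi\rangle\in\mathbb Z$ for all $\xi\in W$. Integrality at the grid points $\zeta_j$ does not directly give this on the whole real span. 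So we restrict to the part of $W$ actually used: points $w=\sum c_j\zeta_j$ with $\|w\|\le 2\kappa'$. For such points $\langle v,w\rangle=\sum_j c_j\langle v,\zeta_j\rangle$. The issue is that $\langle v,\zeta_j\rangle$ is an integer but $c_j$ is real. Since $\|v\|_2\|w\|_2\le R^{5/4}\cdot R^{-3/4}<1/2$, the real number $\langle v,w\rangle$ (computed in $\mathbb R^n$ using representatives) is small. This suggests interpreting the hypothesis as forcing $\langle v,\zeta_j\rangle=0$ exactly for the small representatives, which is what we need.

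\textbf{Main obstacle.} The step I expect to be hardest is making this integrality-versus-orthogonality point honest. If the $\zeta_j$ are rounded orthonormal vectors, they are not small, and $\langle v,\zeta_j\rangle\in\mathbb Z$ need not give $\langle v,w\rangle\approx0$. My first attempt is to rerun the proof of \cref{prop:spectral-energy-orthogonal-general} directly. There one only needs $\|\langle\omega+\zeta_t,v\rangle\|_{\mathbb R/\mathbb Z}\le \|v\|_2\operatorname{dist}(\omega+\zeta_t,W')+\|\langle \xi,v\rangle\|_{\mathbb R/\mathbb Z}$ for the nearby $\xi\in W'$. One then replaces the hypothesis by real orthogonality $\langle v,\zeta_j\rangle=0$ over $\mathbb R$ for the representatives in $[-1/2,1/2)^n$, scaling the sketch rows $Q\zeta_j\in\mathbb Z^n$ accordingly. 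If this reading is intended, the proof is clean. Otherwise we would choose the $\zeta_j$ as rounded small multiples of a basis, so that integrality plus smallness forces the inner product to be $0$.

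Given this, \cref{prop:ball-reduction-tv-general} finishes the proof. We use the center and radius $H\approx R^{2+o(1)}\sqrt n$ from \cref{prop:convolution-tail-dense-gaussian-general}, applied to $\nu$ with one extra $\gamma_R$ factor treated as an $(M+1)$-st dense piece with $\alpha_{M+1}=1$. The first term is
\[
\sqrt H\,\|v\|_2\Delta^{3/2}\lesssim R^{1+1/40}\cdot R^{5/4}\cdot R^{-15/8}S^{9/4}=R^{0.4+o(1)}.
\]
This bound is not good enough, so the parameters must be rebalanced. We would take $\kappa'$ smaller (around $R^{-1+\epsilon}$, raising $K$ toward $R^{2-2\epsilon}$ with $M/K\ge R^{\Omega(1)}$) and $B=R^{c}$ with small $c$. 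The goal is to make $\Delta\lesssim R^{-2-1/20}$; the dimension bound still comes out as $O(S/\log R)$, and the constant $16$ must be rechecked against the choice of $c$. The middle term $(4H)^{(n+1)/2}\eta$ and the tail term are negligible exactly as in the exact route.
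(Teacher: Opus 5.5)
You have the paper's architecture: localize the spectrum of $\nu_0$ with \cref{lem:dg-fourier-decay-general}, apply \cref{cor:large-spectrum-smallnorm-convolution-general} with $B=R^{1/4}$, and close with \cref{prop:ball-reduction-tv-general} plus \cref{prop:convolution-tail-dense-gaussian-general} on $M+1$ pieces. But the one thing this corollary has to do is left undone: choosing parameters so that the first term of \cref{prop:ball-reduction-tv-general} is at most $R^{-1/50}$. The rebalancing you sketch would not get there, for three reasons.

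\emph{First, the threshold is coupled to $K$.} If you set $\eta=e^{-M/K}$ with the same $K$ you feed into \cref{cor:large-spectrum-smallnorm-convolution-general}, then the radius you can certify from \cref{lem:dg-fourier-decay-general} is $\kappa'=\sqrt{5M/K}/R=\sqrt{5/K}$. The hypothesis $\kappa'\le\sqrt{8\pi/(K\ln 8n)}$ then becomes $\ln(8n)\le 8\pi/5$, independent of $K$. Pairing $\kappa'\approx R^{-1+\epsilon}$ with $K\approx R^{2-2\epsilon}$, as in your rebalanced plan, is exactly this coupled regime. The missing idea is that $\eta$ only has to beat the middle term $(8H)^{(n+1)/2}\eta$, so it can be polynomially small in $H$. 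The paper takes $\eta:=H^{-2n}$, which confines the relevant spectrum of $\nu_0$ to radius $\kappa:=\sqrt{10nS\ln H}/R\le R^{-19/20+o(1)}$. It then chooses $K:=R^{9/5}$ separately, so that $e^{-M/K}=e^{-R^{1/5}}\le H^{-2n}$ and $\kappa^2K\ln(8n)=o(1)$.

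\emph{Second, $B=R^{c}$ with small $c$ is not available.} The bound $\ell\le 4S/\log_2(2B)\le 16S/\log R$ forces $c\ge 1/4$, and this is what makes the exponent budget tight.

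\emph{Third, your target $\Delta\lesssim R^{-2-1/20}$ is both unattainable and unnecessary.} The hypotheses of \cref{cor:large-spectrum-smallnorm-convolution-general} give $\sqrt K\lesssim 1/\kappa$ and $\kappa>3\sqrt{2S}/R$, hence $\Delta=2\rho\gtrsim BS^{3/2}\kappa^2\gtrsim R^{-2}$. On the other hand, with $\sqrt{4H}\le R^{1+1/80+o(1)}$ and $\|v\|_2\le R^{5/4}$, it suffices that $\Delta$ be somewhat below $R^{-3/2}$ (about $R^{-1.52}$). The paper's choices give $\rho\le R^{1/4+3/40-19/20-9/10+o(1)}=R^{-61/40+o(1)}$ and a first term of $R^{-1/40+o(1)}$, with very little slack.

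On the integrality issue your suspicion is right; in fact the statement as written is false. Every $\zeta_j\in R^{-1}\mathbb Z^n$ has $\langle Re_1,\zeta_j\rangle\in\mathbb Z$, and $\|Re_1\|_2=R\le R^{5/4}$. Now let every $\mu_i$ be $\gamma_R$ conditioned on $x_1=0$. Its density is $1/\rho_R(\mathbb Z)\approx 1/R$, so $S=\log R+O(1)$, admissible once $n\ge 2$. Then $\nu_0$ is a one-dimensional $\gamma_R$ in the first coordinate times an independent law on the rest, and $\dTV(\nu_0,\tau_{Re_1}\nu_0)$ is an absolute constant. The paper's proof invokes \cref{prop:spectral-energy-orthogonal-general} with $W$ the real span without checking its orthogonality hypothesis. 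That hypothesis holds only under exact orthogonality $\langle v,\zeta_j\rangle=0$ in $\mathbb R$, which is precisely what \cref{thm:randomized-mollified-transfer} supplies, since there $\mathcal A_\sigma v=0$ in $\mathbb Z^\ell$. So your first reading is the correct repair. Your fallback cannot work: deducing $\langle v,\zeta_j\rangle=0$ from integrality would need $\|\zeta_j\|_2<R^{-5/4}$, while nonzero points of $R^{-1}\mathbb Z^n$ have norm at least $R^{-1}$. Also, the inequality $R^{5/4}\cdot R^{-3/4}<1/2$ in your argument is false.
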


\begin{proof}
Set $B:=R^{1/4}$ and $K:=R^{9/5}$, and define
\[
H:=2R\sqrt{M+1}\,\ln(2nM)\sqrt{n+\ln^2(2nM)+S},
\qquad
\kappa:=\frac{\sqrt{10nS\ln H}}{R},
\qquad
\rho:=\frac{1600B S^{3/2}\kappa}{\sqrt K}.
\]
Since $R\ge n^{20}$ and $S\le n\log R$, we have $n,S\le R^{1/20+o(1)}$ and $\ln(2nM)=R^{o(1)}$; as $M=R^2$, this yields
\[
H\le R^{2+1/40+o(1)},
\qquad
\kappa\le R^{-19/20+o(1)},
\qquad
\rho\le R^{-61/40+o(1)}.
\]
In particular, for all large $R$,
\[
Q=R\ge \sqrt{nK},
\qquad
\kappa=\frac{\sqrt{10nS\ln H}}{R}>\frac{3\sqrt{2S}}{R},
\qquad
\kappa\le \sqrt{\frac{8\pi}{K\ln(8n)}},
\]
where the last bound uses $nS\ln H\ln(8n)\le R^{1/10+o(1)}=o(R^{1/5})$.
Thus \cref{cor:large-spectrum-smallnorm-convolution-general} applies and yields frequencies
\[
\zeta_1,\dots,\zeta_\ell\in Q^{-1}\mathbb Z^n\cap[-1/2,1/2)^n,\quad 
\ell\le \frac{4S}{\log(2B)}\le \frac{16S}{\log R}.
\]

Let $W:=\operatorname{span}_{\mathbb R}\{\zeta_1,\dots,\zeta_\ell\}$, $\Delta:=2\rho$, and $\eta:=H^{-2n}$. We will apply \cref{prop:ball-reduction-tv-general} to $\nu_0$ with subspace $W$, cutoff $\Delta$, threshold $\eta$, and radius $2H$. Thus it is enough to verify the tail bound, the spectral localization, and the inequality $\|v\|_2\le 1/(2\Delta)$.

For the tail bound, \cref{prop:convolution-tail-dense-gaussian-general} gives a center $c\in\mathbb R^n$ such that
\[
\nu_0\big(\{y\in\mathbb Z^n:\|y-c\|_2>H\}\big)
\le 2(M+1)\exp\!\left(-\frac{\ln^2(2nM)}{2}\right).
\]
Here we apply that proposition to $\mu_1,\dots,\mu_M,\gamma_R$ with density parameters $\alpha_1,\dots,\alpha_M,1$.

For the spectral localization, note that if $\|\xi\|_{\mathbb T^n}>\kappa$, then by \cref{lem:dg-fourier-decay-general},
\[
|\widehat{\gamma_R}(\xi)|\le \exp\!\left(-\frac{R^2\kappa^2}{5}\right)
\le \exp(-2nS\ln H)\le H^{-2n}.
\]
Hence $|\widehat{\nu_0}(\xi)|<H^{-2n}$ whenever $\|\xi\|_{\mathbb T^n}>\kappa$. So if $|\widehat{\nu_0}(\xi)|\ge H^{-2n}$, then $\|\xi\|_{\mathbb T^n}\le \kappa$ and also $|\widehat\nu(\xi)|\ge H^{-2n}$. Moreover $2nK\ln H=o(M)$, since $n\le R^{1/20}$, $K=R^{9/5}$, and $\ln H=O(\ln R)$; thus $H^{-2n}\ge e^{-M/K}$ for all large $R$. Applying \cref{cor:large-spectrum-smallnorm-convolution-general} therefore gives
\[
\operatorname{dist}_{\mathbb T^n}(\xi,W)\le 2\rho=\Delta.
\]
Finally, $R^{5/4}\rho=o(1)$ because $\rho\le R^{-61/40+o(1)}$, so for all large $R$,
$
\|v\|_2\le R^{5/4}\le \frac{1}{4\rho}=\frac{1}{2\Delta}.
$
Also $2H\ge \|v\|_2$ for all large $R$.
This verifies the hypotheses of \cref{prop:ball-reduction-tv-general} for $\nu_0$. Applying that proposition with radius $2H$ now gives
\[
\dTV(\nu_0,\tau_v\nu_0)
\le \pi\sqrt{4H}\,R^{5/4}(2\rho)^{3/2}
+(8H)^{(n+1)/2}H^{-2n}
+\nu_0\big(\{y\in\mathbb Z^n:\|y-c\|_2>H\}\big).
\]
Now each term on the right is $R^{-1 / 40 + o(1)}$: the tail term because $M=R^2$ and $\ln(2nM)=2\ln R+O(\ln n)$, the middle term because
\[
(8H)^{(n+1)/2}H^{-2n}\le H^{-3/2}= O(R^{-3 / 2}),
\]
and the first term because $H\le R^{2+1/40+o(1)}$ and $\rho\le R^{-61/40+o(1)}$, so
\[
\pi\sqrt{4H}\,R^{5/4}(2\rho)^{3/2}
\ll R^{1+1/80+o(1)}R^{5/4}R^{-183/80+o(1)}
=R^{-1/40+o(1)}.
\]
Hence after increasing $R_0$ if necessary we obtain
\[
\dTV(\nu_0,\tau_v\nu_0)\le R^{-1/50}.
\]
\end{proof}

\paragraph{Takeaway.} The takeaway of this section is that annihilating the structured frequencies forces small total variation under translation. In the exact route this means annihilating the finite structured frequency set coming from the coarse large-spectrum theorem, while in the mollified route it means annihilating the low-dimensional near-origin subspace after one extra Gaussian convolution. This small-total-variation conclusion is the only input needed for the fiberwise decoding argument in Section~\ref{sec:randomized-transfer}.

%!TEX root = main.tex

\section{Transforming into linear sketches under fixed distributions}\label{sec:randomized-transfer}
This section proves two transfer theorems transforming randomized turnstile streaming algorithms into linear sketches under a fixed target distribution $\mathcal I$. The first is the exact-route transfer theorem, which applies without any smoothness hypothesis. The second is the mollified-route transfer theorem, which assumes average smoothness with respect to $\mathcal I$ and yields a smaller bounded-entry sketch.

From this point on, the structural results of Sections~\ref{sec:coarse-large-spectrum-general} and~\ref{sec:translation-invariance-general} are used only through two black-box consequences. The exact route below invokes \cref{cor:translation-invariance-from-convolution-spectrum-general}, which gives translation invariance modulo a finite structured set of frequencies and leads to the exact-route transfer theorem. The mollified route later in this section invokes \cref{cor:translation-invariance-smoothed-near-origin-general}, where the extra Gaussian smoothing lets one work with a lower-dimensional bounded-entry sketch and leads to the mollified-route transfer theorem.

Throughout this section, we repeatedly pass from a sampled block delta to an actual unit update turnstile block. To avoid carrying around irrelevant ordering choices, we fix once and for all a canonical realization for each net delta.

% Keep this section-level definition numbered as Definition 6.1 rather than 6.0.1.
{\renewcommand{\thetheorem}{\thesection.\arabic{theorem}}
\begin{definition}[Canonical realization of a delta]\label{def:canonical-realization}
For each $v=(v_1,\ldots,v_n)\in\Z^n$, let $\mathsf{can}(v)$ denote the fixed unit update turnstile block obtained by, for each coordinate $i=1,\ldots,n$, performing $|v_i|$ updates on coordinate $i$, all with sign $\mathrm{sgn}(v_i)$. Thus $\mathsf{can}(v)$ is a deterministic unit update realization of the net delta $v$.
\end{definition}
}

With this convention, the exact and mollified stream distributions below are specified entirely by their sampled block deltas: once those deltas are chosen, the corresponding stream is obtained by concatenating the associated canonical blocks.

\subsection{Exact-route transfer theorem}\label{subsec:exact-transfer}

\begin{definition}[Exact stream distribution]
Fix integers $R\ge n^{50}$ and $D\ge 1$, set $M:=R^2$, and let $\gamma_R^{\mathrm{tr}}$ be the truncation of $\gamma_R$ to the Euclidean ball of radius $\Omega(R(\sqrt n+\ln R))$. By \cref{eq:gamma-tail-bound-general}, the discarded tail has probability at most $R^{-100}$. Let $X_1,\dots,X_M$ be independent samples from $\gamma_R^{\mathrm{tr}}$, and let $\mathcal I$ be a target distribution on $\Z^n$ whose support has Euclidean diameter at most $D$. Sample $Y\sim\mathcal I$ and set
\[
a:=Y-(X_1+\cdots+X_M).
\]
The \emph{exact stream distribution of radius $R$ and target diameter $D$ associated with $\mathcal I$} is the distribution on unit update turnstile streams obtained by concatenating the canonical blocks of the $M$ prefix deltas and the last block $a$ in order:
\[
\mathsf{can}(X_1),\ldots,\mathsf{can}(X_M),\mathsf{can}(a),
\]
followed by the final query.
\end{definition}
\begin{remark}\label{rmk:exact-stream-distribution}
    Note that the final frequency vector at the moment of the query is exactly $Y$. Also, every sample from the exact stream distribution has at most $O(R^3(\sqrt{n} + \ln R))$ unit updates.
\end{remark}

We now give the exact version of our main theorem. The proof is done by three steps, first extracting a path with a high success probability and reasonable size, then applying \cref{cor:translation-invariance-from-convolution-spectrum-general} to get a linear sketch that guarantees small TV distance under translations, and finally using this small TV distance property to provide enough information of the possible outputs on each possible result of the sketch to ensure its correctness. 
\begin{theorem}
\label{thm:randomized-exact-transfer}
Consider the exact stream distribution from the preceding definition, and assume that the support of $\mathcal I$ has Euclidean diameter at most $D\le R^{1/3}$, where $R = \Omega(n^{50})$.
Assume that there exists a randomized turnstile streaming algorithm using $S$ bits of space which solves a streaming problem on the exact stream distribution with success probability at least $1-\delta$. Then there exists a deterministic linear sketch of dimension $O(S)$, with image size at most $(2+ \frac{S}{\log R})^{O(S)}$, satisfying the corresponding conclusion:
\begin{enumerate}
\item For a metric approximation problem with a metric space $(\mathcal M,d)$, a target map $f:\Z^n\to \mathcal M$, an admissible output set $\mathcal O\subseteq \mathcal M$, and an accuracy parameter $\epsilon\ge 0$, there exists a linear sketch $\mathcal{A}$ and a decoder $g:\mathcal{A}(\Z^n)\to\mathcal O$ such that
\[
\Pr_{Y\sim\mathcal I}\bigl[d(g(\mathcal{A}(Y)),f(Y))\le 3\epsilon\bigr]\ge 1-8\delta;
\]
\item For promise problems with a map $f: \mathbb{Z}^n \to \{0,1,*\}$, there exists a linear sketch $\mathcal{A}$ and a decoder $g:\mathcal{A}(\Z^n)\to\{0,1\}$ that solve the same promise problem on the distribution $\mathcal I$ with success probability at least $1-8\delta$:
\[
\Pr_{Y\sim\mathcal I}\bigl[g(\mathcal{A}(Y))=f(Y)\bigr]\ge 1-8\delta.
\]
\item For an arbitrary streaming problem with output set $\mathcal O$ and requirement relation $\mathcal R\subseteq \Z^n\times\mathcal O$, if $ R\geq \delta^{-50}$, then there exist a linear sketch $\mathcal A$ and a randomized map $g$ from $\mathcal A(\Z^n)$ to $\mathcal O$, such that
\[
\Pr_{Y\sim\mathcal I,\, o\sim g(\mathcal A(Y))}\bigl[(Y,o)\in\mathcal R\bigr]\ge 1-3\delta.
\]
\end{enumerate}
\end{theorem}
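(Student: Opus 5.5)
Fix a deterministic algorithm $A_\rho$ from the public-coin family that succeeds with probability at least $1-\delta$ over the exact stream distribution; such a seed exists by averaging. Run it on the stream. Let $\sigma=(\sigma_1,\dots,\sigma_M)$ be the states after the blocks $\mathsf{can}(X_1),\dots,\mathsf{can}(X_M)$, with $\sigma$ independent of $Y$. Because the algorithm is read-once, conditioning on $\sigma$ leaves the $X_i$ independent with laws $\mu_i^\sigma=(X_i\mid\sigma_i,\sigma_{i-1})$. Each $\mu_i^\sigma$ is an $\alpha_i$-dense piece of $\gamma_R^{\mathrm{tr}}$ with $\alpha_i=\Pr[\sigma_i\mid\sigma_{i-1}]$, and hence, up to a factor $1+R^{-100}$, a dense piece of $\gamma_R$. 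Then $\prod_i\alpha_i=\Pr[\sigma]$. There are at most $2^{SM}$ sequences, so $\Pr[\sigma]\ge 2^{-2SM}$ except on a set of sequences of total probability $2^{-SM}$. This gives geometric-mean density $\alpha\ge 2^{-2S-O(1)}$, and the structural parameter $\log(2/\alpha)$ is $O(S)$. By Markov over $\sigma$, I will pick a good $\sigma$ whose conditional failure probability on $Y$ is at most $2\delta$ (for items 1--2, up to the constants $8\delta$). I may assume $S\le n\log R$, since otherwise the trivial sketch of $Y$ already suffices.

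With this $\sigma$ fixed, I apply \cref{cor:translation-invariance-from-convolution-spectrum-general} to $\nu=\mu_1*\cdots*\mu_M$. This gives $t_1,\dots,t_m$ with $m=O(S)$ and $\prod k_j\le(2+S/\log R)^{O(S)}$. The sketch is $\mathcal A(y)=(\langle t_j,y\rangle \bmod \mathbb Z)_j$, a homomorphism into $\mathbb T^m$. By the exact identities, its image lies in a group of size at most $\prod k_j$: each coordinate is determined modulo the previous ones up to $k_j$ choices. For $y_1,y_2$ in the support of $\mathcal I$ with the same sketch value, $v=y_1-y_2$ satisfies $\langle v,t_j\rangle\in\mathbb Z$ and $\|v\|_2\le D\le R^{1/3}$. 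Hence $\dTV(\nu,\tau_v\nu)\le R^{-1/50}$.

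Now the fiberwise argument. Conditioned on $\sigma$ and target $y$, the last block is $y-X$ with $X\sim\nu$. The output is a deterministic function $h$ of $(\sigma_M,y-X)$, so the output law $P_y:=h_*(\tau_y\widetilde\nu)$. I will ignore the truncation correction, which is absorbed in $R^{-100}$. Since $\dTV(\tau_{y_1}\widetilde\nu,\tau_{y_2}\widetilde\nu)=\dTV(\nu,\tau_v\nu)$, the output laws within a fiber are $R^{-1/50}$-close. Write $p(y)=\Pr_{o\sim P_y}[(y,o)\notin\mathcal R]$; good $\sigma$ gives $\mathbb E_{Y\sim\mathcal I}p(Y)\le 2\delta$.

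The three items then need different decoders, and the extra $\delta$ slack is used here.

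\begin{enumerate}
\item \textbf{Item 3.} Let $g(a)$ output a sample from $P_{y^*(a)}$ for a fixed representative $y^*(a)$ of the fiber $a$. Its failure on $Y$ is at most $p(Y)+R^{-1/50}\le p(Y)+\delta$, using $R\ge\delta^{-50}$. This gives $3\delta$.
\item \textbf{Item 1.} Call $y$ good if $p(y)\le 1/4$; by Markov, at most $8\delta$ of the mass is bad. In a fiber containing a good $y_0$, output any $z$ such that $P_{y_0}$ puts mass above $1/2$ on the $\epsilon$-ball around $z$. For another good $y$ in the fiber, $P_y$ puts mass at least $3/4-R^{-1/50}>1/2$ on $B(f(y),\epsilon)$. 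The two balls then intersect, so the triangle inequality gives error at most $3\epsilon$; taking $z$ itself to be a likely output of $P_{y_0}$ with $d(z,f(y_0))\le\epsilon$ gives the same bound.
\item \textbf{Item 2.} Output the majority bit among the good points. Two good points with distinct determined values would force $P_y$, $P_{y'}$ to put mass above $3/4-o(1)$ on opposite bits, contradicting the TV bound.
\end{enumerate}

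The main obstacle is the first step: making the conditioned laws honest dense pieces of $\gamma_R$ satisfying the geometric-mean hypothesis, despite truncation and the dependence of $\mu_i$ on both $\sigma_{i-1}$ and $\sigma_i$. I will handle this by noting $\mu_i^\sigma\le\alpha_i^{-1}\gamma_R^{\mathrm{tr}}\le 2\alpha_i^{-1}\gamma_R$, and by accounting for the low-probability $\sigma$ through a union bound alongside the Markov step.
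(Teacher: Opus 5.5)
Your proposal is correct and follows essentially the same route as the paper's proof. You fix a seed and pick a state sequence $\sigma$ with $\Pr[\sigma]\ge 2^{-O(SM)}$ and small conditional failure. You then apply \cref{cor:translation-invariance-from-convolution-spectrum-general} to the conditioned convolution to get the sketch $y\mapsto(\langle t_j,y\rangle)_j$, and decode fiberwise via the $R^{-1/50}$ total-variation bound.

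Two minor adjustments are needed. First, with your threshold $\Pr[\sigma]\ge 2^{-2SM}$ the density parameter becomes $2S+2$, so the trivial-sketch reduction must cover all $S>n\log R/2-1$, not just $S>n\log R$. Second, in item 1 a $z$ whose $\epsilon$-ball carries $P_{y_0}$-mass above $1/2$ need not exist; your fallback choice of $z$ with $d(z,f(y_0))\le\epsilon$ (the paper's $o_F$) is the one that works.
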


\begin{proof}
We write the proof for case \textup{(1)} in detail. Case \textup{(2)} follows from the same fiber-consistency argument, except that correctness is required only on the points with labels in $\{0,1\}$, while the $*$-points are unconstrained. Case \textup{(3)} uses the same sketch but keeps a randomized fiberwise decoder instead of forcing a single deterministic output on each fiber.

By averaging over the public seed, there exists a fixed seed $\rho^\star$ such that the deterministic streaming algorithm $A:=A_{\rho^\star}$ still succeeds with probability at least $1-\delta$ on the exact stream distribution. Since the public randomness is free in our model, fixing the seed does not change the $S$-bit space bound. Thus $A$ has at most $2^S$ memory states.

Let $\Sigma$ be the set of block-state sequences
$
\sigma=(\sigma_0,\sigma_1,\ldots,\sigma_M),
$
where $\sigma_0$ is the fixed initial memory state of $A$, and for each $i=1,\ldots,M$, the state $\sigma_i$ is the memory state immediately after processing the canonical block $\mathsf{can}(X_i)$. For each $\sigma\in\Sigma$ write
$
q_{\sigma}:=\Pr[A\text{ succeeds}\mid \sigma].
$
Since $\mathbb E_{\sigma}[q_\sigma]\ge 1-\delta$, the set
\[
\Sigma_*:=\left\{\sigma\in\Sigma:\ q_{\sigma}\ge 1-\frac{\delta}{1-2^{-M}}\right\}
\]
has total probability at least $2^{-M}$. Also $|\Sigma|\le 2^{SM}$, so some $\sigma\in\Sigma_*$ satisfies
$
\Pr[\sigma]\ge \frac{2^{-M}}{|\Sigma|}\ge 2^{-(S+1)M}.
$
Therefore the conditional failure profile
\[
\mathrm{Fail}_{\sigma}(y):=\Pr[A\text{ fails}\mid Y=y,\sigma]
\]
satisfies
$
\mathbb E_{Y\sim\mathcal I}\mathrm{Fail}_{\sigma}(Y)\le \frac{\delta}{1-2^{-M}}.
$
Hence, for
$
G_{\sigma}:=\{y\in\operatorname{supp}(\mathcal I):\ \mathrm{Fail}_{\sigma}(y)\le 1/4\},
$
Markov's inequality gives
$$
\mathcal I(G_{\sigma}) := \Pr_{Y\sim\mathcal I}[Y\in G_{\sigma}] \ge 1-\frac{4\delta}{1-2^{-M}}\ge 1-8\delta.
$$
Conditioning on the event that the intermediate state sequence is exactly $\sigma$ preserves independence of the prefix blocks, because once the incoming state $\sigma_{i-1}$ is fixed, the transition event at step $i$ depends only on $X_i$. Let $V_i^{\sigma}$ denote the conditional law of the $i$th prefix delta. Then the $V_i^{\sigma}$ are independent $\beta_i^{\sigma}$-pieces of $\gamma_R^{\mathrm{tr}}$, and
\[
\prod_{i=1}^M \beta_i^{\sigma}=\Pr[\sigma].
\]
Set $\beta_{\sigma}:=\Pr[\sigma]^{1/M}$, $S_{\sigma}:=\log(2/\beta_{\sigma})$, and $\mu_{\sigma}:=V_1^{\sigma}*\cdots *V_M^{\sigma}$, so that $S_{\sigma}\le S+2$, and let $X_{\sigma}$ denote a random variable with law $\mu_{\sigma}$.

At this point we verify the hypotheses needed to invoke \cref{cor:translation-invariance-from-convolution-spectrum-general}. First, $R = \Omega(n^{50})$. Second, we may assume $S\le n\log R / 2$, since recording down the entire input gives a trivial linear sketch with dimension $n$ and image size $2^{O(n\log R)}\le (2+S/\log R)^{O(S)}$, which is already enough. Since $\Pr[\sigma]\ge 2^{-(S+1)M}$, we have $\beta_{\sigma}\ge 2^{-(S+1)}$, and therefore for $\alpha:=\beta_{\sigma}/2$,
\[
\log(2/\alpha)=\log(4/\beta_{\sigma})\le S+3\le n\log R.
\]
Since $\gamma_R^{\mathrm{tr}}$ is a $(1/2)$-dense piece of $\gamma_R$ (in fact, a $(1-O(R^{-100}))$-dense piece), each $V_i^{\sigma}$ is a $(\beta_i^{\sigma}/2)$-dense piece of $\gamma_R$. Therefore \cref{cor:translation-invariance-from-convolution-spectrum-general} applies to $V_1^{\sigma},\dots,V_M^{\sigma}$ with density parameter $\beta_{\sigma}/2$, and we obtain frequencies $t_1,\dots,t_m\in\mathbb T^n$. We then define the sketch by
\[
\mathcal A_{\sigma}(y):=(\langle t_1,y\rangle,\dots,\langle t_m,y\rangle)\in \mathbb T^m.
\]
This is a group homomorphism $\Z^n\to\mathbb T^m$, hence a linear sketch. Moreover, tracking the proof through \cref{cor:coarse-dissociated-convolution-general} gives $m=O(S_{\sigma}+1)=O(S)$, and the sketch image is finite of size at most $(2+S/\log R)^{O(S)}$: indeed the recursive relations
\[
k_j t_j\equiv \sum_{i<j} c_i t_i \pmod{\mathbb Z^n}
\]
imply inductively that, once $\langle t_1,y\rangle,\dots,\langle t_{j-1},y\rangle$ are fixed, the coordinate $\langle t_j,y\rangle$ has at most $k_j$ possible values. Hence
\[
|\mathcal A_{\sigma}(\mathbb Z^n)|\le \prod_{j=1}^m k_j
\le \left(1000+\frac{1000(S_{\sigma}+1)}{\log R}\right)^{672(S_{\sigma}+1)}
\le \left(2+\frac{S}{\log R}\right)^{O(S)}.
\]

Finally, if $y_1,y_2$ lie in the same fiber of $\mathcal A_{\sigma}$ and $v:=y_1-y_2$, then $\langle v,t_j\rangle\in\mathbb Z$ for every $j$. Since $y_1,y_2\in\operatorname{supp}(\mathcal I)$ and $\operatorname{diam}(\operatorname{supp}(\mathcal I))\le D$, we have
$
\|v\|_2\le D\le R^{1/3},
$
and therefore \cref{cor:translation-invariance-from-convolution-spectrum-general} yields
\[
\dTV(\mu_{\sigma},\tau_{y_1-y_2}\mu_{\sigma})\le R^{-1/50}.
\]

We now show that there is a decoder $g_{\sigma}$ on the fibers of $\mathcal A_{\sigma}$ whose values lie in $\mathcal O$ and which is $3\epsilon$-accurate on $G_{\sigma}$.

Fix a fiber $F$ of $\mathcal A_{\sigma}$ meeting $G_{\sigma}$, and choose $y_1,y_2\in F\cap G_{\sigma}$. Under $(Y=y_j,\sigma)$ the last block has law
\[
a_{y_j}^{\sigma}\stackrel{d}{=} y_j-X_{\sigma},
\]
so with $v:=y_1-y_2$ we have
\[
\dTV(a_{y_1}^{\sigma},a_{y_2}^{\sigma})
=
\dTV(\mu_{\sigma},\tau_v\mu_{\sigma})
\le R^{-1/50}.
\]
Let $C_j\subseteq \mathbb Z^n$ be the set of final deltas $a$ such that, starting from state $\sigma_M$ and then processing the canonical block $\mathsf{can}(a)$, the algorithm outputs a correct answer for target $y_j$. Since $y_j\in G_{\sigma}$, we have $\Pr[a_{y_j}^{\sigma}\in C_j]\ge 3/4$. Therefore
\[
\Pr[a_{y_2}^{\sigma}\in C_1\cap C_2]\ge \frac12-\dTV(a_{y_1}^{\sigma},a_{y_2}^{\sigma})\ge \frac12-R^{-1/50}>0.
\]
So some common last block makes the deterministic algorithm output the same value $o\in\mathcal O$ on both targets, and that value is $\epsilon$-correct for both $y_1$ and $y_2$. Hence
\[
d(f(y_1),f(y_2))\le 2\epsilon.
\]

Choose on each fiber meeting $G_{\sigma}$ a representative point $y_F\in G_{\sigma}$ and then choose any $o_F\in\mathcal O$ with $d(o_F,f(y_F))\le \epsilon$; set $g_{\sigma}(F):=o_F$. On fibers disjoint from $G_{\sigma}$ define $g_{\sigma}$ arbitrarily in $\mathcal O$. Then for every $y\in G_{\sigma}$ in that fiber,
\[
d(g_{\sigma}(\mathcal A_{\sigma}(y)),f(y))
\le d(o_F,f(y_F))+d(f(y_F),f(y))
\le \epsilon+2\epsilon
=3\epsilon,
\]
so $g_{\sigma}\circ\mathcal A_{\sigma}$ succeeds with probability at least $1-8\delta$ under $Y\sim\mathcal I$.

For the promise-decision version, one makes only the following change. If $y_1,y_2\in F\cap G_{\sigma}$ and both labels lie in $\{0,1\}$, then any common correct output bit must equal both values, hence $f(y_1)=f(y_2)$. Therefore each fiber has at most one promised bit, so one decodes by that bit when it exists and chooses the decoder arbitrarily on fibers containing only $*$-points. With the shorthand convention above, this is exactly the claimed conclusion.

For the general streaming-problem version, we use the same sketch $\mathcal A_{\sigma}$ but do not force a deterministic output on each fiber. Instead, for every fiber $F$ of $\mathcal A_{\sigma}$ choose an arbitrary representative point $y_F\in F$, and let $g^{\mathrm{rel}}_{\sigma}(F)$ be the output distribution obtained by starting from state $\sigma_M$, sampling the final delta according to
\[
a_{y_F}^{\sigma}\stackrel{d}{=}y_F-X_{\sigma},
\]
and then returning the output of the deterministic streaming algorithm after it processes the canonical block corresponding to that sampled delta. Fix $y\in F$, and let $E_y(a)$ denote the event that, when the final delta is $a$ (equivalently, when the final block is $\mathsf{can}(a)$), the output produced from state $\sigma_M$ is valid for target $y$, i.e. belongs to
$
\{o\in\mathcal O:(y,o)\in\mathcal R\}.
$
Then
\[
\Pr_{o\sim g^{\mathrm{rel}}_{\sigma}(F)}[(y,o)\notin\mathcal R]
=\Pr_{a\sim a_{y_F}^{\sigma}}[E_y(a)^c]
\le \Pr_{a\sim a_y^{\sigma}}[E_y(a)^c]+\dTV(a_{y_F}^{\sigma},a_y^{\sigma})
\le \mathrm{Fail}_{\sigma}(y)+R^{-1/50}.
\]
Averaging over $Y\sim\mathcal I$ gives
\[
\Pr_{Y\sim\mathcal I,\, o\sim g^{\mathrm{rel}}_{\sigma}(\mathcal A_{\sigma}(Y))}\bigl[(Y,o)\notin\mathcal R\bigr]
\le \mathbb E_{Y\sim\mathcal I}\mathrm{Fail}_{\sigma}(Y)+R^{-1/50}
\le \frac{\delta}{1-2^{-M}}+R^{-1/50}.
\]
If $\delta\ge R^{-1/50}$, then after increasing implicit thresholds if necessary the right-hand side is at most $3\delta$. This proves case \textup{(3)}.
\end{proof}

Next, we use Yao's minimax principle~\cite{yao1977probabilistic} to convert the average-case conclusions of \cref{thm:randomized-exact-transfer} into worst-case public-coin randomized linear sketches in the sense defined above. Recall that $B_D:=\{x\in\mathbb R^n:\|x\|_2\le D\}$ denotes the Euclidean ball of radius $D$ in $\mathbb R^n$ as in \cref{sec:preliminaries},

\begin{corollary}
\label{cor:yao-minmax-exact-transfer}
Fix $D\ge 1$, and assume $R \ge n^{50}$ is sufficiently large and $2D\le R^{1/3}$. Suppose there is a randomized turnstile streaming algorithm using $S$ bits of space which solves a streaming problem with success probability at least $1-\delta$ on every turnstile stream of length at most $R^{4}$ whose final frequency vector lies in the Euclidean ball $B_D\subseteq\mathbb R^n$. Then there exists a public-coin randomized linear sketch of dimension $O(S)$ with image size
$
\left(2+\frac{S}{\log R}\right)^{O(S)},
$
satisfying the corresponding conclusion:
\begin{enumerate}
\item For a metric approximation problem with a metric space $(\mathcal M,d)$, a target map $f:B_D\cap\mathbb Z^n\to \mathcal M$, an admissible output set $\mathcal O\subseteq\mathcal M$, and an accuracy parameter $\epsilon\ge 0$, the sketching algorithm outputs some $z\in\mathcal O$ satisfying
\[
d(z,f(x))\le 3\epsilon
\]
for every $x\in B_D\cap\mathbb Z^n$ with probability at least $1-8\delta$.

\item For a promise problem with a map $f:B_D\cap\mathbb Z^n\to\{0,1,*\}$, there exists a sketching algorithm that solves it on every input $x\in B_D\cap\mathbb Z^n$ with probability at least $1-8\delta$.

\item For an arbitrary streaming problem with output set $\mathcal O$ and requirement relation $\mathcal R\subseteq (B_D\cap\mathbb Z^n)\times\mathcal O$, if moreover $R\ge \delta^{-50}$, then the sketching algorithm outputs some $o\in\mathcal O$ such that $(x,o)\in\mathcal R$ for every $x\in B_D\cap\mathbb Z^n$ with probability at least $1-3\delta$.
\end{enumerate}
\end{corollary}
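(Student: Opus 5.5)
We derive \cref{cor:yao-minmax-exact-transfer} from \cref{thm:randomized-exact-transfer} via Yao's minimax principle, applied over the finite input set $B_D\cap\mathbb Z^n$. The plan is to consider the finite family $\mathfrak S$ of all deterministic pairs $(\mathcal A,g)$ that can arise from \cref{thm:randomized-exact-transfer}. Each pair consists of a homomorphism $\mathcal A:\mathbb Z^n\to\mathbb T^m$ with $m=O(S)$ and image size at most $(2+S/\log R)^{O(S)}$, together with a decoder. Restricted to $B_D\cap\mathbb Z^n$ only finitely many behaviors occur, so the zero-sum game between a sketch player choosing a distribution on $\mathfrak S$ and an adversary choosing a distribution $\mathcal I$ on $B_D\cap\mathbb Z^n$ is a finite game, and von Neumann's minimax theorem applies. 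For cases (1) and (2) the payoff is the indicator of success, where success means $d(g(\mathcal A x),f(x))\le 3\epsilon$, respectively $g(\mathcal A x)=f(x)$. For case (3) the randomized decoders are allowed, and the payoff is the success probability, which is linear in the mixed strategies.

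By minimax, it suffices to show that for every distribution $\mathcal I$ on $B_D\cap\mathbb Z^n$ some deterministic sketch in $\mathfrak S$ succeeds with probability $1-8\delta$ (respectively $1-3\delta$) over $Y\sim\mathcal I$. The optimal mixed strategy of the sketch player is then the desired public-coin randomized sketch, since it has the same dimension and image-size bounds and its worst-case success probability is at least the value of the game.

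Fix $\mathcal I$. Its support lies in $B_D$, so its diameter is at most $2D\le R^{1/3}$, which meets the diameter hypothesis of \cref{thm:randomized-exact-transfer}. We verify that every stream in the support of the exact stream distribution associated with $\mathcal I$ is a legitimate input. By \cref{rmk:exact-stream-distribution}, its final frequency vector is $Y\in B_D$ and its length is $O(R^3(\sqrt n+\ln R))$. This is at most $R^4$ because $R\ge n^{50}$ is sufficiently large. Consequently the worst-case guarantee of the streaming algorithm gives success probability at least $1-\delta$ on each such stream, and hence on average over the exact stream distribution. \cref{thm:randomized-exact-transfer} then produces the required deterministic sketch and decoder for $\mathcal I$, in each of the three cases. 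For case (3) we additionally use the hypothesis $R\ge\delta^{-50}$.

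The main point needing care is the finiteness needed to invoke minimax without compactness issues. The sketches take values in $\mathbb T^m$, so there are infinitely many $\mathcal A$, and in case (3) the decoders are randomized. I would handle this by identifying each pair $(\mathcal A,g)$ with the induced map from the finite set $B_D\cap\mathbb Z^n$ to outputs, or to output distributions in case (3). It remains to note that the pure sketch strategies form a set on which the payoff is bounded, and the adversary's strategy set is a finite simplex. Minimax then still holds, for instance in the form: $\inf_{\mathcal I}\sup_{\text{pure}}=\sup_{\text{mixed}}\inf_x$. A finitely supported mixed strategy achieving the value within an arbitrary slack suffices, and the slack can be absorbed into the constants.
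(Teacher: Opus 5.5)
Your proposal is correct and follows the paper's proof. For each distribution $\mathcal I$ on $B_D\cap\mathbb Z^n$ you check the diameter bound $2D\le R^{1/3}$ and the stream-length bound $R^4$ via \cref{rmk:exact-stream-distribution}, apply \cref{thm:randomized-exact-transfer}, and then invoke Yao's minimax principle; your extra justification of minimax (finite input set, bounded payoffs, and absorbing the slack in case (3) because the internal failure bound $\delta/(1-2^{-M})+R^{-1/50}\approx 2\delta$ sits below $3\delta$) fills in a step the paper leaves implicit, though the pure-strategy family is not literally finite as your first paragraph says---only its success patterns are, in cases (1)--(2).
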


\begin{proof}
Let $\mathcal I$ be any distribution supported on $B_D\cap\Z^n$. Since $\operatorname{diam}(\operatorname{supp}(\mathcal I))\le 2D\le R^{1/3}$ and by \cref{rmk:exact-stream-distribution}, the exact stream distribution of radius $R$ associated with $\mathcal I$ is well-defined and has support contained in streams of length
$
O(R^3(\sqrt{n} + \ln R))\le R^{4}
$
since $R\ge n^{50}$ is sufficiently large. By hypothesis, the given randomized streaming algorithm succeeds with probability at least $1-\delta$ on that exact stream distribution. Applying \cref{thm:randomized-exact-transfer} therefore gives a deterministic linear sketch with the asserted average-case guarantee under $Y\sim\mathcal I$.

Since this holds for every distribution $\mathcal I$ on $B_D\cap\Z^n$, Yao's minimax principle implies the existence of a public-coin randomized linear sketch with the same asymptotic dimension and image-size bounds that satisfies the corresponding worst-case guarantee on every $x\in B_D\cap\Z^n$. This proves all three cases.
\end{proof}

\subsection{Mollified transfer under average smoothness}\label{subsec:mollified-transfer}

\begin{definition}[Smoothness]\label{def:smoothness}
Let $\nu$ be a distribution on $\Z^n$ and let $\mathcal I$ be a target distribution on $\Z^n$.
\begin{enumerate}
\item Let $(\mathcal M,d)$ be a metric space and let $f:\Z^n\to \mathcal M$ be a target map. We say that $f$ is \emph{$(\epsilon,\delta)$-smooth on average on $\mathcal I$ with respect to $\nu$} if
\[
\Pr_{Y\sim\mathcal I,\, Z\sim\nu}[d(f(Y+Z),f(Y))\le \epsilon]\ge 1-\delta.
\]
\item Let $f:\Z^n\to\{0,1,*\}$ be a promise problem. We say that $f$ is \emph{$\delta$-smooth on average on $\mathcal I$ with respect to $\nu$} if
\[
\Pr_{Y\sim\mathcal I,\, Z\sim\nu}[f(Y+Z)=f(Y)]\ge 1-\delta,
\]
where the equality is interpreted using the shorthand above.
\item Let $(\mathcal O,\mathcal R)$ and $(\mathcal O,\mathcal R')$ be two streaming problems on $\Z^n$ with the same output set. We say that $\mathcal R$ is \emph{$\delta$-smooth into $\mathcal R'$ on average on $\mathcal I$ with respect to $\nu$} if
\[
\Pr_{Y\sim\mathcal I,\, Z\sim\nu}\Bigl[\forall o\in\mathcal O,\ (Y+Z,o)\in\mathcal R\Rightarrow (Y,o)\in\mathcal R'\Bigr]\ge 1-\delta.
\]
\end{enumerate}
\end{definition}

\begin{definition}[Mollified stream distribution]
Fix integers $R\ge n^{20}$ and $D\ge 1$, set $M:=R^2$. Let $X_1,\dots,X_M,Z$ be independent samples from $\gamma_R^{\mathrm{tr}}$, and let $\mathcal I$ be a target distribution on $\Z^n$ whose support has Euclidean diameter at most $D$. Sample $Y\sim\mathcal I$ and set
\[
a:=Y+Z-(X_1+\cdots+X_M).
\]
The \emph{mollified stream distribution of radius $R$ and target diameter $D$ associated with $\mathcal I$} is the distribution on unit update turnstile streams obtained by concatenating the canonical blocks
\[
\mathsf{can}(X_1),\ldots,\mathsf{can}(X_M),\mathsf{can}(a),
\]
followed by the final query.
\end{definition}
\begin{remark}\label{rmk:mollified-stream-distribution}
  Note that the final frequency vector at the moment of the query is exactly $Y+Z$. Also, every sample from the mollified stream distribution contains at most $O(R^3(\sqrt{n} + \ln R))$ unit updates.
\end{remark}

We now give the mollified version of our theorem. The proof route is almost identical to \cref{thm:randomized-exact-transfer}, except that we can apply the stronger \cref{cor:translation-invariance-smoothed-near-origin-general} for a better bound. 
\begin{theorem}
\label{thm:randomized-mollified-transfer}
There exists an absolute constant $R_0$ such that the following holds for every integer $R\ge R_0$. Consider the mollified stream distribution from the preceding definition, and assume that the support of $\mathcal I$ has Euclidean diameter at most $D\le R^{5/4}$.
Assume that there exists a randomized turnstile streaming algorithm using $S$ bits of space which solves a streaming problem on the mollified stream distribution with success probability at least $1-\delta$. Then there exists a deterministic linear sketch of dimension $O(S/\log R)$, with all entries bounded in absolute value by $R$, satisfying the corresponding conclusion:
\begin{enumerate}
\item For a metric approximation problem with a metric space $(\mathcal M,d)$, a target map $f:\Z^n\to \mathcal M$, an admissible output set $\mathcal O\subseteq \mathcal M$, and if $f$ is $(\epsilon,\delta)$-smooth on average on $\mathcal I$ with respect to $\gamma_R$, then there exists a linear sketch $\mathcal{A}$ and a decoder $g:\mathcal{A}(\Z^n)\to\mathcal O$ such that
\[
\Pr_{Y\sim\mathcal I}\bigl[d(g(\mathcal{A}(Y)),f(Y))\le 6\epsilon\bigr]\ge 1-16\delta;
\]
\item For promise problems with a map $f:\mathbb{Z}^n\to\{0,1,*\}$, if $f$ is $\delta$-smooth on average on $\mathcal I$ with respect to $\gamma_R$,
then there exists a linear sketch $\mathcal{A}$ and a decoder $g:\mathcal{A}(\Z^n)\to\{0,1\}$ that solve the same promise problem on the distribution $\mathcal I$ with success probability at least $1-16\delta$:
\[
\Pr_{Y\sim\mathcal I}\bigl[g(\mathcal{A}(Y))=f(Y)\bigr]\ge 1-16\delta.
\]
\item For two streaming problems with common output set $\mathcal O$ and requirement relations $\mathcal R,\mathcal R'\subseteq \Z^n\times\mathcal O$, if the given streaming algorithm solves $(\mathcal O,\mathcal R)$ on the mollified stream distribution, where $\mathcal R$ is $\delta$-smooth into $\mathcal R'$ on average on $\mathcal I$ with respect to $\gamma_R$, and if $R\ge \delta^{-50}$, then there exist a linear sketch $\mathcal A$ and a randomized map $g$ from $\mathcal A(\Z^n)$ to $\mathcal O$ such that
\[
\Pr_{Y\sim\mathcal I,\, o\sim g(\mathcal A(Y))}\bigl[(Y,o)\in\mathcal R'\bigr]\ge 1-6\delta.
\]
\end{enumerate}
\end{theorem}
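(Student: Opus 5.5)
The proof follows the template of \cref{thm:randomized-exact-transfer}, with \cref{cor:translation-invariance-smoothed-near-origin-general} in place of the exact-route corollary and with smoothness used to pass from $Y+Z$ back to $Y$.

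\textbf{Fixing a seed and a state sequence.} First fix a good public seed, so that $A$ is a deterministic algorithm with at most $2^S$ states succeeding with probability $1-\delta$ on the mollified stream distribution. Define block-state sequences $\sigma=(\sigma_0,\dots,\sigma_M)$ after the prefix blocks $\mathsf{can}(X_1),\dots,\mathsf{can}(X_M)$. The success criterion is that the output on the final vector $Y+Z$ is valid for $Y$ (the $\epsilon$-, $*$- or $\mathcal R'$-version).

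Since smoothness holds with probability $1-\delta$ over $(Y,Z)$, and $\gamma_R^{\mathrm{tr}}$ differs from $\gamma_R$ by at most $R^{-100}$ in total variation, the combined event "$A$ correct for $Y+Z$ and $f(Y+Z)$ close to $f(Y)$" fails with probability at most $2\delta+R^{-100}$. Exactly as before, pick $\sigma$ with $q_\sigma$ large and $\Pr[\sigma]\ge 2^{-(S+1)M}$. Set
\[
G_\sigma:=\{y:\mathrm{Fail}_\sigma(y)\le 1/4\},
\]
where $\mathrm{Fail}_\sigma$ now measures failure of the output to be $2\epsilon$-close to $f(y)$ (or valid for $\mathcal R'$). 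Markov then gives $\mathcal I(G_\sigma)\ge 1-16\delta$. This doubling of the failure budget explains the $16\delta$ and the $6\epsilon=3\cdot 2\epsilon$ in the statement.

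\textbf{Extracting the sketch.} Conditioned on $\sigma$, the prefix deltas are independent $\beta_i^\sigma/2$-dense pieces of $\gamma_R$ with geometric mean of densities at least $2^{-(S+3)}$. The last block is $a=y+Z-X_\sigma$, where $Z\sim\gamma_R^{\mathrm{tr}}$ is independent of $\sigma$ because $Z$ is sampled after the prefix. Hence its law is the reflection of $\nu*\gamma_R^{\mathrm{tr}}$, translated by $y$.

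Replace $\gamma_R^{\mathrm{tr}}$ by $\gamma_R$ at a cost of $R^{-100}$ in total variation. Then \cref{cor:translation-invariance-smoothed-near-origin-general} applies; as before we may assume $S\le n\log R$, since otherwise the trivial sketch suffices. It yields $\zeta_1,\dots,\zeta_\ell\in Q^{-1}\mathbb Z^n$ with $Q=R$ and $\ell\le 16(S+3)/\log R$. Take $\mathcal A$ to be the integer matrix with rows $Q\zeta_j$, so the entries are bounded by $R/2$. Recording $\mathcal A y\in\mathbb Z^\ell$ determines $\langle\zeta_j,y\rangle \bmod 1$, and equal sketches force $\langle v,\zeta_j\rangle\in\mathbb Z$ for $v=y_1-y_2$, with $\|v\|_2\le D\le R^{5/4}$. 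The corollary then gives
\[
\dTV(a_{y_1},a_{y_2})\le R^{-1/50}+2R^{-100}.
\]
If $S<\log R$, the dimension bound gives $\ell=0$ and the sketch is empty, which is consistent with $O(S/\log R)$ after rounding up to $O(1)$.

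\textbf{Fiberwise decoding.} For $y_1,y_2\in F\cap G_\sigma$, some common last block yields a common output that is $2\epsilon$-close to both $f(y_1)$ and $f(y_2)$, so $d(f(y_1),f(y_2))\le 4\epsilon$. Decoding by an $\epsilon$-close output to a representative $f(y_F)$ would give $5\epsilon$. Cleaner is to set $o_F$ to the algorithm's common output on the representative, which is $2\epsilon$-close to $f(y_F)$, and bound the error by $2\epsilon+4\epsilon=6\epsilon$. The promise case is identical. For relations, use the randomized decoder that simulates the last block from a representative $y_F$. The error is at most $\mathbb E\,\mathrm{Fail}_\sigma+R^{-1/50}\le 2\delta+2\delta+R^{-1/50}\le 6\delta$ when $R\ge\delta^{-50}$.

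\textbf{Main obstacle.} The main care point is accounting for $Z$. It must be treated as part of the last block's randomness, not the prefix, so that conditioning on $\sigma$ leaves it as an independent $\gamma_R$ factor supplying the mollification $\nu_0=\nu*\gamma_R$. The truncation discrepancy and the smoothness failure must also be absorbed into the constants.
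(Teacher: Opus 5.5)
Your proposal follows the paper's proof in all essentials. You fix a seed and pick a typical state sequence $\sigma$ with $\Pr[\sigma]\ge 2^{-(S+1)M}$. You apply \cref{cor:translation-invariance-smoothed-near-origin-general} to $\mu_\sigma*\gamma_R$, paying $R^{-100}$ for truncating $Z$, to get the integer sketch with rows $R\zeta_j$. You decode each fiber by the algorithm's own $2\epsilon$-accurate output, giving $2\epsilon+4\epsilon=6\epsilon$, and you use the representative-simulating randomized decoder for relations.

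\textbf{The gap is in cases (1)--(2) when $\delta$ is very small.} You fold the truncation error into the global failure budget before applying Markov. That only yields $\mathcal I(G_\sigma^c)\le 4(2\delta+R^{-100})/(1-2^{-M})$, which is at most $16\delta$ only when $\delta\gtrsim R^{-100}$. Cases (1)--(2) put no lower bound on $\delta$; for $\delta=0$ the theorem asserts success probability $1$.

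\textbf{How the paper avoids this.} It applies Markov separately to the algorithm-failure profile and to the smoothness profile $\mathrm{Sm}(y)$. The latter is defined with the \emph{untruncated} $\gamma_R$. The $R^{-100}$ is then paid only pointwise, inside the constant-margin fiber argument: $\Pr[a_{y_j}^\sigma\in B_j]\ge 5/8-R^{-100}$.

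\textbf{How to repair your version.} Apply Markov at threshold $1/5$ to $\mathrm{Fail}^{A}_\sigma(y)+\mathrm{Sm}(y)$. Here $\mathrm{Fail}^{A}_\sigma(y)$ is the probability that the output is not $\epsilon$-correct for $f(y+Z)$, given $Y=y$ and $\sigma$. This costs at most $10\delta/(1-2^{-M})\le 16\delta$. On the resulting set, your combined profile satisfies $\mathrm{Fail}_\sigma(y)\le 1/5+R^{-100}<1/4$, so the rest of your argument goes through. In case (3) the hypothesis $R\ge\delta^{-50}$ makes $R^{-100}$ negligible, so your accounting there is fine.

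A minor inaccuracy: $S<\log R$ does not force $\ell=0$. The bound $\ell\le 16(S+3)/\log R$ vanishes only when $S+3<(\log R)/16$. Still, $\ell=O(S/\log R)$ holds regardless for $S\ge 1$.
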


\begin{proof}
The proof follows the exact-route argument with three changes: we work with the mollified law $\mathcal I*\gamma_R^{\mathrm{tr}}$, we invoke the average-smoothness event to compare $f(Y+Z)$ with $f(Y)$, and in case \textup{(3)} we replace deterministic fiberwise consistency by a randomized decoder attached to each fiber representative. We write case \textup{(1)} explicitly and then indicate the modifications for cases \textup{(2)} and \textup{(3)}.

By averaging over the public seed, there exists a fixed seed $\rho^\star$ such that the deterministic streaming algorithm $A:=A_{\rho^\star}$ still succeeds with probability at least $1-\delta$ on the mollified stream distribution. Since the public randomness is free in our model, fixing the seed does not change the $S$-bit space bound. Thus $A$ has at most $2^S$ memory states.

Let $\Sigma$ be the set of block-state sequences
$
\sigma=(\sigma_0,\sigma_1,\ldots,\sigma_M),
$
where $\sigma_0$ is the fixed initial memory state of $A$, and for each $i=1,\ldots,M$, the state $\sigma_i$ is the memory state immediately after processing the canonical block $\mathsf{can}(X_i)$. For each $\sigma\in\Sigma$ write
$
q_{\sigma}:=\Pr[A\text{ succeeds}\mid \sigma].
$
Since $\mathbb E_{\sigma}[q_\sigma]\ge 1-\delta$, the set
\[
\Sigma_*:=\left\{\sigma\in\Sigma:\ q_{\sigma}\ge 1-\frac{\delta}{1-2^{-M}}\right\}
\]
has total probability at least $2^{-M}$. Also $|\Sigma|\le 2^{SM}$, so some $\sigma\in\Sigma_*$ satisfies
$
\Pr[\sigma]\ge \frac{2^{-M}}{|\Sigma|}\ge 2^{-(S+1)M}.
$
Therefore the conditional failure profile
\[
\mathrm{Fail}_{\sigma}(y):=\Pr[A \text{ is not } \epsilon\text{-correct for }f(y+Z)\mid Y=y,\sigma]
\]
satisfies
$
\mathbb E_{Y\sim\mathcal I}\mathrm{Fail}_{\sigma}(Y)\le \frac{\delta}{1-2^{-M}}.
$
Hence for
$
G_{\sigma}:=\{y\in\operatorname{supp}(\mathcal I):\ \mathrm{Fail}_{\sigma}(y)\le 1/4\},
$
Markov gives
\[
\mathcal I(G_{\sigma}) := \Pr_{Y\sim\mathcal I}[Y\in G_{\sigma}] \ge 1-\frac{4\delta}{1-2^{-M}}\ge 1-8\delta.
\]
Define also the smoothness failure profile
\[
\mathrm{Sm}(y):=\Pr_{Z\sim\gamma_R}[d(f(y+Z),f(y))>\epsilon].
\]
By average smoothness, $\mathbb E_{Y\sim\mathcal I}\mathrm{Sm}(Y)\le \delta$. Hence for
$
H:=\{y\in\operatorname{supp}(\mathcal I):\ \mathrm{Sm}(y)\le 1/8\},
$
Markov gives
$
\mathcal I(H)\ge 1-8\delta.
$
Set
$
G'_{\sigma}:=G_{\sigma}\cap H,
$
then $
\mathcal I(G'_{\sigma})\ge 1-16\delta.$

Conditioning on the event that the intermediate state sequence is exactly $\sigma$ leaves the prefix blocks independent; more precisely, the conditioned pieces $V_i^{\sigma}$ are independent $\beta_i^{\sigma}$-pieces of $\gamma_R^{\mathrm{tr}}$, and
\[
\prod_{i=1}^M \beta_i^{\sigma}=\Pr[\sigma].
\]
Set
\[
\beta_{\sigma}:=\Pr[\sigma]^{1/M},
\qquad
S_{\sigma}:=\log(2/\beta_{\sigma}),
\]
so that $S_{\sigma}\le S+2$. Write again
\[
\mu_{\sigma}:=V_1^{\sigma}*\cdots *V_M^{\sigma}.
\]
Let $X_{\sigma}$ denote a random variable with law $\mu_{\sigma}$.

Here again we check the hypotheses needed for \cref{cor:translation-invariance-smoothed-near-origin-general}. The bound $R\ge n^{20}$ is built into the definition of the mollified stream distribution. Also we may assume $S\le n\log R / 2$, since recording down all frequency vectors gives a trivial linear sketch with dimension $n = O(S / \log R)$ and the matrix is exactly the identity, hence all entries are bounded by $1$. Next, since $\Pr[\sigma]\ge 2^{-(S+1)M}$, we have $\beta_{\sigma}\ge 2^{-(S+1)}$, so with $\alpha:=\beta_{\sigma}/2$ we obtain
\[
\log(2/\alpha)=\log(4/\beta_{\sigma})\le S+3\le n\log R.
\]
Since each $V_i^{\sigma}$ is in fact a $\beta_i^{\sigma}$-piece of $\gamma_R^{\mathrm{tr}}$, and $\gamma_R^{\mathrm{tr}}$ is a $(1/2)$-dense piece of $\gamma_R$ (in fact, $(1- O(R^{-100}))$-dense piece), each $V_i^{\sigma}$ is a $(\beta_i^{\sigma}/2)$-dense piece of $\gamma_R$. Therefore \cref{cor:translation-invariance-smoothed-near-origin-general} applies to $\mu_{\sigma}*\gamma_R$ with density parameter $\beta_{\sigma}/2$. Unlike in the exact proof, the smoothed corollary now returns rational frequencies with a common denominator $Q=R$, which is exactly where the bounded integer sketch enters. Concretely, we obtain frequencies
\[
\zeta_1,\dots,\zeta_\ell\in Q^{-1}\mathbb Z^n\cap[-1/2,1/2)^n,
\qquad
\ell\le \frac{16S_{\sigma}}{\log R}\le \frac{16(S+2)}{\log R},
\]
where $Q:=R$. Multiplying by $Q$, we define the integer sketch matrix
\[
\mathcal{A}_{\sigma}:=Q
\begin{bmatrix}
\zeta_1\\
\vdots \\
\zeta_\ell
\end{bmatrix}
\in\mathbb Z^{\ell\times n},
\]
whose entries are all bounded in absolute value by $Q/2\le R$. We then define the linear sketch
$
\mathcal A_{\sigma}(y):=\mathcal A_{\sigma}y\in\mathbb Z^\ell.
$
This sketch has dimension at most $16(S+2)/\log R=O(S/\log R)$.

If $y_1,y_2$ lie in the same fiber of $\mathcal A_{\sigma}$ and $v:=y_1-y_2$, then $\mathcal A_{\sigma}v=0$, hence $Q\langle \zeta_j,v\rangle=0$ for every $j$, and in particular
\[
\langle v,\zeta_j\rangle\in\mathbb Z
\qquad (1\le j\le \ell).
\]
Since $y_1,y_2\in\operatorname{supp}(\mathcal I)$ and $\operatorname{diam}(\operatorname{supp}(\mathcal I))\le D$, we have
$
\|v\|_2\le D\le R^{5/4}
$
and therefore \cref{cor:translation-invariance-smoothed-near-origin-general} gives
\[
\dTV(\mu_{\sigma}*\gamma_R,\tau_{y_1-y_2}(\mu_{\sigma}*\gamma_R))\le R^{-1/50}.
\]
We claim that each fiber $F$ of $\mathcal A_{\sigma}$ supports a consistent decoded value on $F\cap G'_{\sigma}$.

Fix $y_1,y_2\in F\cap G'_{\sigma}$. Under $(Y=y_j,\sigma)$ the actual last block has law
\[
a_{y_j}^{\sigma}\stackrel{d}{=}y_j-(X_{\sigma}-Z),
\]
where now $Z\sim\gamma_R^{\mathrm{tr}}$. Let also
\[
\widetilde a_{y_j}^{\sigma}\stackrel{d}{=}y_j-(X_{\sigma}-\widetilde Z),
\]
where $\widetilde Z\sim\gamma_R$. Since $\dTV(\gamma_R^{\mathrm{tr}},\gamma_R)\le R^{-100}$, we have
\[
\dTV(a_{y_j}^{\sigma},\widetilde a_{y_j}^{\sigma})\le R^{-100}
\qquad (j=1,2).
\]
These truncation errors are negligible compared with the ambient $R^{-1/50}$ translation-invariance error, and will be absorbed into it below. Since $\gamma_R$ is symmetric,
\[
\dTV(\widetilde a_{y_1}^{\sigma},\widetilde a_{y_2}^{\sigma})
=\dTV(\mu_{\sigma}*\gamma_R,\tau_{y_1-y_2}(\mu_{\sigma}*\gamma_R))
\le R^{-1/50}.
\]
Therefore
\[
\dTV(a_{y_1}^{\sigma},a_{y_2}^{\sigma})
\le R^{-1/50}+2R^{-100}.
\]
Let $B_j\subseteq\mathbb Z^n$ be the set of final deltas $a$ such that, starting from state $\sigma_M$ and then processing the canonical block $\mathsf{can}(a)$, the algorithm outputs a $2\epsilon$-approximation to $f(y_j)$. For $y_j\in G'_{\sigma}$, the event that $A$ is $\epsilon$-correct for $f(y_j+Z)$ has probability at least $3/4$, while the smoothness event under $Z\sim\gamma_R^{\mathrm{tr}}$ has probability at least $7/8-R^{-100}$; hence $\Pr[a_{y_j}^{\sigma}\in B_j]\ge 5/8-R^{-100}$. Therefore
\[
\Pr[a_{y_2}^{\sigma}\in B_1\cap B_2]\ge 1/4-2R^{-100}-\dTV(a_{y_1}^{\sigma},a_{y_2}^{\sigma})\ge \frac14-R^{-1/50}-4R^{-100}>0.
\]
So some common last block makes the deterministic algorithm output the same value $o\in\mathcal O$ on both targets, and that value is a $2\epsilon$-approximation to both $f(y_1)$ and $f(y_2)$. Hence
\[
d(f(y_1),f(y_2))\le 4\epsilon.
\]

Choose on each fiber meeting $G'_{\sigma}$ a representative point $y_F\in G'_{\sigma}$ and then choose any $o_F\in\mathcal O$ with $d(o_F,f(y_F))\le 2\epsilon$; set $g_{\sigma}(F):=o_F$. On fibers disjoint from $G'_{\sigma}$ define $g_{\sigma}$ arbitrarily in $\mathcal O$. Then for every $y\in G'_{\sigma}$ in that fiber,
\[
d(g_{\sigma}(\mathcal A_{\sigma}(y)),f(y))
\le d(o_F,f(y_F))+d(f(y_F),f(y))
\le 2\epsilon+4\epsilon
=6\epsilon,
\]
so $g_{\sigma}\circ \mathcal A_{\sigma}$ succeeds with probability at least $1-16\delta$ under $Y\sim\mathcal I$.

For the promise-decision version, there are two small changes. First, one defines the good set $H$ using the promise-smoothness event $f(y+Z)=f(y)$ instead of metric smoothness. Second, if $y_1,y_2\in F\cap G'_{\sigma}$ and both labels lie in $\{0,1\}$, then a common last block that is correct for both targets forces $f(y_1)=f(y_2)$. Hence each relevant fiber again has a well-defined promised bit, and one decodes by that bit, choosing values arbitrarily on fibers containing only $*$-points. With the shorthand convention above, this is exactly the stated conclusion.

For the general streaming-problem version, one reruns the same averaging-and-conditioning argument with success measured relative to the input relation $\mathcal R$. Thus there exists a state sequence $\sigma$ such that $\Pr[\sigma]\ge 2^{-(S+1)M}$ and the conditional failure profile
\[
\mathrm{RelFail}_{\sigma}(y):=\Pr\bigl[(y+Z,o)\notin\mathcal R\mid Y=y,\sigma\bigr]
\]
has expectation at most $\delta/(1-2^{-M})$, where $o$ denotes the output of the deterministic streaming algorithm from state $\sigma_M$ after processing the canonical block corresponding to the sampled final delta. Repeating the same conditioned-piece construction and applying \cref{cor:translation-invariance-smoothed-near-origin-general}, we obtain the same linear sketch $\mathcal A_{\sigma}$ with the property that for any two points in the same fiber,
\[
\dTV(a_{y_1}^{\sigma},a_{y_2}^{\sigma})\le R^{-1/50}+2R^{-100}.
\]
Define the relation-smoothness profiles
\[
\mathrm{Sm}_{\mathcal R\to\mathcal R'}(y):=\Pr_{Z\sim\gamma_R}\Bigl[\exists o\in\mathcal O:\ (y+Z,o)\in\mathcal R\ \text{and}\ (y,o)\notin\mathcal R'\Bigr].
\]
By the assumed smoothness of $\mathcal R$ into $\mathcal R'$, $\mathbb E_{Y\sim\mathcal I}\mathrm{Sm}_{\mathcal R\to\mathcal R'}(Y)\le \delta$. For each fiber $F$ of $\mathcal A_{\sigma}$ choose an arbitrary representative point $y_F\in F$, and let $g^{\mathrm{rel}}_{\sigma}(F)$ be the output distribution obtained by starting from state $\sigma_M$, sampling the final delta according to $a_{y_F}^{\sigma}\stackrel{d}{=}y_F-(X_{\sigma}-Z)$, and then returning the output of the deterministic streaming algorithm after it processes the canonical block corresponding to that sampled delta. Fix $y\in F$, and let $E'_y(a)$ denote the event that, when the final delta is $a$ (equivalently, when the final block is $\mathsf{can}(a)$), the output produced from state $\sigma_M$ is valid for target $y$ under the relaxed relation $\mathcal R'$, i.e. belongs to
$
\{o\in\mathcal O:(y,o)\in\mathcal R'\}.
$
Then
\[
\Pr_{o\sim g^{\mathrm{rel}}_{\sigma}(F)}[(y,o)\notin\mathcal R']
=\Pr_{a\sim a_{y_F}^{\sigma}}[E'_y(a)^c]
\le \Pr_{a\sim a_y^{\sigma}}[E'_y(a)^c]+\dTV(a_{y_F}^{\sigma},a_y^{\sigma})
\le \mathrm{RelFail}_{\sigma}(y)+\mathrm{Sm}_{\mathcal R\to\mathcal R'}(y)+R^{-1/50}+3R^{-100}.
\]
Averaging over $Y\sim\mathcal I$ gives
\[
\Pr_{Y\sim\mathcal I,\, o\sim g^{\mathrm{rel}}_{\sigma}(\mathcal A_{\sigma}(Y))}\bigl[(Y,o)\notin\mathcal R'\bigr]
\le \frac{\delta}{1-2^{-M}}+\delta+R^{-1/50}+3R^{-100}.
\]
If $\delta\ge R^{-1/50}$, then after increasing $R_0$ if necessary the right-hand side is at most $6\delta$. This proves case \textup{(3)}.
\end{proof}

\begin{remark}[Strict streams and ROBPs]\label{rmk:strict-robp-extension}
The theorems above are written for general turnstile streams, but can be extended easily to strict turnstile models and to non-uniform ROBPs.

For strict turnstile streams, the fixed distribution $\mathcal{I}$ should be supported on the nonnegative orthant $\Z_{\ge 0}^n\cap B_D$. Then we add $D^{10}$ additional $+1$ updates to each coordinate at the beginning of the stream, while still requiring the final frequency vector to be distributed according to $\mathcal{I}$ in the exact case, or $\mathcal{I} * \gamma_R^{\mathrm{tr}}$ in the mollified case. This ensures that every prefix intermediate frequency vector is nonnegative, while losing only a polynomial factor in the total stream length. After this mild modification, the same proof, including Yao's minimax argument on $\mathbb Z_{\ge0}^n \cap B_D$, applies verbatim.

The same argument also extends to read-once branching programs (ROBPs), provided that each sampled block increment $\delta$ admits a fixed valid realization depending only on $\delta$. In that setting, one simply replaces the canonical realization $\mathsf{can}(v)$ with any such valid realization; the remainder of the proof is unchanged. 

For the important special case of unit $\pm 1$ updates, canonicalizability follows from a parity argument. Since the step count determines the parity of $\|x\|_1$, we decompose the problem into its even- and odd-parity parts. For the even-parity part, we pad each block to a common length by appending alternating $+1,-1$ updates on the first coordinate after the target increment has been reached; this incurs only a loss of factor $2$  in the density parameter~$\alpha$. The odd-parity part is handled in exactly the same way. We then combine the two resulting linear sketches using one additional parity bit, preserving the asymptotic space bound.
\end{remark}

\newcommand{\DG}{\mathcal{D}}
\newcommand{\CG}{\mathcal{N}}
\newcommand{\DGZ}[1]{\DG_{\mathbb Z^n,#1}}

\newcommand{\bA}{\mathbf{A}}
\newcommand{\bG}{\mathbf{G}}
\newcommand{\bH}{\mathbf{H}}
\newcommand{\bM}{\mathbf{M}}
\newcommand{\bS}{\mathbf{S}}
\newcommand{\bZ}{\mathbf{Z}}
\newcommand{\bx}{\mathbf{x}}
\newcommand{\bSigma}{\boldsymbol{\Sigma}}
\newcommand{\Law}{\mathrm{Law}}

\section{Applications to turnstile streaming lower bounds}
\label{sec:application}
The applications below split into two families. In \cref{subsec:applications-lifting-real-sketches} we combine \cref{thm:randomized-mollified-transfer} with \cref{thm:linear-sketches}. Starting from a hypothetical turnstile algorithm, we first obtain a bounded-entry integer sketch for an explicit discrete hard law, and then use the lifting reduction of~\cite{gribelyuk2025lifting} to pass to a real sketch for the corresponding Gaussianized hard instance; the contradiction then comes from the continuous lower bounds in~\cite{PriceW11,LiW16,GangulyW18,NeedellSW22,SwartworthW23,gribelyuk2025lifting}. Although this lifting step is the same black-box reduction as in~\cite{gribelyuk2025lifting}, we spell it out in each application to verify that it remains valid for stream lengths beyond the polynomial-in-$n$ regime and for smaller values of $\delta$.

In \Cref{subsec:applications-smp}, which covers discontinuous problems such as $L_0$ estimation, the matching problems, and subgraph counting~\cite{assadi2016maximum,AssadiKL17,DuMWY19,KallaugherKP18}, we instead use the exact-transfer route from \cref{cor:yao-minmax-exact-transfer}: we convert the streaming algorithm into a bounded-support public-coin simultaneous protocol on worst-case inputs and then invoke the relevant SMP lower bound.

Throughout, the statements are phrased in terms of an external stream budget $W$, while inside the proofs we work with a polynomially smaller diameter parameter $D = W^\theta$ for a sufficiently small constant $\theta>0$. Since $\log D=\Theta(\log W)$, this polynomial rescaling changes the final bounds only by absolute constant factors.

\subsection{Applications via lifting real sketches}\label{subsec:applications-lifting-real-sketches}

We collect the full list of turnstile streaming lower bounds obtained via the mollified route.
\begin{theorem}[Lifting-based applications]
\label{thm:lifting-applications-full}
Any randomized turnstile streaming algorithm that solves the following problems on unit update streams of length at most $W\geq \poly(n / \delta) $ with success probability at least $1-\delta$ 
must use the stated amount of space.
\begin{enumerate}
\item ($L_p$ estimation, $1\le p\le 2$): output a $(1+\epsilon)$-approximation to the $L_p$ norm of the final frequency vector must use
$
\Omega\!\left(\min\{\epsilon^{-2}\log(1/\delta),n\}\cdot \log W\right)
$
bits. See \cref{cor:turnstile-lp-small}.
\item ($L_p$ estimation, $p>2$): set $\delta > 2^{-n^{\Omega_p(1)}}$, output a $(1+\epsilon)$-approximation to the $L_p$ norm of the final frequency vector must use
$
\Omega \Bigl(\min\bigl\{n^{1-2/p}\epsilon^{-2/p}\log n\,\log^{2/p}(1/\delta)+n^{1-2/p}\epsilon^{-2}\log(1/\delta),\,n\bigr\}\cdot \log W\Bigr)
$
bits. See \cref{cor:turnstile-lp-large}.
\item (Accurate approximation to operator norm): set $\delta = 1 /3$ and assume $n=\Omega(\epsilon^{-2})$; output a $(1+\epsilon)$-approximation to the operator norm of the final $n \times (n\epsilon^2)$ matrix must use $\Omega((n^2\epsilon^2)\log W)$ bits. See \cref{cor:turnstile-op-kyfan}.
\item (Large approximation to operator norm): set $\delta = 1 /3, \alpha = \Omega(1)$, output an $\alpha$-approximation to the operator norm of the final $n\times n$ matrix must use $\Omega((n^2/\alpha^4)\log W)$ bits. See \cref{cor:turnstile-op-kyfan}.
\item (Ky Fan norm) set $\delta = 1 /3$, output a fixed small constant approximation to
the Ky Fan $s$-norm of the final $n\times n$ matrix requires $\Omega((n^2/s^2)\log W)$ bits. See \cref{cor:turnstile-op-kyfan}.
\item (Eigenvalue approximation) set $\delta = 1 /3$ and assume $n=\Omega(\epsilon^{-2})$; output an additive $\epsilon\|\mathbf{M}\|_F$ eigenvalue approximation to the largest eigenvalue of the final $n\times n$ matrix $\mathbf{M}$ must use $\Omega(\epsilon^{-4}\log W)$ bits. See \cref{cor:turnstile-eigen-psd}.
\item (PSD testing) set $\delta = 1 /3$, distinguish between the cases where the final $n\times n$ matrix is PSD or $\epsilon$-far from PSD in $\ell_p$ norm must use $\Omega(\min\{\epsilon^{-2p},n^2\}\log W)$ bits for $p\in[1,2]$, $\Omega(\min\{\epsilon^{-4}n^{2-4/p},n^2\}\log W)$ bits for $p\in(2,\infty)$, and $\Omega(n^2\log W)$ bits for $p=\infty$. See \cref{cor:turnstile-eigen-psd}.
\item (Compressed sensing) set $\delta = 1/3$ and $\epsilon \geq \sqrt{k\log n/n}$. Given the final frequency vector $x\in \mathbb Z^n$, output an $O(\frac{\epsilon n}{\log n})$-sparse $x'$ satisfying
$
\|x-x'\|_2\le (1+\epsilon)\min_{\text{$k$-sparse }\tilde{x}}\|x-\tilde{x}\|_2
$
must use
$
\Omega\bigl((k/\epsilon)\log(n/k)\cdot \log W\bigr)
$
bits. See \cref{cor:turnstile-compressed-sensing}.
\end{enumerate}
\end{theorem}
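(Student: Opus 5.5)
This theorem is a collection, so I will prove each item separately with one common template. For a given problem, I fix a hard input distribution $\mathcal I$ that the lifting framework of \cite{gribelyuk2025lifting} already handles. Typically $\mathcal I$ is a (planted) discrete Gaussian of radius $D=W^{\theta}$ for a small constant $\theta>0$. I scale it so that its support, after truncating a negligible tail, has diameter at most $R^{5/4}$. I set $R:=D^{4/5}$, possibly reduced slightly, so that $R\ge \poly(n/\delta)$. In particular $R\ge \max\{n^{20},\delta^{-50},R_0\}$, and the mollified stream length $O(R^3(\sqrt n+\ln R))$ is at most $W$.

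Suppose an $S$-bit turnstile algorithm exists for the problem. The plan has three steps.

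First, I would verify average smoothness in the sense of \cref{def:smoothness} with respect to $\gamma_R$. For $L_p$ estimation, adding $Z\sim\gamma_R$ to $Y$ changes $\|Y\|_p$ by at most $\|Z\|_p$. This is at most $R\, n^{1/p}\polylog$ with high probability, whereas $\|Y\|_p\gtrsim D\,n^{1/p}$. Since $D/R=D^{1/5}$ is polynomially large, the relative perturbation is $\ll\epsilon$. The same argument applies to operator and Ky Fan norms via Weyl's inequality, to eigenvalue approximation, to PSD-distance, and to the compressed sensing tail error. In the last case I would use the relation form (item 3 of \cref{def:smoothness}) with a slightly relaxed $\mathcal R'$, e.g.\ $(1+2\epsilon)$ instead of $(1+\epsilon)$.

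Second, I apply \cref{thm:randomized-mollified-transfer} to obtain a deterministic integer sketch. It has dimension $O(S/\log R)$, entries bounded by $R$, and solves the problem on $\mathcal I$ with accuracy $O(\epsilon)$ and failure probability $O(\delta)$. The constant-factor losses in both $\epsilon$ and $\delta$ are absorbed into the target lower bounds.

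Third, I invoke the lifting theorem of \cite{gribelyuk2025lifting}. It converts a bounded-entry integer sketch on the discrete Gaussianized instance into a real sketch of the same dimension, up to constants, on the continuous hard instance. The known real-sketch dimension lower bounds $d$ from \cite{PriceW11,LiW16,GangulyW18,NeedellSW22,SwartworthW23} then give $S/\log R=\Omega(d)$. Since $\log R=\Theta(\log W)$, this yields $S=\Omega(d\log W)$. The $\min\{\cdot,n\}$ caps in items (1) and (2) come from the trivial identity sketch; equivalently, the transfer theorem itself never produces a dimension larger than $n$.

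The main obstacle is the parameter bookkeeping in the lifting step. The lifting reduction of \cite{gribelyuk2025lifting} was stated for $\poly(n)$ entry bounds and constant $\delta$. Here I must check that it still holds for entries bounded by $R=W^{\Omega(1)}$, which can be superpolynomial in $n$, and for small $\delta$ with the $\log(1/\delta)$ dependence preserved. Concretely, the discrete Gaussian of radius $D$ must be close in total variation, after projection by an integer matrix with entries bounded by $R$, to its continuous counterpart. I would argue this with the smoothing-parameter bound \cref{lem:smoothing-parameter-bound-general} and Poisson summation \cref{cor:psf-approx-smoothing-general}. Because $D\ge R\cdot\poly(n)$, the resulting error is $\exp(-\Omega(n))$ or $\poly(\delta)$, small compared with $\delta$.

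A secondary concern is the $p>2$ case. Item (2) requires $\delta\ge 2^{-n^{\Omega_p(1)}}$ so that the planted heavy coordinate still dominates the smoothing noise. I would handle this by choosing the planted magnitude as a suitable power of $D$.
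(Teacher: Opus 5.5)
Your three-step template is exactly the paper's: average smoothness with respect to $\gamma_R$, then \cref{thm:randomized-mollified-transfer} with entry bound $Q=R$, then lifting and the known real-sketch bounds, with $\log R=\Theta(\log W)$ and a relaxed relation for compressed sensing. The problem is in the step you single out as the main obstacle. As stated, ``the projected discrete Gaussian is close in total variation to its continuous counterpart'' is false. $\mathbf{A}X$ is supported on the lattice $\mathbf{A}\mathbb Z^n$, so it is at distance $1$ from $\mathbf{A}G$.

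\begin{itemize}
\item The comparison actually made (\cref{app:proof-linear-sketches}) is between $\mathbf{A}X+\eta$, with $\eta$ uniform on a fundamental cell, and $\mathbf{A}G$. The real decoder then rounds back to the lattice, $h(u)=g(\Xi\phi(u))$.
\item The smoothing condition in \cref{lem:projected-discrete-vs-lattice-gaussian} is relative to the kernel lattice $\mathcal L^\perp(\mathbf{A})$, not to the entry bound. An integer matrix with entries at most $R$ (e.g.\ rows $e_i-Re_{i+1}$) can have $\lambda_{\max}(\mathcal L^\perp)\ge R^{r}$. So ``$D\ge R\cdot\poly(n)$'' is only enough once the hypothesis $\lambda_{\max}(\mathcal L^\perp(\mathbf{A}))\le Q\sqrt n$ of \cref{lem:disc-to-cont-r-parameterized} holds, which is what the preprocessing to $\widetilde{\mathbf{A}}$ is for.
\item Your scaling puts the internal Gaussian at scale about $D\approx R^{5/4}$. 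This is above the ceiling $\sigma_1(\mathbf{S}_b)\le Q^{6/5}$ of \cref{thm:linear-sketches}, which \cref{lem:cellwise-gaussian-flatness} uses for within-cell flatness.
\end{itemize}

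The paper avoids re-deriving the lifting by decoupling the two scales. It takes $R=\lfloor D^{4/5}n^{2/5}\rfloor$ and internal scale $\Lambda=C_1\epsilon^{-1}nR\log R\in[nR\log R,R^{6/5}]$. The condition $D\ge C_2\Lambda\sqrt{n\log R}$ serves only to make truncation negligible. Smoothness then comes from $\Lambda/R\gtrsim\epsilon^{-1}\sqrt{\log R}$, not from $D/R=D^{1/5}$. For isotropic backgrounds you could instead redo the flatness estimate at scale $R^{5/4}$. As written, though, your plan does not establish the lifting in your parameter regime.

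You also omit the reduction in $\epsilon$. The smoothness requirement and the lifting ceiling together force $\epsilon\ge W^{-O(1)}$; your own version needs $D^{1/5}\gg\epsilon^{-1}$. The theorem allows arbitrary $\epsilon$ with $W$ only $\poly(n/\delta)$. The paper (\cref{cor:turnstile-lp-small}) first replaces $\epsilon$ by $\max\{\epsilon,\sqrt{\log(1/\delta)/n}\}$, since a $(1+\epsilon)$-approximation is also a $(1+\epsilon_0)$-approximation. That is how the $\min\{\cdot,n\}$ enters the proof. The identity sketch only explains why the bound cannot exceed $n\log W$; it proves nothing in the lower-bound direction.

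For $p>2$, the hypothesis $\delta>2^{-n^{\Omega_p(1)}}$ is not about the smoothing noise. It comes from the concentration of the hard pair (\cite[Lemma~5.2.4]{gribelyuk2025lifting}) and from the Ganguly--Woodruff bound. The planted magnitude $\lambda$ is fixed by that hard construction, so you cannot freely set it to a power of $D$ without losing the continuous lower bound you intend to invoke.
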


For the applications in this subsection, the main tool is the following lifting theorem for linear sketches appearing in~\cite{gribelyuk2025lifting}. The original statement in~\cite{gribelyuk2025lifting} mainly focuses on the regime where the covariance matrix $\bSigma$ has singular values polynomial in $n$ and the error parameter $\delta$ is at least inverse polynomial in $n$. We adapt their statement to our setting and allow arbitrary $\delta$, given an explicit polynomial relation between the entry bound $Q$ and $1 / \delta$. We also include a short proof in \cref{app:proof-linear-sketches} for completeness.

\begin{theorem}
\label{thm:linear-sketches}
Let $\mathcal M$ be a finite set, and let $B$ be an $\mathcal M$-valued random variable. Conditioned on $B=b$, let $Y\sim \mathcal I_b$, let $X\sim \DGZ{\bS_b}$, and let $G\sim \CG(0,\bSigma_b)$, where $\bSigma_b=\bS_b^\top\bS_b\in\mathbb R^{n\times n}$ is full rank; assume that conditioned on $B=b$, the random variables $X,G,Y$ are mutually independent.

Let $\delta\in(0,1/10)$, let $Q\ge 2$ be sufficiently large, and let $\bA_0\in\mathbb Z^{r\times n}$ be an integer sketching matrix with $r\le n/4$ whose entries are bounded in absolute value by $Q$. Apply the preprocessing from~\cite{gribelyuk2025lifting} to obtain $\widetilde{\bA}\in\mathbb Z^{\ell\times n}$ with $\ell\le 4r$, and let $\bA\in\mathbb R^{\ell\times n}$ be any row-orthonormal matrix with the same rowspan as $\widetilde{\bA}$. Assume
\[
Q\ge (2n/\delta)^{30},
\qquad
Q^{6/5}\ge \sigma_1(\bS_b)\ge \sigma_n(\bS_b)\ge nQ\log Q
\qquad\text{for every }b\in\mathcal M.
\]
Let $f:\mathbb R^n\to \mathcal M\cup\{\perp\}$. Suppose
\[
\Pr\bigl[f(X+Y)= B\bigr]\ge 1-\delta/3,
\]
and that there is a post-processing function $g$ such that for every $b\in\mathcal M$,
\[
\Pr\bigl[g(\widetilde{\bA}(X+Y))=f(X+Y)\mid B=b\bigr]\ge 1-\delta/3.
\]
Then there exists a post-processing function $h$ such that for every $b\in\mathcal M$,
$$
\Pr\bigl[h(\bA(G+Y))=b\mid B=b\bigr]\ge 1-\delta.
$$
\end{theorem}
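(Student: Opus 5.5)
We follow the lifting strategy of \cite{gribelyuk2025lifting} and make two things explicit: the quantitative dependence on $\delta$, and the fact that the singular values of $\bS_b$ lie in the window $[nQ\log Q,\,Q^{6/5}]$. The argument has three steps: a row-space change, a Gaussian approximation, and a recursive decoder.

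\textbf{Step 1: replace the integer sketch by its row span.} The preprocessing of \cite{gribelyuk2025lifting} turns $\bA_0$ into $\widetilde{\bA}$ with $\ell\le 4r$ rows. It has two properties we use:
\begin{enumerate}
\item $\widetilde{\bA}x$ determines $\bA_0x$, so the hypothesised decoder $g$ can be taken to act on $\widetilde{\bA}(X+Y)$.
\item The lattice $\ker\widetilde{\bA}\cap\mathbb Z^n$ has a basis of integer vectors of norm $\poly(nQ)$. Equivalently, $\widetilde{\bA}\mathbb Z^n$ is a full-rank lattice in its real span with a short basis.
\end{enumerate}
Since $\bA$ is row-orthonormal with the same row span, $\widetilde{\bA}=T\bA$ for an invertible $T$. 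Hence $\widetilde{\bA}(X+Y)$ and $\bA(X+Y)$ are mutual functions of each other, and it suffices to compare the laws of $\bA(X+Y)$ and $\bA(G+Y)$ given $B=b$.

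\textbf{Step 2: the projected discrete Gaussian is close to the continuous one.} Condition on $B=b$ and $Y=y$. The key claim is
\[
\dTV\bigl(\bA X,\ \bA G\bigr)\le \delta/10 .
\]
The law of $\bA X$ is a sum over fibres: $\widetilde{\bA}X$ is supported on the lattice $\widetilde{\bA}\mathbb Z^n$, and the mass on each point $z$ is a Gaussian sum over the coset $\{x\in\mathbb Z^n:\widetilde{\bA}x=z\}$. We bound this in three moves.
\begin{enumerate}
\item Apply Poisson summation (\cref{lem:poisson-gaussian-general}) on each coset of $\ker\widetilde{\bA}\cap\mathbb Z^n$. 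Because $\sigma_n(\bS_b)\ge nQ\log Q$ exceeds the smoothing parameter of this kernel lattice, \cref{lem:smoothing-parameter-bound-general} makes each fibre mass equal to the continuous marginal density up to a factor $1\pm Q^{-\Omega(1)}$.
\item The resulting distribution on $\widetilde{\bA}\mathbb Z^n$ is a discretisation of $\CG(0,\widetilde{\bA}\bSigma_b\widetilde{\bA}^\top)$ on a lattice whose covering radius is $\poly(nQ)$. The Gaussian width is at least $nQ\log Q\cdot\sigma_{\min}(\widetilde{\bA})$, which dominates that covering radius.
\item The map that rounds $G$ to the nearest lattice point therefore changes the law by $Q^{-\Omega(1)}$ in total variation. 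The upper bound $\sigma_1\le Q^{6/5}$ enters here: it keeps the anisotropy of $\bSigma_b$ from making the error term blow up, and it bounds the polynomial losses.
\end{enumerate}
This yields a coupling of $(X,G)$ with $\|\bA X-\bA G\|_2\le \poly(nQ)$ except with probability $Q^{-\Omega(1)}\le\delta/10$, using $Q\ge(2n/\delta)^{30}$.

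\textbf{Step 3: build the real decoder $h$.} Given $\bA(G+Y)$, the decoder $h$ samples the conditional law of $\bA(X+Y)$ given this value, using the coupling from Step 2. It maps the sample to $\widetilde{\bA}(X+Y)$ through $T$ and outputs $g(\cdot)$.
\begin{itemize}
\item By Step 2, the pair (truth $b$, input to $g$) under $h$ differs from the original experiment by at most $\delta/10$ in total variation, conditioned on $B=b$.
\item In the original experiment, $g$ outputs $f(X+Y)$ except with probability $\delta/3$.
\item For the average over $b$, $f(X+Y)=B$ except with probability $\delta/3$.
\end{itemize}
These add up to at most $\delta$ in total, which is the claimed bound. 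The per-$b$ statement needs care, because the hypothesis $\Pr[f(X+Y)=B]\ge1-\delta/3$ is only averaged over $b$. We handle this exactly as \cite{gribelyuk2025lifting} does: since $\mathcal M$ is finite, condition on $B$, or absorb the per-$b$ slack into the $\delta$ budget.

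\textbf{Main obstacle.} The hardest part is the fibre-wise Poisson/smoothing estimate in Step 2. It needs a quantitative short basis for $\ker\widetilde{\bA}\cap\mathbb Z^n$ and for the image lattice; this is exactly what the preprocessing and the conditions $r\le n/4$ and $\ell\le 4r$ are for. Those bounds then have to be traded against $\sigma_n(\bS_b)\ge nQ\log Q$ so that every error is $Q^{-\Omega(1)}$, which the assumption $Q\ge(2n/\delta)^{30}$ converts into an error below $\delta$. We would first import the short-basis bounds from \cite{gribelyuk2025lifting}, which give a basis length of roughly $\sqrt n\,Q$, and then check the exponents.
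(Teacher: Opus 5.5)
Your architecture matches the paper's: smoothing on cosets of $\ker\widetilde{\mathbf A}\cap\mathbb Z^n$ to compare $\mathbf A X$ with the lattice Gaussian on $\mathbf A\mathbb Z^n$, then cellwise flatness against $\mathbf A G$. But the headline claim of Step~2, $\dTV(\mathbf A X,\mathbf A G)\le\delta/10$, is false as written. $\mathbf A X$ is supported on the lattice $\mathbf A\mathbb Z^n$ and $\mathbf A G$ is absolutely continuous, so their total variation distance is $1$. What the paper proves is $\dTV\bigl(\mathrm{Law}(\mathbf A X+\eta),\mathrm{Law}(\mathbf A G)\bigr)\le Q^{-1/5}$ for each $b$, with $\eta$ uniform on a fundamental cell of $\mathbf A\mathbb Z^n$.

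Quantitatively, the flatness step closes only because that cell can be taken of diameter $m\sqrt m$, independent of $Q$; this holds since $\mathbf A\mathbb Z^n$ is generated by the vectors $\mathbf A e_i$, each of norm at most $1$. The density variation across a cell is bounded by $(2\|z\|_2\|w\|_2+\|w\|_2^2)/\sigma_n(\mathbf S_b)^2$, with $\|z\|_2\lesssim Q^{6/5}\sqrt{n\log Q}$ from the tail bound; this is where $\sigma_1(\mathbf S_b)\le Q^{6/5}$ is used. A cell of diameter of order $Q\sqrt n$, which your ``covering radius $\mathrm{poly}(nQ)$'' allows, would make this ratio large. Relatedly, a short basis for the kernel lattice (which the preprocessing supplies and the smoothing step needs) is not equivalent to a short basis for the image lattice. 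These are different lattices used in different steps.

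The more serious gap is Step~3. A coupling with $\|\mathbf A X-\mathbf A G\|_2\le\mathrm{poly}(nQ)$ is useless here. The decoder $g$ is only assumed correct on the exact lattice value $\widetilde{\mathbf A}(X+Y)$, and knowing a lattice point to within many lattice spacings does not determine it. Moreover, a decoder that samples ``the conditional law of $\mathbf A(X+Y)$ given $\mathbf A(G+Y)$ using the coupling'' depends on $b$ (and on $y$), because the coupling was built after conditioning on $B=b$ and $Y=y$. But $h$ must be a single function that does not know $b$.

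Your own move~3 (rounding) is the right idea, and the paper makes it the decoder: $h(u):=g(\Xi\phi(u))$, where $\widetilde{\mathbf A}=\Xi\mathbf A$ and $\phi$ sends a point to the lattice point of its cell. Since $X+Y\in\mathbb Z^n$, one has $\phi(\mathbf A(X+Y)+\eta)=\mathbf A(X+Y)$. Adding the independent $\mathbf A Y$ and applying data processing then turns the per-$b$ bound above into $\dTV\bigl(\mathrm{Law}(g(\widetilde{\mathbf A}(X+Y))\mid B=b),\mathrm{Law}(h(\mathbf A(G+Y))\mid B=b)\bigr)\le Q^{-1/5}\le\delta/3$.

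Finally, your proposed fix for the averaged hypothesis does not work. The bound $\Pr[f(X+Y)=B]\ge 1-\delta/3$ only gives $\Pr[f(X+Y)\ne b\mid B=b]\le \delta/(3\Pr[B=b])$. The per-$b$ conclusion can genuinely fail: take two labels with identical conditional laws, one of probability below $\delta/3$, and let $f,g$ be constantly equal to the other label. The paper's proof in fact uses $\Pr[f(X+Y)\ne b\mid B=b]\le\delta/3$ for every $b$, which is what its applications provide, so you should assume the hypothesis in that per-$b$ form.
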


Given the lifting theorem, all proofs in this subsection
follow the same template: truncate the discrete hard law for integer sketches in \cite{gribelyuk2025lifting}, apply \cref{thm:randomized-mollified-transfer} to the truncated distribution to obtain a bounded-entry integer sketch, then invoke the lower bound from \cite{gribelyuk2025lifting}. We remark that we need to check the smoothness condition in \cref{def:smoothness} for the truncated hard law, which is not immediate from the original construction and requires some additional work; see the proofs below for details.

\subsubsection{\texorpdfstring{$L_p$}{Lp} estimation for \texorpdfstring{$1\le p\le 2$}{1 <= p <= 2}}

The lower-bound construction for the small-$p$ regime in~\cite{gribelyuk2025lifting} uses the following pair of distributions on $\Z^n$:
\[
\mathcal I_{p,0}:=\DG(0,\Lambda^2\Id_n),
\qquad
\mathcal I_{p,1}:=\DG\bigl(0,(1+4\epsilon)^2\Lambda^2\Id_n\bigr),
\]
where $\Lambda\ge 1$ is the internal Gaussian scale parameter that we will choose explicitly as a function of $R$ below. The decision function is
\[
f(y)=
\begin{cases}
0,&\|y\|_p\le (1+\epsilon)T_p,\\
1,&\|y\|_p\ge (1+3\epsilon)T_p,\\
\perp,&\text{otherwise},
\end{cases} \quad \text{where }T_p:=\mathbb E_{g\sim\CG(0,\Lambda^2\Id_n)}\|g\|_p.
\]
\begin{proposition}
\label{prop:smooth-lp-small}
Assume $1\le p\le 2$. For $Y$ drawn from either $\mathcal I_{p,0}$ or $\mathcal I_{p,1}$ and $Z\sim\gamma_R$ independent, if
$
\Lambda\ge C\epsilon^{-1}R\sqrt{\log R}
$
for a sufficiently large absolute constant $C$, then
\[
\Pr\bigl[f(Y+Z)=f(Y)\bigr]\ge 1-R^{-100},
\]
In particular, $f$ is $(0,R^{-100})$-smooth on average on the mixture $\frac12\mathcal I_{p,0}+\frac12\mathcal I_{p,1}$ with respect to $\gamma_R$.
\end{proposition}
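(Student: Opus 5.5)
We need $\Pr[f(Y+Z)=f(Y)]\ge 1-R^{-100}$, with the convention that $\perp$ matches anything (the promise shorthand). The only way to fail is for $f(Y)$ and $f(Y+Z)$ to be the two different definite values $0$ and $1$. That forces $\|Y+Z\|_p$ and $\|Y\|_p$ to sit on opposite sides of the gap $[(1+\epsilon)T_p,(1+3\epsilon)T_p]$. By the triangle inequality this needs $\|Z\|_p\ge 2\epsilon T_p$. So the plan is to show $\Pr[\|Z\|_p\ge 2\epsilon T_p]\le R^{-100}$. Since this event involves $Z$ alone and $Z$ is independent of $Y$, the same bound holds whether $Y\sim\mathcal I_{p,0}$ or $Y\sim\mathcal I_{p,1}$, and hence for the mixture.

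\textbf{Tail of $\|Z\|_p$.} For $p\le 2$ we have $\|Z\|_p\le n^{1/p-1/2}\|Z\|_2$. By the tail bound \cref{eq:gamma-tail-bound-general},
\[
\Pr\bigl[\|Z\|_2>R\sqrt{n+c\ln R}\bigr]\le(\sqrt2)^n e^{-\pi(n+c\ln R)/2}\le R^{-100}
\]
for a suitable constant $c$. Therefore, outside an event of probability at most $R^{-100}$,
\[
\|Z\|_p\le n^{1/p-1/2}R\sqrt{n+c\ln R}\le C' n^{1/p}R\sqrt{\log R}.
\]

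\textbf{Lower bound on $T_p$.} Since $T_p=\Lambda\,\mathbb E_{g\sim\CG(0,\Id_n)}\|g\|_p$ and $\mathbb E\|g\|_p\ge c_p n^{1/p}$ (by Jensen or by concentration of $\sum_i|g_i|^p$, with $c_p$ uniform over $p\in[1,2]$), we get $T_p\ge c\,\Lambda n^{1/p}$. The hypothesis $\Lambda\ge C\epsilon^{-1}R\sqrt{\log R}$ then gives
\[
2\epsilon T_p\ge 2cC\,n^{1/p}R\sqrt{\log R}>C' n^{1/p}R\sqrt{\log R}
\]
once $C$ is large. This closes the argument. The smoothness statement with parameters $(0,R^{-100})$ follows directly, since the probability bound holds conditionally on each component of the mixture.

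\textbf{Main obstacle.} The delicate point is the dimensional bookkeeping. We must check that $n^{1/p}$ appears on both sides, so that $\Lambda$ needs only an $\epsilon^{-1}R\sqrt{\log R}$ factor with no extra $n$ dependence. We also need the $\sqrt{n}$ in the Gaussian tail to combine with $n^{1/p-1/2}$ into exactly $n^{1/p}$, with the $\sqrt{\log R}$ coming only from the additive $\ln R$ term. If $n\ll\log R$, the bound $\sqrt{n+c\ln R}\le \sqrt{n}\sqrt{1+c\ln R}$ still gives the claimed form.
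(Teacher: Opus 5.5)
Your estimate of the perturbation size matches the paper's: the $\ell_2$ tail of $Z$ from \cref{eq:gamma-tail-bound-general}, the comparison $\|Z\|_p\le n^{1/p-1/2}\|Z\|_2$, and $T_p=\Theta(\Lambda n^{1/p})$. The problem is your first step, which proves a weaker statement than the one asserted. You let $\perp$ ``match anything'' in both directions, so the only failure left is $\{f(Y),f(Y+Z)\}=\{0,1\}$. That is not how the paper reads $f(Y+Z)=f(Y)$. The proposition claims $(0,R^{-100})$-smoothness in the sense of \cref{def:smoothness}, which means literal equality. Even the promise shorthand is one-sided: it holds iff $f(Y)=*$ or $f(Y+Z)=f(Y)\in\{0,1\}$. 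Under either reading, $f(Y)\in\{0,1\}$ together with $f(Y+Z)=\perp$ is a failure. It has to be: \cref{thm:randomized-mollified-transfer} needs every output valid for $Y+Z$ to be valid for $Y$. If $f(Y)=0$ but $f(Y+Z)=\perp$, then answering $1$ is correct at $Y+Z$ and wrong at $Y$, so your symmetric version cannot support that step.

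Under the correct reading your reduction fails. The event $f(Y)=0$, $f(Y+Z)=\perp$ occurs whenever $\|Y\|_p$ lies within $\|Z\|_p$ below $(1+\epsilon)T_p$, however small $Z$ is. So no tail bound on $Z$ alone can finish the proof, and the sign of this is that your argument never uses the law of $Y$. The missing ingredient is the label-dependent location of $\|Y\|_p$, which the paper invokes as the ``good region'' of \cite{gribelyuk2025lifting}:
\begin{itemize}
\item under $\mathcal I_{p,0}$, $\|Y\|_p$ concentrates near $T_p$, so typically $\|Y\|_p\le(1+\epsilon/2)T_p$;
\item under $\mathcal I_{p,1}$, it concentrates near $(1+4\epsilon)T_p$, so typically $\|Y\|_p\ge(1+7\epsilon/2)T_p$.
\end{itemize}
On these events, a perturbation with $\|Z\|_p\le\epsilon T_p/2$ cannot change $f$.

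Leaving the good region has a probability governed by the concentration of $\|Y\|_p$, not by $R^{-100}$. It is of order $e^{-\Omega(\epsilon^2 n)}$, since $y\mapsto\|y\|_p$ is $n^{1/p-1/2}$-Lipschitz in $\ell_2$. The paper's proof folds this term into the good-region event without displaying it.

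A minor point: the lower bound $\mathbb E\|g\|_p\ge c\,n^{1/p}$ follows from the power-mean inequality $\|g\|_p\ge n^{1/p-1}\|g\|_1$. Jensen applied to the concave map $t\mapsto t^{1/p}$ gives only the upper bound.
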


\begin{proof}
The same argument in~\cite[Section 5.1]{gribelyuk2025lifting} gives $T_p=\Theta(\Lambda n^{1/p})$. Also, recall from \cref{eq:gamma-tail-bound-general} and set $u = C_1R\sqrt{n\log R}$ for a sufficiently large absolute constant $C_1$,
\begin{equation}\label{eq:lp-small-smoothness-l2}
    \Pr\bigl[\|Z\|_2\le C_1R\sqrt{n\log R}\bigr]\ge 1-R^{-100}.
\end{equation}
Therefore, on the same event,
\[
\|Z\|_p\le n^{1/p-1/2}\|Z\|_2\le C_2Rn^{1/p}\sqrt{\log R}.
\]
Consequently
\[
\bigl|\|Y+Z\|_p-\|Y\|_p\bigr|\le \|Z\|_p\le C_2Rn^{1/p}\sqrt{\log R}.
\]
If $\Lambda\ge C\epsilon^{-1}R\sqrt{\log R}$ with $C$ sufficiently large, then
$
C_2Rn^{1/p}\sqrt{\log R}
\le \frac{\epsilon}{2}T_p.
$
Hence adding $Z$ can move the $p$-norm by at most half of the gap between the thresholds $(1+\epsilon)T_p$ and $(1+3\epsilon)T_p$, so the decision value of $f$ is unchanged whenever $Y$ already lies in the corresponding good region from~\cite{gribelyuk2025lifting}. This proves
$
\Pr\bigl[f(Y+Z)=f(Y)\bigr]\ge 1-R^{-100}.
$
\end{proof}

\begin{theorem}
\label{cor:turnstile-lp-small}
Assume $1\le p\le 2$, and let $W\ge (n/\delta)^{C_0}$ for a sufficiently large absolute constant $C_0$. Then any randomized turnstile streaming algorithm that, on every unit update turnstile stream with length at most $W$, outputs a $(1+\epsilon)$-approximation to the $L_p$ norm with success probability at least $1-\delta$ must use
$
\Omega\Bigl(\min\{\epsilon^{-2}\log(1/\delta),n\}\cdot \log W\Bigr)
$
bits of space.
\end{theorem}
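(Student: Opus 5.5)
We argue by contradiction, following the template just described: the streaming algorithm is converted into a bounded-entry integer sketch via \cref{thm:randomized-mollified-transfer}, lifted to a real sketch via \cref{thm:linear-sketches}, and then confronted with the known real-sketch lower bound for distinguishing $\CG(0,\Lambda^2\Id_n)$ from $\CG(0,(1+4\epsilon)^2\Lambda^2\Id_n)$.

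\textbf{Parameters.} Fix a small constant $\theta>0$ and set $R:=W^{\theta}$. This gives $\log R=\Theta(\log W)$ and $R\ge (n/\delta)^{C}$ for a large constant $C$, since $W\ge (n/\delta)^{C_0}$. Choose the Gaussian scale $\Lambda:=C\epsilon^{-1}R\sqrt{\log R}$, so that \cref{prop:smooth-lp-small} applies. We may assume $\epsilon^{-1}\le n$; otherwise $\min\{\epsilon^{-2}\log(1/\delta),n\}=n$ and we run the same argument with $\epsilon$ replaced by $n^{-1/2}$.

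\textbf{Building the hard law $\mathcal I$.} Let $B$ be a uniform bit. Truncate $\mathcal I_{p,B}$ to a ball of radius $O(\Lambda(\sqrt n+\sqrt{\log R}))$, which costs $R^{-100}$ in total variation. This radius is $\ll R^{5/4}$ because $\Lambda=R^{1+o(1)}$ when $\epsilon^{-1},n\le R^{o(1)}$. If instead $\epsilon^{-1}$ is polynomially large in $R$, we shrink $\theta$ accordingly; since $W\ge \poly(n/\delta)$ and the statement only needs $\log W$ up to constants, this is harmless. Let $\mathcal I$ be the resulting mixture.

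\textbf{Transfer.} A $(1+\epsilon)$-approximation of $\|x\|_p$ solves the promise problem $f$ of the preceding discussion. The stream length of the mollified stream distribution is $O(R^3(\sqrt n+\ln R))\le W$. \Cref{prop:smooth-lp-small} gives $\delta$-smoothness of $f$ on average over $\mathcal I$ with respect to $\gamma_R$, since $R^{-100}\le\delta$. Case (2) of \cref{thm:randomized-mollified-transfer} therefore yields an integer sketch $\bA_0$ of dimension $r=O(S/\log R)$ with entries at most $Q:=R$, whose decoder succeeds with probability $1-16\delta$ on $\mathcal I$. If $r>n/4$, then $S=\Omega(n\log R)$ and we are done.

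\textbf{Lifting and contradiction.} Otherwise we apply \cref{thm:linear-sketches}. Write $\mathcal I_{p,b}=\DGZ{\bS_b}$ with $\bS_b=\Lambda\Id$ or $(1+4\epsilon)\Lambda\Id$, taking the extra component $Y=0$, and replace $\delta$ by $O(\delta)$. The singular-value window $nQ\log Q\le\Lambda\le Q^{6/5}$ forces us to retune: we take $Q:=R^{a}$ with $a<1$ chosen so that $\Lambda$ lands in the window. The entries of $\bA_0$ are at most $R\le Q$ as long as $a\ge 1$, and reconciling these two requirements is the delicate step. The fallback is to treat the factor $R$ as part of the entry bound and set $\Lambda:=Q^{1.1}$ with $Q:=R\cdot\poly(n/\epsilon)$; smoothness still holds since $\Lambda\gg\epsilon^{-1}R\sqrt{\log R}$.

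The conclusion is a real row-orthonormal sketch with $O(S/\log R)$ rows that distinguishes the two Gaussians with error $O(\delta)$. The known real-sketch bound (\cite{gribelyuk2025lifting}, via \cite{LiW16} for the $\log(1/\delta)$ dependence) requires $\Omega(\min\{\epsilon^{-2}\log(1/\delta),n\})$ rows. Hence $S=\Omega(\min\{\epsilon^{-2}\log(1/\delta),n\}\log R)$, and $\log R=\Theta(\log W)$.

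\textbf{Main obstacle.} I expect the hardest part to be the simultaneous parameter compatibility: the smoothness requirement $\Lambda\gtrsim\epsilon^{-1}R$, the diameter requirement $\le R^{5/4}$, the lifting window $nQ\log Q\le\Lambda\le Q^{6/5}$ with $Q\ge(2n/\delta)^{30}$, and the stream-length budget $\le W$ must all hold at once. The smoothness of truncated hard laws also needs care near the thresholds $(1+\epsilon)T_p$ and $(1+3\epsilon)T_p$.
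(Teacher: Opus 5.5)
Your overall route is the paper's: truncate the small-$p$ hard pair, apply case (2) of \cref{thm:randomized-mollified-transfer} with entry bound $Q=R$, lift with \cref{thm:linear-sketches}, and invoke the Gaussian real-sketch bound. The paper cites Ganguly--Woodruff \cite{GangulyW18} for that last bound. Your explicit handling of $r>n/4$, which \cref{thm:linear-sketches} excludes, is a point the paper passes over. There is, however, one genuine gap, and the step you call delicate is not actually an obstacle.

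\textbf{The gap: your reduction on $\epsilon$ uses the wrong threshold.} \Cref{thm:linear-sketches} requires $\Pr[f(X+Y)=B]\ge 1-\delta/3$ (here $Y=0$). In words, the threshold function $f$ must label the hard pair correctly except with probability $O(\delta)$. You never check this (you write ``replace $\delta$ by $O(\delta)$''), and in part of your range it is false.

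The pair $\mathcal N(0,\Lambda^2\mathrm{Id}_n)$, $\mathcal N(0,(1+4\epsilon)^2\Lambda^2\mathrm{Id}_n)$ has KL divergence $\Theta(\epsilon^2 n)$. By Bretagnolle--Huber, every test errs with probability at least $\tfrac14 e^{-O(\epsilon^2 n)}$; this applies to $f$ and to any real sketch of any dimension. That lower bound exceeds $\delta$ as soon as $\epsilon^2 n\le c\log(1/\delta)$ for a small constant $c$. For $\epsilon\le n^{-1/2}$ it is an absolute constant. That range includes your replacement value $\epsilon=n^{-1/2}$ and the interval $1/n\le\epsilon<n^{-1/2}$ that you keep. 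So there the lifted conclusion you aim for is simply false, and no retuning of $R,\Lambda,Q$ can rescue the chain.

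The fix is the paper's first step. If $\epsilon<\epsilon_0:=C\sqrt{\log(1/\delta)/n}$, replace $\epsilon$ by $\epsilon_0$. Any $(1+\epsilon)$-approximation is also a $(1+\epsilon_0)$-approximation, and $\min\{\epsilon^{-2}\log(1/\delta),n\}=\Theta(n)$ throughout that range, so only a constant factor is lost. Afterwards $\epsilon^2 n\ge C^2\log(1/\delta)$ makes the label error at most $\delta/3$, and in particular $\epsilon^{-1}\le\sqrt n$.

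\textbf{The parameter window.} You are right that $\Lambda=C\epsilon^{-1}R\sqrt{\log R}$ violates $\sigma_n\ge nQ\log Q$ at $Q=R$. You are also right that $Q=R^a$ with $a<1$ conflicts with the entry bound $R$ coming from the transfer. But $\Lambda$ is a free parameter, so keep $Q=R$ and enlarge $\Lambda$ from the outset; the paper takes $\Lambda:=C_1\epsilon^{-1}nR\log R$. Once $\epsilon^{-1}\le\sqrt n$ and $R\ge\mathrm{poly}(n/\delta)$, this single choice satisfies every constraint at once:
\begin{itemize}
\item smoothness, since $\Lambda\gg\epsilon^{-1}R\sqrt{\log R}$;
\item the lifting window, since $nR\log R\le\Lambda$ and $(1+4\epsilon)\Lambda\le 5\Lambda\le R^{6/5}$;
\item $Q=R\ge(2n/\delta)^{30}$;
\item the diameter bound, since a truncation radius of order $\Lambda\sqrt{n\log R}$ fits inside $R^{5/4}/2$.
\end{itemize}

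Your fallback $\Lambda=Q^{1.1}$, $Q=R\cdot\mathrm{poly}(n/\epsilon)$ also works, with two caveats. First, $\Lambda$ must be fixed before \cref{thm:randomized-mollified-transfer} is applied, because the hard law $\mathcal I$ depends on it. Second, you must recheck the diameter condition $\Lambda\sqrt{n\log R}\lesssim R^{5/4}$, which uses $\epsilon^{-1}\le\mathrm{poly}(n)$ and $R\ge\mathrm{poly}(n)$. For the same reason, the case where $\epsilon^{-1}$ is polynomially large in $R$ never arises.
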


\begin{proof}
If $\epsilon<\sqrt{\log(1/\delta)/n}$, replace $\epsilon$ by $\epsilon_0:=\sqrt{\log(1/\delta)/n}$. Any $(1+\epsilon)$-approximation algorithm is also a $(1+\epsilon_0)$-approximation algorithm, and
$
\epsilon_0^{-2}\log(1/\delta)=n.
$ So it suffices to prove the theorem under the standing assumption
$
\epsilon\ge \sqrt{\log(1/\delta)/n} \geq 1/\sqrt{n}.
$

Fix a sufficiently small absolute constant $\theta>0$ and set
$
D:=W^{\theta}.
$

Assume towards contradiction that there is an $S$-bit randomized turnstile algorithm as in the statement. Set
\[
R:=\lfloor D^{4/5}n^{2/5}\rfloor,
\qquad
\Lambda:=C_1\epsilon^{-1}nR\log R,
\]
where $C_1$ is the constant from \cref{prop:smooth-lp-small}. Because $D \ge (n/\delta)^{C_0\theta}$ and $\epsilon^{-1}\le \sqrt{n}$, enlarging $C_0$ if necessary ensures that
\[
D\ge C_2\Lambda\sqrt{n\log R}
\qquad\text{and}\qquad
2D\le R^{5/4}.
\]

Let $B\sim \mathrm{Unif}(\{0,1\})$. Conditioned on $B=b$, let $Y\sim \mathcal I_{p,b}$. Thus the hard input law is exactly the balanced mixture $\frac12\mathcal I_{p,0}+\frac12\mathcal I_{p,1}$, and the same proof in~\cite[Lemma 5.1.2]{gribelyuk2025lifting} shows that when $\Lambda \geq n^{\Omega(1)}$,
$
\Pr[f(Y)=B]\ge 1-\exp(-\Omega(\epsilon^2 \Lambda)) = 1-O(\delta).
$
Equivalently, for each $b\in\{0,1\}$ one has $\Pr[f(Y)=b\mid B=b]\ge 1-O(\delta)$. Now condition this labeled hard distribution on the event
\[
E_{\mathrm{tr}}:=\{\|Y\|_2\le D\}.
\]
Since the hard input is just an $n$-dimensional discrete Gaussian of scale $\Lambda$ (or of scale $(1+4\epsilon)\Lambda$), the same $\ell_2$ estimate as in \cref{eq:lp-small-smoothness-l2}, with $R$ replaced by $\Lambda$, gives $\Pr[E_{\mathrm{tr}}^c]\le R^{-100}$ once $C_2$ is large enough. By \cref{prop:smooth-lp-small}, the promise problem defined by $f$ is $(0,R^{-100})$-smooth on the original hard input law, hence remains smooth on the truncated law up to an additional $O(R^{-100}) = O(D^{-80}) = o(\delta)$ error. The truncated target law is supported in the Euclidean ball of radius $D$, so its diameter is at most $2D\le R^{5/4}$. Moreover, by \cref{rmk:mollified-stream-distribution}, the associated mollified stream distribution has unit updates and length polynomial in $R$, hence at most $W$ (by choosing $\theta$ small enough). Applying \cref{thm:randomized-mollified-transfer} therefore yields an integer sketch matrix $\bA_0\in\mathbb Z^{r\times n}$ with
$
r=O(S/\log R)
$
and all entries bounded in absolute value by $Q:= R$.

Since $\epsilon^{-1}\le \sqrt{n}$ and $1+4\epsilon\le 5$, our choice of $\Lambda$ ensures $R^{6/5}\ge 5\Lambda$ and $\Lambda\ge nR\log R$ once $C_0$ is large enough. Hence we may apply \cref{thm:linear-sketches} with $\mathcal M=\{0,1\}$, the uniform label $B\sim \mathrm{Unif}(\{0,1\})$, $Q=R$, the offset laws $\mathcal I_0=\mathcal I_1=\delta_0$, and
$
\bS_0=\Lambda\Id_n,
\bS_1=(1+4\epsilon)\Lambda\Id_n.
$ This produces a real sketch of the same dimension $O(r)$ that distinguishes the Gaussian pair
$
\CG(0,\Lambda^2\Id_n)$ and 
$\CG\bigl(0,(1+4\epsilon)^2\Lambda^2\Id_n\bigr)
$
with error probability at most $O(\delta)$. By the real linear-sketch lower bound of Ganguly and Woodruff~\cite[Section~2]{GangulyW18} for this Gaussian small-$L_p$ problem, we must have
\[
r=\Omega\Bigl(\min\{\epsilon^{-2}\log(1/\delta),n\}\Bigr).
\]
Hence
\[
S=\Omega\Bigl(\min\{\epsilon^{-2}\log(1/\delta),n\}\cdot \log R\Bigr)
=\Omega\Bigl(\min\{\epsilon^{-2}\log(1/\delta),n\}\cdot \log D\Bigr)
=\Omega\Bigl(\min\{\epsilon^{-2}\log(1/\delta),n\}\cdot \log W\Bigr),
\]
since $\log R=\Theta(\log D)=\Theta(\log W)$.
\end{proof}

\subsubsection{\texorpdfstring{$L_p$}{Lp} estimation for \texorpdfstring{$p>2$}{p > 2}}
For the large-$p$ regime, we focus on the case where $\delta \geq 2^{-n^c}$ for a sufficiently small absolute constant $c>0$. \cite{gribelyuk2025lifting} uses the following white-box pair of distributions. Let
\[
t:=\log_3\frac{1}{\sqrt\delta} \leq n,
\qquad
E_{n-t}:=\mathbb E_{g\sim\CG(0,\Id_{n-t})}\|g\|_p,
\qquad
\lambda:=\frac{C_0\epsilon^{1/p}\Lambda E_{n-t}}{t^{1/p}},
\]
where $\Lambda\ge 1$ is the internal Gaussian scale parameter and $C_0>0$ is the large absolute constant from the proof. Then
\begin{itemize}
\item $\mathcal J_{p,0}$ is the law of $Y=X$ for $X\sim\DG(0,\Lambda^2\Id_n)$;
	\item $\mathcal J_{p,1}$ is the law of
	$
	Y=X+\lambda\sum_{i\in T}e_i,
	$
	where $X\sim\DG(0,\Lambda^2\Id_n)$ and $T\subseteq[n]$ is a uniformly random set of cardinality $t$, independent of $X$.
\end{itemize}
The associated decision function is
\[
f(y)=
\begin{cases}
0,&\|y\|_p^p\le (1+2\epsilon)\Lambda^pE_{n-t}^p,\\
1,&\|y\|_p^p\ge (1+4\epsilon)\Lambda^pE_{n-t}^p,\\
\perp,&\text{otherwise}.
\end{cases}
\]
\begin{proposition}
\label{prop:smooth-lp-large}
Assume $p>2$ and $R\ge n$. Then for $Y$ drawn from either $\mathcal J_{p,0}$ or $\mathcal J_{p,1}$ and $Z\sim\gamma_R$ independent,
\[
\Pr\bigl[f(Y+Z)=f(Y)\bigr]\ge 1-R^{-100},
\]
provided $\Lambda\ge C\epsilon^{-1}R\sqrt{\log R}$ for a sufficiently large constant $C$.
Thus $f$ is $(0,R^{-100})$-smooth on average on the mixture $\frac12\mathcal J_{p,0}+\frac12\mathcal J_{p,1}$ with respect to $\gamma_R$.
\end{proposition}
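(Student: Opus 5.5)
We follow the template of \cref{prop:smooth-lp-small}: control the perturbation $\|Z\|_p$ with high probability, show that it is small relative to the gap between the two thresholds in the definition of $f$, and conclude that adding $Z$ cannot move a point from a decided region to the opposite one. The novelty relative to the small-$p$ case is that $f$ is phrased in terms of $\|y\|_p^p$, so we work with $\|\cdot\|_p$ and convert the thresholds. Let
\[
\tau_0:=(1+2\epsilon)^{1/p}\Lambda E_{n-t},
\qquad
\tau_1:=(1+4\epsilon)^{1/p}\Lambda E_{n-t},
\]
so that $f(y)=0$ iff $\|y\|_p\le\tau_0$ and $f(y)=1$ iff $\|y\|_p\ge\tau_1$. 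For $\epsilon\le 1$ and fixed $p$, the gap satisfies $\tau_1-\tau_0\ge c_p\,\epsilon\,\Lambda E_{n-t}$.

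The next step is a lower bound on $E_{n-t}$. Since $t=\log_3(1/\sqrt\delta)\le n^{c}$ under the standing assumption $\delta\ge 2^{-n^c}$, we have $n-t\ge n/2$. Hence $E_{n-t}=\Theta_p((n-t)^{1/p})=\Theta_p(n^{1/p})$, which follows from the standard fact that $\mathbb E\|g\|_p^p=(n-t)\mathbb E|g_1|^p$ together with concentration, or from the computation in \cite{gribelyuk2025lifting}. In particular $\tau_1-\tau_0\ge c'_p\,\epsilon\,\Lambda\,n^{1/p}$.

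Now bound the perturbation. By \cref{eq:gamma-tail-bound-general} with $u=C_1R\sqrt{n\log R}$, we get $\|Z\|_2\le C_1R\sqrt{n\log R}$ with probability at least $1-R^{-100}$. Since $p>2$, monotonicity gives $\|Z\|_p\le\|Z\|_2$. This is where the large-$p$ case differs from the small-$p$ case: the $\ell_2$ bound only yields $\|Z\|_p\lesssim R\sqrt{n\log R}$, not $Rn^{1/p}$.

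Two routes handle this, and we use the first if its exponent bookkeeping works out. The first route uses coordinatewise tails of $\gamma_R$: each $|Z_i|\le C R\sqrt{\log R}$ with probability $1-R^{-101}/n$ by a one-dimensional version of the same Poisson bound. A union bound, using $R\ge n$, then gives $\|Z\|_p\le n^{1/p}\|Z\|_\infty\le C_2Rn^{1/p}\sqrt{\log R}$ with probability $1-R^{-100}$. The second route, used only if a tail constant causes trouble, accepts the weaker $\ell_2$ bound and absorbs the extra $n^{1/2-1/p}$ into a larger $\Lambda$. This is the step I expect to need the most care, because the lemma as stated requires only $\Lambda\ge C\epsilon^{-1}R\sqrt{\log R}$, so the coordinatewise route is needed to match exactly.

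Combining the bounds, $C_2Rn^{1/p}\sqrt{\log R}\le\tfrac12 c'_p\epsilon\Lambda n^{1/p}$ once $C$ is large. On the good event, $|\|Y+Z\|_p-\|Y\|_p|\le\|Z\|_p<\tau_1-\tau_0$. Therefore if $f(Y)=0$ then $\|Y+Z\|_p<\tau_1$, and if $f(Y)=1$ then $\|Y+Z\|_p>\tau_0$. Under the shorthand convention for promise problems, $f(Y+Z)=f(Y)$ whenever $f(Y)=*$ or the perturbation stays within half the gap. As in \cref{prop:smooth-lp-small}, the statement therefore reduces to this norm-stability estimate and holds with probability at least $1-R^{-100}$ for either $Y\sim\mathcal J_{p,0}$ or $Y\sim\mathcal J_{p,1}$. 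Averaging over the mixture gives $(0,R^{-100})$-smoothness on average.
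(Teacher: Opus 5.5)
Your perturbation bound matches the paper's. Coordinatewise tails of $\gamma_R$ plus a union bound over $n\le R$ coordinates give $\|Z\|_p\le n^{1/p}\|Z\|_\infty\le C_2Rn^{1/p}\sqrt{\log R}$ with probability $1-R^{-100}$. Since $E_{n-t}=\Theta(n^{1/p})$, this is a small fraction of $\epsilon\Lambda E_{n-t}$. Working with $\tau_0,\tau_1$ and the reverse triangle inequality $\bigl|\|Y+Z\|_p-\|Y\|_p\bigr|\le\|Z\|_p$ genuinely simplifies the paper's mean-value estimate for $u\mapsto u^p$: you no longer need $\|Y\|_p=\Theta(\Lambda n^{1/p})$ with high probability.

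The gap is in the final step. Your argument only shows that $f$ cannot flip. On the good event for $Z$, $f(Y)=0$ forces $\|Y+Z\|_p<\tau_1$, i.e.\ $f(Y+Z)\in\{0,*\}$, and symmetrically when $f(Y)=1$. You then count $f(Y+Z)=*$ as agreement, but the shorthand $b=f(x)$ only relaxes the right-hand side. In $f(Y+Z)=f(Y)$ it allows $f(Y)=*$, but when $f(Y)\in\{0,1\}$ it requires $f(Y+Z)=f(Y)$ exactly.

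This stronger property is what case~(2) of \cref{thm:randomized-mollified-transfer} uses. The streaming algorithm is only guaranteed correct for $f(y+Z)$, and if $f(y+Z)=*$ its output says nothing about $f(y)$, so the fiber-consistency step breaks. Your bound on $\|Z\|_p$ cannot exclude this alone. If $\|Y\|_p\in(\tau_0-\|Z\|_p,\tau_0]$ or $\|Y\|_p\in[\tau_1,\tau_1+\|Z\|_p)$, then $f(Y)$ is decided but $f(Y+Z)$ may be $*$, however small $\|Z\|_p$ is relative to the gap.

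The missing ingredient is a margin estimate for the hard laws. Outside a small exceptional event, you need $\|Y\|_p^p\le(1+2\epsilon-\epsilon/C')\Lambda^pE_{n-t}^p$ under $\mathcal J_{p,0}$ and $\|Y\|_p^p\ge(1+4\epsilon+\epsilon/C')\Lambda^pE_{n-t}^p$ under $\mathcal J_{p,1}$. This is the same kind of statement as \cite[Lemma~5.2.4]{gribelyuk2025lifting} with slightly shifted thresholds. Taking $C$ large compared with $pC'$ then makes $\|Z\|_p$ smaller than the resulting $\ell_p$ margin.

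This margin is the ``good region'' the paper relies on, explicitly in \cref{prop:smooth-lp-small} and implicitly here through ``except with negligible probability''. The size of the exceptional event comes from the hard-instance analysis, not from the tail of $Z$. So your clean $R^{-100}$ bound applies only to the weaker no-flip event.
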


\begin{proof}
The proof in~\cite[Lemma 5.2.3]{gribelyuk2025lifting} shows that $E_{n-t}=\Theta(n^{1/p})$ and that the two threshold values differ by
$
2\epsilon\Lambda^pE_{n-t}^p=\Theta(\epsilon\Lambda^pn).
$
Since $\gamma_R$ factorizes over the coordinates, applying \cref{eq:gamma-tail-bound-general} in dimension one and then taking a union bound gives, with probability at least $1-R^{-100}$,
\[
\|Z\|_\infty\le C_1R\sqrt{\log R},
\]
where $C_1$ is an absolute constant and we used $R\ge n$. Hence on the same event
\[
\|Z\|_p\le n^{1/p}\|Z\|_\infty\le C_1n^{1/p}R\sqrt{\log R}.
\]
Using the mean-value estimate for $u\mapsto u^p$, we get on the same event
\[
\bigl|\|Y+Z\|_p^p-\|Y\|_p^p\bigr|
\le p\bigl(\|Y\|_p+\|Z\|_p\bigr)^{p-1}\|Z\|_p.
\]
The same proof shows that with overwhelming probability $\|Y\|_p=\Theta(\Lambda n^{1/p})$ under either hard distribution. Hence the right-hand side is
\[
O\bigl(\Lambda^{p-1}n^{1-1/p}\cdot n^{1/p}R\sqrt{\log R}\bigr)
=O\bigl(\Lambda^{p-1}Rn\sqrt{\log R}\bigr).
\]
If $\Lambda\ge C\epsilon^{-1}R\sqrt{\log R}$ with $C$ large enough, then this is at most
$
\frac{1}{2}\epsilon\Lambda^pE_{n-t}^p,
$
which is half the gap between the two thresholds. Therefore the value of $f$ is unchanged by adding $Z$, except with negligible probability.
\end{proof}

\begin{theorem}
\label{cor:turnstile-lp-large}
Assume $p>2$ is a fixed constant, and let $\delta  > 2^{-n^{c_0}}$ for a sufficiently small absolute constant $c_0>0$ and $W\ge (n/\delta)^{C_0}$ for a sufficiently large absolute constant $C_0$. Any randomized turnstile streaming algorithm that, on every unit update turnstile stream with length at most $W$, outputs a $(1+\epsilon)$-approximation to the $L_p$ norm with success probability at least $1-\delta$ must use
\[
\Omega\Bigl(\min\{n^{1-2/p}\epsilon^{-2/p}\log n\log^{2/p}(1/\delta) + n^{1- 2/p} \epsilon^{-2} \log(1 / \delta),n\}\cdot \log W\Bigr)
\]
bits of space.
\end{theorem}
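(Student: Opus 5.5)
We follow the template of \cref{cor:turnstile-lp-small}, now with the planted pair $\mathcal J_{p,0},\mathcal J_{p,1}$ and the decision function $f$ defined above.

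\textbf{Reduction of parameters.} First, if $n^{1-2/p}\epsilon^{-2/p}\log n\log^{2/p}(1/\delta)+n^{1-2/p}\epsilon^{-2}\log(1/\delta)$ exceeds $n$, we increase $\epsilon$ to the value $\epsilon_0$ at which this expression equals $\Theta(n)$. This is legitimate because any $(1+\epsilon)$-approximation is also a $(1+\epsilon_0)$-approximation, and it lets us assume the minimum is attained by the first expression, with $\epsilon^{-1}\le\poly(n)$.

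\textbf{Choice of scales.} Fix a small constant $\theta$ and set $D:=W^\theta$, $R:=\lfloor D^{4/5}n^{2/5}\rfloor$ and $\Lambda:=C\epsilon^{-1}nR\log R$. Since $W\ge(n/\delta)^{C_0}$, these choices give:
\begin{itemize}
\item $\Lambda\ge C\epsilon^{-1}R\sqrt{\log R}$, which is what \cref{prop:smooth-lp-large} requires;
\item $\Lambda\ge nR\log R$ and $R^{6/5}\ge 5\Lambda$, needed for \cref{thm:linear-sketches} with $Q=R$;
\item $R\ge (2n/\delta)^{30}$;
\item $D\ge \Lambda\sqrt{n\log R}+\lambda\sqrt t$, so that the planted vector fits inside the ball $B_D$, with $2D\le R^{5/4}$.
\end{itemize}

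\textbf{Hard law and sketch extraction.} Let $B\sim\mathrm{Unif}\{0,1\}$ and $Y\sim\mathcal J_{p,B}$. By the analysis of~\cite{gribelyuk2025lifting}, $\Pr[f(Y)=B]\ge 1-O(\delta)$; this is where $t=\log_3(1/\sqrt\delta)$ and $\delta>2^{-n^{c_0}}$ enter, ensuring $t\le n$ and that the Gaussian fluctuations are dominated. We then truncate to $\|Y\|_2\le D$, which costs $R^{-100}=o(\delta)$ by the tail bound \cref{eq:gamma-tail-bound-general}. By \cref{prop:smooth-lp-large}, $f$ is $(0,O(R^{-100}))$-smooth on the truncated mixture with respect to $\gamma_R$. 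The mollified stream distribution has length $\poly(R)\le W$, so \cref{thm:randomized-mollified-transfer} (promise case) yields an integer sketch $\bA_0$ with $O(S/\log R)$ rows, entries bounded by $R$, and error $O(\delta)$.

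\textbf{Lifting to a real sketch.} Apply \cref{thm:linear-sketches} with label set $\mathcal M=\{0,1\}$, $\bS_b=\Lambda\Id_n$ for both labels, offset law $\mathcal I_0=\delta_0$, and $\mathcal I_1$ equal to the law of $\lambda\sum_{i\in T}e_i$. This requires writing $X+Y$ with $X\sim\DG(0,\Lambda^2\Id_n)$ independent of the planted offset, which matches the definition of $\mathcal J_{p,1}$ exactly. The output is a real sketch with $O(S/\log R)$ rows that distinguishes $\CG(0,\Lambda^2\Id_n)$ from its planted version with error $O(\delta)$. The real-sketch lower bound of~\cite{gribelyuk2025lifting,LiW16} for this planted Gaussian problem then gives
\[
S/\log R=\Omega\bigl(n^{1-2/p}\epsilon^{-2/p}\log n\log^{2/p}(1/\delta)+n^{1-2/p}\epsilon^{-2}\log(1/\delta)\bigr),
\]
and $\log R=\Theta(\log W)$ finishes the argument.

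\textbf{Main obstacle.} The delicate step is reconciling the parameter constraints. The smoothness estimate in \cref{prop:smooth-lp-large} and the lifting hypotheses both hinge on $\Lambda$, and $\Lambda$ must be polynomially between $R$ and $R^{6/5}$, while the planted spike $\lambda\sqrt t$ and the $\delta$-dependence must remain within these polynomial slacks. I would check these constraints first. One point needing care is that the lower bound constant $\Omega(\cdot)$ in the real-sketch bound is stated for a fixed (constant) error regime in some references, so I would verify that it holds uniformly for error $O(\delta)$ with $\delta\ge 2^{-n^{c_0}}$.
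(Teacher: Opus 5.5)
Your proposal is correct and follows the paper's proof essentially step for step. It uses the same planted pair $\mathcal J_{p,0},\mathcal J_{p,1}$, the same scale $\Lambda=C\epsilon^{-1}nR\log R$, the promise case of \cref{thm:randomized-mollified-transfer}, and \cref{thm:linear-sketches} with $\mathbf S_0=\mathbf S_1=\Lambda\,\mathrm{Id}_n$ and the spike $\lambda\sum_{i\in T}e_i$ as the offset law. The only deviation is cosmetic: you truncate $Y$ in $\ell_2$, giving diameter $2D$, whereas the paper truncates $X$ in $\ell_\infty$, giving diameter $2D\sqrt n$, which is why the paper's $R$ carries an extra factor $3$. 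For the last step, the $\delta$-dependent real-sketch bound you flag as needing care is exactly that of Ganguly--Woodruff~\cite[Appendix~D]{GangulyW18}, which is what the paper invokes, rather than \cite{LiW16}.
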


\begin{proof}
Similarly as in the proof of \cref{cor:turnstile-lp-small}, we can always assume without loss of generality that $\epsilon\ge 1/\poly(n)$. Fix a sufficiently small absolute constant $\theta>0$ and set
$
D:=W^{\theta}.
$

Assume towards contradiction that there is an $S$-bit randomized turnstile algorithm as in the statement. Set
\[
R:=\lfloor 3D^{4/5}n^{2/5}\rfloor,
\qquad
\Lambda:=C_1\epsilon^{-1}nR\log R,
\qquad
\lambda:=\frac{C_2\epsilon^{1/p}\Lambda E_{n-t}}{t^{1/p}},
\]
where $C_1$ is the constant from \cref{prop:smooth-lp-large} and $C_2$ is the large absolute constant in the hard construction. Because $D\ge (n/\delta)^{C_0\theta}$ and $\epsilon\ge 1/\poly(n)$,
enlarging $C_0$ if necessary ensures that
\[
D\ge 2\lambda+C_3\Lambda\sqrt{\log R}
\qquad\text{and}\qquad
R\ge n
\qquad\text{and}\qquad
2D\sqrt n\le R^{5/4}.
\]

Let $B\sim \mathrm{Unif}(\{0,1\})$. Conditioned on $B=b$, let $Y\sim \mathcal J_{p,b}$. Thus the hard input law is exactly the balanced mixture $\frac12\mathcal J_{p,0}+\frac12\mathcal J_{p,1}$. It is proven in~\cite[Lemma 5.2.4]{gribelyuk2025lifting} that, when $\delta > 2^{-n^{c_0}}$ for a sufficiently small absolute constant $c_0>0$,
\[
\Pr[f(Y)=B]\ge 1-\delta/2.
\]
Equivalently, $\Pr[f(Y)=b\mid B=b]\ge 1-\delta/2$ for each $b\in\{0,1\}$.

Now condition on
\[
E_{\mathrm{tr}}:=\{\|X\|_\infty\le D-\lambda\},
\]
where $X\sim\DG(0,\Lambda^2\Id_n)$ is the internal discrete-Gaussian part. A coordinate-wise sub-Gaussian tail bound and a union bound give $\Pr[E_{\mathrm{tr}}^c]\le R^{-100}$ once $C_3$ is large enough. By \cref{prop:smooth-lp-large}, the promise problem defined by $f$ is $(0,R^{-100})$-smooth on the original hard input law, hence remains smooth on the truncated law up to an additional $O(R^{-100})=o(\delta)$ error. The truncated target law is supported in $[-D,D]^n$, so its diameter is at most $2D\sqrt n\le R^{5/4}$. Moreover, by \cref{rmk:mollified-stream-distribution}, the associated mollified stream distribution has unit updates and length polynomial in $R$, hence at most $W$ by choosing $\theta$ small enough. Applying \cref{thm:randomized-mollified-transfer} yields an integer sketch matrix $\bA_0\in\mathbb Z^{r\times n}$ with
$
r=O(S/\log R)
$
and all entries bounded in absolute value by $R$.

Since $\epsilon\ge 1/\poly(n)$, our choice of $\Lambda$ ensures $R^{6/5}\ge \Lambda\ge nR\log R$, so we may apply \cref{thm:linear-sketches} directly with $\mathcal M=\{0,1\}$, the uniform label $B\sim \mathrm{Unif}(\{0,1\})$, $Q=R$, the common choice $\bS_0=\bS_1=\Lambda\Id_n$, and offset laws
$
\mathcal I_0=\delta_0,
\mathcal I_1=\Law\!\left(\lambda\sum_{i\in T}e_i\right),
$
enlarging $C_0$ if necessary. This yields a real sketch of dimension $O(r)$ for the corresponding continuous Gaussian large-$L_p$ hard pair, with error probability at most $\delta$. 

Finally, by the real linear-sketch lower bound of Ganguly and Woodruff \cite[Appendix D]{GangulyW18} for this Gaussian large-$L_p$ problem, assuming $\delta > 2^{-n^{c_0}}$ for sufficiently small $c_0$, we must have
\[
r=\Omega\Bigl(\min\{n^{1-2/p}\epsilon^{-2/p}\log n\log^{2/p}(1/\delta) + n^{1- 2/p} \epsilon^{-2} \log(1 / \delta), n\}\Bigr).
\]
Hence this gives the desired lower bound on $S$, since $\log R=\Theta(\log D)=\Theta(\log W)$. 
\end{proof}

\subsubsection{Operator norm and Ky Fan norm}

For operator norm and Ky Fan norm, the continuous sketching lower bounds come from the Gaussian-plus-spike constructions of Li and Woodruff~\cite{LiW16}. The section of~\cite{gribelyuk2025lifting} on these problems discretizes those constructions, so here we keep the corresponding hard input law completely explicit. The crucial lower-bound input there is \cite[Lemma~5.3.6]{gribelyuk2025lifting}, the analog of \cite[Theorem~4]{LiW16}: if one projects the background continuous Gaussian matrix law and the corresponding discrete spiked law to an $r$-dimensional orthonormal sketch, then the two projected laws still satisfy
$
\dTV(\mathcal B_0,\mathcal B_1)\le \frac{1}{10}
$
whenever $r\le c/\|s\|_2^4$, where $s$ is the vector of spike coefficients. Equivalently, any real linear sketch that distinguishes the two cases with constant bias must have dimension $\Omega(\|s\|_2^{-4})$. This is the key input behind the three applications below. There are three hard families:

\begin{itemize}
	\item For $(1+\epsilon)$-approximation to operator norm on $n\times (n\epsilon^2)$ matrices, where $n=\Omega(\epsilon^{-2})$,
	\[
	\mathcal I_{\mathrm{op},\epsilon,0}:\ \bG,
	\qquad
		\mathcal I_{\mathrm{op},\epsilon,1}:\ \bG+\frac{\beta}{\sqrt{\epsilon n}}\,uv^\top,
	\]
		where the entries of $\bG$ are i.i.d. $\DG(0,\Lambda^2)$, while $u\sim\DG(0,\Lambda\Id_n)$ and $v\sim\DG(0,\Lambda\Id_{\epsilon^2 n})$ are independent. The corresponding decision function is
	\[
	f(Y)=
	\begin{cases}
	0,&\|Y\|_{op}\le (1+\epsilon)C_{\mathrm{op},\epsilon},\\
	1,&\|Y\|_{op}\ge (1+3\epsilon)C_{\mathrm{op},\epsilon},\\
	\perp,&\text{otherwise},
	\end{cases}
	\]
		where $\beta>0$ is the sufficiently large absolute constant, and $C_{\mathrm{op},\epsilon}>0$ is the threshold constant in~\cite[Lemma~5.3.8]{gribelyuk2025lifting}. With these choices, $f(Y)=0$ when $Y\sim\mathcal I_{\mathrm{op},\epsilon,0}$ and $f(Y)=1$ when $Y\sim\mathcal I_{\mathrm{op},\epsilon,1}$ with probability $1-\exp(-\Omega(\epsilon^2 n))$.
	\item For $\alpha$-approximation (where $\alpha \geq 1+ c_0$ where $c_0$ is a fixed small constant) to operator norm on $n\times n$ matrices,
	\[
	\mathcal I_{\mathrm{op},\alpha,0}:\ \bG,
	\qquad
	\mathcal I_{\mathrm{op},\alpha,1}:\ \bG+\frac{\gamma\alpha}{\sqrt n}uv^\top,
	\]
	where the entries of $\bG$ are i.i.d. $\DG(0,\Lambda^2)$ and $u,v\sim\DG(0,\Lambda\Id_n)$ are independent. The corresponding decision function is
	\[
	f(Y)=
	\begin{cases}
	0,&\|Y\|_{op}\le c_{\mathrm{op},\alpha}\Lambda\sqrt n,\\
	1,&\|Y\|_{op}\ge C_{\mathrm{op},\alpha}\Lambda\sqrt n,\\
	\perp,&\text{otherwise},
	\end{cases}
	\]
	where $\gamma>0$ is the sufficiently large absolute constant, and $0<c_{\mathrm{op},\alpha}<C_{\mathrm{op},\alpha}$ are threshold constants with $C_{\mathrm{op},\alpha}/c_{\mathrm{op},\alpha}>\alpha$ in \cite[Corollary 5.3.7]{gribelyuk2025lifting}, so that $f(Y) =  0$ when $Y\sim\mathcal I_{\mathrm{op},\alpha,0}$ and $f(Y)=1$ when $Y\sim\mathcal I_{\mathrm{op},\alpha,1}$ with probability $1-\exp(-\Omega(n))$.
	\item For constant approximation of the Ky Fan $s$-norm, where $s \leq O(\sqrt{n})$
	\[
	\mathcal I_{\mathrm{Ky},0}:\ \bG,
	\qquad
	\mathcal I_{\mathrm{Ky},1}:\ \bG+\frac{\gamma}{\sqrt n}\sum_{j=1}^s u_jv_j^\top,
	\]
	where again the entries of $\bG$ are i.i.d. $\DG(0,\Lambda^2)$ and $u_j,v_j\sim\DG(0,\Lambda\Id_n)$ are independent. The corresponding decision function is
	\[
	f(Y)=
	\begin{cases}
	0,&\|Y\|_{F_s}\le c_{\mathrm{Ky}}\Lambda s\sqrt n,\\
	1,&\|Y\|_{F_s}\ge C_{\mathrm{Ky}}\Lambda s\sqrt n,\\
	\perp,&\text{otherwise},
	\end{cases}
	\]
	where $\gamma>0$ is again a sufficiently large absolute constant and $0<c_{\mathrm{Ky}}<C_{\mathrm{Ky}}$ are threshold constants in~\cite[Section~5.3]{gribelyuk2025lifting}. With these choices, $f(Y) = 0$ when $Y\sim\mathcal I_{\mathrm{Ky},0}$ and $f(Y)=1$ when $Y\sim\mathcal I_{\mathrm{Ky},1}$ with probability $1-\exp(-\Omega(n))$.
\end{itemize}

\begin{proposition}
\label{prop:smooth-operator-kyfan}
Fix one of the three items above, and let $f$ be the corresponding decision function. Let $Y$ be drawn from either associated hard distribution, and let $\bZ$ be an independent matrix with i.i.d. entries distributed as the one-dimensional discrete Gaussian $\DG(0,R^2)$. If $\Lambda\ge C\epsilon^{-1}R$ in the accurate operator-norm case, and $\Lambda\ge CR$ in the other two cases, then
\[
\Pr\bigl[f(Y+\bZ)=f(Y)\bigr]\ge 0.99.
\]
Consequently, in each case the corresponding $f$ is $(0,0.01)$-smooth on average on the associated mixture hard law with respect to $\gamma_R$.
\end{proposition}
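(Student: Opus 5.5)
The plan mirrors the proofs of \cref{prop:smooth-lp-small,prop:smooth-lp-large}. Every decision function in the three families is a threshold of a unitarily invariant norm $N$ (operator norm or Ky Fan $s$-norm), so the triangle inequality gives $|N(Y+\bZ)-N(Y)|\le N(\bZ)$. It therefore suffices to show two things: with probability at least $0.995$, $N(\bZ)$ is at most half the gap between the two thresholds; and with probability at least $0.995$, $Y$ lies outside the $\perp$ band, with the correct label, by at least that margin. A union bound then gives $f(Y+\bZ)=f(Y)$ with probability $0.99$.

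\textbf{Step 1: bound $N(\bZ)$.} Here $\bZ$ is an $n\times m$ matrix with i.i.d.\ entries $\DG(0,R^2)$, where $m=\epsilon^2 n$ or $m=n$. These entries are centered and sub-Gaussian with parameter $O(R)$: the one-dimensional tail bound \cref{eq:gamma-tail-bound-general} controls the tails, and symmetry makes the entries centered. The standard operator-norm bound for sub-Gaussian matrices (an $\varepsilon$-net argument) then gives $\|\bZ\|_{op}\le C_1R(\sqrt n+\sqrt m)\le 2C_1R\sqrt n$ with probability $0.999$. For the Ky Fan norm we use $\|\bZ\|_{F_s}\le s\|\bZ\|_{op}\le 2C_1Rs\sqrt n$.

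\textbf{Step 2: compare with the threshold gaps.}
\begin{itemize}
\item Accurate operator norm: the gap is $2\epsilon C_{\mathrm{op},\epsilon}$. By \cite[Lemma~5.3.8]{gribelyuk2025lifting}, $C_{\mathrm{op},\epsilon}=\Theta(\Lambda\sqrt n)$, since the Gaussian part has operator norm about $\Lambda(\sqrt n+\sqrt m)$. So we need $2C_1R\sqrt n\le \epsilon\,\Theta(\Lambda\sqrt n)/2$, which holds when $\Lambda\ge C\epsilon^{-1}R$.
\item $\alpha$-approximation to the operator norm: the gap is $(C_{\mathrm{op},\alpha}-c_{\mathrm{op},\alpha})\Lambda\sqrt n$. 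Since these are absolute constants, $\Lambda\ge CR$ suffices.
\item Ky Fan norm: the gap is $(C_{\mathrm{Ky}}-c_{\mathrm{Ky}})\Lambda s\sqrt n$, and it is matched by the bound $s\|\bZ\|_{op}$ from Step 1. Again $\Lambda\ge CR$ suffices.
\end{itemize}

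\textbf{Step 3: margin for $Y$ (the main obstacle).} We need that $N(Y)$ itself is at distance at least half the gap from the $\perp$ band, not merely outside it. The cited lemmas only state $f(Y)=b$ with probability $1-\exp(-\Omega(\cdot))$. To get a margin, I would rerun them with slightly shrunk thresholds, e.g.\ $(1+\tfrac12\epsilon)$ and $(1+\tfrac72\epsilon)$ in the accurate case. The concentration arguments in \cite{gribelyuk2025lifting} give a spectral gap of order $\epsilon\Lambda\sqrt n$ between the null and spiked laws (or a constant-factor gap in the other cases), with room to spare, so they should tolerate this change after adjusting the constants $\beta,\gamma$. If the thresholds are exactly tight there, the fallback is to make $\bZ$ negligible instead: taking $C$ larger makes $N(\bZ)$ an arbitrarily small fraction of the gap. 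Combined with the fact that $N(Y)$ concentrates strictly inside the regions $\{f=0\}$ or $\{f=1\}$ away from their boundaries, which follows from the same lemmas, this gives the needed margin.

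Once $f(Y+\bZ)=f(Y)$ holds with probability $0.99$ under each of the two hard laws, it also holds for their mixture. Since $\bZ\sim\gamma_R$ on $\mathbb Z^{n\times m}$, this is exactly $(0,0.01)$-smoothness in the sense of \cref{def:smoothness}.
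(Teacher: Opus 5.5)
Your proposal is correct and follows the same route as the paper. Both use a sub-Gaussian operator-norm bound $\|\mathbf Z\|_{op}=O(R\sqrt n)$, the triangle inequality for the relevant norm (with $\|\mathbf Z\|_{F_s}\le s\|\mathbf Z\|_{op}$ in the Ky Fan case), and a comparison with the threshold gaps of order $\epsilon\Lambda\sqrt n$, $\Lambda\sqrt n$, and $\Lambda s\sqrt n$ taken from the cited lemmas. The margin issue you raise in Step 3 is genuine, but the paper only addresses it implicitly, by appealing to the separation of the typical norm scales under the two hard laws, so your treatment is no less complete than the paper's.
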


\begin{proof}
If $X\sim \DG(0,R^2)$, then $X$ is $O(R)$-sub-Gaussian by Lemma~2.1.10 of~\cite{gribelyuk2025lifting} (stated there as Corollary~17 in~\cite{CanonneKS20}). Therefore the standard rectangular sub-Gaussian matrix operator-norm bound, see, e.g., \cite[Section~4.4]{vershynin2018hdp}, implies that for every $d_1\times d_2$ noise matrix and a sufficiently large absolute constant $C_1$,
\[
\Pr\bigl[\|\bZ\|_{op}\le C_1R\bigl(\sqrt{d_1}+\sqrt{d_2}\bigr)\bigr]\ge 0.995.
\]
For $(1+ \epsilon)$ operator norm approximation, it is proven in \cite[Lemma 5.3.8]{gribelyuk2025lifting} that the two typical operator-norm scales differ by at least
$
2C_{\mathrm{op},\epsilon}\Lambda\sqrt{\epsilon^2 n}=2C_{\mathrm{op},\epsilon}\epsilon\Lambda\sqrt n
$ with probability $1-\exp(-\Omega(\epsilon^2 n))$.
Since with high probability
$
\|\bZ\|_{op}=O\bigl(R\sqrt n\bigr),
$
the perturbation is at most half of that gap once $\Lambda\ge C\epsilon^{-1}R$ by increasing the constant $C$ if necessary. Hence adding $\bZ$ cannot cross the threshold, except on the exceptional event above.

For the constant-factor operator norm problem, the two thresholds chosen defining $f$ are separated by at least $2c_0 c_{\mathrm{op},\alpha}\Lambda\sqrt n$ with probability $1-\exp(-\Omega(n))$, so $\Lambda\ge CR$ suffices.

For the Ky Fan $s$-norm, \cite[Lemma 5.3.9]{gribelyuk2025lifting} shows that the two thresholds defining $f$ are separated by an absolute constant multiple of $\Lambda s\sqrt n$. Since $\|\bZ\|_{F_s}\le s\|\bZ\|_{op} = O(Rs\sqrt n)$, again negligible once $\Lambda\ge CR$.
\end{proof}

\begin{theorem}
\label{cor:turnstile-op-kyfan}
Consider the following streaming problems:
\begin{enumerate}
\item $(1+\epsilon)$-\textbf{approximate operator norm.} $\epsilon\in(0,1/3)$, $n=\Omega(\epsilon^{-2})$, and $W\ge n^{C_0}$.
\item $\alpha$-\textbf{approximate operator norm.} $\alpha\geq 1+ c_0$ for a fixed small $c_0>0$ and $W\ge n^{C_0}$.
\item $(1+c_0)$-\textbf{approximate Ky Fan $s$-norm.} $s\le O(\sqrt n)$ and $W\ge n^{C_0}$.
\end{enumerate}
Then any randomized turnstile streaming algorithm that, on every unit update turnstile stream with length at most $W$, succeeds with probability at least $2/3$ must use
\[
\Omega\Bigl((n^2\epsilon^2)\log W\Bigr),
\qquad
\Omega\Bigl((n^2/\alpha^4)\log W\Bigr),
\qquad
\Omega\Bigl((n^2/s^2)\log W\Bigr),
\]
bits of space in cases \textup{(1)}, \textup{(2)}, and \textup{(3)}, respectively.
\end{theorem}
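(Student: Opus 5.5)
We follow the same template as \cref{cor:turnstile-lp-small,cor:turnstile-lp-large}. We argue by contradiction and fix a sufficiently small constant $\theta>0$ with $D:=W^\theta$. We choose $R:=\lfloor D^{4/5}n^{c}\rfloor$, with the constant $c$ set so that $2D\le R^{5/4}$ holds once the inputs are truncated. The internal scale is $\Lambda:=C_1\epsilon^{-1}nR\log R$ in case (1) and $\Lambda:=C_1 nR\log R$ in cases (2) and (3). This choice satisfies both the smoothness hypothesis of \cref{prop:smooth-operator-kyfan} and the window $nQ\log Q\le\Lambda\le Q^{6/5}$ required by \cref{thm:linear-sketches} with $Q=R$; here we use $\epsilon^{-1}\le\sqrt n$, which follows from $n=\Omega(\epsilon^{-2})$.

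We take the labeled hard law $B\sim\mathrm{Unif}\{0,1\}$, $Y\sim\mathcal I_{\cdot,B}$, viewing the matrices as vectors in $\mathbb Z^{d_1d_2}$. We then truncate to the event $\{\|Y\|_2\le D\}$. The entries are discrete Gaussians of scale $\Lambda$ plus a spike term, and the spike has polynomially bounded norm with overwhelming probability. So once $C_0$ is large, the truncation costs only $R^{-100}$ probability, and the truncated support has diameter at most $2D\le R^{5/4}$.

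By \cref{prop:smooth-operator-kyfan}, $f$ is $(0,0.01)$-smooth on average, and it remains so after truncation up to an $o(1)$ loss. Moreover $f(Y)=B$ with probability $1-o(1)$. A streaming algorithm with success probability $2/3$ can first be amplified to error a small constant $\delta_0$ by taking the median of $O(1)$ copies, which costs $O(1)$ factor space. Its output threshold then computes $f$ on the mollified stream distribution, whose length is $\poly(R)\le W$ by \cref{rmk:mollified-stream-distribution}. Applying \cref{thm:randomized-mollified-transfer}(2) gives an integer sketch $\bA_0$ with $r=O(S/\log R)$ rows and entries bounded by $R$, which decodes $f$ with error $O(\delta_0+0.01)$, a small constant.

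Next we feed this into \cref{thm:linear-sketches}. We set $\mathcal M=\{0,1\}$ and $\bS_b=\Lambda\Id$, and let the offset law $\mathcal I_b$ be the law of the spike term: zero for $b=0$, and $\frac{\beta}{\sqrt{\epsilon n}}uv^\top$ (respectively $\frac{\gamma\alpha}{\sqrt n}uv^\top$, or $\frac{\gamma}{\sqrt n}\sum_j u_jv_j^\top$) for $b=1$. The output is a real orthonormal sketch of dimension $O(r)$ that distinguishes the Gaussian-plus-spike pair with constant bias. This is the main obstacle. The theorem as stated asks for $Q\ge(2n/\delta)^{30}$ with $\delta$ a small constant, which is fine. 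However, its conclusion error is only constant, so we must make sure the constant-error regime of the lower bound still applies. That bound is \cite[Lemma~5.3.6]{gribelyuk2025lifting}: the projected laws satisfy $\dTV\le 1/10$ whenever $r\le c/\|s\|_2^4$. So we keep all constants such that the lifted sketch's error is below $1/2-1/10$.

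It remains to read off $\|s\|_2$ in each case after normalizing the Gaussian scale. In case (1), the normalized spike coefficient is $\Theta(1/\sqrt{\epsilon n})$ times a rank-one factor with $\|u\|\|v\|\approx\sqrt{n\cdot\epsilon^2n}$, giving $\|s\|_2^{-4}=\Theta(n^2\epsilon^2)$. In case (2), $\|s\|_2^{-4}=\Theta(n^2/\alpha^4)$. In case (3), the $s$ equal spikes give $\|s\|_2^2=\Theta(s/n)$, so $\|s\|_2^{-4}=\Theta(n^2/s^2)$. Thus $r=\Omega(\cdot)$, and therefore $S=\Omega(r\log R)=\Omega(r\log W)$, since $\log R=\Theta(\log W)$.
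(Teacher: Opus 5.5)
Your overall route is the paper's: truncate the spiked hard law, apply \cref{thm:randomized-mollified-transfer}, lift with \cref{thm:linear-sketches}, then invoke \cite[Lemma~5.3.6]{gribelyuk2025lifting}. However, the value of $\Lambda$ you carried over from the $L_p$ proofs makes the lifting step fail as written.

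In \cref{thm:linear-sketches} the dimension parameter is the ambient dimension of the vectorized input. Here that is $N=d_1d_2$, i.e.\ $N=\epsilon^2n^2$ in case (1) and $N=n^2$ in cases (2)--(3). With $\bS_b=\Lambda\Id_N$ and $Q=R$, the hypotheses become $\Lambda\ge NR\log R$ and $R\ge(2N/\delta)^{30}$.
\begin{itemize}
\item In cases (2)--(3), your $\Lambda=C_1nR\log R$ is too small by a factor of order $n$.
\item In case (1), your $\Lambda=C_1\epsilon^{-1}nR\log R$ is too small whenever $n\gg\epsilon^{-3}$, for instance for constant $\epsilon$, and $n=\Omega(\epsilon^{-2})$ allows this.
\end{itemize}
So your assertion that $\Lambda$ lies in the window $NQ\log Q\le\Lambda\le Q^{6/5}$ is false. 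Without it, the discrete-to-continuous comparison \cref{lem:disc-to-cont-r-parameterized} underlying the lifting is unavailable.

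The repair is the paper's choice $\Lambda=C_1NR\log R$, i.e.\ $C_1\epsilon^2n^2R\log R$ in case (1) and $C_1n^2R\log R$ in cases (2)--(3). You then have to recheck the other constraints:
\begin{itemize}
\item The requirement $\Lambda\ge C\epsilon^{-1}R$ of \cref{prop:smooth-operator-kyfan} still holds, since $\epsilon^2n^2\gtrsim\epsilon^{-2}$.
\item The upper window $\Lambda\le R^{6/5}$ needs $R^{1/5}\gtrsim n^2\log R$.
\item Your Euclidean truncation needs $D$ to dominate both $\Lambda\sqrt N$ and the spike norm with high probability.
\end{itemize}
All of these hold with $R=D^{4/5}\poly(n)$ once $C_0$ is large. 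The gap is therefore fixable, but it must be fixed.

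Two smaller remarks. First, in case (1) the norms $\|u\|\|v\|$ should not enter $\|s\|_2$. In \cite[Lemma~5.3.6]{gribelyuk2025lifting}, as in \cite[Theorem~4]{LiW16}, $s$ is just the coefficient vector multiplying rank-one terms whose factors are standard Gaussian vectors after rescaling by $\sqrt\Lambda$. Hence $\|s\|_2=\beta/\sqrt{\epsilon n}$ directly. Folding in $\|u\|\|v\|\approx\epsilon n$ would give the wrong exponent, although the final value $\Theta(n^2\epsilon^2)$ you state is correct.

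Second, $(0,0.01)$-smoothness combined with the $16\delta$ loss in \cref{thm:randomized-mollified-transfer} only gives decoding error about $0.16$. That does not meet the literal hypotheses of \cref{thm:linear-sketches}, namely $\delta<1/10$ and per-label error at most $\delta/3$. You can handle this in either of two ways:
\begin{itemize}
\item Note that the proof of \cref{thm:linear-sketches} only adds the two input errors and $Q^{-1/5}$. This suffices as long as the per-label total stays below the $(1-1/10)/2$ threshold coming from the total-variation bound.
\item Shrink the failure probabilities by enlarging the implicit constants.
\end{itemize}
The paper's own proof is equally terse on this last point.
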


\begin{proof}
All three cases follow the same template: truncate the explicit discrete hard distribution to diameter at most $R^{5/4}$, invoke \cref{thm:randomized-mollified-transfer} to obtain an integer sketch with entries bounded by $R$, apply \cref{thm:linear-sketches} to Gaussianize the background matrix, and then use \cite[Lemma~5.3.6]{gribelyuk2025lifting} to lower bound the real sketch dimension in terms of the spike vector $s$.

Fix a sufficiently small absolute constant $\theta>0$ and set
$
D:=W^{\theta}.
$

Assume towards contradiction that there is an $S$-bit randomized turnstile algorithm for one of the three tasks.

In case \textup{(1)} set
\[
R:=\lfloor 2D^{4/5}(\epsilon n)^{8/5}\rfloor,
\qquad
\Lambda:=C_1\epsilon^2 n^2 R\log R.
\]
Since $D\ge n^{C_0\theta}$, and each entry is the sum of a discrete Gaussian of scale $\Lambda$ and a rank-one spike term $\beta/\sqrt{\epsilon n}\,u_iv_j$, the full hard matrix is entrywise at most $D$ except with probability at most $R^{-100}$ and hence the truncated target law has diameter at most $2D\epsilon n\le R^{5/4}$. The spike vector has norm $\|s\|_2=\beta/\sqrt{\epsilon n}$, hence yields the real sketch lower bound
$
r=\Omega\bigl(n^2\epsilon^2\bigr).
$

In case \textup{(2)} set
\[
R:=\lfloor 2D^{4/5}n^{8/5}\rfloor,
\qquad
\Lambda:=C_1 n^2R\log R.
\]
Since $D\ge n^{C_0\theta}$, the same argument as above applies, and the diameter is bounded by $R^{5/4}$ with probability at least $1-R^{-100}$. The spike vector now has one nonzero coordinate of size $\gamma\alpha/\sqrt n$, so
$
r=\Omega\bigl(n^2/\alpha^4\bigr).
$

In case \textup{(3)} set
\[
R:=\lfloor 2D^{4/5}n^{8/5}\rfloor,
\qquad
\Lambda:=C_1 n^2R\log R.
\]
Again $D\ge n^{C_0\theta}$ ensures entrywise truncation to $[-D,D]^{n\times n}$ except with probability at most $R^{-100}$ and hence diameter at most $R^{5/4}$. The spike vector has $s$ coordinates, each of size $\gamma/\sqrt n$, so $\|s\|_2^4=\Theta(s^2/n^2)$ and the real sketching lower bound gives
$
r=\Omega\bigl(n^2/s^2\bigr).
$

In every case, let $B\sim \mathrm{Unif}(\{0,1\})$ choose between the two hard laws in the corresponding item, and let $f$ denote the associated decision function. By \cref{prop:smooth-operator-kyfan} the truncated hard law remains $(0,0.01+O(R^{-100}))$-smooth. Moreover, by \cref{rmk:mollified-stream-distribution}, the associated mollified stream distribution has unit updates and length polynomial in $R$, hence at most $W$ by choosing $\theta$ small enough. Thus \cref{thm:randomized-mollified-transfer} yields an integer sketch of dimension $r=O(S/\log R)$ with entries bounded by $R$. To fit the notation of \cref{thm:linear-sketches}, conditioned on $B=b$ we take $X$ to be the vectorized background discrete Gaussian matrix with covariance $\Lambda^2 I$, and we take $Y$ to be the corresponding spike offset (zero in the background case, and the rank-one or rank-$s$ spike in the planted case). The choice of $\Lambda$ then lets us apply \cref{thm:linear-sketches} with $Q=R$. This transfers the problem to the corresponding continuous Gaussian-plus-spike real sketch instance, so the lower bounds on $r$ from the three cases above imply $S=\Omega(r\log R)$. Since also $\log R=\Theta(\log D)=\Theta(\log W)$, the claimed bit lower bounds follow.
\end{proof}

\subsubsection{Eigenvalue approximation and PSD testing}

Here the two tasks are as follows. In \emph{eigenvalue approximation}, the algorithm must output additive $\epsilon\|\bM\|_F$ approximations to all eigenvalues of the input matrix $\bM$. In \emph{PSD testing}, fix $p\geq  1$, the algorithm must distinguish between the YES case $\bM\succeq 0$ and the NO case $\lambda_{\min}(\bM)\le -\epsilon\|\bM\|_p$, where $\|\cdot\|_p$ denotes the Schatten-$p$ norm.

The hard distribution is essentially the same rank-one spiked family as in the operator-norm application:
\[
\mathcal I_{\mathrm{eig},0}:\ \bG,
\qquad
\mathcal I_{\mathrm{eig},1}:\ \bH:=\bG+suv^\top,
\]
where $\bG$ has i.i.d. $\DG(0,\Lambda^2)$ entries and $u,v\sim\DG(0,\Lambda\Id_n)$ are independent. The only new ingredient is the choice of decision function. For eigenvalue approximation we look at the extremal absolute eigenvalue; for PSD testing we first symmetrize and shift, exactly as in~\cite[Section~5.4]{gribelyuk2025lifting}. After passing to a real sketch, the lower bound again comes from the same \cite[Lemma~5.3.6]{gribelyuk2025lifting}.

For PSD testing one passes to the symmetric block matrices
\[
\bG_{\mathrm{sym}}=
\begin{bmatrix}
0&\bG\\
\bG^\top&0
\end{bmatrix},
\qquad
\bH_{\mathrm{sym}}=
\begin{bmatrix}
0&\bH\\
\bH^\top&0
\end{bmatrix},
\]
and then shifts by a multiple of the identity. Define
\[
\rho_{\mathrm{eig}}(Y):=\max_i |\lambda_i(Y)|,
\qquad
\mathcal S_C(Y):=
\begin{bmatrix}
0&Y\\
Y^\top&0
\end{bmatrix}+C\Lambda\sqrt n\,I_{2n}.
\]
The corresponding decision functions are
\[
f_{\mathrm{eig}}(Y)=
\begin{cases}
0,&\rho_{\mathrm{eig}}(Y)\le c_{\mathrm{eig}}\Lambda\sqrt n,\\
1,&\rho_{\mathrm{eig}}(Y)\ge C_{\mathrm{eig}}\Lambda\sqrt n,\\
\perp,&\text{otherwise},
\end{cases}
\]
where $0<c_{\mathrm{eig}}<C_{\mathrm{eig}}$ are the absolute constants implicit in the eigenvalue lower bound from~\cite[Theorem~5.4.10]{gribelyuk2025lifting}, so that under $s=\Theta(\epsilon)$ and $n=\Omega(\epsilon^{-2})$ the laws $\mathcal I_{\mathrm{eig},0}$ and $\mathcal I_{\mathrm{eig},1}$ land in the 0 and 1 cases respectively with probability at least $1- \exp(-\Omega(n))$; any additive $\epsilon\|Y\|_F$ ($\epsilon\|Y\|_F = \Theta(\Lambda n)$ with probability at least $1- \exp(-\Omega(n))$) eigenvalue approximator therefore determines $f_{\mathrm{eig}}$. 

For PSD testing we set $C$ a sufficiently large absolute constant, and the decision function is
\[
f_{\mathrm{psd}}(Y)=
\begin{cases}
0,&\mathcal S_C(Y)\succeq 0,\\
1,&\lambda_{\min}(\mathcal S_C(Y))\le -\epsilon\|\mathcal S_C(Y)\|_p,\\
\perp,&\text{otherwise},
\end{cases}
\]
which is exactly the YES/NO promise in the PSD-testing problem for the shifted symmetric hard family. When $s = \Omega( 1 / \sqrt{n})$ and $s = \Omega(\epsilon\cdot n^{1 / p - 1/ 2})$, it is proven in \cite[Theorem~5.4.11]{gribelyuk2025lifting} that the laws $\mathcal I_{\mathrm{eig},0}$ and $\mathcal I_{\mathrm{eig},1}$ land in the 0 and 1 cases respectively with probability at least $1- \exp(-\Omega(n))$.

\begin{proposition}
\label{prop:smooth-eigen-psd}
Let $Y$ be drawn from either $\mathcal I_{\mathrm{eig},0}$ or $\mathcal I_{\mathrm{eig},1}$, and let $\bZ$ be an independent matrix with i.i.d. $\gamma_R$ entries.
Then, for each of the two decision functions $f\in\{f_{\mathrm{eig}},f_{\mathrm{psd}}\}$,
\[
\Pr\bigl[f(Y+\bZ)=f(Y)\bigr]\ge 1-n^{-100}
\]
provided $\Lambda\ge CR\sqrt{n\log n}$ in the eigenvalue approximation case and $\Lambda\ge CR n^{1 / p}\sqrt{\log n}$ in the PSD testing case.
Hence both promise problems are $(0,n^{-100})$-smooth on average with respect to $\gamma_R$.
\end{proposition}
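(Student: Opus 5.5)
The plan mirrors \cref{prop:smooth-operator-kyfan}: a bounded Gaussian perturbation moves every eigenvalue, and the Schatten norms, by a controlled amount. This amount has to be small compared with the gap between the thresholds defining $f_{\mathrm{eig}}$ and $f_{\mathrm{psd}}$. By the cited results of \cite[Theorems~5.4.10, 5.4.11]{gribelyuk2025lifting}, with probability $1-\exp(-\Omega(n))$ the input $Y$ lands strictly inside one of the two decision regions with slack proportional to $\Lambda\sqrt n$. It then suffices to show that $\bZ$ cannot use up that slack except with probability $n^{-100}$.

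\textbf{Step 1 (noise bound).} The entries of $\bZ$ are $O(R)$-sub-Gaussian. Using the same rectangular operator-norm bound as in \cref{prop:smooth-operator-kyfan}, strengthened to tail probability $n^{-100}$, I get $\|\bZ\|_{op}\le C_1R\sqrt{n\log n}$. Alternatively, a crude entrywise union bound ($\|\bZ\|_\infty\le C_1R\sqrt{\log n}$) gives $\|\bZ\|_{op}\le \|\bZ\|_F\le C_1 nR\sqrt{\log n}$, but that is weaker. I will use the operator-norm tail, which holds at level $n^{-100}$ since the $\sqrt{\log n}$ factor buys the deviation.

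\textbf{Step 2 (eigenvalue case).} By Weyl's inequality, $|\rho_{\mathrm{eig}}(Y+\bZ)-\rho_{\mathrm{eig}}(Y)|\le\|\bZ\|_{op}$. The same applies to singular values if $\rho_{\mathrm{eig}}$ is read through the symmetrization. The gap $(C_{\mathrm{eig}}-c_{\mathrm{eig}})\Lambda\sqrt n$ is at least twice $C_1R\sqrt{n\log n}$ once $\Lambda\ge CR\sqrt{n\log n}$. This is generous, since only $\Lambda\gtrsim R\sqrt{\log n}$ is needed, but the stated hypothesis suffices. Hence $f_{\mathrm{eig}}$ is unchanged outside the union of the two exceptional events, whose total probability is at most $n^{-100}$ after adjusting constants.

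\textbf{Step 3 (PSD case; the main obstacle).} Note that $\mathcal S_C(Y+\bZ)-\mathcal S_C(Y)$ is the symmetric block matrix built from $\bZ$, whose operator norm equals $\|\bZ\|_{op}$. The argument splits by case.

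In the YES case, I need the slack that the cited theorem supplies: $\lambda_{\min}(\mathcal S_C(Y))\ge c\Lambda\sqrt n$. Then Weyl's inequality preserves PSD-ness as long as $\|\bZ\|_{op}< c\Lambda\sqrt n$.

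In the NO case, two things move under the perturbation. First, $\lambda_{\min}$ shifts by at most $\|\bZ\|_{op}$. Second, the Schatten norm changes by at most $\|\bZ\|_{S_p}\le (2n)^{1/p}\|\bZ\|_{op}$; this is where the factor $n^{1/p}$ in the hypothesis $\Lambda\ge CRn^{1/p}\sqrt{\log n}$ comes from. I expect the delicate point to be that the cited theorem must give a \emph{strict} NO-margin, say $\lambda_{\min}(\mathcal S_C(Y))\le -2\epsilon\|\mathcal S_C(Y)\|_p$, or an additive slack of order $\epsilon\|\mathcal S_C(Y)\|_p$. I would first check that the spike strength $s$ chosen in \cite{gribelyuk2025lifting} can be increased by a constant factor to give this margin. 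I would then verify the arithmetic
\[
\|\bZ\|_{op}+\epsilon(2n)^{1/p}\|\bZ\|_{op}\le \tfrac12\epsilon\|\mathcal S_C(Y)\|_p,
\]
where $\|\mathcal S_C(Y)\|_p=\Theta(\Lambda n^{1/2+1/p})$ and $\epsilon\gtrsim n^{-1/2}$ in this regime. Both are consequences of the standing assumptions on $s$.

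Finally, a union bound over the $\exp(-\Omega(n))$ failure of the cited theorems and the $n^{-100}$ noise event gives the claim. Smoothness on average with respect to $\gamma_R$ follows by identifying the vectorized $\bZ$ with a sample of $\gamma_R$ in dimension $n^2$.
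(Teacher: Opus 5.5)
Your proposal is correct and is essentially the paper's argument. Both use the tail bound $\|\bZ\|_{op}\le C_1R\sqrt{n\log n}$ at level $n^{-100}$, Weyl-type stability of the spectral quantity, and the Schatten bound $\|\mathcal S_C(Y+\bZ)-\mathcal S_C(Y)\|_p\le (2n)^{1/p}\|\bZ\|_{op}$, all measured against the $\Omega(\Lambda\sqrt n)$ slack from \cite[Theorems~5.4.10, 5.4.11]{gribelyuk2025lifting}; your YES/NO split and the factor $\epsilon$ on the norm perturbation are just more careful bookkeeping of the same step.

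One small correction to that bookkeeping. The regime fact you need is $\epsilon\gtrsim n^{-1/p}$, not $\epsilon\gtrsim n^{-1/2}$, which is not implied for $p<2$ and is too weak for $p>4$. It is cleaner to do as the paper does and use the additive NO-slack $\Omega(s\Lambda n)=\Omega(\Lambda\sqrt n)$ obtained by boosting $s$; the perturbation $(1+\epsilon(2n)^{1/p})\|\bZ\|_{op}=O(\Lambda\sqrt n/C)$ cannot consume it.
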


\begin{proof}
The perturbation estimate is the same as in \cref{prop:smooth-operator-kyfan}:
$
\|\bZ\|_{op}\le C_1R\sqrt{n\log n}
$
with probability at least $1-n^{-100}$.
For eigenvalue approximation, \cite[Theorem~5.4.10]{gribelyuk2025lifting} shows that the two cases are separated by an additive gap $\Omega(\Lambda\sqrt n)$, since the spike contributes $\Omega(s\Lambda n)$ while the background remains $O(\Lambda\sqrt n)$, and $s = \Theta(\epsilon), n = \Omega( {1} / {\epsilon}^2)$. Thus $\Lambda\ge CR\sqrt{\log n}$ makes the perturbation from $\bZ$ negligible.

For PSD testing, we can enlarge $C$ by a factor of $2$, and \cite[Theorem~5.4.11]{gribelyuk2025lifting} shows that after symmetrization and shifting, the distributions 
are separated by an additive gap of at least $\Omega(\Lambda\sqrt n)$. Since 
\[
\|\mathcal S_C(Y+\bZ)-\mathcal S_C(Y)\|_p\le 2\|\bZ\|_p\le 2n^{1/p}\|\bZ\|_{op}\leq C_1\Lambda n^{1/2} / C,
\]
we can choose $C$ large enough to make this at most half the gap, so that the perturbation from $\bZ$ cannot cross the threshold.
Therefore the function value is unchanged, except on an event of probability at most $n^{-100}$.
\end{proof}

\begin{theorem}
\label{cor:turnstile-eigen-psd}
Consider the following streaming problems:
\begin{enumerate}
\item \textbf{Eigenvalue approximation.} $n=\Omega(\epsilon^{-2})$, and $W\ge n^{C_0}$.
\item \textbf{PSD testing.} $W\ge n^{C_0}$.
\end{enumerate}
Then any randomized turnstile streaming algorithm that, on every unit update turnstile stream with length at most $W$, succeeds with probability at least $2/3$ must use
$
\Omega\bigl(\epsilon^{-4}\log W\bigr)
$
bits of space in case \textup{(1)}, and
\[
\Omega\bigl(\min\{\epsilon^{-2p},n^2 \}\log W\bigr),
\qquad
\Omega\bigl(\min\{\epsilon^{-4}n^{2-4/p}, n^2\}\log W\bigr),
\qquad
\Omega\bigl(n^2\log W\bigr)
\]
bits of space in case \textup{(2)} for the three regimes $p\in[1,2]$, $p\in(2,\infty)$, and $p=\infty$, respectively.
\end{theorem}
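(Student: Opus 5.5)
We follow the same template as \cref{cor:turnstile-op-kyfan}. Fix a small constant $\theta>0$, set $D:=W^{\theta}$, and suppose there is an $S$-bit randomized turnstile algorithm for eigenvalue approximation or PSD testing. Take $R:=\lfloor 2D^{4/5}n^{8/5}\rfloor$ and $\Lambda:=C_1 n^2R\log R$. This $\Lambda$ satisfies the hypotheses of \cref{prop:smooth-eigen-psd}: it is at least $CR\sqrt{n\log n}$ and at least $CRn^{1/p}\sqrt{\log n}$, since $p\ge 1$ gives $n^{1/p}\le n$. It also satisfies $nR\log R\le\Lambda\le R^{6/5}$ once $C_0$ is large. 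The spike coefficient $s$ is chosen per case: $s=\Theta(\epsilon)$ for eigenvalue approximation, and $s=\max\{\Theta(n^{-1/2}),\Theta(\epsilon n^{1/p-1/2})\}$ for PSD testing. With these choices, \cite[Theorems~5.4.10 and~5.4.11]{gribelyuk2025lifting} show that $\mathcal I_{\mathrm{eig},0}$ and $\mathcal I_{\mathrm{eig},1}$ land in the $0$ and $1$ cases of $f_{\mathrm{eig}}$, respectively $f_{\mathrm{psd}}$, with probability $1-e^{-\Omega(n)}$. A correct streaming algorithm therefore computes the promise function on $Y$.

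Next we truncate the hard law to the event that every entry of $\bG+suv^\top$ is at most $D$. The entries of $\bG$ are $O(\Lambda\sqrt{\log R})$ with high probability, and the spike entries are $O(s\Lambda\log R)$. Since $D\ge n^{C_0\theta}$ dominates both, the truncation loses $O(R^{-100})$ mass, and the truncated law has diameter at most $2Dn\le R^{5/4}$. By \cref{prop:smooth-eigen-psd}, the truncated law is $(0,n^{-100}+O(R^{-100}))$-smooth with respect to $\gamma_R$. The mollified streams have length $\poly(R)\le W$ for small $\theta$. Applying \cref{thm:randomized-mollified-transfer} (promise case) then yields an integer sketch of dimension $r=O(S/\log R)$ with entries bounded by $R$, succeeding with constant probability on the truncated law.

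We then apply \cref{thm:linear-sketches} with $Q=R$ and the uniform label $B$. Take $X$ to be the vectorized background discrete Gaussian with $\bS_b=\Lambda\Id$, and $Y$ the spike offset (zero when $b=0$, $suv^\top$ when $b=1$). The singular-value conditions hold by the choice of $\Lambda$. Here $\delta$ is a constant, so the hypothesis $Q\ge(2n/\delta)^{30}$ is met because $R\ge n^{C}$. The output is a real orthonormal sketch of dimension $O(r)$ that distinguishes the continuous Gaussian background from Gaussian-plus-spike with constant advantage. \cite[Lemma~5.3.6]{gribelyuk2025lifting} then forces $r=\Omega(s^{-4})$, capped at $n^2$ by the trivial sketch. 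Substituting the values of $s$:
\begin{itemize}
\item eigenvalue approximation gives $\Omega(\epsilon^{-4})$;
\item PSD testing with $p\in[1,2]$ gives $\Omega(\min\{\epsilon^{-4}n^{2-4/p},n^2\})$, and $\epsilon^{-2p}$ is the binding term in the regime where it is below $n^2$, matching \cite{gribelyuk2025lifting};
\item PSD testing with $p>2$ gives $\Omega(\min\{\epsilon^{-4}n^{2-4/p},n^2\})$ directly;
\item PSD testing with $p=\infty$ gives $s=\Theta(n^{-1/2})$, hence $\Omega(n^2)$.
\end{itemize}
Multiplying by $\log R=\Theta(\log W)$ gives the claimed bounds on $S$.

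The main obstacle is matching parameters in the PSD case. We must check that the choice of $s$ meeting both conditions of \cite[Theorem~5.4.11]{gribelyuk2025lifting} still gives the stated $\epsilon^{-2p}$ bound for $p\in[1,2]$. If it does not, I would follow the case split in \cite[Section~5.4]{gribelyuk2025lifting}, choosing $s$ and the matrix dimension (planting the instance in a $k\times k$ submatrix with $k\approx\epsilon^{-p}$ and padding with zeros) so that $s^{-4}$ gives exactly that bound. A secondary check is that the symmetrize-and-shift map $\mathcal S_C$ is only a decoding-side transformation, so the sketch still acts linearly on $Y$ itself.
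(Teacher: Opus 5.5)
Your route is the paper's: the same $D$ and $R$, truncation of the spiked discrete law, the promise case of \cref{thm:randomized-mollified-transfer}, lifting via \cref{thm:linear-sketches} with $Q=R$, and \cite[Lemma~5.3.6]{gribelyuk2025lifting} giving $r=\Omega(s^{-4})$. Your $\Lambda=C_1n^2R\log R$ meets both the smoothness and the lifting size conditions. The eigenvalue case and the PSD cases $p>2$ and $p=\infty$ go through as you wrote them.

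\textbf{Where it fails.} The step that breaks is PSD testing with $p\in[1,2)$. If you keep the instance at dimension $n$, the conditions of \cite[Theorem~5.4.11]{gribelyuk2025lifting} force $s\gtrsim\max\{n^{-1/2},\epsilon n^{1/p-1/2}\}$. In the regime $\epsilon^{-2p}\le n^2$, i.e.\ $\epsilon^p n\ge 1$, this gives
\[
\frac{\epsilon^{-4}n^{2-4/p}}{\epsilon^{-2p}}=(\epsilon^{p}n)^{-(4-2p)/p}\le 1 .
\]
So you only obtain $\Omega(\epsilon^{-4}n^{2-4/p}\log W)$, which is strictly weaker than the claimed $\epsilon^{-2p}\log W$. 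For $p=1$ this is $\epsilon^{-4}n^{-2}$ versus $\epsilon^{-2}$. Your remark that $\epsilon^{-2p}$ is ``the binding term'' therefore has the comparison backwards. Indeed, in dimension $n$ the choice $s=\epsilon^{p/2}$ satisfies $s=\Omega(\epsilon n^{1/p-1/2})$ only when $\epsilon^{-p}\gtrsim n$.

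\textbf{The fix is your fallback, made mandatory.} It is exactly what the paper's choice $s=\Theta(\epsilon^{p/2})$ implicitly requires. Take $k:=\lceil\epsilon^{-p}\rceil\le n$; this is where the hypothesis $\epsilon^{-2p}\le n^2$ enters. In dimension $k$ both conditions read $s=\Omega(\epsilon^{p/2})$, and then $s^{-4}=\epsilon^{-2p}$.

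The padding must happen at the level of the streaming problem, before mollifying:
\begin{itemize}
\item From the $n\times n$ tester, build an $S$-bit turnstile algorithm on $k\times k$ inputs $Y$.
\item It sends each update of $Y_{ij}$ to the two symmetric unit updates in a $2k\times 2k$ principal block.
\item It prepends the shift $C\Lambda\sqrt k$ on those $2k$ diagonal entries only; this is a polynomial-length prefix.
\item A block-diagonal matrix with a zero block is PSD iff its nonzero block is, and it has the same negative $\lambda_{\min}$ and the same Schatten norms. So this algorithm computes $f_{\mathrm{psd}}$ on $k\times k$ inputs.
\item Then run truncation, \cref{thm:randomized-mollified-transfer} and \cref{thm:linear-sketches} in dimension $k^2$; note $W\ge n^{C_0}\ge k^{C_0}$.
\end{itemize}
Do not mollify in dimension $n^2$ instead. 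The Gaussian noise on the padded coordinates would destroy exact PSD-ness in the YES case. If you extended the shift to all of $I_{2n}$ to compensate, it would inflate $\|\cdot\|_p$ by a factor of about $(n/k)^{1/p}$ and break the $\epsilon$-far promise.

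\textbf{Two smaller points.}
\begin{itemize}
\item The same input-side simulation is the correct justification for your secondary check. $\mathcal S_C$ is an affine embedding of the stream, not a decoding step. It is harmless because the simulated algorithm is still an $S$-bit turnstile algorithm on polynomially longer unit-update streams.
\item Amplify the success probability from $2/3$ to a small constant error first. The transfer theorem loses a factor $16$, and \cref{thm:linear-sketches} needs $\delta<1/10$.
\end{itemize}
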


\begin{proof}
This is the same truncation-and-transfer argument as in the proof of \cref{cor:turnstile-op-kyfan}, specialized to the present rank-one spiked family. Assume towards contradiction that there is an $S$-bit randomized turnstile algorithm for one of the two tasks, and let $f$ be the corresponding decision function.
Fix a sufficiently small absolute constant $\theta>0$ and set
$
D:=W^{\theta}.
$
Set
\[
R:=\lfloor 2D^{4/5}n^{8/5}\rfloor,
\qquad
\Lambda:=
C_1 n^{2+1/p}R\log n.
\]
where $C_1$ is the constant from \cref{prop:smooth-eigen-psd}. Since $D\ge n^{C_0\theta}$, the same truncation estimate as above shows that the full hard matrix is entrywise bounded by $D$ except with probability at most $R^{-100}$, and hence the conditioned support has diameter at most $2Dn\le R^{5/4}$. \cite[Theorem~5.4.10]{gribelyuk2025lifting} already gives correctness with probability $1-\exp(-\Omega(n))$ on the original hard pair, so after conditioning and by \cref{prop:smooth-eigen-psd} the truncated hard pair is still distinguished with constant bias.

Let $B\sim \mathrm{Unif}(\{0,1\})$ choose between $\mathcal I_{\mathrm{eig},0}$ and $\mathcal I_{\mathrm{eig},1}$. By \cref{rmk:mollified-stream-distribution}, the associated mollified stream distribution has unit updates and length polynomial in $R$, hence at most $W$ by choosing $\theta$ small enough. Applying \cref{thm:randomized-mollified-transfer} therefore gives an integer sketch matrix of dimension $r=O(S/\log R)$ with entries bounded by $R$. Since our choice of $\Lambda$ makes the size conditions in \cref{thm:linear-sketches} valid with $Q=R$ (when $C_0$ is large enough), we can apply that theorem with this uniform label $B$: conditioned on $B=b$, the variable $X$ is the vectorized background discrete Gaussian matrix and $Y$ is the corresponding zero/spike offset. In the PSD case one again performs the deterministic symmetrization and identity shift from the source construction.

Now invoke \cite[Lemma~5.3.6]{gribelyuk2025lifting}. Since the spike-coefficient vector has one nonzero coordinate of size $s$, every real sketch computing $f$ with constant bias must have dimension $\Omega(s^{-4})$. Thus:
$
r=\Omega(\epsilon^{-4})
$
in the eigenvalue case, because $s=\Theta(\epsilon)$. For the PSD case, the conditions $s = \Omega(n^{-1 / 2})$ and $s = \Omega(\epsilon\cdot n^{1 / p - 1/ 2})$ is satisfied by $s=\Theta(\epsilon^{p/2})$, $s=\Theta(\epsilon n^{1/p-1/2})$, when $\epsilon^{-2p} \leq n^2$ and $p \in [1,2]$, $\epsilon^{-4} n^{2 - 4 / p} \leq n^2$ and $p \in (2, \infty)$ respectively. For $p=\infty$, the condition is satisfied by $s=\Theta(n^{-1 / 2})$. Hence we have
\[
r=\Omega(\epsilon^{-2p}),
\qquad
r=\Omega(\epsilon^{-4}n^{2-4/p}),
\qquad
r=\Omega(n^2)
\]
in the PSD case for $p\in[1,2]$, $p\in(2,\infty)$, and $p=\infty$, respectively, by using the source choices 

Finally, $S=\Omega(r\log R)=\Omega(r\log D)=\Omega(r\log W)$ because $\log R=\Theta(\log D)=\Theta(\log W)$, which gives the stated lower bounds.
\end{proof}

\subsubsection{Compressed sensing}

In the $\ell_2/\ell_2$ sparse recovery problem, the algorithm receives $y\in\mathbb R^n$ and must output an  $s$-sparse $\hat x$ such that
\[
\|y-\hat x\|_2^2\le (1+\epsilon)\min_{k\text{-sparse }\widetilde x}\|y-\widetilde x\|_2^2.
\]
As in~\cite[Section~5.5]{gribelyuk2025lifting}, following the continuous lower-bound framework of Price and Woodruff~\cite{PriceW11}, we restrict attention to outputs of sparsity $s= O(\epsilon n/\log n)$.

Fix a family $\mathcal F\subseteq\{S\subseteq[n]:|S|=k\}$ with pairwise symmetric difference at least $k$ and $\log|\mathcal F|=\Omega(k\log(n/k))$. We let
\[
X:=\{x\in\{0,\pm\Lambda\}^n: \operatorname{supp}(x)\in\mathcal F\},
\qquad
\eta\sim\DG\Bigl(0,\epsilon\Lambda^2\frac{k}{n}\Id_n\Bigr).
\]
The hard input is $Y=x_0+\eta$, where $x_0\in X$ is sampled according to the support family. Thus the planted structure is the sparse codeword $x_0$, while $\eta$ is the internal discrete-Gaussian noise. The law is denoted as $\mathcal{I}$ as usual.

For this application we use the \emph{relation} version of \cref{thm:randomized-mollified-transfer}. Let
$
\mathcal O_{\mathrm{cs}}:=\{x'\in\mathbb R^n: |\operatorname{supp}(x')|\le s\}.
$
Let
\[
O_y:=\Bigl\{\hat x\in\mathcal O_{\mathrm{cs}}: \|y-\hat x\|_2^2\le (1+\epsilon)\min_{k\text{-sparse }\widetilde x}\|y-\widetilde x\|_2^2\Bigr\},
\]
and let ${\mathcal R}_{\mathrm{cs}}\subseteq \Z^n\times \mathcal O_{\mathrm{cs}}$ be the original sparse-recovery relation,
$
(y,\hat x)\in{\mathcal R}_{\mathrm{cs}}
\quad\Longleftrightarrow\quad
\hat x\in O_y.
$
For the transfer step we allow a fixed constant-factor slack in the approximation parameter and define the relaxed valid-output set
\[
\widetilde O_y:=\Bigl\{\hat x\in\mathcal O_{\mathrm{cs}}: \|y-\hat x\|_2^2\le (1+2\epsilon)\min_{k\text{-sparse }\widetilde x}\|y-\widetilde x\|_2^2\Bigr\},
\]
and let $\widetilde{\mathcal R}_{\mathrm{cs}}\subseteq \Z^n\times \mathcal O_{\mathrm{cs}}$ be the corresponding requirement relation,
$
(y,\hat x)\in\widetilde{\mathcal R}_{\mathrm{cs}}
\quad\Longleftrightarrow\quad
\hat x\in \widetilde O_y.
$
The point of the next proposition is that any output which is valid for $Y+Z$ under the original $(1+\epsilon)$ guarantee remains valid for $Y$ under the relaxed $(1+2\epsilon)$ guarantee. 
\begin{proposition}
\label{prop:smooth-compressed-sensing}
Assume $\sqrt{k\log n/n} \leq \epsilon \leq c_0$ for a fixed small constant $c_0$ and let $Y=x_0+\eta$ be drawn from the hard distribution above. Let $Z\sim\gamma_R$ be independent. If
$
\Lambda\ge CR \sqrt{\frac{n\log n}{\epsilon^{3} k}}
$
for a sufficiently large absolute constant $C$, then
\[
\Pr\Bigl[\forall x', x'\in O_{Y+Z}\Rightarrow x'\in \widetilde O_Y\Bigr]\ge 1-n^{-100}.
\]
Equivalently, ${\mathcal R}_{\mathrm{cs}}$ is $n^{-100}$-smooth into $\widetilde{\mathcal R}_{\mathrm{cs}}$ on average on $\mathcal{I}$ with respect to $\gamma_R$ in the sense of \cref{def:smoothness}.
\end{proposition}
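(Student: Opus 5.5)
The plan is to compare the objective values at $Y$ and at $Y+Z$ through the triangle inequality, and to show that the perturbation $Z$ is negligible next to the optimal $k$-sparse residual. Write $\mathrm{OPT}(y):=\min_{k\text{-sparse }\widetilde x}\|y-\widetilde x\|_2$. For any $x'$ we have $\|Y-x'\|_2\le \|Y+Z-x'\|_2+\|Z\|_2$, and $\mathrm{OPT}(Y+Z)\le \mathrm{OPT}(Y)+\|Z\|_2$, since $\mathrm{OPT}$ is $1$-Lipschitz in $\ell_2$. Suppose $x'\in O_{Y+Z}$. Then
\[
\|Y-x'\|_2\le \sqrt{1+\epsilon}\,\bigl(\mathrm{OPT}(Y)+\|Z\|_2\bigr)+\|Z\|_2 .
\]
Squaring, it suffices to show $\|Z\|_2\le c\,\epsilon\,\mathrm{OPT}(Y)$ for a small absolute constant $c$. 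In that case $\|Y-x'\|_2^2\le(1+\epsilon)(1+3c\epsilon)^2\mathrm{OPT}(Y)^2\le(1+2\epsilon)\mathrm{OPT}(Y)^2$ once $c$ is small and $\epsilon\le c_0$. The whole proof therefore reduces to two tail estimates that together give this comparison.

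\textbf{Step 1 (upper bound on $\|Z\|_2$).} By \cref{eq:gamma-tail-bound-general} with $u=C_1R\sqrt{n\log n}$, we get $\|Z\|_2\le C_1R\sqrt{n\log n}$ with probability at least $1-n^{-100}/2$. This uses $\sqrt2^{\,n}e^{-\pi C_1^2 n\log n/2}\le n^{-100}/2$.

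\textbf{Step 2 (lower bound on $\mathrm{OPT}(Y)$).} Any $k$-sparse $\widetilde x$ leaves at least $n-2k$ coordinates outside $\operatorname{supp}(x_0)\cup\operatorname{supp}(\widetilde x)$. On those coordinates $Y=\eta$ and the residual equals $\eta$. So $\mathrm{OPT}(Y)^2\ge \min_{|T|\le 2k}\sum_{i\notin T}\eta_i^2$, which is the sum of the $n-2k$ smallest $\eta_i^2$. Since $\epsilon\ge\sqrt{k\log n/n}$ forces $k\le n/\log n\ll n$, this removes only a small fraction of the coordinates. Write $\sigma^2:=\epsilon\Lambda^2k/n$ for the per-coordinate variance of $\eta$. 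One-dimensional discrete Gaussian anticoncentration gives a constant probability that $\eta_i^2\ge c_1\sigma^2$, provided $\sigma\gg 1$; this follows from the hypothesis on $\Lambda$. A Chernoff bound over the $n$ independent coordinates then shows that at least $c_2 n$ coordinates have $\eta_i^2\ge c_1\sigma^2$, except with probability $e^{-\Omega(n)}\le n^{-100}/2$. Discarding the $2k\le c_2n/2$ largest coordinates leaves $\mathrm{OPT}(Y)^2\ge c_3 n\sigma^2=c_3\epsilon\Lambda^2k$.

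\textbf{Step 3 (comparison).} We need $C_1^2R^2 n\log n\le c^2\epsilon^2\cdot c_3\epsilon\Lambda^2 k$, i.e. $\Lambda\ge C R\sqrt{n\log n/(\epsilon^3 k)}$. This is exactly the assumed bound, with $C$ large. A union bound over the two exceptional events gives probability $1-n^{-100}$.

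The main obstacle is Step 2. One must get a uniform lower bound on $\mathrm{OPT}(Y)$ over all $k$-sparse competitors, including adversarially chosen supports, and the anticoncentration must be for the discrete Gaussian rather than the continuous one. I would handle the uniformity via the order-statistics reduction above, which needs no net. For the anticoncentration I would compare $\DG(0,\sigma^2)$ with its continuous counterpart using Poisson summation (\cref{cor:psf-approx-smoothing-general}) in dimension one, valid since $\sigma\ge\sqrt{\ln 8}$.
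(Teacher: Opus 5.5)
Your proposal is correct and follows the paper's proof. Both arguments use the triangle inequality and the $1$-Lipschitzness of $\mathrm{OPT}_k$ to reduce to $\|Z\|_2\le c\epsilon\,\mathrm{OPT}_k(Y)$, bound $\|Z\|_2^2=O(R^2n\log n)$ by the Gaussian tail, and lower bound $\mathrm{OPT}_k(Y)^2=\Omega(\epsilon\Lambda^2k)$ using only the coordinates of $\eta$ off $\operatorname{supp}(x_0)\cup\operatorname{supp}(\widetilde x)$. The one difference is in that last step: the paper takes $\|\eta\|_2^2=\Theta(\epsilon\Lambda^2k)$ minus $2k\|\eta\|_\infty^2$ (which is where it uses $\epsilon\ge\sqrt{k\log n/n}$), whereas your per-coordinate anticoncentration, Chernoff, and discarding the $2k$ largest entries needs only $k\le c n$.
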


\begin{proof}
By standard concentration for the product discrete Gaussian $\eta$ and $Z$ (see \cref{eq:gamma-tail-bound-general}), and by a coordinate-wise sub-Gaussian tail bound plus a union bound, with probability at least $1-n^{-100}$ we have simultaneously
\[
\|\eta\|_2^2=\Theta(\epsilon\Lambda^2 k),
\qquad
\|\eta\|_\infty^2=O\!\Bigl(\epsilon\Lambda^2\frac{k\log n}{n}\Bigr),
\qquad
\|Z\|_2^2=O(R^2n\log n).
\]
Since $\epsilon>\sqrt{k\log n/n}$, it follows that every set $U\subseteq[n]$ of size at most $2k$ satisfies
\[
\|\eta_U\|_2^2\le 2k\|\eta\|_\infty^2=O(\epsilon^2\Lambda^2 k).
\]
We first show that the right-hand side $\operatorname{OPT}_k(Y)$ is already of the natural noise scale. Assume $y = x_0 + \eta$ is a sample from the hard distribution $Y$, with $T_0 := \operatorname{supp}(x_0)$. Let $\widetilde x$ be any $k$-sparse vector with support $S$, and write
$
U:=T_0\cup S.
$
Then $|U|\le 2k$, and
\[
\|Y-\widetilde x\|_2^2
\geq \|\eta_{U^c}\|_2^2
\ge  \|\eta\|_2^2-\|\eta_U\|_2^2 \ge \|\eta\|_2^2-O(\epsilon^2\Lambda^2 k)=\Omega(\epsilon\Lambda^2 k).
\]
On the other hand, by the displayed estimate for $Z$ with high probability and the lower bound on $\Lambda$,
\[
\|Z\|_2^2=O(R^2n\log n)=o(\epsilon^3\Lambda^2 k)=o\bigl(\epsilon^2\operatorname{OPT}_k(Y)\bigr).
\]
Clearly, the triangle inequality gives, with high probability,
\[
\sqrt{\operatorname{OPT}_k(Y+Z)}\le \sqrt{\operatorname{OPT}_k(Y)}+\|Z\|_2 \leq (1+ o(\epsilon))\sqrt{\operatorname{OPT}_k(Y)}.
\]
Fix now $x'\in O_{Y+Z}$. Then
\[
\|Y+Z-x'\|_2\le \sqrt{1+\epsilon}\,\sqrt{\operatorname{OPT}_k(Y+Z)}
\le \sqrt{1+\epsilon}\,(1+c\epsilon)\sqrt{\operatorname{OPT}_k(Y)}.
\]
Using the triangle inequality once more,
\[
\|Y-x'\|_2
\le \|Y+Z-x'\|_2+\|Z\|_2
\le \Bigl(\sqrt{1+\epsilon}\,(1+c\epsilon)+c\epsilon\Bigr)\sqrt{\operatorname{OPT}_k(Y)}.
\]
For sufficiently small fixed $c_0$ and sufficiently large $C$, the coefficient in parentheses is at most $\sqrt{1+2\epsilon}$. Hence $x'\in\widetilde O_Y$. Because the good event depends only on $(x_0,\eta,Z)$, this holds simultaneously for every $x'\in O_{Y+Z}$.
\end{proof}
\begin{theorem}
\label{cor:turnstile-compressed-sensing}
Assume $\sqrt{k\log n/n} \leq \epsilon \leq c_0$ for a sufficiently small constant $c_0$ and $W\ge n^{C_0}$ for a sufficiently large absolute constant $C_0$. Then any randomized turnstile streaming algorithm that, on every unit update turnstile stream with length at most $W$, solves $(1+\epsilon)$-approximate $\ell_2/\ell_2$ sparse recovery with output sparsity $O(\epsilon n/\log n)$ and success probability at least $2/3$ must use
$
\Omega\Bigl((k/\epsilon)\log(n/k)\cdot \log W\Bigr)
$
bits of space.
\end{theorem}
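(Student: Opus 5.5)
We follow the same template as \cref{cor:turnstile-lp-small} and \cref{cor:turnstile-op-kyfan}, now using the relation form (case (3)) of \cref{thm:randomized-mollified-transfer}. Fix a small constant $\theta>0$, set $D:=W^\theta$, and assume for contradiction an $S$-bit algorithm. Choose
\[
R:=\lfloor D^{4/5}n^{-2/5}\rfloor,
\qquad
\Lambda:=C_1 R\, n^{2}\log R .
\]
The factor $n^2$ dominates $\sqrt{n\log n/(\epsilon^3 k)}$, because $\epsilon\ge\sqrt{k\log n/n}\ge n^{-1/2}$ gives $\epsilon^{-3}\le n^{3/2}$. So \cref{prop:smooth-compressed-sensing} applies. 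This choice also gives $R^{6/5}\ge \Lambda\sqrt{\epsilon k/n}\ge nR\log R$ (once $W\ge n^{C_0}$ with $C_0$ large), which is the singular-value window \cref{thm:linear-sketches} needs for the noise scale $\bS=\Lambda\sqrt{\epsilon k/n}\,\Id_n$.

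Next we truncate the hard law $Y=x_0+\eta$ to the event $\{\|\eta\|_\infty\le D-\Lambda\}$. Sub-Gaussian tails and a union bound make its failure probability $R^{-100}$. The truncated law $\mathcal I$ then lives in $[-D,D]^n$ and has diameter at most $2D\sqrt n\le R^{5/4}$. By \cref{rmk:mollified-stream-distribution}, the mollified streams have length $\poly(R)\le W$. Truncation costs only $O(R^{-100})$ in the smoothness of $\mathcal R_{\mathrm{cs}}$ into $\widetilde{\mathcal R}_{\mathrm{cs}}$ given by \cref{prop:smooth-compressed-sensing}.

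The algorithm solves $\mathcal R_{\mathrm{cs}}$ on every stream with probability $2/3$. We first amplify to a constant $\delta$ small enough for \cref{thm:linear-sketches}, by running $O(1)$ independent copies and keeping the output with smallest residual. The residual check needs $y$, so this step is not immediate. I would instead take $\delta$ a small constant directly and note that the Price--Woodruff-type lower bound needs only constant success probability. With that in place, case (3) of \cref{thm:randomized-mollified-transfer} yields an integer sketch $\bA_0$ of dimension $r=O(S/\log R)$ with entries bounded by $R$, together with a randomized decoder. The decoder outputs $(1+2\epsilon)$-approximate $s$-sparse recoveries for $Y\sim\mathcal I$ with constant probability.

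To finish, convert this into a decision problem over the finite label set $\mathcal M$, the codewords $x_0\in X$ (or their supports). A $(1+2\epsilon)$-approximate recovery $\hat x$ determines $\operatorname{supp}(x_0)$ for most choices: the separation of $\mathcal F$ and the scale gap $\Lambda^2 \gg \epsilon\Lambda^2 k/n$ per coordinate mean a good $\hat x$ must put most of its mass on $T_0$, and we round to the nearest member of $\mathcal F$. This rounding is exactly the Price--Woodruff argument, as adapted in \cite[Section~5.5]{gribelyuk2025lifting}. Applying \cref{thm:linear-sketches} with $B=x_0$, $\mathcal I_b=\delta_b$ and $\bS_b=\Lambda\sqrt{\epsilon k/n}\,\Id_n$ then gives a real sketch of dimension $O(r)$ for the Gaussian instance. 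The continuous lower bound of \cite{PriceW11} gives $r=\Omega((k/\epsilon)\log(n/k))$, hence $S=\Omega(r\log R)$. Since $\log R=\Theta(\log W)$, this is the claimed bound.

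The main obstacle is this last step. \cref{thm:linear-sketches} needs a decoder that recovers the label with error $\delta/3$ where $\delta<1/10$, while the streaming guarantee is only $2/3$. The Price--Woodruff information-theoretic argument tolerates constant error, but it is phrased via mutual information rather than exact label recovery. So I would apply \cref{thm:linear-sketches} to the decision ``recover a codeword agreeing with $x_0$ on a $0.9$ fraction of coordinates'', which is partial recovery with constant success. I expect that to require restating the lifting with a relation rather than a function $f$.
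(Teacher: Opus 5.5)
Your outline is the paper's: truncate $x_0+\eta$ on $\{\|\eta\|_\infty\le D-\Lambda\}$, invoke case (3) of \cref{thm:randomized-mollified-transfer}, lift via \cref{thm:linear-sketches} with $B=x_0$, $\mathcal I_b=\delta_b$, $\bS_b=\Lambda\sqrt{\epsilon k/n}\,\Id_n$, and finish with \cite{PriceW11}. However, one of your parameter checks is false as written. With $R:=\lfloor D^{4/5}n^{-2/5}\rfloor$ you get $R^{5/4}\approx D/\sqrt n$. So the claimed $2D\sqrt n\le R^{5/4}$ fails by a factor of about $2n$, and a law supported on $[-D,D]^n$ does not meet the diameter hypothesis of \cref{thm:randomized-mollified-transfer}. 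The exponent of $n$ has the wrong sign. The paper takes $R:=\lfloor 2D^{4/5}n^{2/5}\rfloor$, so that $R^{5/4}\ge 2^{5/4}D\sqrt n\ge 2D\sqrt n$. With this $R$ your other checks survive, because $D\ge n^{C_0\theta}$:
\begin{itemize}
\item $\Lambda=C_1Rn^2\log R$ still dominates $CR\sqrt{n\log n/(\epsilon^3k)}$;
\item the noise scale $\Lambda\sqrt{\epsilon k/n}=C_1Rn^{3/2}\sqrt{\epsilon k}\log R$ lies in $[nR\log R,R^{6/5}]$ once $R^{1/5}\ge C_1n^2\log R$;
\item $D/\Lambda\ge D^{1/5}/(2C_1n^{12/5}\log R)$ is polynomially large, so the truncation still fails with probability at most $R^{-100}$.
\end{itemize}

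The step you call the main obstacle closes without new machinery. The proof of \cref{thm:linear-sketches} is purely a total-variation argument. \cref{lem:disc-to-cont-r-parameterized} gives $\dTV\bigl(\Law(\bA(X+Y)+\eta\mid B=b),\Law(\bA(G+Y)\mid B=b)\bigr)\le Q^{-1/5}$, and the rest is data processing. Hence for any event relating the (possibly randomized) output to the label $b$, such as ``rounds to $x_0$'' or ``agrees with $x_0$ on a $0.9$ fraction'', the probability for $h(\bA(G+Y))$ is at least that for $g(\widetilde{\bA}(X+Y))$ minus $Q^{-1/5}$. Neither $\delta<1/10$ nor the exact-label form is actually used. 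This is effectively how the paper applies the theorem, deferring the rounding to \cite[Lemma~5.5.12]{gribelyuk2025lifting} and \cite[Lemma~4.3]{PriceW11}.

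Your caution about amplification is fair. The paper just says ``amplify success to $0.99$'', which is not automatic for a relation whose validity check needs $y$. If you skip amplification, do not plug $\delta=1/3$ into the \emph{statement} of case (3): its bound $1-6\delta$ is then vacuous. Its proof keeps the streaming and smoothness errors separate and gives failure at most $\frac{1/3}{1-2^{-M}}+O(n^{-100}+R^{-100})+R^{-1/50}+3R^{-100}$. That is success $2/3-o(1)$, which suffices for the Fano-type Price--Woodruff bound.
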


\begin{proof}
As before, amplify success to $0.99$ and suppose towards contradiction that there is an $S$-bit algorithm. Fix a sufficiently small absolute constant $\theta>0$ and set
$
D:=W^{\theta}.
$

Let $x_0\in X$ be the planted codeword and set
\[
R:=\lfloor 2D^{4/5}n^{2/5}\rfloor,
\qquad
\Lambda:=C_1\epsilon^{-1/2}n^{3/2}k^{-1/2}R\log R,
\]
where $C_1$ is the constant from \cref{prop:smooth-compressed-sensing}. Then $\eta\sim\DGZ{\sqrt{\epsilon}\Lambda\sqrt{k/n}\,\Id_n}$. Condition on
\[
E_{\mathrm{tr}}:=\{\|\eta\|_\infty\le D-\Lambda\}.
\]
The usual coordinate-wise sub-Gaussian bound gives $\Pr[E_{\mathrm{tr}}^c]\le R^{-100}$ once $C_0$ is large enough, and on $E_{\mathrm{tr}}$ the truncated hard law is supported in $[-D,D]^n$, hence has diameter at most $2D\sqrt n\le R^{5/4}$.

By \cref{prop:smooth-compressed-sensing}, ${\mathcal R}_{\mathrm{cs}}$ remains $R^{-1/50}$-smooth into $\widetilde{\mathcal R}_{\mathrm{cs}}$ after truncation. By \cref{rmk:mollified-stream-distribution}, the associated mollified stream distribution has unit updates and length polynomial in $R$, hence at most $W$ by choosing $\theta$ small enough. Applying case \textup{(3)} of \cref{thm:randomized-mollified-transfer}, we therefore obtain an integer sketch of dimension $r=O(S/\log R)$ with entries bounded by $R$ and a decoder that achieves the relaxed $(1+2\epsilon)$ guarantee on the discrete hard law with constant probability. At this point we may invoke the lifting step \cref{thm:linear-sketches}: the same support-recovery reduction, based on \cite[Lemma~5.5.12]{gribelyuk2025lifting} and \cite[Lemma~4.3]{PriceW11}. Our choice of $\Lambda$ and the lower bound $D\ge n^{C_0\theta}$ make the hypotheses of \cref{thm:linear-sketches} valid once $C_0$ is large enough, so we obtain a real sketch of dimension $O(r)$ that recovers the planted codeword on the continuous Gaussian hard instance with constant probability. The continuous sparse-recovery lower bound of Price and Woodruff~\cite{PriceW11}, in the same Gaussian hard family as in~\cite[Section~5.5]{gribelyuk2025lifting}, then gives
$
r=\Omega\bigl((k/\epsilon)\log(n/k)\bigr).
$
Hence
\[
S=\Omega\bigl((k/\epsilon)\log(n/k)\cdot \log R\bigr)=\Omega\bigl((k/\epsilon)\log(n/k)\cdot \log D\bigr)=\Omega\bigl((k/\epsilon)\log(n/k)\cdot \log W\bigr),
\]
since $\log R=\Theta(\log D)=\Theta(\log W)$.
\end{proof}

\subsection{Applications via the SMP model}\label{subsec:applications-smp}

We collect the full list of turnstile streaming lower bounds obtained via the exact route and the SMP model.
\begin{theorem}[SMP-based applications]
\label{thm:smp-applications-full}
Any randomized turnstile streaming algorithm that solves the following problems on unit update streams of length at most $W\geq \poly(n) $ with success probability at least $ 2 / 3$ must use the stated amount of space.
\begin{enumerate}
\item ($L_0$ estimation) output a $(1+\epsilon)$-approximation to the $L_0$ norm of the final frequency vector must use
$
\Omega\!\left(\frac{\epsilon^{-2}\log n\,\log\log W}{\log(\epsilon^{-2}\log n\,\log\log W)}\right)
$
bits. Furthermore, if $W \geq  (\epsilon^{-2} \log n)^{\Omega(\epsilon^{-2} \log n)}$, this improves to $\Omega(\epsilon^{-2}\log n\,\log\log W)$ bits. See \cref{cor:turnstile-l0}.
\item (Approximate maximum matching) in strict turnstile streams whose final frequency vector encodes a multigraph on $n$ vertices, output an $n^{\epsilon}$-approximate maximum matching must use $\Omega(n^{2-3\epsilon-o(1)})$ bits. See \cref{cor:turnstile-matching}.
\item (Estimating maximum matching size, large approximation) in strict turnstile streams whose final frequency vector encodes a multigraph on $n$ vertices, output an $\alpha$-approximation to the maximum matching size must use $\Omega(n/(\alpha^2\log n))$ bits. See \cref{cor:turnstile-matching-size}.
\item (Estimating maximum matching size, $(1+\epsilon)$ approximation) in strict turnstile streams whose final frequency vector encodes a multigraph on $n$ vertices, output a $(1+\epsilon)$-approximation to the maximum matching size must use $\Omega(n^{2-O(\epsilon)}/\log n)$ bits. See \cref{cor:turnstile-matching-size}.
\item (Subgraph counting) fix a constant-sized connected hypergraph $H$, $T\in \mathbb{N}$ with $\epsilon\in(1/\sqrt T,1]$. In strict turnstile streams whose final frequency vector encodes a hypergraph $G$ with at most $m = \Omega(T)$ edges, distinguishing the number of copies of $H$ in $G$ at least $T$ from at most $(1-\epsilon)T$ must use
\[
\Omega_H\!\left(
\frac{1}{\log m}\cdot
\max\left\{
\frac{m}{(\epsilon T)^{1/\mu_2}},
\frac{m}{(\epsilon^2 T)^{1/\mu_1}}
\right\}
\right)
\]
bits, where $\mu_2:=MVC_{1/2}(H)$, $\mu_1:=\max_{e\in E(H)}MVC_1(H\setminus e)$, and $MVC_\lambda$ is a fractional vertex-cover variant in which putting one unit of mass on a hyperedge costs $\lambda$. See \cref{cor:turnstile-subgraph-counting}.
\end{enumerate}
\end{theorem}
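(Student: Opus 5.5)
The statement is a summary theorem collecting five lower bounds, and each item follows one template. Given an $S$-bit streaming algorithm, I invoke \cref{cor:yao-minmax-exact-transfer} to obtain a public-coin randomized linear sketch with image size $(2+S/\log R)^{O(S)}$. That sketch is a public-coin SMP protocol in which each player's message has $O(S\log(2+S/\log R))$ bits. I then compare this against the known public-coin simultaneous lower bound for each problem.

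\textbf{Parameter choices.} In each case I set $D:=W^{\theta}$ and $R:=\lfloor (2D)^3\rfloor$ (enlarged to at least $n^{50}$), so that $2D\le R^{1/3}$ and streams of length $R^4$ fit within $W$. The hard SMP instances live in a ball $B_D$ with $D=\poly(n)$; in the $L_0$ case, entries are allowed up to $\poly(W)$-scale magnitudes. For the strict turnstile items (matching, matching size, subgraph counting), I use \cref{rmk:strict-robp-extension} with $\mathcal I$ supported on $\mathbb Z_{\ge0}^n\cap B_D$. Success probability $2/3$ is first amplified to a small constant, so that $1-8\delta$ stays bounded away from $1/2$.

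\textbf{Items 2--5.} Here the SMP bounds of \cite{assadi2016maximum,AssadiKL17,DuMWY19,KallaugherKP18} lower bound the total message length by some quantity $L$. Since the sketch uses $O(S\log S)$ bits, we get $S\log S=\Omega(L)$, hence $S=\Omega(L/\log L)$. This is consistent with the stated $n^{2-3\epsilon-o(1)}$ bound, and with the $1/\log n$ and $1/\log m$ factors in the other items.

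\textbf{Item 1 ($L_0$).} I would use the known public-coin SMP or sketching lower bound, namely $\Omega(\epsilon^{-2}\log n\cdot \log\log W)$ bits for inputs whose entries range up to $W$. This comes from the KNW-type instance that hides parity or moduli information requiring $\log\log W$ bits per measurement.

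\begin{itemize}
\item Plugging in $O(S\log S)$ gives the first bound with the $\log(\cdot)$ denominator.
\item For the improvement, I use the refined image size $(2+S/\log R)^{O(S)}$. If $S=O(\log R)$, this is $2^{O(S)}$, which gives the tight $\Omega(\epsilon^{-2}\log n\log\log W)$ bound.
\item The threshold $W\ge (\epsilon^{-2}\log n)^{\Omega(\epsilon^{-2}\log n)}$ is exactly the condition $\log R\gtrsim S$ for $S$ near the target bound.
\end{itemize}

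\textbf{Main obstacle.} The hardest step is making the $L_0$ hard instance fit $\|x\|_2\le D$ with $D$ polynomial in $W$, while keeping $\log\log W$ bits of hardness per coordinate. I would first check that the hard entries have magnitude at most $\poly(W)$; then $D\le \poly(W)$ and $\theta$ can be adjusted.
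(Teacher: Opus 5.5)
Your template is the paper's. You convert via \cref{cor:yao-minmax-exact-transfer} to a public-coin sketch of image size $(2+S/\log R)^{O(S)}$, read it as a simultaneous protocol with per-player messages of $O(S\log(2+S/\log R))$ bits, and compare with the cited lower bounds. Your parameters ($D=W^{\theta}$, $R\approx(2D)^3$) are equivalent to the paper's $R=W^{1/4}$, $D=R^{1/3}/2$. The step $S\log S=\Omega(L)\Rightarrow S=\Omega(L/\log L)$ matches items 2--5. Your improved $L_0$ argument also coincides with the paper's: when $S=O(\log R)$ the factor $\log(2+S/\log R)$ is $O(1)$, and the hypothesis on $W$ gives $\log R/\log\log R=\Omega(\epsilon^{-2}\log n)$. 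Two steps, however, do not go through as written.

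\textbf{Which case of the corollary.} Your remark about keeping $1-8\delta$ away from $1/2$ indicates the metric/promise cases (1)--(2). That suffices for subgraph counting, and for items 1 and 4 up to constant-factor losses in $\epsilon$. It does not suffice for items 2 and 3. In cases (1)--(2) the decoder outputs something valid for one fiber representative $y_F$. It then relies on the triangle inequality, or on uniqueness of the promised bit, to make that output valid on the rest of the fiber. For item 2, a matching of size $\ge\mathsf{OPT}/n^{\epsilon}$ in $G_{y_F}$ need not be one for other $y$ in the same fiber. You therefore need case (3), whose decoder simulates the algorithm's last block from state $\sigma_M$; it has success $1-3\delta$ and needs $R\ge\delta^{-50}$, which is harmless for $\delta=0.01$. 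The paper uses case (3) for the matching-size items as well. Encoding $\alpha$-approximation as a log-scale metric problem and using case (1) would turn $\alpha$ into $\alpha^{3}$, and \cite[Theorem~8]{AssadiKL17} would then give only $\Omega(m/(\alpha^{6}\log m))$.

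\textbf{The $L_0$ obstacle.} Your proposed resolution runs the wrong way. Streams of length about $R^4\approx(2D)^{12}$ must fit within $W$, so $D$ is a small power of $W$. Hard entries of size $\poly(W)$ therefore cannot be accommodated by adjusting $\theta$. The fix is that \cite[Theorem~7.3]{DuMWY19} is parameterized by a cap on the frequencies. Instantiate it with cap $D/\sqrt n$, so every hard input lies in $B_D$. Then note that $\log\log(D/\sqrt n)=\Theta(\log\log W)$, because $D\ge n^{20}$.

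\textbf{Minor points.} The SMP lower bounds must be compared per player. For example, \cite[Theorem~8]{AssadiKL17} bounds the total $kb$ by $\Omega(mk/\alpha^2)$, and you divide by $k$. Also, \cite[Theorem~19]{KallaugherKP18} is a bound for composable-state algorithms, so you should observe that a fixed linear sketch is composable via $(s_1,s_2)\mapsto s_1+s_2$.
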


These applications use \cref{cor:yao-minmax-exact-transfer} rather than the mollified route: the $L_0$ and graph-streaming tasks considered here are genuinely discontinuous, so even a small independent perturbation can create or destroy zero coordinates, edges, or copies of a fixed subgraph.

Recall from \cref{sec:preliminaries} that $B_D:=\{x\in\mathbb R^n:\|x\|_2\le D\}$. The proof template in this subsection is as follows: apply \cref{cor:yao-minmax-exact-transfer} to obtain a public-coin linear sketch for bounded-support inputs with worst-case success probability at least $1-O(\delta)$, reinterpret that sketch as an SMP protocol, and then invoke the corresponding communication lower bound. In the applications below we simply take
$
R:=W^{1/4}, D = R^{1/3}/2,
$
so the streams involved in \cref{cor:yao-minmax-exact-transfer} all have length at most $W$, while we still have $\log R=\Theta(\log D)=\Theta(\log W)$. Moreover, the hypotheses of \cref{cor:yao-minmax-exact-transfer} can be easily satisfied if one chooses $W$ to be a sufficiently large polynomial in $n$.

\subsubsection{\texorpdfstring{$L_0$}{L0} approximation}
We follow the route from Du, Mitzenmacher, Woodruff, and Yang~\cite[Section~7]{DuMWY19}. In particular, we replace the original \cite{li2014turnstile} route using \cref{cor:yao-minmax-exact-transfer}, giving a space lower bound that only requires correctness on polynomial length and unit updates.

\begin{theorem}
\label{cor:turnstile-l0}
There exist absolute constants $c_0,C_0>0$ such that the following holds. Assume
$
n^{-0.49} \leq \epsilon\le c_0,
W\ge n^{C_0}.
$
Then any randomized turnstile streaming algorithm that, on every unit update turnstile stream with length at most $W$, outputs a $(1+\epsilon)$-approximation to the $L_0$ norm with success probability at least $2/3$ must use
\[
\Omega\!\left(
\frac{\epsilon^{-2}\log n\,\log\log W}
{\log \bigl(\epsilon^{-2}\log n\,\log\log W\bigr)}
\right)
\]
bits of space. Furthermore, if $W \geq (\epsilon^{-2} \log n)^{C_0 \epsilon^{-2} \log n}$, then the turnstile streaming algorithm must use $\Omega\bigl(\epsilon^{-2}\log n\cdot \log\log W\bigr)$ bits of space.
\end{theorem}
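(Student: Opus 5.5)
We follow the SMP template of this subsection. Set $R:=W^{1/4}$ and $D:=R^{1/3}/2$, so $\log R=\Theta(\log W)$ and $\log\log D=\Theta(\log\log W)$. The hard instances of Du, Mitzenmacher, Woodruff, and Yang~\cite[Section~7]{DuMWY19} for $(1+\epsilon)$-approximate $L_0$ are frequency vectors with entries bounded by $\poly(D)$. After rescaling the magnitude parameter so that all entries lie in $[-D^{1/2},D^{1/2}]$ and the dimension is at most $n$, every hard input lies in $B_D\cap\Z^n$; this is where $W\ge n^{C_0}$ is used. Suppose an $S$-bit algorithm succeeds with probability $2/3$ on all unit-update streams of length $\le W$. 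We first amplify to success probability $1-\delta_0$ for a small constant $\delta_0$ by taking the median of $O(1)$ independent copies, which costs only a constant factor in $S$. Then \cref{cor:yao-minmax-exact-transfer}, case~(1), with $\mathcal M=\mathbb R$, $f=\|\cdot\|_0$, and multiplicative accuracy (recast as an additive metric on $\log$ scale), yields a public-coin randomized linear sketch $\mathcal A:\Z^n\to\Gamma$. This sketch has dimension $O(S)$ and image size $(2+S/\log R)^{O(S)}$, and it outputs a $(1+3\epsilon)$-approximation on every $x\in B_D\cap\Z^n$ with probability at least $1-8\delta_0$.

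Next we read the sketch as a public-coin SMP protocol. Each player sends $\mathcal A x^{(i)}\in\Gamma$. The referee adds the messages and decodes. To charge only the image size, each message is represented within the finite image group, which has at most $(2+S/\log R)^{O(S)}$ elements. The messages therefore cost $O(S\log(2+S/\log R))$ bits. The additive structure lets the referee combine them exactly, since $\mathcal A$ is a homomorphism into a finite group $\mathcal A(\Z^n)\subseteq\mathbb T^m$. We then invoke the DMWY SMP (equivalently, linear-sketch image-size) lower bound: any such protocol solving $(1+O(\epsilon))$-approximate $L_0$ on their instances with magnitudes up to $\poly(D)$ needs $\Omega(\epsilon^{-2}\log n\,\log\log D)$ bits of total message/image size. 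Writing $L:=\epsilon^{-2}\log n\log\log W$, this gives
\[
S\log\bigl(2+S/\log R\bigr)=\Omega(L).
\]

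Finally we solve for $S$. Since $S\le O(L)$ may be assumed (else we are done), $\log(2+S/\log R)\le O(\log L)$, which gives $S=\Omega(L/\log L)$, the first claim. For the refined claim with $W\ge(\epsilon^{-2}\log n)^{C_0\epsilon^{-2}\log n}$, we get $\log R=\Theta(\log W)\ge C\,\epsilon^{-2}\log n\cdot\log(\epsilon^{-2}\log n)$. We also have $S=O(\epsilon^{-2}\log n\log\log W)$ after truncating by the known upper bound, and $\log\log W=O(\log W/(\epsilon^{-2}\log n))$ in this regime. Together these give $S/\log R=O(1)$, so the logarithmic factor is a constant and $S=\Omega(L)$.

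The main obstacle is the middle step: checking that the DMWY lower bound is really a bound on the \emph{image size} (bits of a modular/finite-group sketch), rather than on integer dimension, and that it only needs correctness on their bounded-magnitude instances. Their argument was phrased through the \cite{li2014turnstile} $Ax\bmod q$ form. Our $\mathcal A$ maps into a finite subgroup of $\mathbb T^m$, which is isomorphic to a product of cyclic groups, so the same encoding argument applies. We would verify this identification first, along with the parameter translation from $D$ back to $W$.
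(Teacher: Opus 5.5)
Your proposal matches the paper's proof: the same choice $R=W^{1/4}$, $D=R^{1/3}/2$, constant-factor amplification, application of \cref{cor:yao-minmax-exact-transfer}, simulation of the sketch as a public-coin simultaneous protocol with $O(S\log(2+S/\log R))$-bit messages against the DMWY bound, and the same two-regime algebra. For the middle step you flagged, the paper simply asserts that DMWY's argument ends in a distributional communication problem and therefore holds for public-coin finite-image sketches; in the refined regime only ``$S\le L$ or we are done'' is needed, not any known upper bound. The one cosmetic difference is that the paper applies case~(3) directly to the $L_0$ relation, which avoids treating $\|x\|_0=0$ in your log-scale metric for case~(1); an extended metric with $d(0,z)=\infty$ for $z>0$ would also handle it.
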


\begin{proof}
Amplifying success a constant number of times, we may assume the streaming algorithm succeeds with probability at least $0.99$ while using $O(S)$ bits of space. Set
$
R:=W^{1 / 4},
D:=R^{1/3}/2.
$
After increasing $C_0$ if necessary, we may assume
$
D\geq n^{20},
R\geq \max\{n^{50},0.01^{-50}\}.
$
Let
\[
\mathcal R_{L_0,\epsilon}
:=
\Bigl\{(x,z)\in (B_D\cap\mathbb Z_{\ge 0}^n)\times\mathbb R_{\ge 0}:
(1-\epsilon)\|x\|_0\le z\le (1+\epsilon)\|x\|_0\Bigr\}.
\]
Applying case~\textup{(3)} of \cref{cor:yao-minmax-exact-transfer} with $\delta=0.01$ and this choice of $R$ yields a public-coin randomized linear sketching algorithm $(\mathcal A,g)$ of dimension $O(S)$
whose image size is at most
\[
 |\mathcal{A}(\mathbb{Z}^n)|\le \left(2+\frac{S}{\log R}\right)^{O(S)},
\]
and such that for every $x\in B_D\cap\mathbb Z_{\ge 0}^n$,
\[
\Pr\bigl[(x,g(\mathcal A(x)))\in \mathcal R_{L_0,\epsilon}\bigr]
\ge 1-3\cdot 0.01 > \frac23 .
\]
In particular, this guarantee holds for every $x\in [0,D/\sqrt n]^n\cap\mathbb Z_{\ge 0}^n$, since $[0,D/\sqrt n]^n\cap\mathbb Z_{\ge 0}^n\subseteq B_D\cap\mathbb Z_{\ge 0}^n$.

Now, in the proof of \cite[Theorem~7.3]{DuMWY19}, it is shown that 
\[
\textbf{RCC}_{k, 2/3}^{LIN, D/\sqrt n}(T_{\epsilon}) = \Omega(\epsilon^{-2} k \log n \log \log (D/\sqrt n))
\]
for $\epsilon \geq \max\{\sqrt{\frac{\log k}{k}}, n^{-0.49} \}$. Here $\textbf{RCC}_{k, 2/3}^{LIN, D/\sqrt n}(T_{\epsilon})$ denotes the randomized communication complexity of the $L_0$-approximation problem $T_{\epsilon}$ in the linear sketching model under the promise that the maximum frequency of any coordinate at the end of the stream is at most $D/\sqrt n$. The proof there follows a series of black-box reductions and finally reduces to a distributional communication problem. Hence it also applies to the \emph{public-coin} linear sketching model (denoted as $\textbf{RCC}^{pb, LIN}$ below), where the players and referee have access to shared randomness.

Now, the existence of the linear sketching algorithm $(\mathcal A,g)$ above implies that
\[
\textbf{RCC}_{k, 2/3}^{pb, LIN, D/\sqrt n}(T_{\epsilon})
=
O\!\left(k\log |\mathcal A(\mathbb Z^n)|\right)
=
O\!\left(kS\log\!\left(2+\frac{S}{\log R}\right)\right),
\]
since each player can send $\mathcal A x_i$ to the referee, who can then compute $g(\sum_i \mathcal A x_i)$ to solve $T_{\epsilon}$ with probability at least $2/3$. Therefore,
\[
S\log\left(2+\frac{S}{\log R}\right)
=
\Omega\bigl(\epsilon^{-2}\log n\log\log (D/\sqrt n)\bigr).
\]
Since $D\ge n^{20}$, we have $\log(D/\sqrt n)=\Theta(\log D)$ and hence $\log\log(D/\sqrt n)=\Theta(\log\log D)$. This implies the first claimed lower bound, by upper bounding the left-hand side by $S\log S$. For the second claimed lower bound, under the stronger hypothesis on $W$ we also have $\frac{\log R}{\log\log R} = \Theta(\frac{\log W}{\log\log W})=\Omega(\epsilon^{-2}\log n)$. Hence if we had $S\le c_1\epsilon^{-2}\log n\cdot \log\log D $ for a sufficiently small absolute constant $c_1>0$, then $S\leq  c_1 \cdot O(\log R)$, and the left-hand side above would be $O(S)$, contradicting the upper bound assumption on $S$.
Therefore
\[
S=\Omega\bigl(\epsilon^{-2}\log n\cdot \log\log D\bigr)=\Omega\bigl(\epsilon^{-2}\log n\cdot \log\log W\bigr),
\]
as claimed.

\end{proof}

\subsubsection{Approximate maximum matching}

Approximate maximum matching is another relation problem handled by the exact-transfer
route. Conceptually this is the same reduction as in the $L_0$ application: first use
\cref{cor:yao-minmax-exact-transfer} to turn a turnstile algorithm into a bounded-support
public-coin linear sketch, and then invoke an existing public-coin simultaneous lower
bound on the same hard distribution. Here the latter input is the lower bound of
Assadi, Khanna, Li, and Yaroslavtsev~\cite[Theorem~3.1]{assadi2016maximum}, whose hard distribution
is already supported on multigraphs with only polynomially many edges. We remark that in the graph streaming case, we can even require the stream to be \emph{strict turnstile} as in \cite{AHLW16}, see \cref{rmk:strict-robp-extension}.

\begin{theorem}
\label{cor:turnstile-matching}
Let $0<\varepsilon<1/2$, let $m$ be the number of vertices, and assume
$
W\ge m^{C_0}
$ for a sufficiently large absolute constant $C_0$.
Then any randomized turnstile streaming algorithm that, on every unit update strict turnstile
stream of length at most $W$ whose final frequency vector encodes a multigraph on $m$
vertices, outputs an $m^\varepsilon$-approximate maximum matching with success
probability at least $2/3$ must use
$
\Omega\left(m^{2-3\varepsilon-o(1)}
\right).
$
\end{theorem}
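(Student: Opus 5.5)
We follow the same template as the $L_0$ application (\cref{cor:turnstile-l0}), with two changes: strict turnstile streams via \cref{rmk:strict-robp-extension}, and the simultaneous lower bound of \cite[Theorem~3.1]{assadi2016maximum} as the communication input.

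\textbf{Step 1: amplification and parameters.} Suppose there is an $S$-bit algorithm. By running $O(1)$ independent copies we amplify its success probability to $0.99$ with $O(S)$ space. Matchings are not directly amplifiable by majority vote, but the referee can output the largest of the returned edge sets that is a valid matching of the final multigraph. Here we assume validity is checkable, or else that failure means outputting a non-matching or a matching that is too small. If this is problematic, we instead use $\delta=1/3$ directly in case~(3). Set $n:=\binom{m}{2}$, so a frequency vector is a vector of edge multiplicities. Set $R:=W^{1/4}$ and $D:=R^{1/3}/2$. With $C_0$ large we have $R\ge n^{50}$ and $R\ge \delta^{-50}$, and $D$ exceeds the $\ell_2$ norm of every edge-multiplicity vector in the hard distribution of \cite{assadi2016maximum}. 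That distribution has polynomially many edges with polynomially bounded multiplicities, so its vectors lie in $B_D\cap\mathbb Z_{\ge0}^n$.

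\textbf{Step 2: transfer to a linear sketch.} Define the relation $\mathcal R\subseteq (B_D\cap\mathbb Z_{\ge0}^n)\times\mathcal O$, where $\mathcal O$ is the set of edge sets. Let $(x,M)\in\mathcal R$ iff $M$ is a matching in the support of $x$ of size at least $\mathrm{MM}(x)/m^{\varepsilon}$. Apply case~(3) of \cref{cor:yao-minmax-exact-transfer} in its strict-turnstile form from \cref{rmk:strict-robp-extension}. This needs streams of length at most $R^4$, possibly times a polynomial padding factor $D^{10}$, so we may need $R:=W^{c}$ for a smaller constant $c$; $\log R=\Theta(\log W)$ in either case. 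The result is a public-coin randomized linear sketch $(\mathcal A,g)$ of dimension $O(S)$ and image size $(2+S/\log R)^{O(S)}$. It succeeds on every input in $B_D\cap\mathbb Z_{\ge0}^n$ with probability at least $2/3$.

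\textbf{Step 3: SMP protocol and lower bound.} In the simultaneous model of \cite{assadi2016maximum}, the $k$ players (one per vertex, or per edge partition, as there) each hold $x^{(i)}$ and send $\mathcal A x^{(i)}$. The referee sums the messages and applies $g$. Each message costs $O(S\log(2+S/\log R))=O(S\log S)$ bits. The hard distribution of \cite[Theorem~3.1]{assadi2016maximum} forces some player's message to be $m^{2-3\varepsilon-o(1)}$ bits. Hence $S\log S\ge m^{2-3\varepsilon-o(1)}$, and since $\log S=O(\log m)=m^{o(1)}$, we get $S=\Omega(m^{2-3\varepsilon-o(1)})$.

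\textbf{Main obstacle.} The key check is that the lower bound of \cite{assadi2016maximum} holds against public-coin protocols in which every player applies the same linear map and the referee sees only the sum. The theorem is stated for general simultaneous protocols, and a sum of messages is a special case, so this is fine. We must also confirm that its hard inputs are honestly multigraphs with polynomially bounded multiplicities, so that they fit inside $B_D$ and are reachable by strict streams of length at most $W$.
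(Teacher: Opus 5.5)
Your proposal is correct and follows essentially the same route as the paper: amplify to success $0.99$ by taking the largest output among $O(1)$ copies, then apply case~(3) of \cref{cor:yao-minmax-exact-transfer} in its strict-turnstile form with $R=W^{1/4}$ and $D=R^{1/3}/2$, check that the hard multigraphs of \cite[Theorem~3.1]{assadi2016maximum} lie in $B_D$, and read the sketch as an edge-partition simultaneous protocol with $O(S\log S)$-bit messages. Drop your fallback of using $\delta=1/3$ directly in case~(3), because that case then guarantees success only $1-3\delta=0$; the amplification step is what carries the argument, and the paper uses it too, with the same implicit validity assumption you flag.
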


\begin{proof}
By running $O(1)$ independent copies and outputting the largest matching among them,
we may amplify the success probability to $0.99$ while increasing the space by at most
a constant factor. We continue to denote the resulting space bound by $S$.

Let
$
n:=\binom{m}{2},
R:=W^{1/4},
D:=R^{1/3}/2.
$
Choosing $C_0$ large enough ensures
$D\geq n^{20},
R\geq \max\{n^{50},0.01^{-50}\}.
$
Let $\mathcal O_m$ be the set of all matchings on the vertex set $[m]$.
For $x\in \mathbb Z_{\ge0}^n$, let $G_x$ denote the multigraph encoded by the
edge-multiplicity vector $x$, and let $\mathsf{OPT}(G_x)$ be its maximum matching size.
Define
\[
\mathcal R_{\mathrm{MM},\varepsilon}
:=
\Bigl\{(x,M)\in (B_D\cap \mathbb Z_{\ge0}^n)\times \mathcal O_m:
M \text{ is a matching in }G_x,\;
|M|\ge \mathsf{OPT}(G_x)/m^\varepsilon
\Bigr\}.
\]
Following \cref{rmk:strict-robp-extension}, the additional strictness condition can be arranged by adding an extra $W$ to each coordinate at the beginning of the stream, losing only a factor of $2$ in $W$; the Yao minimax argument for $B_D\cap \mathbb Z_{\ge0}^n$ then goes through in the same way. Applying case~\textup{(3)} of \cref{cor:yao-minmax-exact-transfer} with $\delta=0.01$ and this choice of $R$, we obtain a
public-coin randomized linear sketching algorithm $(\mathcal A,g)$ of dimension $O(S)$,
with image size
\[
|\mathcal A(\mathbb Z^n)|
\le
\left(2+\frac{S}{\log R}\right)^{O(S)}
=
\left(2+\frac{S}{\log W}\right)^{O(S)},
\]
such that for every $x\in B_D \cap \mathbb Z_{\ge0}^n$,
\[
\Pr\bigl[(x,g(\mathcal A(x)))\in \mathcal R_{\mathrm{MM},\varepsilon}\bigr]
\ge 1-3\cdot 0.01 > \frac23.
\]

We use the hard distribution from the proof of
Assadi, Khanna, Li, and Yaroslavtsev~\cite[Theorem~3.1]{assadi2016maximum}. In their notation, its
support consists of multigraphs on $m$ vertices with $m^{2-\varepsilon-o(1)}$
total edges, counting multiplicities. Therefore, if $x_H$ is the edge-multiplicity
vector of such a graph $H$, then
\[
\|x_H\|_2\le \|x_H\|_1 = O(m^{2-\varepsilon-o(1)}) \le D
\]
for all sufficiently large $m$, since $D\ge n^{20}\ge m^{20}$.
Thus every hard instance from \cite[Theorem~3.1]{assadi2016maximum} lies inside $B_D \cap  \mathbb Z_{\ge0}^n$.

Now, we turn the sketch $(\mathcal A,g)$ into a simultaneous public-coin protocol in the
$k$-party edge-partition model: player $i$ sends $\mathcal A(x_i)$, where $x_i$ is its
edge-multiplicity vector, and the coordinator outputs
$
g\!\left(\sum_{i=1}^k \mathcal A(x_i)\right)
=
g\!\left(\mathcal A\!\left(\sum_{i=1}^k x_i\right)\right)
$
by linearity. This protocol succeeds on every graph with
probability $>2/3$. The communication sent by each player is
$
O\!\left(
S\log\!\left(2+\frac{S}{\log W}\right)
\right)
$
bits.

Then, \cite[Theorem~3.1]{assadi2016maximum} gives a lower bound of $\Omega(m^{2-3\varepsilon-o(1)})$ bits of communication for any such protocol, hence 
$
S \log S \geq S\log\!\left(2+\frac{S}{\log W}\right)
=
\Omega\!\left(m^{2-3\varepsilon-o(1)}\right),
$
which gives the bound $S = \Omega(m^{2-3\varepsilon-o(1)})$.
\end{proof}

\subsubsection{Estimating maximum matching size}

We again use the exact-transfer route through
\cref{cor:yao-minmax-exact-transfer}. The communication lower bounds of
Assadi, Khanna, and Li~\cite[Section~2.3 and Theorems~8,~10]{AssadiKL17}
are especially convenient here: they are already stated in public-coin models,
and their hard instances have polynomial stream length. Thus the only loss
comes from converting streaming space into sketch-value bit complexity.

\begin{theorem}
\label{cor:turnstile-matching-size}
Let $m$ denote the number of vertices, and assume
$
W\ge m^{C_0}
$ for a sufficiently large absolute constant $C_0$.
Then any randomized turnstile streaming algorithm that, on every unit update strict turnstile stream of length at most $W$ whose final frequency vector encodes a multigraph on $m$ vertices, solves the following problems with success probability at least $2/3$ must satisfy the stated space lower bounds on $S$:
\begin{enumerate}
\item[(i)] Output an $\alpha$-approximation to the maximum matching size ($\alpha\geq C_0$), then
$
S=\Omega\!\left(\frac{m}{\alpha^2\log m}\right).
$

\item[(ii)] Output a $(1+\epsilon)$-approximation to the maximum matching size, then
$
S=\Omega\!\left(\frac{m^{2-O(\epsilon)}}{\log m}\right).
$
\end{enumerate}
\end{theorem}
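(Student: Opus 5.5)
We follow the same template as \cref{cor:turnstile-l0} and \cref{cor:turnstile-matching}. First amplify the success probability to $0.99$ by running $O(1)$ independent copies and taking the median estimate; this costs a constant factor in space. Set $n:=\binom{m}{2}$, $R:=W^{1/4}$ and $D:=R^{1/3}/2$. Choosing $C_0$ large makes $D\ge n^{20}$ and $R\ge\max\{n^{50},0.01^{-50}\}$.

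Next, encode the task as a relation $\mathcal R\subseteq (B_D\cap\mathbb Z_{\ge0}^n)\times\mathbb R_{\ge0}$. The valid outputs on $x$ are the $z$ with $\mathsf{OPT}(G_x)/\alpha\le z\le \mathsf{OPT}(G_x)$ (respectively, within a factor $1+\epsilon$). Strictness is handled as in \cref{rmk:strict-robp-extension}: pad every coordinate at the start of the stream, which costs only a constant factor in $W$. Case~\textup{(3)} of \cref{cor:yao-minmax-exact-transfer} with $\delta=0.01$ then gives a public-coin randomized linear sketch $(\mathcal A,g)$. It has image size $(2+S/\log R)^{O(S)}$ and is correct with probability above $2/3$ on every $x\in B_D\cap\mathbb Z_{\ge0}^n$.

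Then check that the hard instances of Assadi, Khanna, and Li~\cite[Theorems~8,~10]{AssadiKL17} lie in $B_D$. They are (multi)graphs on $m$ vertices with at most $\mathrm{poly}(m)$ total edge multiplicity, so $\|x\|_2\le\|x\|_1\le m^{O(1)}\le D$. This is the one step that needs care: one must confirm that their edge-partition model, the number of players $k$, and the multiplicities are polynomially bounded, and that the bound is stated per player or total in a form we can compare against. If the bound counts total communication over $k$ players, divide by $k$ accordingly.

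Finally, convert the sketch into an SMP protocol. Player $i$ sends $\mathcal A(x_i)$, which takes $O(S\log(2+S/\log W))=O(S\log S)$ bits, and the referee outputs $g(\sum_i\mathcal A(x_i))$. The public-coin lower bounds of~\cite{AssadiKL17} are $\Omega(m/\alpha^2)$ for (i) and $m^{2-O(\epsilon)}$ for (ii). These give $S\log S=\Omega(m/\alpha^2)$ and $S\log S=\Omega(m^{2-O(\epsilon)})$. Since $S\le n\log W$ can be assumed and $\log S=O(\log m)$ in the relevant range ($\log W$ is absorbed because we may take $W=\mathrm{poly}(m)$), we conclude $S=\Omega(m/(\alpha^2\log m))$ and $S=\Omega(m^{2-O(\epsilon)}/\log m)$. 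The only genuine obstacle is this last $\log S$ versus $\log m$ accounting, together with matching the communication model of~\cite{AssadiKL17}; everything else is routine.
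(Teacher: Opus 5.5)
Your proposal is correct and follows the paper's proof step for step. The paper likewise amplifies to success $0.99$, takes $n=\binom{m}{2}$, $R=W^{1/4}$, $D=R^{1/3}/2$, handles strictness by padding, applies case~(3) of \cref{cor:yao-minmax-exact-transfer} to the matching-size relation on $B_D\cap\mathbb Z_{\ge0}^n$, reads the sketch as a simultaneous protocol with $b=O(S\log(2+S/\log R))$ bits per player, and compares with \cite[Theorems~8,~10]{AssadiKL17}.

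The checks you left open are exactly what the paper supplies. Theorem~8 gives total communication $kb=\Omega(mk/\alpha^2)$ on instances with $O(m^2/\alpha)$ edges. Theorem~10 has $k=m^{o(1)}$ players and $O(m^2)$ edges, so the factor $k$ is absorbed into the exponent. The only other difference is that the paper uses the two-sided relation $\mathsf{OPT}(G_x)/\beta\le z\le \beta\,\mathsf{OPT}(G_x)$ where you use a one-sided one; that is fine provided it matches the approximation convention assumed for the streaming algorithm.
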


\begin{proof}
Amplify success to $0.99$ at constant-factor cost and keep the notation $S$ for
the new space bound. Let
$
n:=\binom{m}{2},
R:=W^{1/4},
D:=R^{1/3}/2.
$
Choosing $C_0$ large enough ensures $D\geq n^{20}$ and $R\ge \max\{n^{50},0.01^{-50}\}$; since also $2D=R^{1/3}$, the hypotheses of case~\textup{(3)} of \cref{cor:yao-minmax-exact-transfer} are satisfied.

For a nonnegative edge-multiplicity vector $x\in\mathbb Z_{\ge0}^n$, let $G_x$
be the corresponding graph or multigraph and let $\mathsf{OPT}(G_x)$ be its maximum
matching size. For $\beta\in\{\alpha,1+\epsilon\}$ define
\[
\mathcal R_{\beta}
:=
\Bigl\{(x,z)\in (B_D\cap\mathbb Z_{\ge0}^n)\times\mathbb R_{\ge0}:
\mathsf{OPT}(G_x)/\beta\le z\le \beta\mathsf{OPT}(G_x)\Bigr\},
\]
where $B_D:=\{x\in\mathbb R^n:\|x\|_2\le D\}$ as in \cref{sec:preliminaries}. As in the previous subsection,
the strict turnstile promise can be handled by padding every coordinate by $W$
at the beginning of the stream, losing only a constant factor in $W$; hence
Yao's minmax argument applies to $B_D\cap\mathbb Z_{\ge0}^n$ as well.

Applying case~\textup{(3)} of \cref{cor:yao-minmax-exact-transfer} with $\delta=0.01$ and this choice of $R$, we obtain a
public-coin linear sketch $(\mathcal A,g)$ of dimension $O(S)$ that solves the
relevant relation on $B_D\cap\mathbb Z_{\ge0}^n$ with success probability at
least $2/3$ and whose image size satisfies
\[
|\mathcal A(\mathbb Z^n)|
\le
\left(2+\frac{S}{\log R}\right)^{O(S)}
=
\left(2+\frac{S}{\log W}\right)^{O(S)}.
\]
Hence one sketch value can be encoded using
$
b = O\!\left(S\log\!\left(2+\frac{S}{\log R}\right)\right) = O(S\log S)
$
bits.

We now compare this sketch to the communication lower bounds from
\cite{AssadiKL17}.

For \textup{(i)}, by linearity the sketch induces the public-coin simultaneous
protocol considered in \cite[Theorem~8]{AssadiKL17} for
$\alpha$-approximate matching-size estimation on dense graphs. Writing $k$ for
the number of players in that hard instance, the total communication is $kb$.
In our notation, the number of vertices is $m$, and hence \cite[Theorem~8]{AssadiKL17} gives
$
kb=\Omega\!\left(\frac{mk}{\alpha^2}\right).
$
Hence
$
b=\Omega\!\left(\frac{m}{\alpha^2}\right).
$
Moreover, the hard instances in \cite[Theorem~8]{AssadiKL17} are $m$-vertex multigraphs with
$O(m^2/\alpha)$ edges, counting multiplicity, so
$\|x\|_2\le \|x\|_1=O(m^2/\alpha)\le D$.

For \textup{(ii)}, the same sketch yields the public-coin simultaneous protocol
from \cite[Theorem~10]{AssadiKL17} for $(1+\epsilon)$-approximate matching-size
estimation. There the number of players is $k=m^{o(1)}$, and the lower bound is
$
kb=\Omega\bigl(m^{2-O(\epsilon)}\bigr).
$
Absorbing the factor $k=m^{o(1)}$ into the exponent gives
$
b=\Omega\bigl(m^{2-O(\epsilon)}\bigr).
$
The hard instances in \cite[Theorem~10]{AssadiKL17} are $m$-vertex multigraphs with $O(m^2)$ edges, counting multiplicity, so again $\|x\|_2\le \|x\|_1=O(m^2)\le D$.

Substituting these lower bounds on $b$ into the expression for $b$ in terms of $S$ and $W$ gives the claimed lower bounds on $S$.
\end{proof}

\subsubsection{Subgraph counting}

As before, we first pass from streaming to a public-coin linear sketch via
\cref{cor:yao-minmax-exact-transfer}, and then invoke the composable-state
lower bound of Kallaugher, Kapralov, and
Price~\cite[Theorem~19]{KallaugherKP18}. For a hypergraph
$J=(V,E)$ and $\lambda\ge 0$, write
\[
MVC_\lambda(J)
:=
\min\left\{
\sum_{v\in V} f(v)+\lambda\sum_{e\in E} f(e):
f:V\cup E\to[0,\infty),\ \sum_{v\in e} f(v)+f(e)\ge 1\ \forall e\in E
\right\}.
\]
When $\lambda\ge 1$ and $J$ has no empty hyperedges, this is the usual
fractional vertex cover number.

\begin{theorem}
\label{cor:turnstile-subgraph-counting}
Fix a connected hypergraph $H$ with $|E(H)|>1$, and set
$\mu_2:=MVC_{1/2}(H)$ and $\mu_1:=\max_{e\in E(H)} MVC_1(H\setminus e)$.
Then there exists $C_H>0$ only depending on $H$, such that whenever $m,T\in\mathbb N$,
$\epsilon\in(1/\sqrt{T},1]$, $m\ge C_H T$, and $W\ge m^{C_H}$, any
randomized turnstile streaming algorithm that, on every unit update strict turnstile stream of
length at most $W$ whose final frequency vector encodes a hypergraph $G$ with at most $m$ edges, distinguishes between $G$ has at least $T$ copies of $H$ and $G$ has at most $(1-\epsilon)T$ copies of $H$, with success probability at least $2/3$ must satisfy
\[
S=\Omega_H\!\left(
\frac{1}{\log m}\cdot
\max\left\{
\frac{m}{(\epsilon T)^{1/\mu_2}},
\frac{m}{(\epsilon^2 T)^{1/\mu_1}}
\right\}
\right).
\]
\end{theorem}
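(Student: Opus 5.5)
Following the SMP template used for $L_0$ and the matching problems, I would first amplify the success probability to $0.99$ at a constant-factor cost in space, and keep writing $S$ for the new bound. The hypergraph $G$ is encoded by its hyperedge-multiplicity vector in $\mathbb Z_{\ge0}^n$, where $n=O_H(\binom{N}{r})$ with $N$ the number of vertices and $r$ the maximum edge size. Since $G$ has at most $m$ edges, we may take $N\le r m$, so $n=\poly(m)$. Set
\[
R:=W^{1/4},\qquad D:=R^{1/3}/2 .
\]
Choosing $C_H$ large ensures $D\ge n^{20}$ and $R\ge\max\{n^{50},0.01^{-50}\}$. Any hard instance has $\|x\|_2\le\|x\|_1\le m\le D$, so the hard distributions of \cite[Theorem~19]{KallaugherKP18} lie inside $B_D\cap\mathbb Z^n_{\ge0}$.

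The strictness requirement is handled by the padding of \cref{rmk:strict-robp-extension}, and the problem is cast as a promise problem $f:B_D\cap\mathbb Z^n_{\ge0}\to\{0,1,*\}$. Here $f(x)=1$ if $G_x$ has at least $T$ copies of $H$, $f(x)=0$ if it has at most $(1-\epsilon)T$ copies, and $f(x)=*$ otherwise, including when $G_x$ has more than $m$ edges. Applying case (2) of \cref{cor:yao-minmax-exact-transfer} with $\delta=0.01$ gives a public-coin randomized linear sketch $(\mathcal A,g)$. It succeeds on every $x\in B_D\cap\mathbb Z^n_{\ge0}$ with probability at least $0.92>2/3$, and its image size is $(2+S/\log R)^{O(S)}$. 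A sketch value therefore fits in
\[
b=O\!\left(S\log\!\left(2+\frac{S}{\log R}\right)\right)
\]
bits.

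By linearity, $(\mathcal A,g)$ becomes a public-coin simultaneous, or composable-state, protocol. Each player sends $\mathcal A x_i$, and the referee applies $g$ to the sum. The message size is $b$, which is exactly the "sketch value" notion bounded by \cite[Theorem~19]{KallaugherKP18}. That theorem, in its linear-sketch/SMP form, lower bounds the message size by
\[
\Omega_H\!\left(\max\left\{\frac{m}{(\epsilon T)^{1/\mu_2}},\ \frac{m}{(\epsilon^2T)^{1/\mu_1}}\right\}\right)
\]
under $m\ge C_HT$ and $\epsilon>1/\sqrt T$.

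To conclude, note that $S\le n\log R$ may be assumed, as in the proof of \cref{thm:randomized-exact-transfer}. Then $S/\log R\le n=\poly(m)$, so $b=O(S\log m)$, which gives the claimed bound with the $1/\log m$ factor. The main obstacle is checking that the KKP lower bound really applies to our object. It must hold for public-coin protocols and for messages taking values in an arbitrary abelian group of the stated size rather than in integer sketches. It must also allow multigraph and hyperedge encodings with $m$ counting multiplicities. I would verify that their argument is distributional, so that fixing the public coins is harmless, and that it bounds only the number of distinct message values. If their bound is instead stated per player with $k$ players, the totals are divided by $k$ as in the matching-size proof.
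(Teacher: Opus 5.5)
Your proposal is correct and follows the paper's proof step for step. The ingredients are the same: case (2) of \cref{cor:yao-minmax-exact-transfer} with $\delta=0.01$, $R=W^{1/4}$, $D=R^{1/3}/2$, the strict-turnstile padding of \cref{rmk:strict-robp-extension}, reading the linear sketch as a composable-state algorithm, and applying \cite[Theorem~19]{KallaugherKP18} to the sketch-value length $b=O(S\log(2+S/\log R))$. The only difference is cosmetic and comes in converting $b$ into $S\log m$. The paper splits on whether $S$ already exceeds the target bound $L_H=m^{O_H(1)}$, while you use the trivial-sketch reduction to assume $S\le n\log R$; both give $b=O_H(S\log m)$.
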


\begin{proof}
Amplify success to $0.99$ at constant-factor cost and keep the notation $S$
for the new space bound. Let
\[
N:=\lceil C_Hm\rceil,
\qquad
q:=\max_{e\in E(H)}|e|,
\qquad
n:=\sum_{r=1}^q \binom{N}{r},
\qquad
R:=W^{1/4},
\qquad
D:=R^{1/3}/2.
\]
Since $H$ is fixed and $N=\Theta_H(m)$, we have $n=m^{O_H(1)}$; choosing $C_H$
large enough ensures
$
D\ge n^{20},  R\ge \max\{n^{50},0.01^{-50}\}.
$ 
For $x\in\{0,1\}^n$, let $G_x$ denote the encoded hypergraph (with $N$ vertices), and let
$f_{H,T,\epsilon}(x)\in\{0,1,*\}$ be the promise problem deciding whether
$\#H(G_x)\ge T$ or $\#H(G_x)\le (1-\epsilon)T$ under the promise
$\|x\|_1\le m$. 

Applying case~\textup{(2)} of \cref{cor:yao-minmax-exact-transfer} with $\delta=0.01$, with this choice of $R$, and with the same strict-turnstile reduction as above, we obtain a
public-coin randomized linear sketching algorithm $(\mathcal A,g)$ on
$B_D\cap\mathbb Z^n$, where $B_D:=\{x\in\mathbb R^n:\|x\|_2\le D\}$ as in \cref{sec:preliminaries}, of dimension $O(S)$ and image size
$
|\mathcal A(\mathbb Z^n)|
\le
\left(2+\frac{S}{\log R}\right)^{O(S)}  = S^{O(S)},
$
which succeeds on every promised input in $B_D$ with probability at least
$1-8\cdot 0.01 > 2/3$. Hence one sketch value can be encoded using
$
b
= O\!\left(S\log\!\left(2+\frac{S}{\log R}\right)\right)
$ 
bits.

We regard $(\mathcal A,g)$ as a randomized streaming algorithm whose random seed
is the public randomness. For every fixed seed, the map $\mathcal A$ is linear,
and therefore composable: if $s_1=\mathcal A(x_1)$ and $s_2=\mathcal A(x_2)$,
then the state on the concatenation is obtained by the function
$c(s_1,s_2)=s_1+s_2=\mathcal A(x_1+x_2)$. Thus $(\mathcal A,g)$ has composable
state in the sense of \cite[Definition~18]{KallaugherKP18}.

Moreover, since the lower bound given in \cite[Theorem~19]{KallaugherKP18}, and the distributions therein \cite[Section~5.2, 6.2]{KallaugherKP18}, are supported on hypergraphs with at most $m$ edges, every promised input $x$ satisfies $\|x\|_1\le m$, hence 
$
\|x\|_2\le \|x\|_1\le m\le D.
$
Therefore the sketch solves the same promise problem on all inputs with at most $m$ edges, covered by \cite[Theorem~19]{KallaugherKP18}. It follows that
\[
b
=
\Omega_H\!\left(
\max\left\{ L_H:= 
\frac{m}{(\epsilon T)^{1/\mu_2}},
\frac{m}{(\epsilon^2 T)^{1/\mu_1}}
\right\}
\right).
\]

Since $\epsilon>1/\sqrt T$ and $m\ge C_H T$, we have $L_H=m^{O_H(1)}$. If
$S\ge L_H$ there is nothing to prove. Otherwise $S\le L_H$, and then
$
b=O\!\left(S\log\!\left(2+\frac{S}{\log R}\right)\right)
\le O\bigl(S\log(2+S)\bigr)
=O_H(S\log m).
$ Hence
$
S=\Omega_H\!\left(
\frac{L_H}{\log m}
\right),
$ as desired.
\end{proof}

\newpage
\appendix
\section{Missing proofs in \cref{sec:coarse-large-spectrum-general}}

This section records the missing proofs of \cref{lem:dg-fourier-decay-general,fact:gamma-convolution-vs-gamma-sqrt2R-general,lem:coarse-rudin-general} from \cref{sec:coarse-large-spectrum-general}. 

\subsection{Fourier decay of the discrete Gaussian}
\label{app:dg-fourier-decay-general}
In this subsection we prove \cref{lem:dg-fourier-decay-general}, which states that the Fourier transform of $\gamma_R$ decays at least as fast as a Gaussian with variance on the order of $R^2$.
\begin{proof}[Proof of \cref{lem:dg-fourier-decay-general}]
Since $\gamma_R$ is a product measure on $\mathbb Z^n$, its Fourier transform factorizes across coordinates:
\[
\widehat{\gamma_R}(\zeta)=\prod_{i=1}^n \widehat{\gamma_R}^{(1)}(\zeta_i),
\]
where $\widehat{\gamma_R}^{(1)}$ denotes the one-dimensional Fourier transform. It therefore suffices to prove that for every $z\in[-1/2,1/2]$,
\[
\widehat{\gamma_R}^{(1)}(z)\le \exp\left(-R^2z^2 / 5\right).
\]
We suppress the superscript and write simply $\widehat{\gamma_R}(z)$ in one dimension.

By definition and pairing $x$ with $-x$, we have
\[
\widehat{\gamma_R}(z) = \frac{1}{\rho_R(\mathbb Z)} \sum_{x \in \mathbb Z} e^{-\pi x^2/R^2} \cos(2\pi x z).
\]
In particular $\widehat{\gamma_R}(z)$ is real and even. Using $1-\cos(\theta)=2\sin^2(\theta/2)$, we obtain
\[
1 - \widehat{\gamma_R}(z) = \frac{2}{\rho_R(\mathbb Z)} \sum_{x \in \mathbb Z} e^{-\pi x^2/R^2} \sin^2(\pi x z).
\]
We split the argument into two regimes.

\paragraph{Case 1: $|z| \le \frac{1}{4R}$.}
For $|x|\le R/\sqrt{2}$ we have $|\pi xz|\le \frac{\pi}{4\sqrt{2}}$. Since the function $u \mapsto \sin(u)/u$ is decreasing on $(0, \pi/2]$, its minimum on the interval $[0, \frac{\pi}{4\sqrt{2}}]$ is attained at the right endpoint. Thus,
\[
\sin^2(\pi xz)\ge 8.8 x^2z^2.
\]
Therefore
\[
1 - \widehat{\gamma_R}(z) \ge  \frac{17.6 z^2}{\rho_R(\mathbb Z)} \sum_{|x| \le R/\sqrt{2}} x^2 e^{-\pi x^2/R^2}.
\]
Also, by standard Gaussian bounds,
\[
\rho_R(\mathbb Z)=1+2\sum_{m=1}^{\infty}e^{-\pi m^2/R^2}
\le 1+2\int_0^{\infty} e^{-\pi t^2/R^2}\,dt
=1+R\le \frac{3R}{2}.
\]
Writing $M:=\lfloor R/\sqrt{2}\rfloor \geq 1$. For $|x| \le R/\sqrt{2}$, the Gaussian weight is bounded below by $e^{-\pi/2}$. Therefore,
\[
\sum_{|x| \le R/\sqrt{2}} x^2 e^{-\pi x^2/R^2}
\ge 2e^{-\pi/2}\sum_{m=1}^{M} m^2
\ge \frac{e^{-\pi/2}}{12}R^3.
\]
Combining these estimates gives (where the last step uses $35.2 e^{-\pi/2}/36 \approx 0.203 \ge 1/5$)
\[
1 - \widehat{\gamma_R}(z)
\ge \frac{17.6z^2}{3R/2}\cdot \frac{e^{-\pi/2}}{12}R^3
=\frac{35.2 e^{-\pi/2}}{36}R^2z^2
\ge \frac{1}{5}R^2z^2,
\]
Hence,
\[
\widehat{\gamma_R}(z)
\le 1-\frac{1}{5}R^2z^2
\le \exp\left(-\frac{1}{5}R^2z^2\right).
\]

\paragraph{Case 2: $\frac{1}{4R} < |z| \le \frac{1}{2}$.}
Let $t:=R^2z^2$, so that $t\ge 1/16$. By Poisson summation (see \cref{lem:poisson-gaussian-general}), for the function $e^{-\pi x^2 / R^2} \cdot  e^{-2\pi i x z}$ on $\mathbb Z$, its Fourier transform is $R e^{-\pi R^2 (\xi+z)^2}$ (evaluated at $\xi = k$), and hence
\[
\widehat{\gamma_R}(z) = \frac{\sum_{x \in \mathbb{Z}} e^{-\pi x^2/R^2} e^{-2\pi i x z}}{\sum_{x \in \mathbb{Z}} e^{-\pi x^2/R^2}} = \frac{\sum_{k \in \mathbb{Z}} R\cdot e^{-\pi R^2 (k+z)^2}}{\sum_{k \in \mathbb{Z}} R\cdot e^{-\pi R^2 k^2}} = \frac{\sum_{k \in \mathbb{Z}} e^{-\pi R^2 (k+z)^2}}{\sum_{k \in \mathbb{Z}} e^{-\pi R^2 k^2}}.
\]
We want to show this is bounded by $\exp(-t/5)$.
If $1/16 \le t \le 1$, then using $R\ge 2$ we get $3e^{-\pi R^2/4} \le 3e^{-\pi}$. It suffices to show $e^{-\pi t} + 3e^{-\pi} \le e^{-t/5}$.
Consider the ratio $g(t) = e^{(1/5 - \pi)t} + 3e^{-\pi} e^{t/5}$. Its derivative is $e^{t/5} [ (1/5 - \pi)e^{-\pi t} + 0.6e^{-\pi} ]$, which is strictly negative for $t \ge 1/16$ (since $e^{-\pi t} \ge e^{-\pi}$ and $\pi - 1/5 > 0.6$). Thus $g(t)$ is strictly decreasing on $[1/16, 1]$ and maximized at $t = 1/16$.
At $t=1/16$, we evaluate $e^{-\pi/16} + 3e^{-\pi} \approx 0.950$, which is strictly less than $e^{-1/80} \approx 0.987$. The bound therefore holds on this interval.

If instead $t > 1$, since $t\le R^2/4$, we have $e^{-\pi R^2/4}\le e^{-\pi t}$ and hence
\[
\widehat{\gamma_R}(z)
\le 4e^{-\pi t}
\le e^{-t/5},
\]
because $4\le e^{\pi-1/5} \le e^{(\pi-1/5)t}$ for every $t\ge 1$.
\end{proof}

\subsection{Coarse Rudin inequality}
\label{app:coarse-rudin-general}
In this subsection we prove \cref{lem:coarse-rudin-general}, which is a coarse version of Rudin's inequality for dissociated sets. The proof is a straightforward modification of the standard proof of Rudin's inequality, which can be found in \cite[Lemma 4.33]{tao2006additive}.
\begin{proof}[Proof of \cref{lem:coarse-rudin-general}]
Write
\[
c(\xi)=|c(\xi)|e(\theta_\xi)
\qquad \text{for some }\theta_\xi\in \mathbb R/\mathbb Z.
\]
We start from the elementary inequality
\[
e^{tx}\le \cosh(x)+t\sinh(x)
\qquad \text{for all }x\ge 0\text{ and }-1\le t\le 1.
\]
Applying this with $t=\Re [\, e(\langle \xi,x\rangle+\theta_\xi)]$ and $x=\sigma|c(\xi)|$, we obtain
\begin{align*}
\exp\big(\sigma\Re(c(\xi)e(\langle \xi,x\rangle))\big)
&\le \cosh(\sigma|c(\xi)|)
+\frac12\sinh(\sigma|c(\xi)|)e(\langle \xi,x\rangle+\theta_\xi) \\
&\qquad +\frac12\sinh(\sigma|c(\xi)|)e(-\langle \xi,x\rangle-\theta_\xi).
\end{align*}
For each $\xi\in T$ let
\[
A_\xi:=\cosh(\sigma|c(\xi)|),
\qquad
B_\xi^+:=\frac12\sinh(\sigma|c(\xi)|)e(\theta_\xi),
\qquad
B_\xi^-:=\frac12\sinh(\sigma|c(\xi)|)e(-\theta_\xi).
\]
Then
\[
e^{\sigma\Re(c(\xi)e(\langle \xi,x\rangle))}
\le A_\xi + B_\xi^+ e(\langle \xi,x\rangle)+B_\xi^- e(-\langle \xi,x\rangle).
\]
Multiplying over $\xi\in T$ gives
\[
e^{\sigma F(x)}
\le \prod_{\xi\in T}\Big(A_\xi + B_\xi^+ e(\langle \xi,x\rangle)+B_\xi^- e(-\langle \xi,x\rangle)\Big).
\]
Expanding the product, for each $\xi$ one chooses exactly one of the three summands. Hence every term in the expansion has the form
\[
\Big(\prod_{\xi\in T} a_\xi(\varepsilon_\xi)\Big)
e\!\left(\Big\langle \sum_{\xi\in T}\varepsilon_\xi\xi, x\Big\rangle\right),
\qquad \varepsilon_\xi\in\{-1,0,1\},
\]
where
\[
a_\xi(0)=A_\xi,
\qquad
a_\xi(1)=B_\xi^+,
\qquad
a_\xi(-1)=B_\xi^-.
\]
Taking expectation with respect to $\nu$, each term is therefore a coefficient times a Fourier character with frequency
\[
\sum_{\xi\in T}\varepsilon_\xi\xi,
\qquad \varepsilon_\xi\in\{-1,0,1\}.
\]
There is exactly one constant term, coming from taking the $\cosh$ contribution for every $\xi$, and it contributes
\[
\prod_{\xi\in T}\cosh\big(\sigma|c(\xi)|\big)\leq \exp\!\left(\frac{\sigma^2}{2}\sum_{\xi\in T}|c(\xi)|^2\right).
\]
Every other term corresponds to a nonzero vector $(\varepsilon_\xi)_{\xi\in T}$. Since $T$ is $\kappa$-dissociated, the associated frequency lies at distance at least $\kappa$ from $0$, so its expectation under $\nu$ has absolute value at most $\eta$ by assumption.

Therefore the total contribution of all nonconstant terms is bounded by $\eta$ times the sum of the absolute values of their coefficients. For a fixed choice of $(\varepsilon_\xi)_{\xi\in T}$, the absolute value of the corresponding coefficient equals
\[
\prod_{\xi\in T}|a_\xi(\varepsilon_\xi)|.
\]
Summing over all $3^{|T|}$ choices of $(\varepsilon_\xi)$ factorizes, and therefore the sum of the absolute values of all coefficients in the full expansion is exactly
\[
\prod_{\xi\in T}\Big(|A_\xi|+|B_\xi^+|+|B_\xi^-|\Big)
=\prod_{\xi\in T}\left(\cosh(\sigma|c(\xi)|)+\sinh(\sigma|c(\xi)|)\right)
=\prod_{\xi\in T} e^{\sigma|c(\xi)|}
= e^{\sigma\sum_{\xi\in T}|c(\xi)|}.
\]
\end{proof}

\subsection{Convolution of discrete Gaussians}
\label{app:gamma-convolution-vs-gamma-sqrt2R-general}
In this subsection we prove \cref{fact:gamma-convolution-vs-gamma-sqrt2R-general}, which states that $\Gamma_R$ is pointwise at most a constant times $\gamma_{\sqrt2 R}$, where $\Gamma_R$ is the convolution of $\gamma_R$ with itself. This fact is used in the proof of \cref{cor:coarse-dissociated-convolution-general,cor:large-spectrum-smallnorm-convolution-general}
\begin{proof}[Proof of \cref{fact:gamma-convolution-vs-gamma-sqrt2R-general}]
Completing the square gives
\[
\Gamma_R(x)=\frac{e^{-\pi\|x\|_2^2/(2R^2)}}{\rho_R(\mathbb Z^n)^2}\sum_{y\in\mathbb Z^n} e^{-2\pi\|y-x/2\|_2^2/R^2}.
\]
By \cref{fact:gaussian-sum-maximizer-general}, the latter sum is at most $\rho_{R/\sqrt2}(\mathbb Z^n)$, hence
\[
\Gamma_R(x)\le \frac{\rho_{R/\sqrt2}(\mathbb{Z}^n)}{\rho_R(\mathbb{Z}^n)^2} e^{-\pi\|x\|_2^2/(2R^2)}
\le \frac{\rho_{R/\sqrt2}(\mathbb{Z}^n)\rho_{\sqrt2 R}(\mathbb{Z}^n)}{\rho_R(\mathbb{Z}^n)^2}\gamma_{\sqrt2 R}(x).
\]
Apply \cref{cor:psf-approx-smoothing-general} with $M=(2/R^2)I_n$, $M=(1/R^2)I_n$, and $M=(1/(2R^2))I_n$, and with $\varepsilon=1/3$. Since
\[
\eta_{1/3}(\mathbb Z^n)^2\le \frac{\ln(8n)}{\pi}\le \frac{R^2}{2},
\]
all three applications are valid and yield
\[
\rho_{R/\sqrt2}(\mathbb Z^n)\le \frac{4}{3}\left(\frac{R}{\sqrt2}\right)^n,
\qquad
\rho_{\sqrt2 R}(\mathbb Z^n)\le \frac{4}{3}(\sqrt2 R)^n,
\qquad
\rho_R(\mathbb Z^n)\ge \frac{2}{3}R^n.
\]
Therefore
\[
\frac{\rho_{R/\sqrt2}(\mathbb{Z}^n)\rho_{\sqrt2 R}(\mathbb{Z}^n)}{\rho_R(\mathbb{Z}^n)^2}
\le \left(\frac{4/3}{2/3}\right)^2
=4.
\]
Hence
\[
\Gamma_R(x)\le 4\gamma_{\sqrt2 R}(x).
\]
This proves the claim.
\end{proof}

\section{Detailed proof of the lifting \cref{thm:linear-sketches}}
\label{app:proof-linear-sketches}

In this section we give a slightly more explicit version of the lifting proof from \cite[Section 3]{gribelyuk2025lifting}, rewritten in the notation of \cref{thm:linear-sketches}. The argument follows the same structure as the original proof, but we spell out a few steps, allow the covariance to depend on the label, and remove the restriction $\delta \ge 1/\poly(n)$.

\subsection{Useful tools in lattice theory}

The first ingredient is the projected discrete-Gaussian / lattice-Gaussian comparison; compare with the pointwise bound extracted from the proof of \cite[Lemma 3.3]{AggarwalR16}.

\begin{lemma}[{\cite[Lemma 3.3]{AggarwalR16}}]
\label{lem:projected-discrete-vs-lattice-gaussian}
Let $\bA\in\mathbb R^{m\times n}$ have full row rank, let $\bS\in\mathbb R^{n\times n}$ be full rank, and let $X\sim \DGZ{\bS}$. Let $Y_{\mathrm{lat}}\sim \DG_{\bA\mathbb Z^n,\bS\bA^\top}$. For each $z\in \bA\mathbb Z^n$, let $\mu(z)$ and $\mu'(z)$ denote the probability masses of $\bA X$ and $Y_{\mathrm{lat}}$ at $z$, respectively. If
\[
\sigma_n(\bS)>
\lambda_{n-m}(\calL^\perp(\bA))
\sqrt{\frac{\log\bigl(2n(1+1/\varepsilon)\bigr)}{\pi}}
\qquad\text{for some }\varepsilon\in(0,1/2],
\]
then
\[
\frac{\mu(z)}{\mu'(z)}\in\left[\frac{1-\varepsilon}{1+\varepsilon},\frac{1+\varepsilon}{1-\varepsilon}\right]
\qquad\text{for every }z\in \bA\mathbb Z^n.
\]
\end{lemma}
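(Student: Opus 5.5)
We plan to prove the comparison by computing both masses exactly as Gaussian sums over cosets and then applying Poisson summation together with the smoothing-parameter bound (\cref{lem:smoothing-parameter-bound-general}).

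First, fix $z\in\bA\mathbb Z^n$ and pick any $x_0\in\mathbb Z^n$ with $\bA x_0=z$. The fiber $\{x\in\mathbb Z^n:\bA x=z\}$ is then the coset $x_0+\calL^\perp(\bA)$, where $\calL^\perp(\bA)=\ker\bA\cap\mathbb Z^n$ is a lattice of rank $n-m$. Hence
\[
\mu(z)=\frac{\rho_{\bS}(x_0+\calL^\perp(\bA))}{\rho_{\bS}(\mathbb Z^n)},
\]
where $\rho_{\bS}$ denotes the Gaussian function with covariance shape $\bS^\top\bS$. By contrast, $\mu'(z)$ is proportional to the Gaussian weight of $z$ itself under the pushed-forward covariance (the shape matrix $\bS\bA^\top$, meaning covariance $\bA\bSigma\bA^\top$), normalized over $\bA\mathbb Z^n$.

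Second, we decompose the quadratic form. Write $\bSigma=\bS^\top\bS$ and split $x=x_\parallel+x_\perp$ using the $\bSigma$-adapted (oblique) projection onto $\ker\bA$. For any $x$ with $\bA x=z$, completing the square gives
\[
\rho_{\bS}(x)=\rho_{\bA\bSigma\bA^\top}(z)\cdot\rho_{\bSigma'}(x-c_z),
\]
where $c_z$ is the minimizer of the quadratic form on the affine fiber and $\bSigma'$ is the restriction of the form to $\ker\bA$. Summing over the fiber, this yields
\[
\rho_{\bS}(x_0+\calL^\perp)=\rho_{\bA\bSigma\bA^\top}(z)\cdot\rho_{\bSigma'}(x_0-c_z+\calL^\perp).
\]
The first factor is exactly the unnormalized lattice-Gaussian mass of $z$. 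It therefore suffices to show that the second factor, a shifted Gaussian sum over the $(n-m)$-dimensional lattice $\calL^\perp$, lies in $C\cdot[1-\varepsilon,1+\varepsilon]$ with $C$ independent of the shift.

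Third, for this shifted sum we apply Poisson summation (\cref{lem:poisson-gaussian-general}) within the subspace $\ker\bA$. This gives
\[
\rho_{\bSigma'}(\calL^\perp+c)=\det(\bSigma')^{1/2}\det((\calL^\perp)^*)\sum_{w\in(\calL^\perp)^*}\rho_{\bSigma'^{-1}}(w)\,e(\langle w,c\rangle).
\]
The $w=0$ term is a shift-independent constant $C$. The remaining terms have total modulus at most $C\varepsilon$ provided $\bSigma'\succeq\eta_\varepsilon(\calL^\perp)^2$. We would verify this because $\lambda_{\min}(\bSigma')\ge\sigma_n(\bS)^2$ (restriction and conditioning of a quadratic form can only increase the smallest eigenvalue relative to $\sigma_n(\bS)^2$), combined with \cref{lem:smoothing-parameter-bound-general} in dimension $n-m\le n$, which bounds $\eta_\varepsilon(\calL^\perp)$ by the stated expression.

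Finally, the normalizing constants: summing $\mu$ and $\mu'$ over $z$ shows that the ratio of normalizers is likewise within $[1-\varepsilon,1+\varepsilon]$. Dividing the two bounds gives the stated interval.

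The main obstacle is the eigenvalue claim for $\bSigma'$. The conditional covariance of a Gaussian on $\ker\bA$ is the inverse of the compressed precision matrix, and we must check that its least eigenvalue is at least $\sigma_n(\bS)^2$. This follows from Cauchy interlacing applied to $\bSigma^{-1}$ compressed to $\ker\bA$, and we would carefully track the covariance-versus-$\pi$ normalization conventions at this step.
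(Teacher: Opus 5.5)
Your proposal is correct. The paper does not prove this lemma itself; it imports the pointwise bound from \cite[Lemma 3.3]{AggarwalR16}. Your argument is the standard proof behind that citation, correctly adapted to a non-spherical $\bS$ by using the conditional (oblique) decomposition rather than an orthogonal one. Its steps are: the fiber over $z$ is a coset of $\calL^\perp(\bA)$; completing the square splits off exactly the unnormalized weight $\exp(-\pi z^\top(\bA\bSigma\bA^\top)^{-1}z)$; Poisson summation is applied on the kernel lattice; and the Micciancio--Regev smoothing bound finishes.

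A few points are worth making explicit when you write it up.

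(i) Finiteness of $\lambda_{n-m}(\calL^\perp(\bA))$ in the hypothesis is what guarantees that $\calL^\perp(\bA)$ has rank $n-m$ and spans $\ker\bA$. This is needed both for Poisson summation inside $\ker\bA$ and for $\bA\mathbb Z^n$ to be a lattice.

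(ii) Define $\bSigma'$ consistently as the inverse of the compression of $\bSigma^{-1}$ to $\ker\bA$, not as the restricted form itself, since you later use it as a covariance with Fourier factor $\det(\bSigma')^{1/2}$.

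(iii) The eigenvalue bound needs no interlacing. For a unit $v\in\ker\bA$ one has $v^\top\bSigma^{-1}v\le\lambda_{\max}(\bSigma^{-1})=\sigma_n(\bS)^{-2}$, so the compressed precision is at most $\sigma_n(\bS)^{-2}$ and $\bSigma'\succeq\sigma_n(\bS)^2$ on $\ker\bA$.

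(iv) The smoothing bound is applied to an $(n-m)$-dimensional lattice in its own span, with the dual taken inside $\ker\bA$. You then use $\ln(2(n-m)(1+1/\varepsilon))\le\ln(2n(1+1/\varepsilon))$; if the $\log$ in the hypothesis is base two, the hypothesis is only stronger.

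With these in place, write each fiber sum as $C(1+\theta_z)$ and the total mass $\rho_{\bS}(\mathbb Z^n)$ as $C(1+\bar\theta)\,\rho_{\bA\bSigma\bA^\top}(\bA\mathbb Z^n)$, with $|\theta_z|,|\bar\theta|\le\varepsilon$. This gives exactly the interval $\bigl[\frac{1-\varepsilon}{1+\varepsilon},\frac{1+\varepsilon}{1-\varepsilon}\bigr]$.
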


\begin{lemma}[Adapted from {\cite[Lemma 3.2]{gribelyuk2025lifting}}]
\label{lem:cellwise-gaussian-flatness}
Let $Q\ge n^5$ be sufficiently large, let $\bA\in\mathbb R^{m\times n}$ have orthonormal rows, and assume that there exists a matrix with rational entries and the same rowspan as $\bA$. Let $\bSigma:=\bS^\top\bS$ be full rank, let $G\sim\CG(0,\bSigma)$, let $Y_{\mathrm{lat}}\sim \DG_{\bA\mathbb Z^n,\bS\bA^\top}$, and let $\eta$ be uniform in a fundamental cell of $\bA\mathbb Z^n$, independently of everything else. Let $p'$ be the density of $Y_{\mathrm{lat}}+\eta$ and let $q$ be the density of $\bA G$. Assume that
\[
\sigma_1(\bS)\le Q^{6/5},
\qquad
\sigma_n(\bS)\ge nQ\log Q.
\]
Then there is a measurable set $U\subseteq \mathbb R^m$ such that:
\begin{itemize}
\item for every $u\in U$,
\[
\frac{p'(u)}{q(u)}\in [1-Q^{-2/5},1+Q^{-2/5}];
\]
\item if $X\sim \DGZ{\bS}$ and $p$ is the density of $\bA X+\eta$, then
\[
p(U^c)\le Q^{-10}.
\]
\end{itemize}
\end{lemma}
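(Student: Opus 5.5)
The plan is to reduce everything, via the triangle inequality $|p'/q-1|$ split through an intermediate density, to a Poisson-summation comparison on the lattice $\mathcal L:=\bA\mathbb Z^n$. This is a lattice in the row space of $\bA$, identified with $\mathbb R^m$; rationality of the rowspan guarantees that $\bA\mathbb Z^n$ is discrete and full rank there. Write $\bSigma_{\bA}:=\bA\bSigma\bA^\top$, so that $q$ is the $\mathcal N(0,\bSigma_{\bA})$ density. Since $\eta$ is uniform on a fundamental cell $P$, one has $p'(u)=\det(\mathcal L)^{-1}\Pr[Y_{\mathrm{lat}}=z(u)]$, where $z(u)$ is the unique lattice point with $u-z(u)\in -P$. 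Using the normalization $\rho_{\bSigma_{\bA}}(\mathcal L)=\det(\mathcal L)^{-1}\det(\bSigma_{\bA})^{1/2}(1+\rho_{\bSigma_{\bA}^{-1}}(\mathcal L^*\setminus\{0\}))$ from \cref{lem:poisson-gaussian-general}, I get
\[
\frac{p'(u)}{q(u)}=\frac{\rho_{\bSigma_{\bA}}(z(u))}{\rho_{\bSigma_{\bA}}(u)}\cdot\bigl(1+\epsilon_{\mathcal L}\bigr)^{-1},
\]
up to the $\pi$ versus $2\pi$ normalization between $\rho$ and $\mathcal N$, which I will fix consistently. The argument then has two ingredients. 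The first is a smoothing bound $\epsilon_{\mathcal L}\le Q^{-3}$. The second is a local flatness bound $|\ln\rho_{\bSigma_{\bA}}(z)-\ln\rho_{\bSigma_{\bA}}(u)|\le Q^{-1/2}$ on a good set $U$.

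For the smoothing step I will use \cref{lem:smoothing-parameter-bound-general} applied to $\mathcal L$ after the whitening $\bSigma_{\bA}^{-1/2}$. It suffices that $\sigma_n(\bS)\ge\sqrt{\ln(2n(1+Q^3))/\pi}\,\lambda_m(\mathcal L)$. To bound $\lambda_m(\bA\mathbb Z^n)$, I project the standard basis: the vectors $\bA e_i$ span the row space and have norm at most $1$, so $\lambda_m(\mathcal L)\le1$. The hypothesis $\sigma_n(\bS)\ge nQ\log Q$ then gives the required bound with room to spare, and it also shows that the diameter of the fundamental cell can be taken at most $m\le n$ (e.g.\ the parallelepiped of $m$ independent vectors among the $\bA e_i$).

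For the flatness step I define $U:=\{u:\|\bSigma_{\bA}^{-1/2}u\|_2\le \sqrt{n}\,\log Q\}$ (or a constant multiple of that radius). For $u\in U$ and $\|u-z\|_2\le n$, I expand the quadratic form and bound the difference of exponents by $\pi(2\|u\|\,n+n^2)/\sigma_{\min}(\bSigma_{\bA})$. Since $\sigma_{\min}(\bSigma_{\bA})\ge\sigma_n(\bS)^2\ge n^2Q^2\log^2Q$ and $\|u\|\le\sigma_1(\bS)\sqrt n\log Q\le Q^{6/5}\sqrt n\log Q$, this difference is $O(Q^{6/5}n^{3/2}\log Q/(n^2Q^2\log^2Q))=O(Q^{-4/5})$, which is well below $Q^{-2/5}/2$. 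Combined with $\epsilon_{\mathcal L}\le Q^{-3}$, this gives $p'/q\in[1-Q^{-2/5},1+Q^{-2/5}]$ on $U$.

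For the tail estimate $p(U^c)\le Q^{-10}$ I will use \cref{lem:projected-discrete-vs-lattice-gaussian} with $\varepsilon=1/3$, so that $p\le 2p'$ pointwise. Its hypothesis needs $\lambda_{n-m}(\mathcal L^\perp(\bA))$ to be polynomially bounded, and I expect this to be the main obstacle: it is exactly where the rational, bounded-denominator rowspan enters. My plan is to take $\bA$ as the orthonormalization of the preprocessed integer matrix $\widetilde{\bA}$, whose entries are at most $\poly(Q)$, and to bound the kernel lattice's successive minima by Siegel/Cramer-rule vectors of size $\poly(n,Q)^{O(m)}$. If that bound is too weak I will fall back on the assumption implicit in the preprocessing from \cite{gribelyuk2025lifting}, which controls $\lambda_{n-m}\le nQ$. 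Given $p\le2p'$, I bound $p'(U^c)\le (1+Q^{-2/5})q(U^c)$ plus the mass of $U^c$ that comes from cells straddling the boundary; that straddling mass is handled by enlarging the radius by $n$. Finally, $q(U^c)$ is a chi-square tail $\Pr[\chi^2_m>n\log^2Q]\le e^{-\Omega(n\log^2Q)}\le Q^{-11}$.
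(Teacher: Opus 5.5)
Your argument for the first bullet is sound, and it is more self-contained than the paper's, which defers the comparison between the cell value of $p'$ and $q$ to the cited lemma. Poisson summation on $\mathcal L=\mathbf A\mathbb Z^n$, with $\lambda_m(\mathcal L)\le 1$ witnessed by the vectors $\mathbf Ae_i$ and with $\sigma_n(\mathbf S)\ge nQ\log Q$, gives the normalization up to a factor $1+Q^{-3}$. The within-cell flatness is the same quadratic-form expansion the paper uses. One correction: $m$ independent vectors among the $\mathbf Ae_i$ generally generate only a proper sublattice of $\mathbf A\mathbb Z^n$, so their parallelepiped is not a fundamental cell. You must first turn them into a basis of $\mathcal L$ (or use the Gram--Schmidt box of such a basis). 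That gives a cell of diameter $\mathrm{poly}(m)$, which is all you need.

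The gap is in the second bullet. You bound $p(U^c)$ through $p\le 2p'$ from \cref{lem:projected-discrete-vs-lattice-gaussian}, whose hypothesis requires $\lambda_{n-m}(\mathcal L^\perp(\mathbf A))\lesssim\sigma_n(\mathbf S)\le Q^{6/5}$. Nothing in this lemma controls the kernel lattice. The only assumption is that the rowspan is rational, with no bound on denominators, so $\lambda_{n-m}(\mathcal L^\perp(\mathbf A))$ can be arbitrarily large: take $m=n-1$ with kernel spanned by a huge primitive integer vector.

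Your two fallbacks do not close this. The Siegel/Cramer route, even granting entries of size $\mathrm{poly}(Q)$, only gives bounds like $(nQ)^{O(m)}$, far above $Q^{6/5}$. The appeal to the preprocessing imports the hypothesis $\lambda_{\max}(\mathcal L^\perp(\mathbf A))\le Q\sqrt n$ of \cref{lem:disc-to-cont-r-parameterized}, which this lemma does not assume. Separately, your step bounding $p'(U^c)$ by $q(U^c)$ uses a $p'/q$ comparison you only established on $U$.

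The missing observation is that no comparison of $p$ with $p'$ is needed. Take $U$ to be a union of whole cells, $U=\{u:\|\phi(u)\|_2\le L\}$. Here $\phi(u)$ is the lattice point of the cell containing $u$, and $L=C_0Q^{6/5}\sqrt{n\log Q}$ is an unwhitened radius. Then $\phi(\mathbf AX+\eta)=\mathbf AX$ and $\|\mathbf Ax\|_2\le\|x\|_2$. Hence $p(U^c)=\Pr[\|\mathbf AX\|_2>L]\le\Pr[\|X\|_2>L]\le Q^{-10}$, by the standard tail bound for $\mathcal D_{\mathbb Z^n,\mathbf S}$ with $\sigma_1(\mathbf S)\le Q^{6/5}$. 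Your flatness estimate still holds on this set, because the hypotheses are tuned for exactly this: $L\cdot n^{3/2}/\sigma_n(\mathbf S)^2=O(Q^{-4/5})$. This is the paper's argument.
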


\begin{proof}
Let $\calC$ be a fundamental cell of $\bA\mathbb Z^n$ in the rowspace of $\bA$, and let $\phi:\mathbb R^m\to \bA\mathbb Z^n$ send each point to the lattice point of its cell. As in the proof of \cite[Lemma 3.2]{gribelyuk2025lifting}, we may choose $\calC$ so that its diameter is at most $m\sqrt m\le n^{3/2}$. Set
\[
L:=C_0Q^{6/5}\sqrt{n\log Q},
\qquad
U:=\{u\in\mathbb R^m:\ \|\phi(u)\|_2\le L\},
\]
where $C_0>0$ is a sufficiently large absolute constant. If $X\sim \DGZ{\bS}$ and $u=\bA X+\eta$, then $\phi(u)=\bA X$. Since $\bA$ has orthonormal rows, $\|\bA x\|_2\le \|x\|_2$ for every $x\in\mathbb R^n$, and therefore
\[
p(U^c)=\Pr[\|\bA X\|_2>L]\le \Pr[\|X\|_2>L]\le Q^{-10},
\]
where the last step is the standard Euclidean tail bound for $\DGZ{\bS}$ using $\sigma_1(\bS)\le Q^{6/5}$ and the choice of $L$.

Now fix $u\in U$, write $z:=\phi(u)$, and write $u=z+w$ with $w\in\calC$. Then $\|z\|_2\le L$ and $\|w\|_2\le n^{3/2}$. Since $\bSigma=\bS^\top\bS$ is positive definite and $\bA$ has orthonormal rows, for every $v\in\mathbb R^m$ we have
\[
v^\top \bA\bSigma\bA^\top v
=(\bA^\top v)^\top \bSigma (\bA^\top v)
\ge \lambda_{\min}(\bSigma)\|\bA^\top v\|_2^2
=\sigma_n(\bS)^2\|v\|_2^2.
\]
Hence
\[
\|(\bA\bSigma\bA^\top)^{-1}\|_{op}\le \sigma_n(\bS)^{-2}.
\]
Using this bound,
\[
\bigl|u^\top(\bA\bSigma\bA^\top)^{-1}u-z^\top(\bA\bSigma\bA^\top)^{-1}z\bigr|
\le 2\|z\|_2\,\|w\|_2\,\sigma_n(\bS)^{-2}+\|w\|_2^2\sigma_n(\bS)^{-2}.
\]
Since $\|z\|_2\le L$, $\|w\|_2\le n^{3/2}$, $\sigma_n(\bS)\ge nQ\log Q$, $L=O(Q^{6/5}\sqrt{n\log Q})$, and $Q\ge n^5$, the right-hand side is at most $Q^{-4/5}$ for all sufficiently large $Q$. It follows that the Gaussian density $q$ changes by at most a factor $e^{\pi Q^{-4/5}}\le 1+Q^{-2/5}$ on each cell intersecting $U$. Also, since $Y_{\mathrm{lat}}$ is supported on $\bA\mathbb Z^n$, the random variable $Y_{\mathrm{lat}}+\eta$ is uniform on each translated cell, so $p'$ is constant on each cell. Finally, the same cell-mass comparison used in the proof of \cite[Lemma 3.2]{gribelyuk2025lifting} shows that, for every cell $z+\calC$ with $\|z\|_2\le L$, the constant value of $p'$ on that cell differs from $q(z)$ by at most a factor $1+Q^{-2/5}$. Combining this with the within-cell bound for $q$ and enlarging $Q$ if needed, we obtain
\[
\frac{p'(u)}{q(u)}\in [1-Q^{-2/5},1+Q^{-2/5}]
\qquad\text{for every }u\in U.
\]
This proves the lemma.
\end{proof}

\begin{lemma}[Discrete-to-continuous comparison with explicit $Q$-parameters]
\label{lem:disc-to-cont-r-parameterized}
Let $Q\ge n^5$ be sufficiently large, let $\bA\in\mathbb R^{m\times n}$ have orthonormal rows, and assume that there exists a matrix with rational entries and the same rowspan as $\bA$. Let $\bSigma:=\bS^\top\bS$ be full rank, let $X\sim \DGZ{\bS}$, let $G\sim\CG(0,\bSigma)$, and let $\eta$ be uniform in a fundamental cell of $\bA\mathbb Z^n$, independently of everything else. Assume that
\[
\sigma_1(\bS)\le Q^{6/5},
\qquad
\lambda_{\max}(\calL^\perp(\bA))\le Q\sqrt n,
\qquad
\sigma_n(\bS)\ge nQ\log Q.
\]
Then the law of $\bA X+\eta$ is within total variation distance at most $Q^{-1/5}$ of the law of $\bA G$.
\end{lemma}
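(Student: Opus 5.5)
The natural route is a chain of three comparisons: first replace the law of $\bA X$ by the lattice Gaussian $Y_{\mathrm{lat}}\sim \DG_{\bA\mathbb Z^n,\bS\bA^\top}$ using \cref{lem:projected-discrete-vs-lattice-gaussian}, then smear both by the same uniform cell variable $\eta$, and finally compare the density $p'$ of $Y_{\mathrm{lat}}+\eta$ with the density $q$ of $\bA G$ using \cref{lem:cellwise-gaussian-flatness}. The triangle inequality then gives
\[
\dTV(\bA X+\eta,\bA G)\le \dTV(\bA X+\eta,Y_{\mathrm{lat}}+\eta)+\dTV(Y_{\mathrm{lat}}+\eta,\bA G),
\]
and I will bound each term by $O(Q^{-2/5})$ or smaller, which is at most $Q^{-1/5}$ for large $Q$.

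For the first term, adding the same independent $\eta$ cannot increase total variation, so it is at most $\dTV(\bA X,Y_{\mathrm{lat}})$. Since both are supported on $\bA\mathbb Z^n$, it suffices to have a pointwise ratio bound $\mu(z)/\mu'(z)\in[(1-\varepsilon)/(1+\varepsilon),(1+\varepsilon)/(1-\varepsilon)]$; this gives $\dTV\le O(\varepsilon)$. I would invoke \cref{lem:projected-discrete-vs-lattice-gaussian} with $\varepsilon:=Q^{-1}$. Its hypothesis asks for $\sigma_n(\bS)>\lambda_{n-m}(\calL^\perp(\bA))\sqrt{\log(2n(1+Q))/\pi}$. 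Using $\lambda_{n-m}\le\lambda_{\max}(\calL^\perp(\bA))\le Q\sqrt n$, the right-hand side is $O(Q\sqrt{n\log(nQ)})$, which is below $nQ\log Q\le\sigma_n(\bS)$ once $Q\ge n^5$ is large. Rationality of the rowspan guarantees that $\calL^\perp(\bA)$ is a full-rank lattice of dimension $n-m$, so the successive minimum is defined.

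For the second term, write $\dTV=\frac12\int|p'-q|$ and split $\mathbb R^m$ into $U$ and $U^c$, where $U$ is the set from \cref{lem:cellwise-gaussian-flatness}. On $U$ the ratio bound gives $\int_U|p'-q|\le Q^{-2/5}\int_U q\le Q^{-2/5}$. On $U^c$ the lemma controls only $p(U^c)$, the mass of the discrete version, not $p'$ or $q$, and handling this is the step needing care. I would bound $\int_{U^c}|p'-q|\le p'(U^c)+q(U^c)$. Next, $p'(U^c)\le p(U^c)+\|p-p'\|_1\le Q^{-10}+O(Q^{-1})$ by the first step. Then $q(U^c)=1-q(U)\le 1-p'(U)/(1+Q^{-2/5})$ by the ratio bound on $U$, and $p'(U)\ge 1-p'(U^c)$. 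Together these give $q(U^c)=O(Q^{-2/5})$.

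Summing, the total variation distance is $O(Q^{-2/5}+Q^{-1})\le Q^{-1/5}$ for sufficiently large $Q$. I expect the only delicate points to be verifying the smoothing hypothesis of \cref{lem:projected-discrete-vs-lattice-gaussian} against the given parameter regime, and the bookkeeping on $U^c$ just described. Everything else is the monotonicity of total variation under independent convolution.
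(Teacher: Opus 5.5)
Your proposal is correct and follows essentially the same route as the paper: both pass through the lattice Gaussian $Y_{\mathrm{lat}}+\eta$, invoke \cref{lem:projected-discrete-vs-lattice-gaussian} and \cref{lem:cellwise-gaussian-flatness}, and split the total-variation integral over $U$ and $U^c$. The only difference is bookkeeping. The paper multiplies the two pointwise ratio bounds to control $p/q$ directly on $U$, so it only needs $p(U^c)\le Q^{-10}$. You instead use the triangle inequality in total variation and transfer the $U^c$ mass from $p$ to $p'$, which works equally well and makes the final $O(Q^{-2/5})$ bound explicit.
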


\begin{proof}
Let $p$ denote the density of $\bA X+\eta$, let $p'$ denote the density of $Y_{\mathrm{lat}}+\eta$ with
\[
Y_{\mathrm{lat}}\sim \DG_{\bA\mathbb Z^n,\bS\bA^\top},
\]
and let $q$ denote the density of $\bA G\sim\CG(0,\bA\bSigma\bA^\top)$. Set $\varepsilon_0:=Q^{-2/5}/10$. Since $Q\ge n^5$, the displayed lower bound on $\sigma_n(\bS)$ implies
\[
\sigma_n(\bS)\ge nQ\log Q\ge Q\sqrt n\sqrt{\log\bigl(2n(1+1/\varepsilon_0)\bigr)},
\]
for all sufficiently large $Q$. Because $\lambda_{n-m}(\calL^\perp(\bA))\le \lambda_{\max}(\calL^\perp(\bA))\le Q\sqrt n$, \cref{lem:projected-discrete-vs-lattice-gaussian} applies with $\varepsilon=\varepsilon_0$. After enlarging $Q$ once more, this yields
\[
\frac{\mu(z)}{\mu'(z)}\in [1-Q^{-2/5},1+Q^{-2/5}]
\qquad\text{for every }z\in\bA\mathbb Z^n,
\]
where $\mu$ and $\mu'$ are the probability masses of $\bA X$ and $Y_{\mathrm{lat}}$. Since $\eta$ is uniform in the fundamental cell, both $p$ and $p'$ are constant on each translated cell, and their cellwise values are the corresponding lattice masses divided by the cell volume. Hence
\[
\frac{p(u)}{p'(u)}\in [1-Q^{-2/5},1+Q^{-2/5}]
\qquad\text{for every }u\in\mathbb R^m.
\]
Also, by \cref{lem:cellwise-gaussian-flatness}, there is a measurable set $U\subseteq\mathbb R^m$ such that
\[
\frac{p'(u)}{q(u)}\in [1-Q^{-2/5},1+Q^{-2/5}]
\qquad\text{for every }u\in U,
\]
and $p(U^c)\le Q^{-10}$. Combining the two multiplicative bounds and enlarging $Q$ if needed, we get
\[
\frac{p(u)}{q(u)}\in [1-Q^{-1/5},1+Q^{-1/5}]
\qquad\text{for every }u\in U.
\]
Therefore
\[
\int_U |p-q|
\le Q^{-1/5}\int_U q
\le Q^{-1/5}.
\]
Also,
\[
q(U)\ge \frac{p(U)}{1+Q^{-1/5}}\ge \frac{1-Q^{-10}}{1+Q^{-1/5}},
\]
so
\[
q(U^c)\le Q^{-1/5}+Q^{-10}.
\]
Splitting the total-variation integral over $U$ and $U^c$, we obtain
\[
\dTV\bigl(\Law(\bA X+\eta),\Law(\bA G)\bigr)
\le \frac12\int_U |p-q|+\frac12 p(U^c)+\frac12 q(U^c)
\le Q^{-1/5}
\]
for all sufficiently large $Q$. This proves the lemma.
\end{proof}

We now prove \cref{thm:linear-sketches}.
\begin{proof}[Proof of \cref{thm:linear-sketches}]
Let $\phi$ denote the map sending each point in the rowspace of $\bA$ to the lattice point of its fundamental cell. Since $\widetilde{\bA}$ and $\bA$ have the same rowspace and $\widetilde{\bA}$ has integer entries, this rowspace is rational, so $\phi$ is well-defined. Because $\widetilde{\bA}$ and $\bA$ have the same rowspace and full row rank, there exists an invertible matrix $\Xi\in\mathbb R^{\ell\times \ell}$ such that
$
\widetilde{\bA}=\Xi\bA.
$
Define
\[
h(u):=g(\Xi\phi(u)).
\]
Fix any $b\in\mathcal M$ and condition on the event $B=b$. By \cref{lem:disc-to-cont-r-parameterized} applied with $\bS=\bS_b$, if $\eta$ is uniform in a fundamental cell of $\bA\mathbb Z^n$, independently of everything else, then
\[
d_{\mathrm{TV}}\bigl(\Law(\bA X+\eta\mid B=b),\Law(\bA G\mid B=b)\bigr)\le Q^{-1/5}.
\]
Since total variation does not increase after adding the same independent random variable, the same bound holds after adding $\bA Y$ to both sides:
\[
d_{\mathrm{TV}}\bigl(\Law(\bA(X+Y)+\eta\mid B=b),\Law(\bA(G+Y)\mid B=b)\bigr)\le Q^{-1/5}.
\]
Now $X+Y\in\mathbb Z^n$, so
\[
\phi(\bA(X+Y)+\eta)=\bA(X+Y).
\]
Hence
\[
h(\bA(X+Y)+\eta)
=
g(\Xi\bA(X+Y))
=
g(\widetilde{\bA}(X+Y)).
\]
Applying data processing to the previous total-variation bound yields
\[
d_{\mathrm{TV}}\bigl(\Law(g(\widetilde{\bA}(X+Y))\mid B=b),\Law(h(\bA(G+Y))\mid B=b)\bigr)
\le Q^{-1/5}.
\]
Let
\[
R_D:=g(\widetilde{\bA}(X+Y)),
\qquad
R_G:=h(\bA(G+Y)).
\]
Then
\[
\Pr[R_D=b\mid B=b]
\ge
1-\Pr[R_D\neq f(X+Y)\mid B=b]-\Pr[f(X+Y)\neq b\mid B=b]
\ge
1-2\delta/3.
\]
Therefore
\[
\Pr[R_G=b\mid B=b]\ge \Pr[R_D=b\mid B=b]-Q^{-1/5}\ge 1-2\delta/3-Q^{-1/5}.
\]
Since $Q\ge (2n/\delta)^{30}$, we have $Q^{-1/5}\le \delta/3$, and so
\[
\Pr[h(\bA(G+Y))=b\mid B=b]\ge 1-\delta.
\]
This proves the conditional statement for every $b\in\mathcal M$. Averaging over $B$ gives
\[
\Pr[h(\bA(G+Y))=B]\ge 1-\delta.
\]
\end{proof}

\bibliographystyle{alpha}
\bibliography{main}

\end{document}